\newtheorem{theorem}{Theorem}[section]
\newtheorem{lemma}[theorem]{Lemma}
\newtheorem{claim}[theorem]{Claim}
\theoremstyle{definition}
\newtheorem{definition}{Definition}[section]
\theoremstyle{definition}
\newtheorem{remark}[theorem]{Remark}
\DeclareMathOperator{\ex}{ex}
\DeclareMathOperator{\supp}{supp}
\DeclareMathOperator{\poly}{poly}
\DeclareMathOperator{\abs}{abs}
\DeclareMathOperator{\vol}{vol}
\newcommand{\R}{\mathbb{R}}
\newcommand{\E}{\mathbb{E}}
\newcommand{\Z}{\mathbb{Z}}
\newcommand{\N}{\mathbb{N}}
\newcommand{\bmu}{\boldsymbol{\mu}}
\newcommand{\bb}{\mathbf{b}}
\DeclareMathOperator{\bc}{\mathbf{c}}
\DeclareMathOperator{\bd}{\mathbf{d}}
\newcommand{\bn}{\boldsymbol{n}}
\newcommand{\bp}{\boldsymbol{p}}
\newcommand{\br}{\boldsymbol{r}}
\newcommand{\bv}{\boldsymbol{v}}
\newcommand{\bw}{\boldsymbol{w}}
\newcommand{\bx}{{\boldsymbol{x}}}
\DeclareMathOperator{\bF}{\mathbf{F}}
\DeclareMathOperator{\bbF}{\bar{\mathbf{F}}}
\DeclareMathOperator{\bM}{\mathbf{M}}
\DeclareMathOperator{\bbM}{\bar{\mathbf{M}}}
\newcommand{\tdeg}{\widetilde{\deg}}
\newcommand{\tvol}{\widetilde{\vol}}
\DeclareMathOperator{\diag}{\text{diag}}
\newcommand\inner[2]{\left\langle #1, #2 \right\rangle}
\newcommand{\psip}{\cev{\psi}}
\newcommand{\bmup}{\cev{\bmu}}
\newcommand{\cbF}{\cev{\bF}}
\newcommand{\sqcev}[1]{%
  \ThisStyle{%
    \setbox0=\hbox{$\SavedStyle#1$}%
    \tikz[baseline=(X.base)]{
      \node[inner sep=0pt] (X) {\copy0};
      \draw[<-, decorate, decoration={
          snake,
          amplitude=0.1mm,
          segment length=1mm
        }]
      ([yshift=1.5ex]$(X.west)!0.15!(X.east)$) 
        -- 
        ([yshift=1.5ex]$(X.west)!0.85!(X.east)$);
    }%
  }%
}
\newcommand{\tcbF}{\sqcev{\bF}}
\newcommand{\tbmup}{\sqcev{\bmu}}
\providecommand{\algorithmname}{Algorithm}
\newcommand{\tab}{.\hskip.1in}
\DeclareRobustCommand{\cev}[1]{%
  \mathpalette\do@cev{#1}%
}
\newcommand{\do@cev}[2]{%
  \fix@cev{#1}{+}%
  \reflectbox{$\m@th#1\vec{\reflectbox{$\fix@cev{#1}{-}\m@th#1#2\fix@cev{#1}{+}$}}$}%
  \fix@cev{#1}{-}%
}
\newcommand{\fix@cev}[2]{%
  \ifx#1\displaystyle
    \mkern#23mu
  \else
    \ifx#1\textstyle
      \mkern#23mu
    \else
      \ifx#1\scriptstyle
        \mkern#22mu
      \else
        \mkern#22mu
      \fi
    \fi
  \fi
}
\newcommand*\Bell{\ensuremath{\boldsymbol\ell}}
\newcommand*\Bf{\ensuremath{\boldsymbol f}}
\newcommand*\Bc{\ensuremath{\boldsymbol c}}
\newcommand*\Bn{\ensuremath{\boldsymbol n}}
\newcommand*\Bp{\ensuremath{\boldsymbol p}}
\newcommand{\link}{\textsc{Link}}
\newcommand{\cut}{\textsc{Cut}}
\newcommand{\fmin}{\textsc{FindMin}}
\newcommand{\add}{\textsc{Add}}
\newcommand{\froot}{\textsc{FindRoot}}
\title{Improved Directed Expander Decompositions
}
\author[1]{Henry Fleischmann}
\author[1]{George Z. Li}  
\author[1]{Jason Li}
\affil[1]{Carnegie Mellon University $\{\texttt{hfleisch}, \texttt{gzli}, \texttt{jmli}\}$\texttt{@cs.cmu.edu}}
\date{}
\begin{document}

\maketitle

\begin{abstract}

We obtain faster expander decomposition algorithms for directed graphs, matching the guarantees of Saranurak and Wang (SODA 2019) for expander decomposition on undirected graphs. Our algorithms are faster than prior work and also generalize almost losslessly to capacitated graphs. In particular, we obtain the first directed expander decomposition algorithm for capacitated graphs in near-linear time with optimal dependence on $\phi$.

To obtain our result, we provide the first implementation and analysis of the non-stop cut-matching game for directed, capacitated graphs. All existing directed expander decomposition algorithms instead temporarily add ``fake edges'' before pruning them away in a final cleanup step. Our result shows that the natural undirected approach applies even to directed graphs. The difficulty is in its analysis, which is technical and requires significant modifications from the original setting of undirected graphs.

\end{abstract}
\pagenumbering{gobble}

\clearpage
\tableofcontents

\newpage 
\pagenumbering{arabic}
\section{Introduction}

A (strong) \emph{expander decomposition} of a graph is a partition into well-connected components, called \emph{expanders}, with relatively few edges in between the components. More formally, an expander decomposition of a graph $G=(V,E)$ is a partition $V_1,\ldots,V_k$ of the vertices such that each induced graph $G[V_i]$ is a $\phi$-expander, and there are at most $\tilde O(\phi m)$ edges in between the components.\footnote{In this paper, $\tilde O(\cdot)$ notation suppresses polylogarithmic factors in $n$, the number of vertices of the graph.} Expander decompositions were introduced by~\cite{kannan2004clusterings} and popularized by Spielman and Teng's undirected Laplacian solvers~\cite{spielman2004nearly}. Since then they have become the main driving force behind recent breakthroughs on undirected graphs, including (undirected) maximum flow~\cite{kelner2014almost,bernstein2022deterministic,abboud2023all}, dynamic graph problems~\cite{nanongkai2017dynamic,goranci2021expander,chuzhoy2021decremental,bernstein2022deterministic,chen2024almost,kyng2024dynamic,el2025fully}, and deterministic minimum cut~\cite{kawarabayashi2018deterministic,li2020deterministic,henzinger2024deterministic}. Since expander decompositions have become a key primitive in modern graph algorithms, the problem of computing an expander decomposition itself has been extensively studied~\cite{spielman2004nearly,nanongkai2017dynamic,henzinger2020local}, culminating in the near-optimal algorithm of Saranurak and Wang~\cite{saranurak2019expander}.\footnote{Recent results~\cite{agassy2022expander,agassy2025expanderdecompositionnonuniformvertex} improve the guarantees of~\cite{saranurak2019expander} by a logarithmic factor. However, to do this, they exhibit a nonstop cut-matching game using a spectral cut player~\cite{orecchia2008partitioning}. These spectral techniques do not appear to easily generalize to the directed setting, not least because of the lack of symmetric adjacency matrices. Moreover, these improvements only directly yield ``cut-based'' weak-expander decompositions, which are too weak for the applications of our work.}

More recently, expander decompositions have been generalized to \emph{directed} graphs that consider directed acyclic graphs (DAGs) as a sort of base case. More formally, a directed expander decomposition of a graph $G=(V,E)$ is a partition $V_1,\ldots,V_k$ and an acyclic subgraph $D$ of $G$, such that each induced subgraph $G[V_i]$ is a (directed) $\phi$-expander and there are at most $\tilde O(\phi m)$ edges in $G-D$ in between the components. To understand why a DAG must be excluded from the inter-component edges, consider the case when the graph $G=(V,E)$ itself is a DAG. The only possible partition into (directed) expanders is to make each vertex its own singleton component, since any induced graph with more than one vertex is not strongly connected and therefore cannot be an expander. In other words, expander decompositions on DAGs must be trivial, with all $m$ edges in between the components, so the only way to bound the number of inter-component edges is to exclude edges from a DAG (which, in this case, is all the edges).

Fortunately, DAGs are well-structured enough as a base case, and directed expander decompositions were key to recent breakthroughs for problems on directed graphs, including (directed) maximum flow~\cite{chuzhoy2024maximum,bernstein2024maximum} and dynamic shortest path~\cite{bernstein2020deterministic}. However, all of these algorithms have extra $n^{o(1)}$ factors in their running times, partially because, at the time, computing a directed expander decomposition itself required $m^{1+o(1)}$ time. Hence, a more recent line of work focused on speeding up directed expander decompositions~\cite{bernstein2020deterministic,DBLP:conf/soda/HuaKGW23,SP24}, the last of which obtained a near-optimal algorithm, but only for uncapacitated graphs and with large polylogarithmic factors in their overhead.\footnote{There is a reduction from capacitated to uncapacitated as shown in~\cite{van2024almost},  but this incurs an extra $1/\phi$ factor in the running time. At a high level, the reduction first removes all edges whose capacity is less than a $\phi/m$ fraction of the total capacity, which is at most $\phi$-fraction in total. Then, scale the remaining edge capacities appropriately, and round them to integers. The resulting graph can be viewed as an uncapacitated multi-graph with $O(m/\phi)$ multi-edges.}

In this work, we continue the development of directed expander decompositions, obtaining a faster algorithm for directed expander decomposition that also works for capacitated graphs. Our approach is also more faithful to Saranurak-Wang's algorithm for undirected graphs. For uncapacitated graphs, we even \emph{match} their polylogarithmic overhead: the following statement has the exact same parameters as in Saranurak-Wang, except generalized to the directed case.

\begin{theorem}[Uncapacitated Expander Decomposition]
Given a directed, uncapacitated graph $G=(V,E)$ and a parameter $\phi$, there is a randomized algorithm that computes a partition $V_1,\ldots,V_k$ and an acyclic subgraph $D$ such that each induced subgraph $G[V_i]$ is a $\phi$-expander and the number of inter-component edges in $G-D$ is $O(\phi m\log^3n)$. The algorithm runs in time $O(m\log^4n/\phi)$.
\end{theorem}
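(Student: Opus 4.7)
The plan is to follow the Saranurak-Wang cut-and-trim recursion, substituting its undirected cut-matching game (CMG) with a non-stop CMG for directed graphs. The top-level procedure is recursive: on input $(G,\phi)$, invoke the directed CMG on $G$; if it certifies that $G$ is a $\phi$-expander, output it as a single component. Otherwise, the CMG returns a cut $(S,\bar S)$ of conductance $O(\phi\log^{O(1)} n)$, and we recurse on each side, possibly after a trimming step on the larger side to convert it from a near-expander into a true $\phi$-expander. The edges crossing the cut are split between $D$ and the ``inter-component'' count depending on their orientation relative to the recursion's topological order.

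The heart of the algorithm is the non-stop directed CMG. I would play $O(\log^2 n)$ rounds, each performing an approximate blocking-flow computation in $G$ of cost $\tilde O(m/\phi)$ to route a matching move. The non-stop feature means that whenever a matching round is only partially routable, the obstruction is extracted as a sparse cut, the affected vertices are pruned out of play, and the game proceeds on the remainder, without ever adding fake edges as in prior directed CMG algorithms. The game terminates with either a witness graph embedded into $G$ that certifies expansion of the residual vertex set, or a collection of pruned cuts whose union is a sparse cut in $G$.

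The main obstacle will be the analysis of the directed CMG. In the undirected setting, a single $L^2$-type potential (of a random-walk distribution relative to uniform) decreases by a constant fraction per matching round, bounding the total rounds at $O(\log^2 n)$. In the directed setting, matching moves are inherently asymmetric, so one must track two coupled potentials simultaneously, one for each of the out- and in-directions, and route matchings in both orientations per round; showing joint progress in polylogarithmically many rounds requires a careful bidirectional potential argument. The trimming step is similarly more delicate, as vertices with too many in- \emph{or} out-edges crossing the sparse cut must be peeled off while preserving both-direction expansion of the remaining induced subgraph.

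With a correct directed CMG in hand, the theorem follows by standard accounting. Each CMG call costs $\tilde O(m/\phi)$, the recursion has depth $O(\log n)$, and the total runtime is $O(m\log^4 n/\phi)$. For the edge bound, each sparse cut of conductance $O(\phi\log n)$ contributes at most $O(\phi\log n)$ times the volume of its smaller side to the inter-component count, and summing across the recursion tree, whose total charged volume is $O(m\log n)$, yields $O(\phi m\log^3 n)$ with an extra $\log n$ factor absorbing the CMG's conductance slack. Finally, the acyclic subgraph $D$ is assembled by fixing a topological order on the final components consistent with the recursion (always placing the side output by the CMG as the ``head'' before the ``tail'') and putting into $D$ every inter-component edge respecting this order; acyclicity is immediate from the laminar structure, and the edges in $G-D$ between components are exactly the ``backward'' edges, bounded by the same conductance argument.
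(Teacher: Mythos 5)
Your high-level plan is the same as the paper's: replace Saranurak--Wang's undirected non-stop cut-matching game with a directed one that tracks two coupled (in/out) potentials, trim the resulting near-expander, and recurse, assembling $D$ from the dense direction of each sparse cut. However, two concrete gaps remain where the actual difficulty of the directed setting lives. First, you assert the game terminates with ``a collection of pruned cuts whose union is a sparse cut in $G$,'' and you recurse on the two sides of a single cut $(S,\bar S)$. In directed graphs this fails: a cut is sparse if \emph{one} of the two crossing directions is small, so the union of sparse cuts is not sparse, and the algorithm must instead output a \emph{sequence} of disjoint cuts $C_1,\dots,C_k$ with each $C_j$ sparse only in $G[V\setminus C_{<j}]$, recursing on each piece separately; the edge accounting and the construction of $D$ must be done per cut in this nested order, not against a single bipartition. (A further wrinkle: each matching round solves two flow problems on the same graph and may return two overlapping sparse cuts, which must be reconciled into compatible members of the sequence.)

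Second, your ``bidirectional potential argument'' is underspecified at exactly the point where it breaks. When a directed matching round is only partially routable, a surviving vertex $u$ may send a different amount of flow than it receives, so the invariant $\sum_v \bF(u,v)=\deg(u)$ underlying the $L^2$ potential analysis is destroyed; and because the cut player must produce a \emph{bisection} (to keep both directed matchings balanced), the source is no longer dominated by the sink, so one can no longer guarantee that all source outside the found cut is routed. The paper's fixes --- fractional/partial vertex deletions, maintaining a separate set of vertices deleted outside sparse cuts, and a final ``grafting'' flow step to reattach them to the certified near-expander --- are not cosmetic; without some mechanism of this kind the potential-decrease lemma does not hold and the recursion depth is uncontrolled. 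Your runtime and $O(\phi m\log^3 n)$ edge accounting are consistent with the paper's once these pieces are in place, but as written the proposal assumes away the two problems that the directed case actually poses.
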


In comparison, the (static) algorithm of~\cite{SP24} has $\Omega(\phi/\log^{12}n)$-expanders for induced subgraphs,  $O(\phi m\log^{19}n)$ inter-component edges, and runs in time $O(m\log^{20}n/\phi)$. 

For capacitated graphs, we obtain the first directed expander decomposition algorithm with near-optimal factors and optimal dependence on $\phi$. We introduce a few more logarithmic factors due to technical reasons, but still greatly improve upon the bounds of~\cite{SP24}.

\begin{theorem}[Capacitated Expander Decomposition]
Given a directed, capacitated graph $G=(V,E)$ with edge capacities in $\{1,2,\ldots,W\}$ and a parameter $\phi$, there is a randomized algorithm that computes a partition $V_1,\ldots,V_k$ and an acyclic subgraph $D$ such that each induced subgraph $G[V_i]$ is a $\phi$-expander and the total capacity of inter-component edges in $G-D$ is $O(\phi m\log^8(nW))$. The algorithm runs in time $O(m/\phi+m\log^7(nW))$.
\end{theorem}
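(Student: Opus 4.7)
The plan is to follow the Saranurak--Wang recursive template, but with the cut-matching primitive replaced by a native directed, capacitated, non-stop cut-matching game. At each level of the recursion, we run the game on the current subgraph $H$: if it terminates with an embedding witnessing expansion, $H$ is certified as a $\phi$-expander and is output as one of the $V_i$; otherwise, the game returns a sparse directed cut $(A, \bar A)$ and we split. Each split is followed by a directed trimming procedure that shaves a low-conductance boundary from the larger side, absorbing the shaved edges into a global acyclic subgraph $D$, so that the remaining portion is a certified $\phi$-expander; we then recurse only on the smaller side.

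The first and most technical component is the non-stop cut-matching game adapted to directed, capacitated graphs. We maintain two demand vectors --- a source demand $\bs$ and a sink demand $\bt$ --- proportional to capacity-weighted vertex volume, and in each round we sample a random bipartition of mass and route an approximate directed max-flow between the corresponding source/sink sets at congestion $1/\phi$. If the flow saturates, we extract a matching that symmetrizes the two demands and embed it into the graph; if instead a sparse directed cut is returned, we remove the corresponding mass but continue pushing matchings on the surviving vertices (the \emph{non-stop} feature). A single round is implemented by an approximate directed max-flow subroutine on the capacity-weighted graph in $\tilde O(m)$ time.

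The second component is the analysis. We show the game terminates in $O(\log^2(nW))$ rounds producing either a $\phi$-expander or a sufficiently balanced sparse cut. The natural Lov\'asz--Simonovits potential on the random walk driven by the accumulated matching must now be tracked on an \emph{asymmetric} operator, since directed matchings do not induce a reversible walk. I would maintain joint forward and backward distributions and show that each round decreases a combined potential by a multiplicative factor, with capacity scaling contributing only an $O(\log W)$ overhead. The trimming step identifies the bad set using level sets of the potential and places the shaved edges into $D$; acyclicity of $D$ across all recursive calls is enforced by ordering the cuts according to the recursion tree, so that every shaved edge points from a later-pruned vertex to an earlier-pruned one.

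The main obstacle is the analysis of the directed, capacitated non-stop cut-matching game itself. In the undirected case, reversibility makes the potential argument essentially one-dimensional; here, forward and backward demand must be controlled in tandem against an asymmetric matching operator, while the non-stop feature precludes restarting the potential bookkeeping whenever a partial cut is discovered. Ensuring that each round's progress can be charged to the surviving portion of demand --- and that capacity scaling does not inflate the round count beyond $O(\log^2(nW))$ --- is the central technical hurdle. Once this is in place, the recursion has depth $O(\log(nW))$ and per-level work $\tilde O(m/\phi)$, yielding total inter-component capacity in $G - D$ bounded by $O(\phi m \log^8(nW))$ after charging each shaved edge to a potential drop, and overall runtime $O(m/\phi + m \log^7(nW))$ as claimed.
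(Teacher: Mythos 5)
Your high-level architecture is the same as the paper's (non-stop directed cut-matching game, then trimming of the resulting near-expander, then recursion), but the proposal defers exactly the two places where the directed, capacitated setting actually breaks the undirected template, so as written it has genuine gaps rather than a proof.

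First, the cut-matching game. You correctly identify that forward and backward demand must be controlled in tandem, but "maintain joint forward and backward distributions and show a combined potential decreases" is the problem statement, not the argument. The concrete obstruction is that a directed matching step routes two separate flows ($L\to R$ and $R\to L$), and a vertex can be matched in one direction but not the other, or matched to vertices that are later removed in sparse cuts; this destroys the invariant that each row of the flow matrix has constant mass, which every existing potential analysis relies on. The paper resolves this with a cluster of devices you do not have: partial vertex deletions (splitting each vertex into an active and deleted part and retroactively rebalancing the matchings so active parts are fully matched), a grafting post-processing step that reattaches vertices deleted outside sparse cuts via two extra flow problems (without it you have no bound on the edges between the deleted set and the certified set), a new version of the cut-player lemma that works for bisections rather than unbalanced cuts, and a second, column-indexed potential run for another $T/2$ rounds to certify out-expansion in addition to in-expansion. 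Without some substitute for each of these, the round count and the sparsity accounting do not go through.

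Second, trimming and the runtime for capacitated graphs. Saying the trimming step "identifies the bad set using level sets of the potential" skips the part that is hard in the directed case: expansion must be certified by \emph{two} flow instances (one on $G[A]$ and one on the reversed graph), and removing a level cut found by one instance creates \emph{negative excess} in the other, which ordinary (dynamic) push-relabel cannot handle; the paper needs the push-pull-relabel framework, reanalyzed for weighted graphs with a regularized vertex weighting $\deg_G(v)+\tfrac{\tdeg_G(v)}{2m}\deg_G(V)$ so that the running time scales with $m$ rather than with total capacity. Moreover, constructing the trimming flow instance requires knowing which witness-embedding paths cross the current boundary, and for capacitated graphs an explicit path decomposition costs $\Omega(mn)$ space (the flow decomposition barrier); the paper circumvents this by replaying link-cut-tree transcripts and batching the source increases with growing sinks. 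None of this appears in your proposal, and without it the claimed $O(m/\phi + m\log^7(nW))$ bound is unsupported. Two smaller points: in the early-termination case you must recurse on \emph{all} pieces of the sparse-cut sequence (not "only the smaller side"), and the acyclic subgraph $D$ consists of the non-sparse direction of each cut in the sequence (so its orientation is cut-dependent), with acyclicity following from the laminar recursion, not from a global vertex ordering.
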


Finally, we consider the simpler task of computing a \emph{weak-expander decomposition}, where the induced subgraphs $G[V_i]$ are not required to be $\phi$-expanders, but should still satisfy an expander-like property. Weak-expander decompositions were recently shown to be powerful enough for approximate maximum flow for undirected graphs~\cite{li2025congestion,approxflow} and (exact) maximum flow for directed graphs~\cite{BBLST25}. In all three results, observing that a weak-expander decomposition suffices was the key insight to obtaining polylogarithmic factors in the running time, as opposed to $n^{o(1)}$ factors. We note that our result is explicitly used in the latter result, which requires a weak-expander decomposition primitive. One of our key technical contributions, \textit{expander grafting}, is applied in the second approximate maximum flow result.
\begin{theorem} [Weak-Expander Decomposition] \label{thm: informal near ex decomp}
Given a directed capacitated graph $G = (V, E)$ with edge capacities in $\{1,2, \ldots, W\}$ and a parameter $\phi$, there is a randomized algorithm that computes a partition $V = V_1, V_2, \cdots V_\ell$ and an acyclic subgraph $D$ such that each induced subgraph $G[V_i]$ is a $\phi$-near expander and the total capacity of inter-component edges in $G - D$ is $O(\phi m \log^3 (nW))$. The algorithm runs in time $O(m \log^3 (nW)/\phi)$.
\end{theorem}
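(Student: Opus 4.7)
My plan is a recursive divide-and-conquer following the standard template for near-expander decomposition, adapted to the directed, capacitated setting. The key ingredient to establish is a \emph{local} directed cut primitive $\textsc{LocalCut}(G[U], \phi)$ which either certifies $G[U]$ is a $\phi$-near expander or outputs a cut $(S, U \setminus S)$ with $\vol(S) \le \vol(U)/2$ and directed conductance $O(\phi \log^2(nW))$, running in time $O(\vol(S) \log^2(nW)/\phi)$ proportional to the smaller side. To guarantee that the output $D$ is a DAG, I would also maintain a global topological order on $V$, initialized arbitrarily and refined as the recursion proceeds.

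The main recursive procedure $\textsc{Decomp}(U)$ invokes $\textsc{LocalCut}(G[U], \phi)$: if $G[U]$ is certified a near expander, output $U$ as a component; otherwise, let $(S, U \setminus S)$ be the returned cut, and swap the labels if necessary so that $E_G(S, U \setminus S) \le E_G(U \setminus S, S)$. Refine the topological order within the span of $U$ so that $U \setminus S$ precedes $S$, move the forward edges $E_G(U \setminus S, S)$ into $D$, and recurse on $G[S]$ and $G[U \setminus S]$. Correctness follows because every edge added to $D$ goes forward in the final order (so $D$ is acyclic), and each terminal component is explicitly certified a $\phi$-near expander. For the inter-component bound, the backward edges $E_G(S, U \setminus S)$ of each split remain in $G - D$ and contribute at most $\phi \log^2(nW) \cdot \vol(S)$. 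A standard charging (each vertex lies on the smaller side $S$ of at most $O(\log(nW))$ splits as $\vol(U)$ halves) bounds $\sum \vol(S) \le O(m \log(nW))$, yielding $O(\phi m \log^3(nW))$ total inter-component capacity. The same charging applied to the per-call cost yields total runtime $O(m \log^3(nW)/\phi)$.

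The main obstacle is establishing $\textsc{LocalCut}$ with the stated guarantees. In the undirected setting the analogous primitive (local flow/trimming) is well developed, but directed near-expansion concerns the smaller of the two directions of each cut, so the local-flow procedure must certify $\min(E(S, U \setminus S), E(U \setminus S, S)) \ge \phi \vol(S)$ for every balanced $S$ or else return a violating cut. I would attempt this by coupling two local-flow computations (one per direction) and leveraging the directed non-stop cut-matching machinery that the paper develops for its stronger results; once $\textsc{LocalCut}$ is established, the remaining analysis (acyclicity, charging, runtime summation) is standard.
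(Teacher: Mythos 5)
There is a genuine gap, and it sits exactly where you acknowledge the difficulty: the primitive $\textsc{LocalCut}$ that you assume --- one that either certifies near-expansion or returns a single sparse cut whose \emph{smaller} side $S$ is removed, in time $O(\vol(S)\log^2(nW)/\phi)$ proportional to that smaller side --- is not established in this paper and is not known for directed conductance. Your entire accounting depends on it twice over: the runtime bound needs the per-call cost to be local to $S$, and the inter-component bound needs each vertex to land on the smaller side only $O(\log (nW))$ times. If instead the primitive costs time proportional to the whole component (which is what any cut-matching-based procedure costs, since each round solves a flow problem on all of $G[U]$), then a one-cut-at-a-time recursion with no balance guarantee on the returned cut can have depth $\Omega(n)$, giving an $\Omega(mn)$ runtime --- the paper explicitly identifies this as the failure mode of the naive algorithm. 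So the proposal reduces the theorem to an unproved (and harder) lemma rather than proving it.

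The paper takes a different route that sidesteps the need for locality: the \emph{non-stop} cut-matching game (\cref{thm: cut-matching}) runs once per component in time $\tilde O(m/\phi)$ and returns a whole \emph{sequence} of disjoint cuts $C_1,\dots,C_k$, each $3\phi$-sparse in $G[V\setminus C_{<j}]$ and each with $\bd(C_j)\le 2\bd(V)/3$, together with either (i) a certificate that the remainder $V\setminus C_{\le k}$ (containing $\ge 99\%$ of the weight) mixes in $G$, or (ii) the guarantee $\bd(C_{\le k})>\bd(V)/10^4$. Either way the largest recursive component shrinks by a factor $1-\Omega(10^{-4})$, so the recursion depth is $O(\log(nW))$ and the per-level cost telescopes over disjoint components; sparsity is charged level-by-level rather than by a smaller-side argument. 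Your topological-order treatment of acyclicity and the per-cut direction choice for which side goes into $D$ do match the paper's argument, but the core engine of your proof is missing. If you want to pursue your structure, the fix is to replace $\textsc{LocalCut}$ by the non-stop cut-matching call as a \emph{global, balanced} subroutine and redo the depth and charging analysis around its early-termination/near-expander dichotomy rather than around smaller-side locality.
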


In fact, we state our weak-expander decomposition bounds in their full generality, in the context of \emph{vertex weightings}. This context is required for the application in~\cite{BBLST25}. Since the the full statement is fairly technical, we defer the reader to~\Cref{thm: near ex decomp} in \Cref{sec: weak exp}.

\section{Technical Overview}

At a high level, we follow Saranurak-Wang's analysis of the \emph{non-stop cut-matching game}. The cut-matching game is an iterative procedure first introduced by~\cite{khandekar2009graph} as a method of approximating the sparsest cut of an undirected graph. Since their main focus was sparsest cut, the original cut-matching game terminated upon finding a single sparse cut. A ``non-stop'' version was developed by~\cite{racke2014computing} that continues the iterative process even after a sparse cut has been found, and this version was adopted by Saranurak-Wang's expander decomposition algorithm for undirected graphs.

However, in the directed graph setting, the non-stop cut matching game poses many technical challenges. Prior work on directed graph expander decompositions~\cite{bernstein2020deterministic,DBLP:conf/soda/HuaKGW23} avoid the non-stop version, and work with the original version of~\cite{khandekar2009graph} adapted to directed graphs by~\cite{louis2010directed}. In the event of a sparse cut, they add ``fake'' edges to force the cut-matching game to continue to completion (or until a balanced sparse cut is found). Since the fake edges are not actual edges, they are then ``pruned'' from the graph. The main contribution of~\cite{SP24} is a near-optimal pruning procedure that achieves near-optimal running time, which features a new dynamic flow primitive called \emph{push-pull-relabel} that we also use in our paper.

This approach succeeded in obtaining a near-optimal  algorithm~\cite{SP24}, but with high polylogarithmic factors and only for uncapacitated graphs. As mentioned before, there is a reduction from capacitated graphs to uncapacitated graphs, losing an additional factor of $1/\phi$. We remark that previous works on directed expander decompositions operate in the dynamic setting and obtain additional useful properties (e.g., maintaining that the decompositions are iterative refinements). Our work instead focuses on the static setting with the goal of extending the high-level undirected algorithm to directed graphs; this suffices for our main max flow application~\cite{bernstein2024maximum, BBLST25}.  As the static algorithm in~\cite{saranurak2019expander} can be made dynamic by (now) standard means, we expect that our algorithm can be as well, but that is beyond the scope of the current work.

In this work, we provide the first implementation and analysis of the non-stop cut-matching game for directed, capacitated graphs. In the case of uncapacitated directed graphs, we even match the guarantees of Saranurak-Wang, suggesting that directed graphs are no harder for expander decomposition algorithms. (For capacitated directed graphs, we lose extra factors due to technical reasons but see no fundamental barrier to matching Saranurak-Wang as well.)

For weak-expander decomposition, the non-stop cut-matching game is all we need, and a simple recursion establishes the desired bounds. (In comparison, \cite{SP24} would still need their pruning procedure for weak-expander decomposition due to the presence of fake edges.) For (strong) expander decomposition, we require a similar pruning procedure as in~\cite{SP24}. We adapt their push-pull-relabel primitive to capacitated graphs, which involves ``regularizing'' the vertex degrees to control the running time.

For the rest of the technical overview, we present our techniques in more detail.

\subsection{The Cut-Matching Game for Directed Graphs}
Our starting point for studying the cut-matching game on directed graphs is the work of Louis~\cite{louis2010directed}. In that work, Louis extends the algorithm of~\cite{khandekar2009graph} to the directed setting. We emphasize certain aspects that differ from~\cite{khandekar2009graph}. Roughly, the algorithm is as follows.\footnote{Technically, the algorithm in \cite{louis2010directed} only approximates the \textit{expansion} of the graph $G$ and is only written for unweighted graphs. The expansion of a graph differs from its conductance in that the denominator terms are the number of vertices in the sets as opposed to their respective volumes. It is straightforward to extend Louis's algorithm to approximating conductance for weighted graphs, so we do so in our discussion here.} Let $G = (V,E, \bc)$ be a directed graph of order $n$ and size $m$, and let $\phi > 0$ be the ``expansion'' parameter. Our goal is to either:
\begin{enumerate}
    \item \textbf{Certify expansion:} Certify that $G$ is a $O(\phi/\log^2 n)$-expander.
    \item \textbf{Find a sparse cut:} Find $S \subseteq V$ so that $\Phi_G(S) := \frac{\min(\delta_G(S, \overline{S}), \delta_G(\overline{S}, S))}{\min(\vol_G(S), \vol_G(\overline{S}))} \leq \phi$. 
\end{enumerate}

The algorithm proceeds over $O(\log^2 n)$ rounds. Throughout, we maintain an implicit all-to-all multicommodity flow between the vertices, which embeds into $G$ with low congestion. We represent this by a matrix $\bF \in \R_{\geq 0}^{V \times V}$, where the row $\bF(u)$ is the vector encoding the flow from $u$ to each other vertex in $V$. We initialize $\bF$ as $\diag(\deg_G)$, i.e., each vertex $u \in V$ starts with $\deg_G(u)$ of its own commodity. It turns out that it suffices to show that the flow vectors of all $u \in V$ are very close to uniformly spreading out their commodity in order to certify expansion (i.e., the entry of $\bF(u)$ corresponding to $v$ is roughly $\frac{\deg_G(v)}{\vol_G(V)} \cdot \deg_G(u)$).

In each round, the \textit{cut player} computes a (weighted) \textcolor{purple}{bisection} $L \sqcup R = V$, where the flow vectors of $u \in L$ are ``far'' from those of $v \in R$. Subsequently, the \textit{matching player} attempts to embed (weighted, directed) matchings from $L$ to $R$ \textcolor{purple}{and} from $R$ to $L$, each with low congestion. 

In the case that they fail to embed either of those matchings, they find a sparse cut $S$ and the algorithm terminates. Otherwise, the flow vectors are updated according to the embedded matchings, uniformizing them by \textcolor{purple}{``averaging''} matched flow vectors. Notably, by the choice of cut found by the cut player, these averaged flow vectors were previously far from each other. The progress made towards uniformity in each round is governed by a potential function. By standard concentration bounds of subgaussian random variables, the potential decreases by a $1 - \Omega(1/\log n)$ factor in each round in expectation. The potential also has the property that, upon being sufficiently small, it certifies expansion of $G$. This is then achieved after $O(\log^2 n)$ rounds.

In the undirected setting, the cut player computes a very unbalanced cut (where $L$ is much smaller than $R$), and the matching player only attempts to embed a matching from $L$ into $R$ with low congestion. Then the flow vectors of the matched vertices are averaged. Importantly, the averaging is bidirectional: each sends some portion of its flow vector to the other. This is in sharp contrast to the directed case where $u$ being matched to $v$ means that $\bF(u)$ will receive some portion of $\bF(v)$, but $v$ may not receive any of $\bF(u)$. Indeed, $u$ can send flow to vertices that $v$ can send to via its embedded path, but the converse may not be true. 

This basic cut-matching game routine inspires a naive algorithm for computing an expander decomposition (in both the directed and undirected settings). Run the cut-matching game. In the case of finding a sparse cut, stop and recursively run the routine again on both of sides of the cut. Upon ever certifying expansion (or reaching a singleton node), the algorithm terminates. Unfortunately, this algorithm is inefficient. Each round of cut-matching will take $\Omega(m)$ time, since it involves solving a flow problem for the matching step, and the most basic cut-matching game yields no guarantees about the balancedness of the sparse cut found when not certifying expansion. Hence, the depth of the recursion could be as large as $n$, resulting in an $\Omega(mn)$ time algorithm. 

The obvious solution to this inefficiency is to somehow ensure that, upon termination of each round of the cut-matching game, the components we recurse into are reasonably balanced. This is exactly what is ensured by the \textit{nonstop cut-matching game} introduced in~\cite{racke2014computing} and leveraged in~\cite{saranurak2019expander}, both in the case of undirected graphs.
The essential idea is simple: upon failing to embed a matching in the matching step, the algorithm finds a sparse cut. In the case of implementing the matching step with a max-flow oracle call, this sparse cut is precisely the min-cut resultant from attempting to solve the weighted matching flow instance. As such, the source on vertices not in the sparse cut is routed, and we make some progress towards uniformizing the flow vectors corresponding to vertices not in the cut. (In general, we will not necessarily call a max-flow oracle, but this property remains roughly true for other approximate flow routines, like bounded height push-relabel or unit flow.) Then, instead of stopping and outputting a sparse cut, the algorithm continues, restricting the set of vertices being cut and matched to those not removed in a sparse cut thus far. If at any point the collection of sparse cuts removed is large, e.g., they account for a polylogarithmic fraction of the total volume in $G$, we can safely terminate early and recurse onto each discovered cut. Otherwise, after $O(\log^2 n)$ rounds, we will still certify expansion for the remaining vertices (albeit, a weaker form of expansion, since remaining vertices may have been matched to vertices that have since been removed). Since we do not need to recurse onto subgraphs we have certified expansion for, this ensures a much more agreeable polylogarithmic recursion depth. This is the strategy employed to great effect in~\cite{saranurak2019expander}, yielding an undirected expander decomposition algorithm with polylogarithmic factors far more reasonable than the more than $20$ present in the current state of the art for directed expander decompositions~\cite{SP24}.

\paragraph{Problem 1: directed matchings.} The first and most direct challenge in extending the directed cut-matching game of~\cite{louis2010directed} to a nonstop cut-matching game is the fact that the matching player computes two directed matchings in each step. When both matchings embed with low congestion, there is no issue: the flow vectors $\bF(u)$ all send and receive the same amount of flow, so the sum of their entries remains constant. However, what happens during matching steps where $u \in V$ does not belong to any of the sparse cuts found, but still is matched more in one direction than the other? The $v$\textsuperscript{th} entry of $\bF(u)$ represents the amount of flow sent from $u$ to $v$. The sum of these entries should stay constant throughout, since $u$ started with $\deg_G(u)$ of its own commodity. However, if $u$ is unevenly matched, this invariant will be disrupted. For example, if $u$ sends all of its flow in one matching but receives none in the other matching, it will accumulate additional flow without sending any and the sum of the entries of $\bF(u)$ will increase.

This is deeply problematic for the potential analysis showing that we make progress towards certifying expansion in each round. In the undirected setting, this was a nonissue because the matchings computed by the matching player were undirected, so either a vertex was matched, or it was unmatched; no vertices were matched only in one direction and not the other.

\paragraph{Problem 2: the cut player finds bisections.}
This issue is slightly more subtle. In each round, the cut player must find bisections, rather than arbitrary partitions, since otherwise Problem 1 is exacerbated even further: we want to avoid partially matched vertices as much as possible. 

But, in the undirected setting, the cuts are explicitly chosen to be unbalanced, ensuring that in the matching step the amount of total source is dominated by the total sink. This is a desirable property due to a hiccup in the analysis that we glossed over in our sketch of the algorithm. Again, supposing that we use a max-flow oracle in our implementation of the matching step, if we find a sparse cut, it corresponds to the min-cut from the flow problem. Then we have that all source outside of the cut is routed, so we make some progress matching the remaining vertices. However, what if the cut we find is extremely unbalanced and the small side of the cut is outside of the min-cut? 

In that case, if we remove the min-cut side of the sparse cut, we lose all the benefits of our nonstop cut-matching game. We will immediately trigger the early termination condition, and we have no useful guarantees on our recursion depth. On the other hand, if we remove the other side of the sparse cut, we lose out on the property that the source outside of the cut was routed. This is an essential property of the matching step in prior work (e.g., see Lemma B.6 of~\cite{saranurak2019expander}). These problems are compounded further when using faster approximate max-flow routines like bounded height push relabel.

To reiterate, this complication is avoided in the undirected setting by finding unbalanced cuts and ensuring that the total source is dominated by the total sink in the matching player flow instances. Since all of the sink must be saturated in the min-cut, this means that the min-cut cannot ever be too large.

\paragraph{Problem 3: existing potential analyses only certify in-expansion.}
Existing analyses of the cut-matching game following the~\cite{khandekar2009graph} framework (e.g.,~\cite{louis2010directed, saranurak2019expander, li2025congestion},) all employ essentially the same potential function strategy. These ideas extend to weighted graphs, but for simplicity we consider the unweighted setting for this discussion. Let $\bF$ be the matrix representing the implicit multicommodity flow defined by the cut-matching game, and let $A$ be the set of vertices which are not removed in matching steps. Then, prior works consider a potential function which measures how close the rows $\bF$ corresponding to vertices in $A$ are to their average. They show that this potential drops in expectation in each round, so that after $O(\log^2 n)$ rounds of cut-matching, with high-probability, the potential certifies that the rows of $\bF$ corresponding to $A$ are approximately equal up to scaling. That is, each vertex in $A$ sends roughly the same amount of its commodity to each vertex (up to scaling by its degree). 

In the undirected setting, this suffices to certify that $A$ is a near-expander. We sketch this argument. Let $S \subseteq A$ with at most half of the volume. As discussed in Problem 1, we hope to ensure that $\sum_{u \in V} \bF(u,v) = \deg(v)$. Assuming this still holds, then there are two cases. If $\sum_{u \in A} \bF(u,v) \geq \deg(v)/2$ then a constant fraction of that flow arrives from $A \setminus S$. This is because $A \setminus S$ is more than half of the volume of $A$ and all vertices in $A$ send roughly the same amount of flow to each other vertex (up to scaling by its degree). Otherwise, at least $\deg(v)/2$ flow arrives into $S$ from outside of $A$ (from the removed vertices). In total, this shows that $\Omega(\vol(S))$ can flow into $S$ from $V \setminus S$ with low congestion, certifying that $A$ is indeed a near-expander. 

The above argument still works in the directed setting, but notably only certifies \textit{in-expansion}. To certify out-expansion and hence directed expansion, we also need to consider the flow leaving $S$, something the current potential analyses is ill-adapted to control. Note that flow-based mixing arguments such as in~\cite{li2025congestion} suffer from the same issue of only certifying expansion in one direction. 

\paragraph{New technique: grafting.} Partially matched vertices are at the core of the first problem. One simple way to ensure that they are removed is to delete them whenever they arise. Let us temporarily ignore the issue of how much we end up deleting upon deleting partially matched vertices. Now, suppose that, lumping nodes deleted in sparse cuts and other deleted nodes together, we terminate our nonstop cut-matching algorithm without triggering the early termination condition. Assuming all of the relevant analysis works out, in the best case the result is a sequence of sparse cuts, a subset $D$ of vertices deleted externally to sparse cuts, and a subset $A$ certified as (near) expanding in $G$. Unlike in the undirected setting, we do not have a handle on the edges crossing the boundaries between these sets: we have no bound at all on the edges crossing between $D$ and $A$. 

We handle this via a final round of post-processing, attempting to \textit{graft} the vertices in $D$ onto $A$, extending the certified (near)-expansion of $A$ to $A \cup D$.\footnote{We introduce the term grafting here since the gardening analogy is both descriptive and in contrast to the well-established operation of trimming near-expanders into expanders.} Indeed, assuming we can ensure that $D$ is small relative to $A$, we can attempt a final ``matching'' step, attempting to route source from $D$ into $A$ and from $D$ into $A$ in the reversed graph of $G$ (note that the latter operation is somewhat different than a typical matching step). See~\cref{fig: graft}.

\begin{figure}
    \centering
    \input{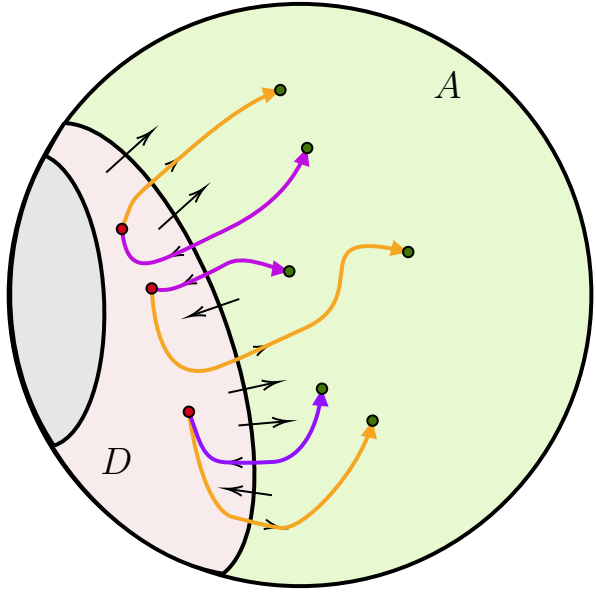}
    \caption{An illustration of a post-processing grafting step with $D$ the set of deleted nodes and $A$ the certified expanding set. The orange flow paths correspond to the embedding of a matching from $D$ to $A$. The purple flow paths correspond to the embedding of a matching from $D$ to $A$ in the reversed graph.}
    \label{fig: graft}
\end{figure}

Grafting also solves one of the core problems associated with the cut player finding bisections. In the case of finding extremely unbalanced sparse cuts in the matching step, the problem with cutting out the small side of the cut was the loss of the structural property that source on the remaining vertices was routed. But, by allowing sufficiently unbalanced sparse cuts, we can ensure that almost all of the source on the remaining vertices was routed. Then, we can delete vertices who failed to route their source, adding them to $D$, and grafting them back into $A$ at termination. 

There is one major caveat to this idea: by the end of the grafting post-processing step, we only want nodes in sparse cuts or the certified near-expander. Our standard matching steps no longer guarantee that we only delete nodes in sparse cuts. Nonetheless, by  tweaking our early termination condition, we ensure that these final flow instances have far less source than sink, recovering the crucial property of the matching instances in prior work. We can then  remove the computed min-cut  as our sparse cut as in~\cite{saranurak2019expander}. (In the case of faster approximate max flow routines, we instead appeal to dynamic flow, similar to Lemma B.6 of~\cite{saranurak2019expander}.)

\paragraph{New technique: partial vertex deletions.}
We now revisit the issue of deleting too many vertices not belonging to sparse cuts. This is a very real problem: upon deleting almost all of the graph, we have no choice but to trigger the early termination condition. In this case, grafting cannot help us. Moreover, if only a tiny fraction was deleted in sparse cuts, we again have no reasonable bounds on the recursion depth of our expander decomposition algorithm: one of our recursive cases will include all vertices deleted outside of sparse cuts. Additionally, in an effort to cull vertices which are not fully matched in both directions in each matching step, we might delete vertices with very large degree which were only a single unit of flow from being fully matched.

In the case of unweighted graphs, it is possible to handle this issue by appealing to the split-edge graphs considered in~\cite{racke2014computing, saranurak2019expander} and computing (unweighted) matchings in each matching step. Emulating weighted graphs with parallel edges extends this to graphs with small edge weights. However, for general weighted graphs, the runtime blowup is untenable. We would rather somehow retain the vertex view and find weighted matchings (more akin to Appendix A of~\cite{li2025congestion}, for example).

Our solution is to emulate the split-edge analysis of~\cite{saranurak2019expander} by supporting partial vertex deletions. That is, for each $u \in V$, we have two copies of $u$ named $u_\circ$ and $u_\times$, corresponding to the \textit{active} and \textit{deleted} portions of $u$, respectively. We maintain the invariant that $u_\circ$ is fully matched in each matching step, if necessary retroactively modifying the computed matchings to ensure this. When the $u_\circ$ portion of $u$ corresponds to less than half of the degree of $u$, we fully delete $u$, adding it to $D$. This means that, at termination, although we technically only certify near expansion of the active vertices, we can easily extend this to certify near expansion of all non-deleted vertices at the cost of a factor of two in the expansion.

We design custom cut and matching steps to support partial vertex deletions. The potential function (and its analysis) are also significantly modified from prior work.

\paragraph{New technique: column-based matchings}
To address the final problem, we adapt the cut-matching game to simultaneously ensure that the rows and columns of the implicit multicommodity flow matrix tend towards their weighted average. To this end, we double the number of rounds of cut-matching and spend the latter half of the rounds finding cuts based on the columns as opposed to the rows. To measure the effect of these additional matching steps, we introduce a second potential function based on the columns of the implicit multicommodity flow matrix.

We show that our new potential analysis easily also accommodates reducing this ``column'' potential and that taking steps to reduce one of the potentials does not cause increases in the other. Then, when both potentials are small, we are able to certify both in and out-expansion. 

\paragraph{Other important technical considerations.}
A number of more granular technical issues arise in trying to extend the nonstop cut-matching game analysis from the undirected setting. 

First, the potential analysis of~\cite{saranurak2019expander} (Lemma B.10) crucially relies on the observation that the flow vectors of unmatched vertices remain constant. However, for the directed setting, partially matched vertices completely disrupt this analysis. Even if we were somehow able to ensure that all remaining vertices are completely matched, they might be partially matched to vertices that are removed in sparse cuts. Consequently, we give a novel argument for lower bounding the decrease in potential in each round of the algorithm (\cref{lem: general dec lemma}). One nice secondary implication of this analysis is that it can be used to recover a much simpler proof of convergence for the undirected, weighted setting, avoiding the matrix algebra used in~\cite{li2025congestion}. 

Second, the existing analysis certifying that progress is made on each iteration follows by only considering the progress made for vertices on the small side of the cut found by the cut player (the relevant technical lemma is Lemma 3.3 of~\cite{racke2014computing} or Lemma B.5 of~\cite{saranurak2019expander}). This is at the core of the potential analysis. A priori, this lemma does not extend to bisections. Nonetheless, we prove a new version of this technical lemma supporting bisections (\cref{lem:apd-finding-S}). One interesting feature of the lemma is that it suffices for our analysis to show that there is a subset of one of the two sides of the cut certifying progress; it need not be a subset of a fixed side, and the algorithm does not need to know where it is. 

Additionally, the non-stop cut-matching game in~\cite{saranurak2019expander} outputs a single sparse cut and a near-expanding subset when the early termination condition is not triggered. This is not because they find one single sparse cut over the course of their algorithm; rather, in undirected graphs, the union of sparse cuts is sparse. However, in directed graphs, since a cut is sparse if there are few edges in one of the two directions crossing the cut, the union of sparse cuts is not necessarily sparse. As such, we output a sequence of sparse cuts instead of a singular sparse cut. This detail creates an issue in the matching step, since we now need to solve two directed matching problems, both on the same initial graph. That means that if both find a sparse cut, it is not a priori clear how to add the two sparse cuts to our sequence of sparse cuts. Even if they are disjoint, removing one from the graph first might significantly change the sparsity of the second. We handle this by casework, constructing compatible sparse cuts to add to our sequence. In finding the compatible sparse cuts, we end up deleting additional vertices coming from the initial sparse cuts, which we add to $D$ and defer to our post-processing grafting step.

\subsection{Expander Decompositions for Directed Graphs}

With the non-stop cut-matching game in hand, we now outline how to obtain a strong expander decomposition, following the approach of~\cite{saranurak2019expander}. First, we run the non-stop cut-matching game. If we have the early termination case, we recurse on each component of the sparse cut. We make progress since each component decreases in size by a sufficient multiplicative factor. In the case where we find a near expander $A$, we try to find a real expander $A'\subseteq A$. We do this via a trimming algorithm, which iteratively removes vertices from $A$ until we can certify that $A$ is a real expander. We give details below.

\paragraph{Flow instance: certifying expansion.} In the case where the cut-matching game finds a near-expander, we continue to follow the approach of \cite{saranurak2019expander}. Given a $\phi$-near expander $A$ in $G$, they show that they can efficiently find a large $\Omega(\phi)$-expander $A'\subseteq A$. To this end, they design a flow instance on $G[A]$ such that if the flow instance is feasible, then the near expander is a standard expander. In the case of unweighted directed graphs, past work~\cite{DBLP:conf/soda/HuaKGW23} designed a flow instance which certifies out-expansion of the subgraph. By considering the flow instance on $G[A]$ and on the same graph with edges reversed, it suffices to certify the feasibility of two flow instances in order to certify expansion in directed graphs.

To define the flow instance, they use a witness $(W,\Pi)$ that $A$ is $\phi$-near expanding in $G$. Informally, let $W$ be a graph on $A$ which is near expanding and $\Pi$ be an embedding of the edges in $W$ into flow paths in $G$ with low congestion. The flow instance is defined as follows: for each edge flow path $P$ in the embedding $\Pi$ between endpoints $u$ and $v$, if $P$ crosses the boundary of $A$, we add 100 source to the endpoints $u$ and $v$. We then add $\deg_G(v)$ sink to each vertex $v\in A$ and set the capacity of each edge to be $200/\phi$. They show that in the setting of directed expander pruning, a similar flow instance certifies $\Omega(\phi/\text{poly}\log{n})$-out expansion (implicitly in Lemma 3.1 of~\cite{DBLP:conf/soda/HuaKGW23}, and also informally in the introduction). We show that for our static expander decomposition, this flow instance certifies $\Omega(\phi)$-out expansion (\cref{sec:flow}).

\paragraph{Trimming.}
Now that we have our two flow instances, for which finding a feasible flow certifies expansion, our goal is to iteratively remove vertices, starting with $A_0=A$, until the flow instance is feasible on $A_t$. In undirected graphs, \cite{saranurak2019expander} showed that assuming an exact max flow oracle, one can find an expander in one step of trimming. Specifically, if the flow is infeasible for $A$, they take the cut $S$ certifying infeasibility of the flow instance and define $A'=A-S$. They show that the flow instance is always feasible on $A'$ by using the structure of the flow instance and the max-flow min-cut theorem. 

Without assuming a max flow oracle, \cite{saranurak2019expander} give an algorithm for trimming using a dynamic flow algorithm based on the Push-Relabel framework~\cite{goldberg1988pushrel}. The idea is to replace the max flow oracle with a bounded-height push-relabel algorithm. When bounded-height push-relabel does not find a feasible flow, they show that there is a sufficiently sparse level cut $S_t$ which can be found efficiently and then update $A_{t+1} = A_t\setminus S_t$. Even though the trimming no longer succeeds in one iteration, so we may need to solve many max flow instances, the instances are highly structured. We can avoid starting from scratch each iteration by reusing information via a dynamic flow algorithm. Specifically, we can use the flow and labels from the flow on $A_t$ to warm start the push-relabel computation for the flow instance on $A_{t+1}$. In total, the dynamic flow algorithm for these specialized sequence of flow instances takes the same amount of time as one call to bounded-height push-relabel.

Trimming becomes significantly more difficult for directed graphs, in part due to the need to certify feasibility for two flow instances instead of one. For example, there is no easy way to design a fast trimming algorithm even if we assume a fast max flow oracle. Suppose we follow the approach of \cite{saranurak2019expander}. Let $S_0$ be the cut certifying infeasibility for the first flow instance and set $A_1=A-S_0$. We can show that the first flow instance is feasible on $A_1$, again using the max-flow min-cut theorem and the structure of the flow instance. But we no longer have the property that $S_0$ is the minimum cut in the second flow instance, so we may not have that the second flow instance is feasible on $A_1$. Thus, we will need to iterate: let $S_1$ be the cut certifying infeasibility for the second flow instance and set $A_2=A_1-S_1$. We would now be able to prove that the second flow instance is feasible $A_2$, but there are no guarantees for the first flow instance anymore. It is unclear if this process converges fast enough for our desired runtime\footnote{With the correct setting of the parameters, this approach can give an expander decomposition with a $\tilde{O}(2^{\sqrt{\log{n}}})$ loss in the runtime and the number of intercluster edges. This is already novel for weighted graphs but is too slow for us.}.

If we instead take the approach based on dynamic flows, we run into a different issue, also caused by the fact that there are two flow instances instead of one. After we find that one of the flow instance is infeasible, we cut out a set $S_t$ from $A_t$ in both flow instances. The difficulty arises now because in the other flow instance, there may have been a flow path $P$ which started in $S_t$ and ended in $A_{t+1}$. When we warm start the flow instance on $A_{t+1}$, we restrict the flow from $A_t$ to $A_{t+1}$. If we look at the flow path $P$, the source where the flow originated was deleted but in the warm start of $A_{t+1}$, the end of the flow path is still there. This means that $P$, restricted to $A_{t+1}$, could be sending flow out of a vertex in $A_{t+1}$ which does not have any excess flow to send out. This induces \textit{negative excess} on the start of the next flow instance, which standard push-relabel can not handle.

\paragraph{Push-Pull-Relabel.}

Towards dealing with the negative excess, \cite{SP24} generalize the dynamic push relabel algorithm used in \cite{saranurak2019expander} to a new dynamic flow algorithm, which they call Push-Pull-Relabel. Specifically, they introduce a new subroutine \textsc{PullRelabel}, which is analogous to \textsc{PushRelabel} subroutine from \cite{goldberg1988pushrel} but now moves around negative excess instead of positive excess by sending flow around. By using both \textsc{PushRelabel} and \textsc{PullRelabel}, they are able to dynamically solve the flow problems in a similar way to~\cite{saranurak2019expander} in (unweighted) directed graphs.

Unfortunately, the analysis of Push-Pull-Relabel in~\cite{SP24} gives weak guarantees for capacitated graphs. Specifically, the runtime given in their analysis scales with the total weight of all the edges instead of with the number of edges. In \Cref{sec:ppl}, we give an alternate analysis of a modified Push-Pull-Relabel algorithm. Our analysis adopts the lens of the classic Push-Relabel algorithm rather than Unit Flow. Informally, the final bound in the runtime is 
\begin{align}\label{eq:regularized-demand}
    \sum_{v\in V}\frac{\Bn_T(v)}{\nabla(v)}\cdot\widetilde{\deg}_G(v),    
\end{align}
where $\Bn_T$ is some vector satisfying $\Bn_T(V)\le \deg_G(V)$, $\nabla(v)=\deg_G(v)$ is the sink at $v$ for the flow instance described above, and $\widetilde{\deg}_G(v)$ is the unweighted degree of vertex $v$. Informally, the vector $\Bn_T$ measures how the negative excess spreads in the graph via the \textsc{PullRelabel} operations, but the only property we will need in this discussion is that $\Bn_T(V)\le\deg_G(V)$.

Since our new running time bound depends inversely on the weighted degree of each node, if the weighted degrees were uniform, we would get a better running time bound. Specifically, if the weight of each edge were the same, then we would have $\deg_G(v)=\deg_G(V)\cdot\widetilde{\deg}_G(v)/2m$. Plugging this into \Cref{eq:regularized-demand}, this would give a runtime of $O(m)$ since $\Bn_T(V)\le\deg_G(V)$. But in general, the weighted degrees could be arbitrarily non-uniform: it is possible that $\deg_G(v)$ is a constant for some $v$ where $\Bn_T(v)=\Omega(\deg_G(V))$. In this case, our runtime bound would match the naive generalization  of~\cite{SP24}. 

To counteract this, we define a regularized vertex weighting which is closer to uniform:
\begin{align*}
\mathbf{d}(v)=\deg_G(v)+\frac{\widetilde{\deg}_G(v)}{2m}\cdot \deg_G(V).
\end{align*}
We will compute an expander decomposition with respect to this vertex weighting instead (see \cref{sec: prelims} for the formal definition of this). Observe that in this vertex weighting, $\bd(v)\ge\deg_G(v)$, so a $\phi$-expander with respect to the vertex weighting is stronger than a standard $\phi$-expander (with respect to the vertex weighting $\deg_G$). Furthermore, we have that $\bd(V)=2\deg_G(V)$, so the number of intercluster edges in this stronger expander decomposition is also not too much larger.
Most importantly, when certifying expansion with respect to vertex weighting $\bd(v)$, we set the sink $\nabla(v)=\bd(v)$ in the flow instance. By our choice of $\bd(v)$, we have $\nabla(v)\ge \deg_G(V)\cdot\widetilde{\deg}_G(v)/2m$, which is the same expression as we had if the edges had uniform capacities. Plugging this into the \Cref{eq:regularized-demand}, we can bound the runtime by $O(m)$ as before.

\paragraph{The Flow Decomposition Barrier.}

Unfortunately, there is another complication we have thus far glossed over. Naively defining the flow instance from~\cite{DBLP:conf/soda/HuaKGW23} requires explicit maintenance of a path decomposition of the witness embedding. This is because we need to find the set of embedding paths which cross the boundary of $A$, and $A$ is updated throughout the trimming algorithm. In an unweighted graph, this is possible since the total space required for this is $O(m)$, as each edge can only be used once in the flow embedding. But, for a weighted graph, the flow decomposition barrier implies that $\Omega(mn)$ space is required for even storing the explicit path decomposition of a flow. This is much too slow, and is a significant barrier to computing directed expander decompositions for weighted graphs. 

To overcome this barrier we introduce a few new ideas. First, although we cannot maintain an explicit flow path decomposition for the embedding paths of the witness, we show that, at any given state of the algorithm, we can find the set of embedding paths crossing the boundary of $A$ in $\tilde{O}(m)$ time. The approach is simple: when computing the path decompositions in cut-matching using link-cut trees, additionally record a transcript of the sequence of operations made. When we need to determine embedding paths that cross the boundary of $A$, we will first ``tag'' edges crossing the boundary of $A$ using an auxiliary weight function. Then, we merely re-compute the same path decomposition originally computed during cut-matching, with the same sequence of operations. The only difference is that, immediately prior to adding a path to the path decomposition, we use the new edge tags to efficiently determine whether any edge in the path crosses the boundary.

While this algorithm for finding embedding paths crossing the boundary of $A$ is much more efficient than storing explicit flow path decompositions, it is too slow to run more than a polylogarithmic number of total times. To address this issue, we need to ensure that we only need to increment sources a polylogarithmic total of times during the trimming algorithm. To this end, we do our increase source operations in batches and ensure that the total source in the flow instances decreases by a constant factor between source increases. A key ingredient in bounding the number of rounds of trimming is the idea of incrementing the sinks of vertices in our flow instance, a technique applied to great effect in recent work on parallel expander decompositions~\cite{chen2025parallel}. Intuitively, increasing the sink of vertices ensures that residual source remaining between rounds has a limited influence on the later rounds. Hence, roughly speaking, it suffices to show that the total source decreases sharply on the first round of the algorithm. These ideas are detailed in~\cref{sec: fast-path-decomp}. 

\section{Organization of the Paper}
In~\cref{sec: cut-matching}, we introduce and analyze our nonstop cut-matching game for weighted, directed graphs.~\cref{sec: weak exp} shows how to apply our nonstop cut-matching game to efficiently construct a weak-expander decomposition of weighted, directed graphs. 
In~\cref{sec: strong exp}, we show how to use our cut-matching game framework to efficiently find strong expander decompositions by trimming near-expanders into expanders. In particular, in~\cref{sec:ppl} we strengthen the push-pull relabel framework from~\cite{SP24}, extending it to weighted graphs.

\section{Preliminaries} \label{sec: prelims}
\paragraph{Sets.} We use $\N$ to denote the set of positive integers and $\Z_{\geq 0}$ to denote the set of non-negative integers. For $k \in \N$, we use $[k]$ to denote $\{1,2, \ldots, k\}$. For a vector $\bv$, we define $\supp(\bv)$ to be its set of nonzero coordinates. For a sequence of sets $X_1, X_2, \ldots, X_k$, we use $X_{<j}$ to denote $\bigcup_{i = 1}^{j-1} X_i$. We similarly define $X_{\leq j}$, $X_{> j}$, and $X_{\geq j}$. When $S \subseteq X$, we use $\overline{S}$ to denote its complement; that is, $\overline{S} = X \setminus S$.

\paragraph{Functions.} For two functions $f, g : \mathcal{X} \to \R$ we sometimes write $f \leq g$ to denote that, for all $x \in \mathcal{X}$, $f(x) \leq g(x)$. We also use this notation for vectors, interpreting those vectors as functions. For $S \subseteq \mathcal{X}$ and $S$ finite, we also write $f(S)$ as shorthand denoting $\sum_{x \in S} f(x)$. 

\paragraph{Graphs.} We consider directed, capacitated (weighted) graphs $G = (V, E, \bc)$ where $\bc \in \N^E$ and $\bc \in [1,W]$. Unless otherwise specified, we use $n$ to denote the order of $G$ and $m$ to denote its size. Sometimes we write $V_G$ (or $V(G)$), $E_G$ (or $E(G)$), and $\bc_G$ to clarify that they are the parameters of the graph $G$. For $S \subseteq V$, denote the induced subgraph of $G$ as $G[S]$. That is, $G[S]$ is the subgraph of $G$ formed by retaining exactly the vertices in $S$ and edges between vertices in $S$.

For $S, T \subseteq V$, we write $E(S, T)$ to denote the set of edges from a vertex in $S$ to a vertex in $T$. We will almost always be interested in the weighted degree of vertices in $G$. So, for $u \in V$, we denote the weighted degree of $u$ in $G$ as $\deg_G(u) = \sum_{e \in E(u, V\setminus\{u\}) \cup E(V\setminus\{u\}, u)} \bc(e)$ and the unweighted degree $\tdeg_G(u) := |E(u, V \setminus \{u\}) \cup E(V\setminus\{u\}, u)|$. For $S \subseteq V$, $ \vol_G(S) := \deg_G(S)$ and $\tvol_G(S) := \tdeg_G(S)$. For $S, T \subseteq V$, we use $\delta(S,T)$ as shorthand denoting $\bc(E(S,T))$ and $\delta_G(S)$ to denote $\bc(E(S, V \setminus S))$. We also often consider vertex weights $\bd \in \Z_{\geq 0}^V$, with the most common weight function being $\bd = \deg_G$. 

\paragraph{Flow.}
A \textit{demand} is a vector $\bb \in \R^V$ whose entries sum to $0$ (i.e., such that $\bb(V) = 0$). A flow $\Bf$ routes a demand $\bb$ if, for each $v \in V$, the net flow at $v$ in $\Bf$ is $\bb(v)$. We say that $\Bf$ has \textit{congestion} $\kappa$ if the flow through any edge in $\Bf$ is at most $\kappa$ times its capacity. Given a flow $\Bf$, a  \textit{path decomposition} of $\Bf$ is a collection of directed, capacitated paths in $G$ such that, for each $(u,v) \in E$, the flow from $u$ to $v$ in $\Bf$ is the total capacity of paths containing the edge from $u$ to $v$ in the path decomposition. 

\paragraph{Pre-flows.}
A \textit{pre-flow} is a relaxation of the notion of a flow, where we need not obey sink capacity constraints on nodes. We use $\ex_{\Bf}(u), \ex_{\Bf}(u)$ to denote the amount of \textit{excess} and \textit{absorbed} flow at node $u$ in the pre-flow $\Bf$. More precisely, using $\Bf(u)$ to denote the net flow out of $u$ (flow out minus flow in), $\abs_{\Bf}(u) = \min(\nabla(u), \Bf(u))$ and $\ex_{\Bf}(u) = \Bf(u) - \abs_{\Bf}(u)$.

\paragraph{Notions of Expansion.} Let $G = (V,E, \bc)$ be a directed, capacitated graph, and let $\bd \in \R_{\geq 0}^V$ be a vertex weighting. Let $S \subseteq V$. Then, the \textit{conductance of $S$ in $G$ with respect to $\bd$} is
\[
\Phi_{G, \bd}(S) = \frac{\min(\delta_G(S, V \setminus S), \delta_G(V \setminus S, S))}{\min(\bd(S), \bd(V \setminus S))}.
\]
For $G$ an undirected graph and $\bd = \deg_G$, we recover the familiar notion of conductance for undirected graphs. We say that a cut $S$ is \textit{$\phi$-sparse} (in $G$ with respect to $\bd$) if $\Phi_{G, \bd}(S) \leq \phi$. We say that $G$ is a $(\phi, \bd)$-\textit{expander} if, for all $S \subseteq V$, we have $\Phi_{G, \bd}(S) \geq \phi$. For $A \subseteq V$, we say that $A$ is \textit{$\phi$-nearly $\bd$-expanding} in $G$ (or $A$ is a $(\phi, \bd)$-\textit{near expander} in $G$) if, for all $S \subseteq A$, we have 
\[
\frac{\min(\delta_G(S, V \setminus S), \delta_G(V \setminus S, S))}{\min(\bd(S), \bd(A \setminus S))} \geq \phi.
\]
Note that if $A$ is $\phi$-nearly $\bd$-expanding in $G$, then the same holds for all $A' \subseteq A$, since the denominators of the relevant expressions only decrease. We say that $G$ is a $(\phi, \bd)$-\textit{out expander} if for all $S \subseteq V$, we have 
\[
\frac{\delta_G(S, \overline{S})}{\bd(S)} \geq \phi.
\]
We can similarly define in-expanders by switching $S$ and $\overline{S}$ in the numerator of the above expression. When $\bd = \deg_G$, we say $G$ is a $\phi$-expander (respectively, near expander, out expander, or in expander). 

We can also define expansion with respect to (single-commodity) flows. We say that a vertex weighting $\bd \in \R_{\geq 0}^V$ \textit{mixes in} $G$ \textit{with congestion} $\kappa$ if, for all demands $\bb \in \R^V$ with $|\bb| \leq \bd$, we have that $\bb$ is routable in $G$ with congestion at most $\kappa$. In particular, this means that $G$ has no $1/\kappa$-sparse cuts with respect to $\bd$ and is hence a $(1/\kappa, \bd)$-expander.\footnote{This is because otherwise there exists $S \subseteq V$ such that $\delta_G(S, \overline{S}) \leq 1/\kappa \cdot  \min(\bd(S), \bd(\overline{S}))$. If $\bd(S) = \min(\bd(S), \bd(\overline{S}))$, then any demand setting $\bb = \bd$ on $S$ will not be routable with congestion less than $1/\kappa$.} Indeed, $\bd$ mixes in $G$ with congestion $1/\phi$ if and only if $G$ is a $(\phi, \bd)$-expander. We remark that while $\bd|_A$ mixing in $G$ with congestion $\kappa$ similarly implies that $A$ is a $(\phi, \bd)$-near expander in $G$, the converse does not hold in general. Flow-based expansion is stronger than cut-based expansion for near-expanders.

\section{A Cut-Matching Game} \label{sec: cut-matching}
In this section we give a (non-stop) cut-matching game for directed, vertex-weighted capacitated graphs. This is the key subroutine in our expander decomposition algorithms.~\cref{subsec: flow update} describes how to update the implicit flow matrix while supporting partial vertex deletions.~\cref{subsec: cut player} and~\cref{subsec: matching player} detail the new cut and matching players, respectively.~\cref{sec: cvg analysis} gives a potential function analysis showing that the cut-matching game certifies near expansion after $O(\log^2 n)$ rounds. Finally,~\cref{subsec: grafting} describes our post-processing grafting step which reintegrates nodes deleted outside of sparse cuts.

The following is the main result of this section.
\begin{theorem}[Cut-Matching] \label{thm: cut-matching}
Consider a (directed, capacitated) graph $G = (V, E, \bc)$ of order $n$ and size $m$ with integral edge capacities $\bc(e)$ bounded by $W$, a vertex weight function $\bd \in \N_{\geq 0}^V$, and a parameter $\phi \in (0,1)$. Then, there exists a parameter $T = O(\log n \log nW)$ and a randomized, Monte Carlo algorithm that outputs:
\begin{itemize}
    \item A sequence of cuts $C_1, \ldots, C_k \subseteq V$ such that, for each $j \in [k]$, $\Phi_{G[V \setminus C_{< j}], \bd}(C_j) \leq 3\phi$, $\bd(C_{j}) \leq 2\bd(V)/3$, and $C_i \cap C_j = \emptyset$ for $i \neq j \in [k]$.
\end{itemize}
 We also have either:
    \begin{enumerate}
        \item \textbf{Early termination:} $\bd(C_{\leq k}) > \bd(V)/10^4$. 
        \item \textbf{Finds large near-expander:}  $\bd|_{V \setminus C_{\leq k}}$ mixes in $G$ with congestion $44T/\phi$ with high probability.
    \end{enumerate}
For general vertex weights $\bd$, the algorithm runs in $O(T \cdot F(n, m) + T^2 m)$ time, where $F(n, m)$ is the runtime of solving a max-flow instance of order $n$ and size $m$. For $\bd \geq \deg_G/\kappa$ for $\kappa \geq 1$, the algorithm runs in $O(\kappa  m  T \log (n) \log (nW)/ \phi)$ time. 
\end{theorem}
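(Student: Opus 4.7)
The plan is to implement a non-stop cut-matching game on the directed, vertex-weighted, capacitated graph $G$, following the roadmap in the technical overview. Maintain an implicit multicommodity flow matrix $\bF \in \R_{\geq 0}^{V \times V}$ initialized as $\diag(\bd)$, and run $T = O(\log n \log nW)$ rounds. In the first $T/2$ rounds the cut player picks a bisection $L \sqcup R$ of the currently active vertex set using rows of $\bF$ (in the style of \cite{louis2010directed}, but modified for bisections rather than unbalanced cuts); in the remaining $T/2$ rounds the cut player uses columns of $\bF$, so that at termination we can simultaneously certify mixing in both directions. Alongside the active set, maintain an auxiliary deletion set $D$ (initialized empty) and, for each $u \in V$, two mass copies $u_\circ$ (active) and $u_\times$ (deleted) summing to $\bd(u)$.

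Within each round, the matching player attempts to embed two directed matchings $L \to R$ and $R \to L$ in the residual $G[V \setminus C_{<j}]$, each with congestion $O(1/\phi)$. If both matchings fully embed, update $\bF$ by averaging along matched pairs (rows in the first half, columns in the second). If either flow instance is infeasible, extract a min-cut; when both directions simultaneously yield sparse cuts, perform a short case analysis (contained / disjoint / partially overlapping) to carve compatible pieces to append to the sequence $C_1, \dots, C_k$, recording any leftover mass into $D$. The $3\phi$ slack (versus $\phi$) is precisely the budget needed for this reconciliation and for later extending from active to all non-cut vertices. Partially matched vertices are handled by the split-copy mechanism: retroactively trim the matchings so that $u_\circ$ is fully matched, and whenever $u_\circ$ drops below $\bd(u)/2$, fully demote $u$ into $D$. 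This enforces disjointness of the $C_j$'s and the $\bd(C_j) \le 2\bd(V)/3$ bound, since no round's sparse cut can contain more than a controlled fraction of the currently active mass.

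After $T$ rounds, run the grafting post-processing: set $A = V \setminus (C_{\leq k} \cup D)$ and solve two flow problems routing source from $D$ into $A$, one in $G$ and one in its reversal. The partial-deletion invariant bounds $\bd(D)$ below a small fraction of $\bd(A)$, so these flows embed with congestion $O(T/\phi)$; any unroutable min-cut is appended to the sparse-cut sequence. Now apply the final dichotomy: if $\bd(C_{\leq k}) > \bd(V)/10^4$, return case 1. Otherwise, the potential analysis---the new general decrease lemma \cref{lem: general dec lemma} combined with the bisection-compatible progress lemma \cref{lem:apd-finding-S}, applied once to row potentials and once to column potentials---shows with high probability that both potentials have decayed enough to certify that $\bd|_A$ mixes in $G$ with congestion $O(T/\phi)$. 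Successful grafting then extends this to $\bd|_{V \setminus C_{\leq k}}$ at the cost of only a constant factor, which is absorbed into the stated $7T/\phi$ bound.

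For runtime, each round performs $O(1)$ directed max-flow computations on graphs of size $O(m)$, yielding $O(T \cdot F(n,m))$ in total. In the thicker regime $\bd \ge \deg_G/\kappa$, replace exact max-flow by a bounded-height weighted push-relabel routine, costing $O(\kappa m \log(n) \log(nW)/\phi)$ per round and giving the stated $O(\kappa m T \log(n) \log(nW)/\phi)$ bound. The main obstacle will be proving the per-round potential decrease in expectation under the three complications absent from the undirected setting: bisection cuts, two directional matchings per round, and partial vertex deletions; verifying that the split-vertex copies $(u_\circ, u_\times)$ evolve compatibly with \cref{lem: general dec lemma} requires careful bookkeeping of which mass is truly averaged in each matching step. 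A secondary obstacle is the case analysis for reconciling two simultaneously discovered sparse cuts within a single round so that disjointness, the volume bound $\bd(C_j) \le 2\bd(V)/3$, and the $3\phi$-sparsity invariant on $G[V \setminus C_{<j}]$ all hold---and so that any mass lost in this reconciliation is safely absorbed into $D$ and later recovered via grafting.
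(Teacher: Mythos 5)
Your proposal follows essentially the same route as the paper: a non-stop cut-matching game with bisections, partial vertex deletions via the $(u_\circ,u_\times)$ split, reconciliation of the two per-round sparse cuts with the factor-$3$ sparsity slack, row/column potentials over the two halves of the rounds analyzed via \cref{lem: general dec lemma} and \cref{lem:apd-finding-S}, and a final grafting step, with the same runtime accounting. One small caution: the flow-matrix update rule is identical in every round (the row- and column-based recurrences of \cref{rmk: other ord flow upd} are two views of the same update); only the cut player's projection and the potential being tracked switch from rows to columns at round $T/2$.
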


\begin{remark}
Note that, unlike in Saranurak-Wang~\cite{saranurak2019expander}, our early termination case does not depend on $T$. In their work, the dependence on $T$ is necessary for the expander trimming step of their expander decomposition algorithm. Since we apply~\cref{thm: cut-matching} to compute weak expander decompositions and hence will not need to trim near-expanders, we can afford to remove the dependence on $T$. 
\end{remark}
\begin{remark}\label{rem:early-termination}
The early termination condition can be set to any $\tau \geq 10^4$ so that our guarantees are either:
  \begin{enumerate}
        \item \textbf{Early termination:} $\bd(C_{\leq k}) > \bd(V)/\tau$. 
        \item \textbf{Finds large near-expander:}  $\bd|_{V \setminus C_{\leq k}}$ mixes in $G$ with congestion $44T/\phi$ with high probability.
    \end{enumerate}
Doing so does not incur any loss in the run-time. This is useful for our application to (strong) expander decompositions.
\end{remark}
\begin{remark}
The difference in the running times for general demands $\bd$ and $\bd = \deg_G$ is natural: indeed, when $\bd$ is only supported on only two vertices, showing that it mixes amounts to solving an $s$-$t$ max flow instance. In the case of $\bd = \deg_G$ we can apply fast approximate maximum flow algorithms such as bounded height push-relabel. In the case of uncapacitated graphs and $\bd = \deg_G$, we can even use a related flow variant, Unit Flow~\cite{henzinger2020local}, to remove an additional $O(\log n)$ factor from the running time (see~\cref{rmk: unit flow}). 
\end{remark}

Our setup combines elements of Appendix B of \cite{saranurak2019expander}, Appendix A of \cite{li2025congestion}, and~\cite{louis2010directed}. However, since \cite{saranurak2019expander, li2025congestion} consider only undirected graphs and \cite{louis2010directed} does not consider the \textit{non-stop} cut-matching game, we will have to make significant modifications throughout. As in \cite{li2025congestion}, we avoid working with the subdivision graph of \cite{racke2014computing, saranurak2019expander}. Let $A := \supp(\bd)$ and, as define an \textit{$A$-commodity flow} as a multicommodity flow where each $v \in A$ is a source of $\bd(v)$ of its unique flow commodity. 

Solely for the purposes of analysis, for each $v \in A$, we define $v_\circ$ and $v_\times$ representing the active and deleted portions of $v$, respectively. Let $A^\circ := \{ v_\circ \,:\, v \in A\}$ and similarly define $A^\times := \{ v_\times \,:\, v \in A\}$. Then define $\bar{A} := A^\circ \cup A^\times$. We will consider \textit{flow matrices} $\bbF \in \R_{\geq 0}^{\bar{A} \times \bar{A}}$. By merging the entries of $v_\circ$ and $v_\times$ for each $v \in A$, these naturally induce a flow matrix $\bF \in \R_{\geq 0}^{A \times A}$ with entries $\bF(u,v)=\bbF(u_\circ, v_\circ) + \bbF(u_\circ, v_\times) + \bbF(u_\times, v_\circ) + \bbF(u_\times, v_\times)$. We say that a flow matrix $\bF \in \R_{\geq 0}^{A \times A}$ is \textit{routable with congestion $c$} if there exists an $A$-commodity flow $\Bf$ on $G$ such that, for each $u,v \in A$, $\Bf$ simultaneously routes $\bF(u,v)$ of the commodity of $u$ to $v$ and no edge $e$ has more than $c \cdot \bc(e)$ flow passing through it. We say that $\bbF \in \R_{\geq 0}^{\bar{A} \times \bar{A}}$ is routable with congestion $c$ if the induced flow matrix $\bF \in \R_{\geq 0}^{A \times A}$ is routable with congestion $c$. 

We initialize $\bbF_0 \in \R_{\geq 0}^{\bar{A} \times \bar{A}}$ with $\bbF_0(v_\circ, v_\circ) = \bd(v)$ for each $v_\circ \in A^\circ$ and each other entry $0$. Additionally, we initialize $\bd_0 = \bd$, $A_0 = \supp(\bd_0)$, $k_0 = 0$, and $D_0 = \emptyset$. We will proceed in $T$ rounds. For each $t \in [T]$, $\bbF_t \in \R_{\geq 0}^{\bar{A} \times \bar{A}}$ induces an implicit $A$-commodity flow with congestion at most $t/\phi$. The vertex weighting $\bd_t$ represents the fractions of each vertex that remain active, and $A_t = \supp(\bd_t) \subseteq A$ is the set of vertices with some active portion.  $k_t \le t$ is the number of sparse cuts found by the end of the $t$\textsuperscript{th} iteration. $D_t \subseteq A $ is the set of vertices that have been completely deleted but do not belong to some discovered sparse cut. That is, $\supp(\bd_t) = A_t = A \setminus (D_t \cup C_{\leq k_t})$. 

In each cut step, we will compute a bipartition of $A_{t-1} = L_{t} \sqcup R_{t}$ such that $\bd_{t-1}(L_{t}) = \bd_{t-1}(R_{t})$.\footnote{This may require fractionally assigning some vertex to both sides, but this detail will not be important for our analysis.} In each matching step, we solve two flow problems, corresponding to matching across the cut (bipartition) in both ways. The flow networks considered will be a scaled version of $G[A \setminus C_{\leq k_{t-1}}]$. In particular, note that although the nodes in $D_{t-1}$ are not sources or sinks, they are part of the network we attempt to route flows in. In each flow problem, we will route a partial matching, and find a (possibly empty) cut, and partially delete some additional vertices. After both flows, we will partially delete some additional vertices to ensure that the active portions of each vertex sent and received the same amount of flow. In particular, for $u \in A_t$, $\bd_t(u)$ is the minimum of the amount of flow sent into and out of $u$ in the two flow problems in the $t$\textsuperscript{th} matching step.

At the end of each round, for each $u \in A_t$ with $\bd_t(u) < \bd(u)/2$, we will retroactively delete it entirely, adding $u$ to $D_t$. Then, if $\bd_t(A_t) < 0.99\bd(A)$, we will trigger our early termination condition, outputting the sequence of sparse cuts computed thus far. To ensure that $\bd(C_{\leq k_t})$ is sufficiently large in this case, we will relate $\bd(C_{\leq k_t})$ and $\bd(D_t)$. 

In the case that we never reach the early termination condition, we will certify that $\bd_T$ mixes in $G$ with congestion $T/\phi$. Since we ensure that $\bd_T(u) \geq \bd(u)/2$ for all $u \in A_T$, this implies that $\bd|_{V \setminus (D_T \cup C_{\leq k_T})}$ mixes in $G$ with congestion $2T/\phi$. Finally, we incorporate the vertices in $D_T$ either back into $A_T$ or delete them in sparse cuts by solving two final flow problems. Reincorporating the vertices in $D_T$ incurs at most $2/\phi$ additional congestion. The full algorithm is described in~\cref{alg:cut-matching}. The while loop is the cut-matching game, and the post-processing is the grafting step.

\begin{algorithm}[t]
	\caption{Cut-Matching}
	\label{alg:cut-matching}
	\fbox{
		\parbox{0.97\columnwidth}{
		  \textsc{Cut-Matching}$(G, \bd_0, \phi)$ \\
            \tab $A \leftarrow \supp(\bd)$, $D \leftarrow \emptyset$, $\mathcal{C} \leftarrow \emptyset$, $\bd \leftarrow \bd_0$, $t \leftarrow 0$, $T = O(\log (n) \log (nW))$ \\
            \tab Implicitly set $\bbF$ so that $\bbF(v_\circ, v_\circ) = \bd(v)$ for $v_\circ \in A^\circ$ and is $0$ otherwise\\
            \tab \textbf{While} $\bd(A) \geq 0.99 \bd_0(V)$ and $t \leq T$:\\
            \tab\tab $t \leftarrow t + 1$ \\
            \tab\tab Find a random $\bd$-weighted bipartition $L \sqcup R = A$ (\cref{subsec: cut player}) \\
            \tab\tab\tab Find the bipartition using the rows of $\bbF$ if $t \leq T/2$ and the columns of $\bbF$ otherwise \\
             \tab\tab Attempt to compute $\bd$-weighted matchings from $L$ to $R$  and $R$ to $L$ (\cref{subsec: matching player}) \\
             \tab\tab Update $\bd$ with the new matchings and set $A \leftarrow \supp(\bd)$ (\cref{subsec: flow update}) \\
             \tab\tab Implicitly update $\bbF$ with the new matching (\cref{subsec: flow update})\\
             \tab\tab Append sparse cuts $S_1, S_2$ found when computing the matchings to $\mathcal{C}$ (\cref{rmk: sparse cut issue}) \\
             \tab\tab Add vertices deleted outside of sparse cuts to $D$, including those with $\bd(v) < \bd_0(v)/2$\\
            \tab If $\bd(A) < 0.99 \bd_0(V)$: \\
            \tab\tab \textbf{Return} $\mathcal{C}$ as the sequence of cuts; this is \textbf{early termination} \\
            \tab Attempt to compute $\bd$-weighted matching from $D$ into $A$ (\cref{subsec: grafting}) \\
            \tab Append the sparse cut $S$ found when computing the matching to $\mathcal{C}$; remove $S$ from $A$ and $D$ \\
            \tab If $\bd(A) < 0.99 \bd_0(V)$: \\
             \tab\tab \textbf{Return} $\mathcal{C}$ as the sequence of cuts; this is \textbf{early termination} \\
            \tab Attempt to compute $\bd$-weighted matching from $D$ into $A$ in the \textit{reversed} graph (\cref{subsec: grafting}) \\
            \tab Append the sparse cut $S$ found when computing the matching to $\mathcal{C}$; remove $S$ from $A$ and $D$ \\
            \tab If $\bd(A) < 0.99 \bd_0(V)$: \\
             \tab\tab \textbf{Return} $\mathcal{C}$ as the sequence of cuts; this is \textbf{early termination} \\
             \tab \textbf{Return} $\mathcal{C}$ as the sequence of cuts; this is the \textbf{finds large near-expander} case
	}}
\end{algorithm}

\subsection{Updating the Flow Matrix} \label{subsec: flow update}
Each flow problem solved in the matching step induces some partial matching from $L_{t-1}$ to $R_{t-1}$ and vice versa. We distribute the in-flow and out-flow for each $v \in A_{t-1}$ between $v_\circ$ and $v_\times$ to ensure that active portions of vertices have the same total in-flow and out-flow. In particular, the matching step results in a \textit{matching matrix} $\bbM_t : \bar{A}_{t-1} \times \bar{A}_{t-1} \to \R_{\geq 0}$, where $\bbM_t(a,b)$ is the amount of flow routed from $b$ to $a$ in the matching step. The matching step also sets new values $\bd_t(v) \le \bd_{t-1}(v)$ for all $v\in A_{t-1}$.
\begin{remark} \label{rmk: backwards?}
Although the definition of $\bbM_t(a,b)$ might appear backwards, we define it this way because then $b$ is able to send flow to vertices that $a$ has already sent flow to (since there is a directed path from $b$ to $a$). Then, some fraction of $\bF_{t-1}(a)$ can moved to $\bF_{t-1}(b)$. 
\end{remark}
Given $\bbM_t$, we update the flow matrix $\bbF_t$ as follows. As a first step, for all $u \in \bar{A}$ and $v \in A_{t-1}$, define a temporary $\bbF_{t-1}'$ such that
\[
\bbF_{t - 1}'(u, v_{\circ}) = \frac{\bd_t(v)}{\bd_{t-1}(v)}\bbF_{t-1}(u, v_\circ), \quad \bbF_{t - 1}'(u, v_{\times}) = \left(1 - \frac{\bd_t(v)}{\bd_{t-1}(v)} \right)\bbF_{t-1}(u, v_\circ)+ \bbF_{t-1}(u, v_\times).
\]
Intuitively, this is just retroactively redistributing flow to $v \in A_{t-1}$ in the induced flow $\bF_{t-1}$ between $v_\circ$ and $v_\times$. The purpose of this redistribution is to maintain that the sum of incoming and outgoing flows from $v_\circ \in A_t^\circ$ is $\bd_t(v)$. For $v \not\in A_{t-1}$, define $\bbF_{t-1}'(u,v_\circ) = \bbF_{t-1}(u,v_\circ)$ and $\bbF_{t-1}'(u,v_\times) = \bbF_{t-1}(u,v_\times)$. Next, for all $u \in A_{t-1}$, we define
\begin{align*}
    \bbF_t(u_\circ) &= \frac{\bd_t(u)}{\bd_{t-1}(u)} \bbF'_{t-1}(u_\circ) + \sum_{v \in \bar{A}_{t-1}} \frac{\bbM_t(v, u_\circ)}{2} \cdot \frac{\bbF_{t-1}'(v_\circ)}{\bd_{t-1}(v)} - \sum_{v \in \bar{A}_{t-1}} \frac{\bbM_t(u_\circ, v)}{2} \cdot \frac{\bbF_{t-1}'(u_\circ)}{\bd_{t-1}(u)}, \\
    \bbF_t(u_\times) 
    &= \bbF_{t-1}'(u_\times) + \left(1 - \frac{\bd_t(u)}{\bd_{t-1}(u)}\right)\bbF_{t-1}'(u_\circ)  + \sum_{v \in \bar{A}_{t-1}} \frac{\bbM_t(v, u_\times)}{2} \cdot \frac{\bbF_{t-1}'(v_\circ)}{\bd_{t-1}(v)} - \sum_{v \in \bar{A}_{t-1}} \frac{\bbM_t(u_\times, v)}{2} \cdot \frac{\bbF_{t-1}'(u_\circ)}{\bd_{t-1}(u)}.
\end{align*}
For $u \not\in \bar{A}_{t-1}$, they will not be involved as a source or sink in the matching step, so we set $\bbF_t(u) = \bbF_{t-1}'(u)$. The first term in the definition of $\bbF_t(u_\circ)$ and second term in the definition of $\bbF_t(u_\times)$ amount to redistributing the flow previously sent from $u_\circ$. The first terms involving $\bbM_t$ represent each $u \in \bar{A}_{t-1}$ receiving a $\frac{\bbM_t(v, u)}{2 \bd_{t-1}(v)}$ fraction of the redistributed active flow vector $\bbF_{t-1}'(v_\circ)$ from $v$. The latter terms involving $\bbM_t$ represent each $u \in \bar{A}_{t-1}$ sending a $\frac{\bbM_{t}(u, v)}{2\bd_{t-1}(u)}$ fraction of the redistributed 
active flow vector $\bbF_{t-1}'(u_\circ)$ to $v$.

\begin{remark} \label{rmk: matching step prop}
 Our matching step is designed so that, for each $u_\circ \in A_{t}^\circ$, we have
\[
\sum_{v \in \bar{A}_{t-1}} \bbM_t(v, u_\circ) = \sum_{v \in \bar{A}_{t-1}} \bbM_t(u_\circ, v) = \bd_t(u), \quad \sum_{v \in \bar{A}_{t-1}}\bbM_t(v, u_\times) \leq \bd_{t-1}(u) - \bd_t(u).
\]
This fact is crucial in our convergence analysis. 
\end{remark}

\begin{remark} \label{rmk: alt recur uo}
Applying~\cref{rmk: matching step prop} to the second summation in the recursive expression for $\bbF_t(u_\circ)$ for $u_\circ \in A_{t-1}^\circ$, we get
\[
\bbF_t(u_\circ) = \frac{\bd_t(u)}{2\bd_{t-1}(u)} \bbF'_{t-1}(u_\circ) + \sum_{v \in \bar{A}_{t-1}} \frac{\bbM_t(v, u_\circ)}{2} \cdot \frac{\bbF_{t-1}'(v_\circ)}{\bd_{t-1}(v)}.
\]
\end{remark}

\begin{remark} \label{rmk: other ord flow upd}
We could equivalently define the flow update by first scaling the rows of $\bbF_{t-1}$. For ease of notation, define $\cbF_{t-1} = \bbF_{t-1}^\top$. Then, as before, for all $u \in \bar{A}$ and $v \in A_{t-1}$, define
\[
\cbF_{t-1}'(u, v_\circ) := \frac{\bd_t(v)}{\bd_{t-1}(v)}\cbF_{t-1}(u, v_\circ), \quad \quad \cbF_{t-1}'(u, v_\times) := \left(1 - \frac{\bd_t(v)}{\bd_{t-1}(v)}\right) \cbF_{{t-1}}(u, v_\circ) + \cbF_{t-1}(u, v_\times).
\]
For $v \not\in A_{t-1}$, define $\cbF_{t-1}'(u, v_\circ) = \cbF_{t-1}(u, v_\circ)$ and $\cbF_{t-1}'(u, v_\times) = \cbF_{t-1}(u, v_\times)$. Finally, for all $u \in A_{t-1}$, we then can define
\begin{align*}
    \cbF_t(u_\circ) &= \frac{\bd_t(u)}{\bd_{t-1}(u)} \cbF'_{t-1}(u_\circ) + \sum_{v \in \bar{A}_{t-1}} \frac{\bbM_t(u_\circ, v)}{2} \cdot \frac{\cbF_{t-1}'(v_\circ)}{\bd_{t-1}(v)} - \sum_{v \in \bar{A}_{t-1}} \frac{\bbM_t(v, u_\circ)}{2} \cdot \frac{\cbF_{t-1}'(u_\circ)}{\bd_{t-1}(u)}, \\
    \cbF_t(u_\times) 
    &= \cbF_{t-1}'(u_\times) + \left(1 - \frac{\bd_t(u)}{\bd_{t-1}(u)}\right)\cbF_{t-1}'(u_\circ)  + \sum_{v \in \bar{A}_{t-1}} \frac{\bbM_t(u_\times, v)}{2} \cdot \frac{\cbF_{t-1}'(v_\circ)}{\bd_{t-1}(v)} - \sum_{v \in \bar{A}_{t-1}} \frac{\bbM_t(v, u_\times)}{2} \cdot \frac{\cbF_{t-1}'(u_\circ)}{\bd_{t-1}(u)}.
\end{align*}
For $u \not\in \bar{A}_{t-1}$, we set $\cbF_t(u) = \cbF_{t-1}'(u)$. Most importantly for our later convergence analysis in~\cref{sec: cvg analysis}, for $u_\circ \in A_{t-1}^\circ$ we get 
\[
\cbF_t(u_\circ) = \frac{\bd_t(u)}{2 \bd_{t-1}(u)} \cbF_{t-1}'(u_\circ) + \sum_{v \in \bar{A}_{t-1}} \frac{\bbM_t(u_\circ, v)}{2} \cdot \frac{\bbF_{t-1}'(v_\circ)}{\bd_{t-1}(v)},
\]
similarly to in~\cref{rmk: alt recur uo}.
\end{remark}
It will be important in our convergence analysis that the sum of flows in and out of $u_\circ \in A_t^\circ$ is preserved. We can derive this from the recursive formulation of $\bbF_t$ along with~\cref{rmk: matching step prop}.

\begin{claim} \label{cla: sum of flow}
For all $t \leq T$ and $u_\circ \in A_t^\circ$, we have the following:
\[
\sum_{w \in \bar{A}} \bbF'_{t-1}(u_\circ, w) = \bd_{t-1}(u), \quad \sum_{w \in \bar{A}} \bbF_t(u_\circ, w) = \bd_t(u) = \sum_{w \in \bar{A}} \bbF_t(w, u_\circ),
\]
where the left equality only applies when $t>0$.
\end{claim}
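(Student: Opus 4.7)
The plan is to prove all three equalities simultaneously by induction on $t$. The base case $t=0$ is immediate from the initialization, since $\bbF_0(v_\circ, v_\circ) = \bd(v) = \bd_0(v)$ for each $v \in A$ and every other entry is zero, giving both row- and column-sum at $u_\circ \in A_0^\circ$ equal to $\bd_0(u)$. For the inductive step, assume both $\sum_w \bbF_{t-1}(u_\circ, w) = \bd_{t-1}(u)$ and $\sum_w \bbF_{t-1}(w, u_\circ) = \bd_{t-1}(u)$ for all $u_\circ \in A_{t-1}^\circ$. Since $A_t \subseteq A_{t-1}$, any $u_\circ \in A_t^\circ$ also lies in $A_{t-1}^\circ$, so the inductive hypotheses apply to it.

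First I would establish the identity for $\bbF'_{t-1}$. The key observation is that the redistribution only rearranges, within each column index $v$, the mass between the $v_\circ$ and $v_\times$ entries: for $v \in A_{t-1}$,
\[
\bbF'_{t-1}(u_\circ, v_\circ) + \bbF'_{t-1}(u_\circ, v_\times) = \bbF_{t-1}(u_\circ, v_\circ) + \bbF_{t-1}(u_\circ, v_\times),
\]
and for $v \notin A_{t-1}$ the entries are unchanged by definition. Summing over all $v$ gives $\sum_w \bbF'_{t-1}(u_\circ, w) = \sum_w \bbF_{t-1}(u_\circ, w) = \bd_{t-1}(u)$ by the inductive hypothesis.

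Second I would establish $\sum_w \bbF_t(u_\circ, w) = \bd_t(u)$ using the simplified recursive form from~\cref{rmk: alt recur uo},
\[
\bbF_t(u_\circ) = \frac{\bd_t(u)}{2\bd_{t-1}(u)} \bbF'_{t-1}(u_\circ) + \sum_{v \in \bar{A}_{t-1}} \frac{\bbM_t(v, u_\circ)}{2} \cdot \frac{\bbF'_{t-1}(v_\circ)}{\bd_{t-1}(v)}.
\]
Summing over $w$ and applying the first equality yields
\[
\sum_w \bbF_t(u_\circ, w) = \frac{\bd_t(u)}{2} + \sum_{v \in \bar{A}_{t-1}} \frac{\bbM_t(v, u_\circ)}{2} = \frac{\bd_t(u)}{2} + \frac{\bd_t(u)}{2} = \bd_t(u),
\]
where the second equality uses the matching-step property $\sum_v \bbM_t(v, u_\circ) = \bd_t(u)$ from~\cref{rmk: matching step prop}.

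Finally, for the column-sum identity $\sum_w \bbF_t(w, u_\circ) = \bd_t(u)$, I would repeat the same two-step argument applied to the transposed formulation $\cbF_t = \bbF_t^\top$ given in~\cref{rmk: other ord flow upd}. The analogous column-sum preservation for $\cbF'_{t-1}$ follows by the same $(v_\circ, v_\times)$ cancellation, and the simplified recursion for $\cbF_t(u_\circ)$ yields the identity after invoking the complementary matching-step property $\sum_v \bbM_t(u_\circ, v) = \bd_t(u)$. The only real obstacle is bookkeeping: carefully tracking which vertex copies carry the $\bd_t(u)/\bd_{t-1}(u)$ rescaling so that the two contributions in the simplified recurrence each sum to exactly $\bd_t(u)/2$; everything else is direct summation combined with the two halves of~\cref{rmk: matching step prop}.
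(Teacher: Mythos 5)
Your base case, your row-sum preservation for $\bbF'_{t-1}$, and your derivation of $\sum_w \bbF_t(u_\circ,w)=\bd_t(u)$ via \cref{rmk: alt recur uo} and \cref{rmk: matching step prop} all match the paper's proof essentially verbatim. The divergence — and the gap — is in the third equality. You prove that the \emph{rows} of $\cbF_t$, as defined by the recursion in \cref{rmk: other ord flow upd}, sum to $\bd_t(u)$, and then conclude the column-sum identity for $\bbF_t$ by identifying $\cbF_t$ with $\bbF_t^\top$. That identification is asserted in the remark but never proven, and it is not a formality: ignoring deletions, the original update has the form $\bbF_t = R_t\bbF_{t-1}$ for a per-round operator $R_t$, so $\bbF_t^\top = \bbF_0^\top R_1^\top\cdots R_t^\top$ composes the transposed operators in the \emph{order of application}, whereas the remark's recursion $\cbF_t = S_t\cbF_{t-1}$ composes them in the \emph{reverse} order; the two matrices coincide only if the per-round operators commute, which they do not in general. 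So the row-sum identity for $\cbF_t$ does not transfer to the column sums of $\bbF_t$ without substantial additional argument, and your proof of $\sum_w\bbF_t(w,u_\circ)=\bd_t(u)$ is incomplete.

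The paper avoids this entirely by computing $\sum_{w\in\bar A}\bbF_t(w,u_\circ)$ directly from the original recursion: summing the defining formulas for $\bbF_t(w_\circ,u_\circ)$ and $\bbF_t(w_\times,u_\circ)$ over $w$, the leading terms contribute $\sum_w\bbF'_{t-1}(w,u_\circ)=\frac{\bd_t(u)}{\bd_{t-1}(u)}\sum_w\bbF_{t-1}(w,u_\circ)=\bd_t(u)$ (the column rescaling in $\bbF'_{t-1}$ is designed exactly so that the $u_\circ$ column sum drops from $\bd_{t-1}(u)$ to $\bd_t(u)$), and the two remaining double sums,
\[
\sum_{w}\sum_{v\in\bar A_{t-1}}\frac{\bbM_t(v,w)}{2}\cdot\frac{\bbF'_{t-1}(v_\circ,u_\circ)}{\bd_{t-1}(v)}
\quad\text{and}\quad
\sum_{w}\sum_{v\in\bar A_{t-1}}\frac{\bbM_t(w,v)}{2}\cdot\frac{\bbF'_{t-1}(w_\circ,u_\circ)}{\bd_{t-1}(w)},
\]
are identical after swapping the summation indices and cancel. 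You should replace your transposition step with this direct computation; note in particular that the relevant column sum of $\bbF'_{t-1}$ is $\bd_t(u)$, not $\bd_{t-1}(u)$, which is why the final answer is $\bd_t(u)$ without needing any matching-step property for the columns.
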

\begin{proof}

We prove both parts of the claim by induction. The base case of $t = 0$ is trivial by the instantiation of the algorithm. Fix $u_\circ \in A_t^\circ$. We have, using the definition of $\bbF'_{t-1}$ and the inductive hypothesis, 
\begin{align*}
\sum_{w \in \bar{A}} \bbF'_{t-1}(u_\circ, w) &= \sum_{w \in A} (\bbF'_{t-1}(u_\circ, w_\circ) + \bbF'_{t-1}(u_\circ, w_\times)) \\
&= \sum_{w \in A} (\bbF_{t-1}(u_\circ, w_\circ)+ \bbF_{t-1}(u_\circ, w_\times)) \\
&= \sum_{w \in \bar{A}} \bbF_{t-1}(u_\circ, w) \\
&= \bd_{t-1}(u).
\end{align*}
Using~\cref{rmk: matching step prop} and~\cref{rmk: alt recur uo},
\begin{align*}
\sum_{w \in \bar{A}} \bbF_t(u_\circ, w) &= \frac{\bd_t(u)}{2\bd_{t-1}(u)} \sum_{w \in \bar{A}} \bbF'_{t-1}(u_\circ, w) + \sum_{v \in \bar{A}_{t-1}}\frac{\bbM_t(v, u_\circ)}{2} \sum_{w \in \bar{A}}  \frac{\bbF_{t-1}'(v_\circ, w)}{\bd_{t-1}(v)} \\
&= \frac{\bd_t(u)}{2\bd_{t-1}(u)} \cdot \bd_{t-1}(u) + \sum_{v \in \bar{A}_{t-1}}\frac{\bbM_t(v, u_\circ)}{2} \cdot 1  \\
&= \bd_t(u)/2 + \bd_t(u)/2 = \bd_t(u).
\end{align*}
Next, we have
\begin{align*}
\sum_{w \in \bar{A}} \bbF_t(w, u_\circ) &= \sum_{w_\circ \in A^\circ} \bbF_t(w_\circ, u_\circ) + \sum_{w_\times \in A^\times} \bbF_t(w_\times, u_\circ) \\
&= \sum_{w \in \bar{A}} \bbF_{t-1}'(w, u_\circ) +  \sum_{w \in \bar{A}} \sum_{v \in \bar{A}_{t-1}} \frac{\bbM_t(v, w)}{2} \cdot \frac{\bbF_{t-1}'(v_\circ, u_\circ)}{\bd_{t-1}(v)} -  \sum_{w \in \bar{A}} \sum_{v \in \bar{A}_{t-1}} \frac{\bbM_t(w, v)}{2} \cdot \frac{\bbF_{t-1}'(w_\circ, u_\circ)}{\bd_{t-1}(w)} \\
&= \bd_t(u) + \sum_{v \in \bar{A}_{t-1}} \sum_{w \in \bar{A}_{t-1}} \frac{\bbM_t(v, w)}{2} \cdot \frac{\bbF_{t-1}'(v_\circ, u_\circ)}{\bd_{t-1}(v)} -  \sum_{w \in \bar{A}_{t-1}} \sum_{v \in \bar{A}_{t-1}} \frac{\bbM_t(w, v)}{2} \cdot \frac{\bbF_{t-1}'(w_\circ, u_\circ)}{\bd_{t-1}(w)} \\
&= \bd_t(u) + \sum_{v \in \bar{A}_{t-1}} \sum_{w \in \bar{A}_{t-1}} \frac{\bbM_t(v, w)}{2} \cdot \frac{\bbF_{t-1}'(v_\circ, u_\circ)}{\bd_{t-1}(v)} -  \sum_{v \in \bar{A}_{t-1}} \sum_{w \in \bar{A}_{t-1}} \frac{\bbM_t(v, w)}{2} \cdot \frac{\bbF_{t-1}'(v_\circ, u_\circ)}{\bd_{t-1}(v)} \\
&= \bd_t(u).
\end{align*}
In the second step, we apply the recursive formulation of $\bbF_t$. In the third step, we use that $\sum_{w \in \bar{A}} \bbF_{t-1}'(w, u_\circ) = \bd_t(u)$ for all $u_\circ \in A_{t-1}^\circ$ by definition of $\bbF_{t-1}'$ and the inductive hypothesis. We also use that we can restrict to $w \in \bar{A}_{t-1}$ in the latter two sums due to the $\bbM_t$ terms. Finally, we reindex in the fourth step to cancel the summation terms.
\end{proof}
Moreover, while the flows in and out of  $u_\times \in A^\times_t$ are not exactly preserved, we can still bound them in a manner similar to \cref{cla: sum of flow}. We will need this additional property in our convergence analysis.
\begin{claim} \label{cla: sum of flow deleted}
For all $t \leq T$ and $u \in A$, we have the following.
\begin{enumerate}
    \item $\sum_{w \in \bar{A}} \bbF_t(u_\times, w) \leq 3(\bd(u)- \bd_t(u))/2.$
    \item $\sum_{w \in \bar{A}} \bbF_t(w, u_\times) = \bd(u) - \bd_t(u).$
\end{enumerate}
\end{claim}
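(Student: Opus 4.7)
The plan is to prove both parts by induction on $t$, following the two-step structure (first $\bbF_{t-1} \mapsto \bbF_{t-1}'$, then $\bbF_{t-1}' \mapsto \bbF_t$) used in the proof of~\cref{cla: sum of flow}. The base case $t=0$ is immediate from the initialization, since $\bbF_0$ is supported on $\{(v_\circ, v_\circ) : v \in A\}$ and $\bd_0 = \bd$, so both sums are zero and equal $\bd(u) - \bd_0(u) = 0$. For the inductive step, a key observation about the column-redistribution substep is that row sums are preserved, i.e., $\sum_w \bbF_{t-1}'(u, w) = \sum_w \bbF_{t-1}(u, w)$ for every $u \in \bar A$, since the update only shifts mass between paired columns $v_\circ$ and $v_\times$ for each $v \in A_{t-1}$. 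A short calculation using the definition of $\bbF_{t-1}'(w, u_\times)$ together with~\cref{cla: sum of flow} (for the $u_\circ$ contribution) and the inductive hypothesis on $u_\times$ then shows that $\sum_w \bbF_{t-1}'(w, u_\times)$ already equals the target value $\bd(u) - \bd_t(u)$, in both cases $u \in A_{t-1}$ (where the scaling and redistribution combine cleanly) and $u \notin A_{t-1}$ (where the column is untouched and $\bd_t(u) = \bd_{t-1}(u)$).

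For part 2, I will then show that the matching step preserves the $u_\times$-column sum, paralleling the $u_\circ$-column argument in~\cref{cla: sum of flow}. Summing the row updates for $w_\circ$ and $w_\times$ with $w \in A_{t-1}$, the scaling coefficients $\bd_t(w)/\bd_{t-1}(w)$ and $1 - \bd_t(w)/\bd_{t-1}(w)$ combine to give coefficient $1$ on $\bbF_{t-1}'(w_\circ, u_\times)$, while the two double-sums involving $\bbM_t$ cancel after relabeling $v \leftrightarrow w$. Rows indexed outside $\bar A_{t-1}$ are untouched by the matching step, so combining yields $\sum_w \bbF_t(w, u_\times) = \sum_w \bbF_{t-1}'(w, u_\times) = \bd(u) - \bd_t(u)$, as required.

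For part 1, I will apply the row update for $u_\times$ in the case $u \in A_{t-1}$; the case $u \notin A_{t-1}$ is immediate, since the row $u_\times$ is unchanged by both substeps (the $\bbF_{t-1}'$ step preserves row sums, and the matching step only alters rows indexed by $\bar A_{t-1}$) and $\bd_t(u) = \bd_{t-1}(u)$. Summing the row update over $w$ and invoking $\sum_w \bbF_{t-1}'(v_\circ, w)/\bd_{t-1}(v) = 1$ from~\cref{cla: sum of flow} collapses the matching contribution to $\tfrac12 \sum_v \bbM_t(v, u_\times) - \tfrac12 \sum_v \bbM_t(u_\times, v)$. The carried-over row sum $\sum_w \bbF_{t-1}'(u_\times, w)$ is bounded by $3(\bd(u) - \bd_{t-1}(u))/2$ via row-sum preservation and the inductive hypothesis; the term $(1 - \bd_t(u)/\bd_{t-1}(u))\bd_{t-1}(u)$ simplifies to $\bd_{t-1}(u) - \bd_t(u)$;~\cref{rmk: matching step prop} gives $\sum_v \bbM_t(v, u_\times) \le \bd_{t-1}(u) - \bd_t(u)$; and the nonnegative subtraction $\tfrac12 \sum_v \bbM_t(u_\times, v) \geq 0$ can be dropped. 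Combining these four contributions gives $3(\bd(u) - \bd_{t-1}(u))/2 + 3(\bd_{t-1}(u) - \bd_t(u))/2 = 3(\bd(u) - \bd_t(u))/2$.

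I expect the main obstacle to be the bookkeeping: one must carefully track which rows and columns each substep actually modifies, and invoke the correct part of~\cref{rmk: matching step prop}, which only directly constrains $\bbM_t$ entries with $u_\circ$ in one position, not both. The factor of $3/2$ in part 1 arises precisely because the matching-step bound on $\sum_v \bbM_t(v, u_\times)$ contributes an additional $\tfrac12(\bd_{t-1}(u) - \bd_t(u))$ of slack on top of the $3/2$-slack carried forward from the previous round, suggesting the constant is essentially forced by this argument.
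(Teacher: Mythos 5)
Your proposal is correct and follows essentially the same route as the paper's proof: induction on $t$, using row-sum preservation of the $\bbF_{t-1}\mapsto\bbF_{t-1}'$ substep, \cref{cla: sum of flow} for the $u_\circ$ contributions, \cref{rmk: matching step prop} to bound $\sum_v \bbM_t(v,u_\times)$, and cancellation of the two $\bbM_t$ double-sums after reindexing for the column identity. The arithmetic yielding the $3/2$ constant matches the paper's exactly.
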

\begin{proof}

As before, we prove both parts of the claim by induction. The base case of $t = 0$ is again immediate by the instantiation of the algorithm. Fix $u_\times \in A^\times$. We have, using the recursive definition of $\bbF_t$, the inductive hypothesis, and \cref{cla: sum of flow},
{
\allowdisplaybreaks
\begin{align*}
\sum_{w \in \bar{A}} \bbF_t(u_\times, w) &= \sum_{w \in \bar{A}} \bbF_{t-1}'(u_\times, w)  + \frac{\bd_{t-1}(u) - \bd_t(u)}{\bd_{t-1}(u)} \sum_{w \in \bar{A}} \bbF'_{t-1}(u_\circ, w) + \sum_{v \in \bar{A}_{t-1}}\frac{\bbM_t(v, u_\times)}{2} \sum_{w \in \bar{A}}  \frac{\bbF_{t-1}'(v_\circ, w)}{\bd_{t-1}(v)} \\
& \quad - \sum_{v \in \bar{A}_{t-1}} \frac{\bbM_t(u_\times, v)}{2} \sum_{w \in \bar{A}} \frac{\bbF_{t-1}'(u_\circ, w)}{\bd_{t-1}(u)} \\
&\leq 3(\bd(u) - \bd_{t-1}(u))/2 + \bd_{t-1}(u) - \bd_t(u)  + \sum_{v \in \bar{A}_{t-1}}\frac{\bbM_t(v, u_\times)}{2} \cdot 1 - \sum_{v \in \bar{A}_{t-1}} \frac{\bbM_t(u_\times, v)}{2} \cdot 1 \\
&\leq 3(\bd(u) - \bd_{t-1}(u))/2 + \bd_{t-1}(u) - \bd_t(u) + (\bd_{t-1}(u) - \bd_t(u))/2 \\
&= 3(\bd(u) - \bd_{t}(u))/2.
\end{align*}
}
The first inequality follows from applying the inductive hypothesis and \cref{cla: sum of flow}. Then, the second inequality uses that $ \sum_{v \in \bar{A}_{t-1}}\bbM_t(v, u_\times) \leq \bd_{t-1}(u) - \bd_t(u)$ by \cref{rmk: matching step prop}.
Next, we have
{
\allowdisplaybreaks
\begin{align*}
\sum_{w \in \bar{A}} \bbF_t(w, u_\times) &= \sum_{w_\circ \in A^\circ} \bbF_t(w_\circ, u_\times) + \sum_{w_\times \in A^\times} \bbF_t(w_\times, u_\times) \\
&= \sum_{w \in \bar{A}} \bbF_{t-1}'(w, u_\times) +\sum_{w \in \bar{A}}  \sum_{v \in \bar{A}_{t-1}}  \frac{\bbM_t(v, w)}{2} \cdot \frac{\bbF_{t-1}'(v_\circ, u_\times)}{\bd_{t-1}(v)} \\
& \quad -  \sum_{w \in \bar{A}} \sum_{v \in \bar{A}_{t-1}} \frac{\bbM_t(w, v)}{2} \cdot \frac{\bbF_{t-1}'(w_\circ, u_\times)}{\bd_{t-1}(w)} \\
&= \sum_{w \in \bar{A}} \left(\left(1 - \frac{\bd_t(u)}{\bd_{t-1}(u)} \right)\bbF_{t-1}(w, u_\circ)+ \bbF_{t-1}(w, u_\times)\right) +\sum_{w \in \bar{A}}  \sum_{v \in \bar{A}_{t-1}}  \frac{\bbM_t(v, w)}{2} \cdot \frac{\bbF_{t-1}'(v_\circ, u_\times)}{\bd_{t-1}(v)} \\
& \quad -  \sum_{w \in \bar{A}} \sum_{v \in \bar{A}_{t-1}} \frac{\bbM_t(w, v)}{2} \cdot \frac{\bbF_{t-1}'(w_\circ, u_\times)}{\bd_{t-1}(w)} \\
&= \left(1 - \frac{\bd_t(u)}{\bd_{t-1}(u)} \right) \bd_{t-1}(u) + (\bd(u) - \bd_{t-1}(u)) + \sum_{v \in \bar{A}_{t-1}} \sum_{w \in \bar{A}_{t-1}} \frac{\bbM_t(v, w)}{2} \cdot \frac{\bbF_{t-1}'(v_\circ, u_\times)}{\bd_{t-1}(v)} \\
& \quad -  \sum_{w \in \bar{A}_{t-1}} \sum_{v \in \bar{A}_{t-1}} \frac{\bbM_t(w, v)}{2} \cdot \frac{\bbF_{t-1}'(w_\circ, u_\times)}{\bd_{t-1}(w)} \\
&= \bd(u) - \bd_t(u) + \sum_{v \in \bar{A}_{t-1}} \sum_{w \in \bar{A}_{t-1}} \frac{\bbM_t(v, w)}{2} \cdot \frac{\bbF_{t-1}'(v_\circ, u_\circ)}{\bd_{t-1}(v)} -  \sum_{v \in \bar{A}_{t-1}} \sum_{w \in \bar{A}_{t-1}} \frac{\bbM_t(v, w)}{2} \cdot \frac{\bbF_{t-1}'(v_\circ, u_\circ)}{\bd_{t-1}(v)} \\
&= \bd(u) - \bd_t(u).
\end{align*}
}
In the second step, we apply the recursive formulation of $\bbF_t$. In the third step, we use the definition of $\bbF_{t-1}'$. In the fourth step, we use the inductive hypothesis and~\cref{cla: sum of flow} in the expansion of $\bbF_{t-1}'$. We also use that we can restrict to $w \in \bar{A}_{t-1}$ in the latter two sums due to the $\bbM_t$ terms. Finally, we reindex in the fourth step to cancel the summation terms.
\end{proof}

\subsection{Cut Player} \label{subsec: cut player}
We will use a similar cut player as~\cite{louis2010directed}. Define $\bF_{t-1}^\circ$ to be the submatrix of $\bbF_{t-1}$, only retaining rows and columns corresponding to $u_\circ \in A_{t-1}^\circ$. Let $\tilde{\bF}_{t-1}^\circ$ be $\bF_{t-1}^\circ$ with the $w_\circ$\textsuperscript{th} column scaled by $1/\sqrt{\bd_{t-1}(w)}$. We can then describe the algorithm used in the cut step in the first $T/2$ rounds. 
\begin{enumerate}
    \item Sample $\br_t$, a random unit vector supported on $A_{t-1}^\circ$.
    \item For each $u \in A_{t-1}$, compute $p_t(u) := \inner{\tilde{\bF}^\circ_{t-1}(u_\circ)/\bd_{t-1}(u)}{\br_t}$. 
    \item Compute bipartition $L_{t} \sqcup R_{t} = A_{t-1}$ and separator value $\eta$, such that $\bd_{t-1}(L_{t-1}) = \bd_{t-1}(R_{t-1})$ and $\max_{u \in L_{t-1}} p_t(u) \leq \eta \leq \min_{u \in R_{t-1}} p_t(u)$. 
\end{enumerate}
\begin{remark}
If necessary, we treat a single node as two nodes for this round and put it partially in both $L_{t-1}$ and $R_{t-1}$ to have a balanced partition. For such a $u$, we will have $p_t(u) = \eta$, so it will turn out not to affect our convergence analysis.    
\end{remark}
\begin{remark}
The scaling of $\bF_{t-1}^\circ$ into  $\tilde{\bF}_{t-1}^\circ$ is relevant in our convergence analysis when we apply a subgaussian concentration lemma and compare to our potential function. 
\end{remark}
\begin{remark} \label{rmk: switching to columns}
For technical reasons (related to generalizing to directed graphs), in the last $T/2$ rounds of cut-matching, we compute $p_t(u)$ analogously, except using the columns on $\bF_{t-1}^\circ$ instead of the rows. In particular, denote $(\bF_{t-1}^\circ)^\top$ by $\cbF_{t-1}^\circ$ and $\cbF_{t-1}^\circ$ with the $w_\circ$\textsuperscript{th} column scaled by $1/\sqrt{\bd_{t-1}(w)}$ by $\tcbF^\circ_{t-1}$. Then, for $t > T/2$, 
\[
p_t(u) := \inner{\tcbF^\circ_{t-1}(u_\circ)/\bd_{t-1}(u)}{\br_t}.
\]
We revisit this technicality in the convergence analysis in~\cref{sec: cvg analysis}.
\end{remark}
We observe that the cut player can be implemented efficiently. 
\begin{lemma} \label{lem: cut player run time}
For each $t \leq T$, we can compute the bipartition $A_{t-1} = L_t \sqcup R_t$ on round $t$ in $O(mt + m \log m)$ time.
\end{lemma}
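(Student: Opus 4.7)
The plan is to avoid ever materializing $\bbF_{t-1}$ explicitly by exploiting its recursive construction. The flow updates of Subsection 4.1 can be written in matrix form as
\[
\bbF_\tau \;=\; \bQ_\tau \, \bbF_{\tau-1} \, \bP_\tau,
\]
where $\bP_\tau$ is the sparse column operator implementing the per-vertex redistribution from $v_\circ$ to $v_\times$ governed by the ratio $\bd_\tau(v)/\bd_{\tau-1}(v)$, and $\bQ_\tau$ is the sparse row operator implementing the matching averaging that defines $\bbF_\tau(u_\circ)$ and $\bbF_\tau(u_\times)$. The matrix $\bP_\tau$ has $O(n)$ nonzero entries (at most three per underlying vertex), and $\bQ_\tau$ has $O(n + |\supp(\bbM_\tau)|) = O(m)$ nonzero entries, since the matching matrix $\bbM_\tau$ is read off from the path decompositions of the two matching flows computed by the matching player, which are of size $O(m)$ by standard flow decomposition.

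Unrolling the recursion gives $\bbF_{t-1} = \bQ_{t-1}\cdots \bQ_1 \, \bbF_0 \, \bP_1 \cdots \bP_{t-1}$. The key observation is that
\[
p_t(u) \;=\; \frac{1}{\bd_{t-1}(u)} \bigl[\bbF_{t-1}\, \bs_t\bigr](u_\circ),
\]
where $\bs_t \in \R^{\bar A}$ is the column vector with $\bs_t(w_\circ) = \br_t(w_\circ)/\sqrt{\bd_{t-1}(w)}$ for $w \in A_{t-1}$ and zero elsewhere, so it suffices to compute the single matrix-vector product $\bbF_{t-1}\bs_t$ by right-to-left multiplication. Concretely, sample $\br_t$ and form $\bs_t$ in $O(n)$ time; then apply $\bP_{t-1}, \bP_{t-2}, \ldots, \bP_1$ successively, each in $O(n)$ time; apply the diagonal $\bbF_0$ in $O(n)$ time; and finally apply $\bQ_1, \ldots, \bQ_{t-1}$ in order, each in $O(m)$ time. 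This computes all values $p_t(u)$ in $O(mt + nt) = O(mt)$ time.

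Once the $p_t(u)$ values are known, sort them for $u \in A_{t-1}$ and locate the $\bd_{t-1}$-weighted median to obtain the bipartition $L_t \sqcup R_t$; this costs $O(n\log n) = O(m \log m)$ time, including the (at most one) fractional vertex needed to enforce $\bd_{t-1}(L_t) = \bd_{t-1}(R_t)$. Summing yields the claimed $O(mt + m\log m)$ bound. For $t > T/2$, the column-based cut player of Remark 4.7 is handled symmetrically: the equivalent formulation of the update in Remark 4.6 gives the analogous factorization $\cbF_\tau = \bP_\tau^\top \cbF_{\tau-1} \bQ_\tau^\top$ for $\cbF_\tau = \bbF_\tau^\top$, and the same right-to-left sweep computes the column-based projections within the same time bound.

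The main point requiring verification is the $O(m)$ sparsity of each $\bbM_\tau$. This follows because $\bbM_\tau$ is supported exactly on those pairs $(a,b) \in \bar A_{\tau-1} \times \bar A_{\tau-1}$ that appear as endpoints of some path in the decomposition of the two matching flows, and a flow in a graph with $m$ edges admits a path/cycle decomposition of size at most $m$. This is precisely the data the matching player must maintain anyway in order to perform the flow update of Subsection 4.1, so no additional bookkeeping is needed beyond what the algorithm already supports.
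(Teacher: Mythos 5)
Your proposal is correct and follows essentially the same approach as the paper: exploit the $O(m)$ sparsity of each matching matrix (via flow path decompositions) to evaluate $\bbF_{t-1}\,\bs_t$ by iterating the recursive flow update in $O(m)$ time per round, then sort to find the weighted median. Your explicit factorization $\bbF_\tau = \bQ_\tau \bbF_{\tau-1}\bP_\tau$ with right-to-left evaluation is in fact slightly more careful than the paper's one-line claim that $\bbF_t\bv$ follows from $\bbF_{t-1}\bv$, since the column-redistribution operators must be applied to the vector first.
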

\begin{proof}
Note that, since $G$ has $m$ edges, any flow on $G$ can be decomposed into $m$ flow paths. Hence, since each $\bbM_s$ is composed of two flows on subgraphs of $G$, each has at most $2m$ nonzero entries. Then, for any vector $\bv$, it takes $O(m)$ time to compute $\bbF_t \bv$ given $\bbF_{t-1} \bv$ by the recursive formulation for $\bbF_t$. As such, padding $\br_t$ with zeroes and scaling the $w_\circ$\textsuperscript{th} entry by $1/\sqrt{\bd_{t-1}(w)}$ for each $w \in A_{t-1}$ to form $\tilde{\br}_t$, we can compute $\bbF_t \tilde{\br}_t$ in $O(mt)$ total time. Then, the entries of $\bbF_t \tilde{\br}_t$ corresponding to $A_{t-1}^\circ$ are precisely the $p_t(u)$ we care about. We can then sort them in $O(m \log m)$ time to compute the required bipartition.
\end{proof}

\begin{remark}
Note that, since we are restricting the support of our random unit vector to $A_{t-1}^\circ$, we have to compute $\bbF_{t} \bv$ from scratch for each cut step. This is in contrast to usual implementations of the cut step for~\cite{khandekar2009graph} based analyses where the computations from prior cut steps can be reused. Nonetheless, this runtime increase (a factor of $T$ for the cut steps) is a lower order term. 
\end{remark}

We will need that there is some set certifying that we make progress on each pair of cut/matching steps. The following technical lemma proves the existence of such a set. This is similar to Lemma 3.3 in \cite{racke2014computing}, except we allow $S$ to be a subset of either part of the partition and the sets are equal-sized. 

\begin{lemma}\label{lem:apd-finding-S}
    Let $X$ be a finite multi-subset of $\R$. Let $(L_{\eta}, R_{\eta})$ be a bisection  of $X$ by some $\eta \in \R$ so that $\max(L_{\eta}) \leq \eta \leq \min(R_{\eta})$. Define $\bar{\mu} = \frac{1}{|X|}\sum_{x \in X} x$. Then, there exists $S \subseteq X$ such that: 
    \begin{enumerate}
        \item $S \subseteq L_{\eta}$ or $S \subseteq R_{\eta}$.
        \item For each $s\in S$, we have $(s-\eta)^2\ge \frac{1}{9}\cdot (s-\bar{\mu})^2$.
        \item $\sum_{s\in S}(s-\bar{\mu})^2\ge \frac{1}{16}\sum_{x\in X}(x-\bar{\mu})^2$.
    \end{enumerate}
\end{lemma}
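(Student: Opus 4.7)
The plan is to set $T := \{x \in X : (x-\eta)^2 \geq (x-\bar{\mu})^2/9\}$, take $S_L := L_\eta \cap T$ and $S_R := R_\eta \cap T$, and choose $S$ to be whichever of $S_L, S_R$ has the larger value of $\sum_{s}(s-\bar{\mu})^2$. Properties (1) and (2) then hold by construction, so the real content of the lemma reduces to proving $\sum_{x \in T}(x-\bar{\mu})^2 \geq V/8$, where $V := \sum_{x \in X}(x-\bar{\mu})^2$; this will yield property (3) with a factor of two to spare. By negating $X$ if necessary I may assume $\eta \geq \bar{\mu}$, and I write $a := \eta - \bar{\mu} \geq 0$ and $n := |X|$.

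First I would pin down $X \setminus T$ geometrically. A routine case analysis on whether $x \leq \bar{\mu}$, $\bar{\mu} < x \leq \eta$, or $x > \eta$ shows that the inequality $|x-\eta| < |x-\bar{\mu}|/3$ forces $x \in (\bar{\mu} + 3a/4,\, \bar{\mu} + 3a/2)$. In particular, every element of $X \setminus T$ satisfies $0 < x - \bar{\mu} < 3a/2$; notice that this automatically places all of $X \setminus T$ strictly to the right of $\bar{\mu}$.

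Next comes the key step, which I expect to be the main obstacle: upper bounding $\sum_{x \in X \setminus T}(x-\bar{\mu})^2$. The naive bound $|X \setminus T| \cdot (3a/2)^2$ combined with the inequality $V \geq na^2$ (proved below) is not tight enough -- it only yields $\sum_{X \setminus T}(x-\bar{\mu})^2 \lesssim V$, which is useless. Instead I will use the elementary inequality $\sum_i y_i^2 \leq (\max_i y_i)(\sum_i y_i)$, valid for positive $y_i$. Writing $M := \sum_{x \in X,\; x \geq \bar{\mu}}(x-\bar{\mu})$, and using that all $x - \bar{\mu}$ on $X \setminus T$ are positive with $\max < 3a/2$ and $\sum \leq M$, this delivers
\[
\sum_{x \in X \setminus T}(x-\bar{\mu})^2 \;<\; \frac{3a}{2} \cdot M.
\]

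The final step is to show $aM \leq V/2$ via two Cauchy--Schwarz estimates that exploit the two structural hypotheses. First, since the bisection is balanced with $|R_\eta| = n/2$ and every $x \in R_\eta$ satisfies $x - \bar{\mu} \geq a$, we have $\sum_{R_\eta}(x-\bar{\mu}) \geq (n/2)a$, and by the identity $\sum_X(x-\bar{\mu}) = 0$ the $L_\eta$-sum has absolute value at least $(n/2)a$; applying Cauchy--Schwarz to $L_\eta$ and $R_\eta$ separately then gives $V \geq na^2$. Second, splitting $X$ into $X_\pm := \{x : \pm(x-\bar{\mu}) \geq 0\}$ and using $\sum_{X_+}(x-\bar{\mu}) = M = -\sum_{X_-}(x-\bar{\mu})$ with Cauchy--Schwarz on each side and then AM--HM on $1/|X_+| + 1/|X_-|$ gives $V \geq 4M^2/n$. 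Multiplying these two bounds produces $(aM)^2 \leq V^2/4$, hence $\sum_{x \in X \setminus T}(x-\bar{\mu})^2 \leq 3V/4$, so $\sum_{x \in T}(x-\bar{\mu})^2 \geq V/4$. Since $S_L$ and $S_R$ partition $T$, the larger of them satisfies $\sum_s(s-\bar{\mu})^2 \geq V/8 \geq V/16$, as required.
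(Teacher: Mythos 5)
Your proof is correct, and it takes a genuinely different route from the paper's. The paper argues by cases: either $R_{\bar{\mu}} = \{x \ge \bar{\mu}\}$ already carries a $1/16$ fraction of the variance (and is taken as $S$ directly), or else almost all the variance sits in $L_{\bar{\mu}}$, in which case the paper shows $\eta \ge \bar{\mu} - 2d/|X|$, discards an ``intermediate'' band of width $3d/|X|$ whose variance is controlled by a Cauchy--Schwarz lower bound on $\sum_{R_{\bar{\mu}}}(x-\bar{\mu})^2$, and takes the far tail of $L_\eta$. You instead avoid the case split entirely: you define $T$ as the full set of elements satisfying property (2), pin down its complement as the interval $(\bar{\mu}+3a/4,\,\bar{\mu}+3a/2)$, and control $\sum_{X\setminus T}(x-\bar{\mu})^2 \le \tfrac{3a}{2}M$ via the $\sum y_i^2 \le (\max y_i)(\sum y_i)$ trick. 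The decisive step $aM \le V/2$, obtained by multiplying the two Cauchy--Schwarz bounds $V \ge na^2$ (which uses the balancedness of the bisection) and $V \ge 4M^2/n$ (which uses $\sum_X(x-\bar{\mu})=0$), has no direct counterpart in the paper, though both proofs ultimately exploit the same two structural facts. Your argument is arguably cleaner, yields the stronger constant $1/8$ in property (3), and produces as $S$ a half of the \emph{entire} set of elements satisfying property (2) rather than a specially constructed tail. Both proofs implicitly assume $|L_\eta|=|R_\eta|=n/2$ (the paper says so explicitly ``for simplicity''), which is what the application guarantees, so that is not a gap.
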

\begin{remark}
Intuitively either the sum of distances squared to the average is reasonably well-balanced between $L_\eta$ and $R_\eta$ or it is very unbalanced and mostly comes from elements far from both $\eta$ and $\bar{\mu}$.     
\end{remark}

\begin{remark} \label{rmk: use of apd-finding-S}
We will apply \cref{lem:apd-finding-S} with $L_\eta$ as the multiset of $p_t(u)$ for $u \in L_t$, with $p_t(u)$ having multiplicity $\bd_{t-1}(u)$, and $R_\eta$ similarly defined. Then, for $t \leq T/2$, $\bar{\mu} = \inner{\tilde{{\bmu}}_t}{\br_t}$, where $\tilde{{\bmu}}_t$ is just $\bmu_t$  with $w_\circ$\textsuperscript{th} column scaled by $1/\sqrt{\bd_{t-1}(w)}$. (Here $\bmu_t$ is the average of $\bF_{t-1}^\circ(u_\circ)$ over $u_\circ \in A_{t-1}^\circ$. See \cref{sec: cvg analysis}.)
\end{remark}

\begin{proof}[Proof of \cref{lem:apd-finding-S}]
Assume for simplicity that $|X|$ is even. Define $L_{\bar{\mu}}$ and $R_{\bar{\mu}}$ to be the subsets of $X$ less than and at least $\bar{\mu}$, respectively. Note that
\begin{equation*}
\sum_{x \in L_{\bar{\mu}}} \bar{\mu} - x = \sum_{x \in R_{\bar{\mu}}} x - \bar{\mu} =: d.    
\end{equation*}
Suppose that $\bar{\mu} \geq \eta$. 
Now, for the first case, suppose
\begin{equation*}
\sum_{x \in R_{\bar{\mu}}} (x - \bar{\mu})^2 \geq \frac{1}{16} \sum_{x \in X} (x - \bar{\mu})^2.
\end{equation*}
Then, $R_{\bar{\mu}}$ satisfies all the desired properties.
For the second case,  suppose
\begin{equation*}
\sum_{x \in R_{\bar{\mu}}} (x - \bar{\mu})^2 < \frac{1}{16} \sum_{x \in X} (x - \bar{\mu})^2,
\end{equation*}
so that
\begin{equation*}
\sum_{x \in L_{\bar{\mu}}} (x - \bar{\mu})^2 \geq \frac{15}{16} \sum_{x \in X} (x - \bar{\mu})^2.
\end{equation*}
Since $\bar{\mu} \geq \eta$, the elements in $L_{\bar\mu}\setminus L_\eta$ satisfy
\begin{equation*}
\sum_{x\in L_{\bar\mu}\setminus L_\eta}(x-\bar\mu)^2\le\frac{|X|}2(\eta-\bar\mu)^2\le\sum_{x\in L_\eta}(x-\bar\mu)^2.
\end{equation*}
This in turn implies that 
\begin{equation*}
\sum_{x \in L_{\eta}} (x - \bar{\mu})^2 \geq \frac{15}{32} \sum_{x \in X} (x - \bar{\mu})^2.
\end{equation*}
Observe also that we must have 
\begin{equation*} \label{eq: eta bd}
\eta \geq \bar{\mu} - \frac{2d}{|X|},
\end{equation*}
since otherwise 
\begin{equation*}
d = \sum_{x \in L_{\bar{\mu}}} \bar{\mu} - x \geq \sum_{x \in L_{\eta}} \bar{\mu} - x > \frac{2d}{|X|} \cdot \frac{|X|}{2} = d.
\end{equation*}
As a result, for any $x \in X$ with $\bar{\mu} - x = \frac{Cd}{|X|}$ for $C \geq 3$, the second property is satisfied. Indeed,
\begin{align*}
(\eta - x)^2 &\geq \left(\bar{\mu} - \frac{2d}{|X|} -x\right)^2 \\
&= (\bar{\mu} - x)^2 - \frac{4d}{|X|}\cdot (\bar{\mu} - x) + \frac{4d^2}{|X|^2} \\
&= (C^2 -4C + 4) \cdot \frac{d^2}{|X|^2} \\
&\geq \frac{C^2d^2}{9|X|^2}.
\end{align*}
Now, consider the subset of ``intermediate'' elements in $L_{\bar{\mu}}$ which do not satisfy the second property as a direct consequence of the above analysis: 
\begin{equation*}
I := \left\{ x \in X\,:\, 0 < \bar{\mu} - x < \frac{3d}{|X|} \right\}.    
\end{equation*}
Note that
\begin{equation*}
    \sum_{x \in I} (\bar{\mu} - x)^2 \leq \frac{9d^2 |I|}{|X|^2} \leq \frac{9d^2}{|X|}.
\end{equation*}
But, by the Cauchy-Schwarz inequality,
\begin{equation*}
    \sum_{x \in R_{\bar{\mu}}} (\bar{\mu} - x)^2 \geq 
    \frac{\left(\sum_{x \in R_{\bar{\mu}}} (\bar{\mu} - x) \cdot 1\right)^2}{\sum_{x \in R_{\bar{\mu}} }1^2} = \frac{d^2}{|R_{\bar{\mu}}|} \geq \frac{2d^2}{|X|}.
\end{equation*}
So, since $\sum_{x \in I} (x - \bar{\mu})^2$ is at most $4.5 \cdot \sum_{x \in R_{\bar{\mu}}} (x - \bar{\mu})^2$ and $\sum_{x \in R_{\bar{\mu}}} (x - \bar{\mu})^2 < \frac{1}{16} \sum_{x \in X} (x - \bar{\mu})^2$ by assumption, we have
\begin{equation*}
\sum_{x \in L_\eta \cap \{y \,:\ y \leq \bar{\mu} - 3d/|X|\}} (x - \bar{\mu})^2 \geq \frac{5}{32} \sum_{x \in X} (\bar{\mu} - x)^2.
\end{equation*}
Hence, taking the subset of $x \leq \bar{\mu} - 3d/|X|$ in $L_{\eta}$ yields the desired result. The proof in the case of $\bar{\mu} < \eta$ is analogous.
\end{proof}

\subsection{Matching Player} \label{subsec: matching player}

The matching player receives the bisection $A_{t-1} = L_t \sqcup R_t$ computed by the cut player. The matching player tries to embed a matching from $L_t$ to $R_t$ into $G[V\setminus C_{\leq k_{t-1}}]$ and a matching from $R_t$ to $L_t$ into $G[V \setminus C_{\leq k_{t-1}}]$, each with congestion at most $1/\phi$.  When some matching fails to be embedded with low congestion, we find a sparse cut and a partial matching.

To embed a matching from $L_{t}$ to $R_t$ into $G[V \setminus C_{\leq k_{t-1}}]$, consider the flow problem where we set $\Delta_t(u) = \bd_{t-1}(u)$ for each $u \in L_t$ and $\nabla_t(u) = \bd_{t-1}(v)$ for each $v \in R_t$. Each edge $e \in E(G[V \setminus C_{\leq k_{t-1}}])$ has capacity $\bc(e)/\phi$, resulting in the graph $G_t$. We additionally solve the flow problem with $\Delta_t$ and $\nabla_t$ swapped.

Using dynamic trees~\cite{sleator1981data}, we can find path decompositions of each flow into at most $m$ flow paths in $O(m \log nW)$ time. These path decompositions induce a weighted partial matching $\bM_t$ on $A_{t-1}$, where $\bM_t(u,v)$ is the amount of source sent from $v$ to $u$ among the two flows (see \cref{rmk: backwards?} if this seems backwards). We can use these path decompositions to define the weighted partial matching $\bbM_t$ on $\bar{A}_{t-1}$, as well as define $\bd_t(u)$ for $u \in A$. First, for $u \in A$ in the sparse cut(s) resultant from solving the flow problems, we set $\bd_t(u) = 0$ (we ``fully delete $u$''), and add the cut to our sequence of sparse cuts, incrementing $k$. We do this even if $u$ was previously deleted but not in the sequence of sparse cuts. Once deleted, $\bd_{t+1}(u) = \bd_t(u)$ for all subsequent rounds. For  $u \in A$, we set 
\[
\bd_t(u) = \min\left(\sum_{v \in A} \bM_t(u, v), \sum_{v \in A} \bM_t(v, u)\right).
\]
This update can be interpreted as ``partially deleting'' the parts of $u$ that did not both send and receive flow in this step. For technical reasons in our convergence analysis, if this update results in $\bd_t(u) < \bd(u)/2$, we set $\bd_t(u) = 0$ and add $u$ to $D_t$.

It will be important for post-processing to know how deleted vertices were deleted. After describing how to implement our matching step (and find sparse cuts efficiently in the case of failing to match), we will show that $\bd(D_t)$ is not much larger than $\bd(C_{\leq k_t})$.

We can now define $\bbM_t$. 
\begin{enumerate}
    \item If $\bd_t(u) = \sum_{w \in A} \bM_t(u, w)$ and $\bd_t(v) = \sum_{w \in A} \bM_t(w, v)$ we set 
    \[
    \bbM_t(u_\circ, v_\circ) = \bM_t(u,v), \quad \bbM_t(u_\circ, v_\times) = 0, \quad \bbM_t(u_\times, v_\circ) = 0, \quad \bbM_t(u_\times, v_\times) = 0.
    \]
    \item If $\bd_t(u) < \sum_{w \in A} \bM_t(u, w)$ and $\bd_t(v) = \sum_{w \in A} \bM_t(w, v)$, we set 
    \begin{align*}
    &\bbM_t(u_\circ, v_\circ) = \frac{\bd_t(u)}{\bd_{t-1}(u)}\bM_t(u,v),  &&\bbM_t(u_\circ, v_\times) = 0,   \\
   &\bbM_t(u_\times, v_\circ) = \left(1 - \frac{\bd_t(u)}{\bd_{t-1}(u)}\right)\bM_t(u,v), &&\bbM_t(u_\times, v_\times) = 0. 
    \end{align*}
    \item If $\bd_t(u) = \sum_{w \in A} \bM_t(u, w)$ and $\bd_t(v) < \sum_{w \in A} \bM_t(w, v)$, we set 
    \begin{align*}
    &\bbM_t(u_\circ, v_\circ) = \frac{\bd_t(v)}{\bd_{t-1}(v)}\bM_t(u,v),   &&\bbM_t(u_\circ, v_\times) = \left(1 - \frac{\bd_t(v)}{\bd_{t-1}(v)}\right)\bM_t(u,v), \\
    &\bbM_t(u_\times, v_\circ) = 0, &&\bbM_t(u_\times, v_\times) = 0.
    \end{align*} 
    \item If $\bd_t(u) < \sum_{w \in A} \bM_t(u, w)$ and $\bd_t(v) < \sum_{w \in A} \bM_t(w, v)$, we set 
    \begin{align*}
    &\bbM_t(u_\circ, v_\circ) = \frac{\bd_t(v) \bd_t(u) }{\bd_{t-1}(v) \bd_{t-1}(u)}\bM_t(u,v), &&\bbM_t(u_\circ, v_\times) = \frac{(\bd_{t-1}(v) - \bd_t(v)) \bd_t(u) }{\bd_{t-1}(v) \bd_{t-1}(u)}\bM_t(u,v) \\
    &\bbM_t(u_\times, v_\circ) = \frac{\bd_t(v) (\bd_{t-1}(u) - \bd_t(u)) }{\bd_{t-1}(v) \bd_{t-1}(u)}\bM_t(u,v), &&\bbM_t(u_\times, v_\times) = \frac{(\bd_{t-1}(v) -  \bd_t(v))(\bd_{t-1}(u) - \bd_t(u)) }{\bd_{t-1}(v) \bd_{t-1}(u)}\bM_t(u,v).
    \end{align*} 
\end{enumerate} 
We can attach a tagline to each of these cases. In the first case, $u$ sends at least as much flow as it receives, and $v$ receives at least as much as it sends. In the second case, $u$ sends less flow than it receives, and $v$ receives at least as much as it sends. This is flipped in the third case; $u$ sends at least as much flow as it receives, and $v$ receives less flow than it sends. Finally, in the fourth case $u$ sends less flow than it receives, and $v$ receives less flow than it sends. 
\begin{remark}\label{rmk: matching equality}
  Note that $\bM_t(u, v) = \bbM_t(u_\circ, v_\circ) + \bbM_t(u_\circ, v_\times) + \bbM_t(u_\times, v_\circ) + \bbM_t(u_\times, v_\times)$, so we are just dividing the flow from $u$ to $v$ up between their active and deleted portions. Also note that $\bbM_t$ is defined so that for each $u \in A_{t}$, we have 
\[
\sum_{v \in \bar{A}_{t-1}} \bbM_t(v, u_\circ) = \sum_{v \in \bar{A}_{t-1}} \bbM_t(u_\circ, v) = \bd_t(u), \quad \sum_{v \in \bar{A}_{t-1}}\bbM_t(v, u_\times) \leq \bd_{t-1}(u) - \bd_t(u),
\]
as promised by~\cref{rmk: matching step prop}.
\end{remark}

In general, we can solve each flow problem with one call to an exact max flow oracle. But, in the case of $\bd(u)$ corresponding to $\deg_G(u)$, we can instead use bounded height push relabel with height $h = O(\frac{1}{\phi}\log nW)$  to either route the matching or find a sequence of $2.1\phi$-sparse cuts in $O(mh \log n)$ = $O(\frac{m \log n \log nW}{\phi})$ total time. The extra $\log n$ factor comes from a standard application of link-cut trees~\cite{sleator1981data} to push-relabel, as in~\cite{goldberg1988pushrel}. 

Indeed, as we make concrete in \cref{lem: bdd ht push relabel}, we can use bounded height push relabel for the matching step for demands similar to $\deg_G$ as well. In the description of~\cref{lem: bdd ht push relabel} we use additional notation specifically relevant to the Push-Relabel framework and pre-flows (see~\cref{sec: prelims} for the relevant definitions).

\begin{lemma}\label{lem: bdd ht push relabel}
Let $H = (V, E, \bc)$ be a graph with $\bc \in ([1, W] \cap \N)^E$, and let $H_\phi$ be the graph with capacities scaled by $1/\phi$. Let $\bb$ be a vertex weighting on $H$ with $\bb \geq \deg_H/\kappa$, for some $\kappa \geq 1$. Let $\Delta$ and $\nabla$ be source and sink functions on $H_\phi$ with $\Delta, \nabla \leq  \bb$, $\Delta(V) \leq  \nabla(V)$. Let $\Bf$ and $\Bell$ be the pre-flow and vertex levels resultant from termination of push-relabel on flow instance with source $\Delta$ and sink $\nabla$ with maximum height $h = O( \frac{\kappa \log nW}{\phi})$ on $H_\phi$. Then, either
\begin{enumerate}
    \item $\ex_{\Bf}(V) = 0$, and $\Bf$ is a flow is routing $\Delta$ (i.e., no vertices have excess), or
    \item $\ex_{\Bf}(V) > 0$ and there exists $S \subseteq V$ with $\Phi_{\bb}(S) \leq 1.1 \phi$ and $\bb(S) \leq 2\bb(V)/3$.
    Moreover, $S$ can be found in an additional $O(m)$ time, and the flow instance with source nonzero only on $V \setminus S$ is feasible upon removing an additional $2 \cdot \min(\bb(S), \bb(\overline{S}))$ source. 
\end{enumerate}
\end{lemma}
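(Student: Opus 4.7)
The plan is to analyze the pre-flow $\Bf$ and labeling $\Bell$ at termination. By the semantics of bounded-height push-relabel, at termination every node $v$ with $\ex_{\Bf}(v)>0$ satisfies $\Bell(v)=h$, and $(\Bf,\Bell)$ obeys the standard valid-labeling invariant that every residual arc $(u,v)$ has $\Bell(u)\le \Bell(v)+1$. If $\ex_{\Bf}(V)=0$, then $\Bf$ routes $\Delta$ and Case~1 is immediate; so assume from here on that $\ex_{\Bf}(V)>0$, meaning some nontrivial mass is stuck at level $h$.

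To produce $S$ I would run the standard level-cut argument. Define $S_i=\{v:\Bell(v)\ge i\}$ for $i=1,\ldots,h$. The valid-labeling invariant implies that for any original arc $(u,v)$ with $\Bell(u)\ge i+1$ and $\Bell(v)\le i-1$ the forward residual arc is absent, so $\Bf(u,v)=\bc(u,v)/\phi$; dually, any original arc $(v,u)$ in this configuration must carry zero flow. Aggregating, for each $i$ the boundary arcs crossing the level cut in the ``downhill'' direction are fully saturated in $H_\phi$. The sequence $\bb(S_1)\ge \bb(S_2)\ge\cdots\ge \bb(S_h)>0$ telescopes through $\bb(V)$, and since the assumption $\bb\ge \deg_H/\kappa$ gives $\bb(V)=\Omega(m/\kappa)$, a standard averaging/pigeonhole over the $h=\Theta(\kappa\log(nW)/\phi)$ candidate cuts—restricted to the balanced range $\bb(S_i)\le 2\bb(V)/3$ so balance is automatic—yields some $i^\star$ with $\min(\bc(E(S_{i^\star},\overline{S_{i^\star}})),\bc(E(\overline{S_{i^\star}},S_{i^\star})))\le 1.1\phi\cdot\min(\bb(S_{i^\star}),\bb(\overline{S_{i^\star}}))$, i.e.\ $\Phi_{\bb}(S_{i^\star})\le 1.1\phi$. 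This is exactly in the spirit of Lemma~B.6 of Saranurak--Wang, with the $\kappa$ in $h$ translating the edge-capacity pigeonhole (giving $\bb(V)/h$-type boundary weight) into the required $O(\phi\cdot\bb(S))$ bound. Taking $S=S_{i^\star}$ (passing to the smaller side if needed) gives the required cut, and sweeping nodes in order of $\Bell$ finds $i^\star$ in $O(m)$ time.

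For the feasibility claim, restrict $\Bf$ to $H_\phi[V\setminus S]$. The restriction fails to route $\Delta|_{V\setminus S}$ only because of $\Bf$-flow that (i)~enters $S$ along arcs of $E(\overline{S},S)$ and (ii)~leaves $S$ along arcs of $E(S,\overline{S})$; both quantities are bounded by the scaled boundary capacity, which by our sparsity bound is at most $\delta_H(S,\overline{S})/\phi+\delta_H(\overline{S},S)/\phi\le 2\min(\bb(S),\bb(\overline{S}))$ after absorbing the $1.1$ factor into the hidden constant of $h$. Removing this much additional source therefore produces a feasible flow.

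The main obstacle will be tuning the hidden constants so that the level-cut pigeonhole simultaneously yields $1.1\phi$-sparsity, $2\bb(V)/3$-balance, and the clean feasibility deficit of $2\min(\bb(S),\bb(\overline{S}))$; each refinement pushes a multiplicative factor into $h$, and propagating those through the flow restriction argument requires some care, especially because in the directed setting the two directions $\delta(S,\overline{S})$ and $\delta(\overline{S},S)$ must both be controlled for the final feasibility bound even though only one of them is directly sparse by the pigeonhole. Handling arbitrary integer capacities (rather than unit capacities) is what forces the $\log W$ factor in $h$, so that enough levels are available for the geometric averaging to bite.
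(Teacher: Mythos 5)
Your overall architecture (level cuts, saturation of level-skipping downhill edges, ball-growing over $h=\Theta(\kappa\log(nW)/\phi)$ levels to find a level whose consecutive-level boundary is small) matches the paper's. But there is a genuine gap at the heart of the sparsity claim. The pigeonhole/ball-growing step only controls the capacity of edges between \emph{consecutive} levels crossing the cut; it says nothing about the level-skipping edges, which are exactly the saturated ones and can each carry capacity as large as $W/\phi$. To conclude $\delta_H(L_{\geq k},L_{<k})\le 1.1\phi\min(\bb(L_{\geq k}),\bb(L_{<k}))$ you must \emph{upper-bound the flow} $\Bf(L_{\geq k},L_{<k})$ crossing the cut, since $\tfrac1\phi\delta_H(L_{\geq k},L_{<k})\le \Bf(L_{\geq k},L_{<k})+\delta_{H_{\mathrm{con}}}(L_{\geq k})$. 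The paper does this via flow conservation: $\Bf(L_{\geq k},L_{<k})=\Delta(L_{\geq k})-\abs_{\Bf}(L_{\geq k})-\ex_{\Bf}(L_{\geq k})<\Delta(L_{\geq k})-\nabla(L_{\geq k})$ (all positive-level vertices have saturated sinks), and then uses $\Delta,\nabla\le\bb$ and $\Delta(V)\le\nabla(V)$ to get $\Bf(L_{\geq k},L_{<k})\le\min(\bb(L_{\geq k}),\bb(L_{<k}))$. Your proposal never invokes the hypotheses $\Delta(V)\le\nabla(V)$ or $\Delta,\nabla\le\bb$, which is the symptom: without this step the saturated edges are uncontrolled and no sparsity bound follows from the pigeonhole alone.

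The feasibility clause has a second gap, which you partly flag yourself but do not resolve. You bound the deficit by $\delta_H(S,\overline S)/\phi+\delta_H(\overline S,S)/\phi$ and then claim this is at most $2\min(\bb(S),\bb(\overline S))$ ``by our sparsity bound.'' The sparsity bound only controls $\min(\delta_H(S,\overline S),\delta_H(\overline S,S))$; in a directed graph the other direction can be arbitrarily large, so this inequality is false in general. The paper's argument is structurally different: it splits into the cases $S=L_{\geq k}$ and $S=L_{<k}$, and in each case uses push-relabel invariants directly (flow from $\overline S$ into vertices with excess is confined to consecutive-level edges and hence at most $0.1\min(\bb(S),\bb(\overline S))$ in the first case; in the second case all sinks in $L_{\geq k}$ are saturated and $\Delta(L_{<k}),\nabla(L_{<k})\le\bb(L_{<k})$), never needing to bound the non-sparse direction of the cut. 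You would need to replace your boundary-capacity argument with something of this form.
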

\begin{proof}
The case of $\ex_{\Bf}(V) = 0$ follows from correctness of push-relabel. For each $k \in \N$, we define the \textit{$k$\textsuperscript{th} level set} $L_k := \{u \in V\,:\, \Bell(u) = k\}.$
For the case of $\ex_{\Bf}(V) > 0$, we consider the level cuts $L_{\geq k} := \{u \,:\, \Bell(u) \geq k\}$. 
 We first show that there exists a level $k$ where 
\begin{align*}
\Bf(L_{\geq k}, L_{<k}) \geq \delta_{H_\phi}(L_{\geq k}, L_{<k}) -  \frac{1}{10} \min(\bb(L_{\geq k}), \bb(L_{<k})).
\end{align*}
Let $H_{\text{con}} \subseteq H_\phi$ be the edge subgraph where we retain only edges between consecutive levels. Now, suppose there are $B$ many ``big'' level $k$ cuts, i.e., such that 
\begin{equation}
\delta_{H_\text{con}}(L_{\geq k}) \geq \frac{\phi}{10\kappa } \vol_{H_\text{con}}(L_{\geq k}).  \label{eqtn: sparse lvl}  
\end{equation}
Then, note that we have  
\begin{align*}
\vol_{H_\text{con}}(L_{\geq k - 1})  &\geq \delta_{H_\text{con}}(L_{\geq k}) + \vol_{H_\text{con}}(L_{\geq k})    
\end{align*}
and, trivially, 
\[
\vol_{H_\text{con}}(L_{\geq r}) \geq \vol_{H_\text{con}}(L_{\geq k})
\]
for $r \leq k$. Hence, we get
\[
\left(1 + \frac{\phi}{10\kappa}\right)^B \leq \vol_{H_\text{con}}(V) \leq \vol_{H_\phi}(V) < 2(nW)^2.
\]
Consequently, we have $B \leq \frac{25 \kappa  \log nW}{ \phi }$. In particular, for $h = \frac{100 \kappa \log nW}{ \phi }$, there will be at least $ \frac{75 \kappa \log nW}{2 \phi }$ levels $k$ on which \cref{eqtn: sparse lvl} does not hold. We can similarly show that there are at most $ \frac{25 \kappa \log nW}{ \phi }$ levels such that
\begin{equation*}
\delta_{H_\text{con}}(L_{\geq k}) \geq \frac{\phi }{10\kappa} \vol_{H_\text{con}}(L_{< k}),
\end{equation*}
this time using the identity that 
\[
\vol_{H_\text{con}}(L_{< k + 1})  \geq \delta_{H_\text{con}}(L_{\geq k}) + \vol_{H_\text{con}}(L_{< k}).
\]
Therefore, we get that there exists a level $k > 0$ such that 
\begin{align*}
 \delta_{H_\text{con}}(L_{\geq k}) &< \frac{\phi}{10\kappa} \min(\vol_{H_\text{con}}(L_{< k}),\vol_{H_\text{con}}(L_{\geq k})) \\
 &\leq \frac{\phi}{10\kappa} \min(\vol_{H_\phi}(L_{< k}),\vol_{H_\phi}(L_{\geq k})) \\
 &\leq \frac{1}{10} \min(\bb(L_{< k}),\bb(L_{\geq k})).
\end{align*}
Now, it is an invariant of push-relabel that edges that decrease more than one level from start to end must be saturated, and edges that increase more than one level from start to end carry zero flow. Hence, 
\begin{align*}
\Bf(L_{\geq k}, L_{<k}) &\geq \delta_{H_\phi}(L_{\geq k}, L_{< k}) -  \frac{1}{10 } \min(\bb(L_{\geq k}), \bb(L_{<k})) \\
&\geq \frac{1}{\phi} \delta_{H}(L_{\geq k}, L_{< k}) -  \frac{1}{10 } \min(\bb(L_{\geq k}), \bb(L_{<k})),
\end{align*}
since the only possible non-saturated edges from $L_{\geq k}$ to $L_{<k}$ in $H_\phi$ are the edges accounted for in $\delta_{H_\text{con}}(L_{\geq k})$.

We want to show that $(L_{\geq k}, L_{< k})$ is a sparse cut relative to $\bb$, so we need to upper bound $\Bf(L_{\geq k}, L_{<k})$ by roughly $\min(\bb(L_{\geq k}), \bb(L_{<k}))$. Note that we have
\[
\Bf(L_{\geq k}, L_{<k}) = \Delta(L_{\geq k}) - \abs_{\Bf}(L_{\geq k}) - \ex_{\Bf}(L_{\geq k}) < \Delta(L_{\geq k}) - \nabla(L_{\geq k}),
\]
since, for all $ u \in L_{\geq k}$ we have $\Bell(u) > 0$ so $\abs_{\Bf}(u) = \nabla(u)$. We also have $\ex_{\Bf}(L_{\geq k}) > 0$ since $\ex_{\Bf}(V) > 0$ and all nodes with positive excess have height $h$. We trivially have $ \Delta(L_{\geq k}) \leq \bb(L_{\geq k})$ since $\Delta \leq \bb$. Likewise, 
\[
\Delta(L_{\geq k}) = \Delta(V) - \Delta(L_{< k}) \leq \nabla(V) - \Delta(L_{< k}) \leq \nabla(V) 
\]
using $\nabla(V) \geq \Delta(V)$ and 
\[
\nabla(L_{\geq k}) = \nabla(V) - \nabla(L_{< k}) \geq \nabla(V) - \bb(L_{< k}),
\]
using $\nabla \leq\bb$. Hence, we get
\[
\Bf(L_{\geq k}, L_{<k}) \leq  \min(\bb(L_{\geq k}), \bb(L_{<k})).
\]
This implies that 
\[
\delta_H(L_{\geq k}, L_{< k}) \leq 1.1 \phi \min(\bb(L_{\geq k}), \bb(L_{<k})),
\]
so $\Phi_{\bb}(L_{\geq k}) = \Phi_{\bb}(L_{< k}) \leq 1.1\phi$.

Finally, if $\bb(L_{\geq k}) \leq 2\bb(V)/3$, we output $L_{\geq k}$ as $S$. Otherwise, $\bb(L_{<k}) \leq 2\bb(V)/3$, and we output $L_{< k}$ as $S$. In both cases,  $\Delta|_{\overline{S}}$ is feasible except possibly for flow from sources to vertices with excess. When $S = L_{\geq k}$, at most  $\delta_{H_\text{con}}(L_{\geq k}, L_{< k}) \leq 0.1 \min(\bb(L_{\geq k}), \bb(L_{<k}))$ flow is sent from sources in $\overline{S}$ to vertices with excess.

When $S = L_{< k}$, we know that $\bb(L_{< k}) \leq 2  \bb(V)/3$. Then, since $\nabla \leq \bb$, we know $\nabla(L_{< k}) \leq \bb(L_{< k})$. Moreover, since $\Bell(u)$ only increases when $\abs_{\Bf}(u) = \nabla(u)$, we know that all sink in $L_{\geq k}$ is saturated. Then, since $\Delta(L_{< k}) \leq  \bb(L_{< k})$, all but $\bb(L_{< k})$ is saturated by sources in $L_{\geq k}$. Hence, $\Delta|_{\overline{S}}$ is feasible upon removing at most $2\cdot \min(\bb(L_{\geq k}), \bb(L_{<k})) = 2 \bb(L_{<k})$ additional source, as required.
\end{proof}

\begin{remark}
\cref{lem: bdd ht push relabel} is the analogue of Lemma B.6 of \cite{saranurak2019expander}. Unlike Lemma B.6 of \cite{saranurak2019expander}, we cannot find a restriction of our source function that is feasible by iterating push-relabel using dynamic push-relabel. The issue is that we are bipartitioning $\bd_{t-1}(A_{t-1})$ rather than finding an uneven partition (e.g., see the second property of Lemma 3.3 of~\cite{racke2014computing}, used in \cite{saranurak2019expander, li2025congestion}). We need the extra structure of a bipartition because our matchings are directed. Indeed, otherwise we could not guarantee that each active vertex receives and sends the same amount of flow    (see~\cref{rmk: matching step prop}). As a result, the total source and sink are equal in our flow problems (at least initially) and it's possible that the sparse cut we remove is the side with vertices at level $0$ at termination of bounded height push relabel, which dynamic push-relabel does not allow.
\end{remark}
\begin{remark} \label{rmk: unit flow}
In the case of unweighted graphs, Unit Flow~\cite{henzinger2020local} solves this flow problem in $O(mh)$ time, removing the extra $O(\log n)$ factor in the running time.
\end{remark}

\begin{remark}
In the subsequent analysis involving~\cref{lem: bdd ht push relabel}, we only use the guarantees (1) and (2). Indeed, any flow algorithm with those guarantees could be substituted here instead of bounded-height push-relabel (e.g., these are the only guarantees we are using for an exact max-flow algorithm). 
\end{remark}

In the case $\bd \geq \deg_G /\kappa$, we apply~\cref{lem: bdd ht push relabel} with $H_\phi = G_t$ and $\bb = \bd|_H$. Since $\bd_{t-1} \leq \bd$ , we have $\Delta, \nabla \leq \bb$. By our cut step, we have $\Delta(V) = \nabla(V)$.

In the case of general $\bd$, we use exact max flow, and the mincut $S$ will yield our sparse cut in each flow problem. Indeed, each edge  leaving the cut will be saturated (since it is a mincut), and we get $f(S, V \setminus (C_{\leq k_{t-1}} \cup S)) =  \Delta(S) - \nabla(S)$. By the same proof as in~\cref{lem: bdd ht push relabel}, we then have $f(S, V \setminus (C_{\leq k_{t-1}} \cup S))) \leq \min(\bd(S),  \bd(V \setminus (C_{\leq k_{t-1}} \cup S)))$. But, $\frac{1}{\phi} \delta_{G[V \setminus C_{\leq k_{t-1}}]}(S, V \setminus (C_{\leq k_{t-1}} \cup S)) = f(S, V \setminus (C_{\leq k_{t-1}} \cup S))$, so we get the desired sparsity. Then, we can choose $S$ so that $\bd_{t-1}(S) \leq 2 \bd_{t-1}(V)/3$ and $\Delta|_{V \setminus (C_{\leq k_{t-1}} \cup S)}$ is feasible upon removing at most $2 \bd_{t-1}(S)$ additional source. 

\begin{remark} \label{rmk: sparse cut issue}
  Since both flow instances are on $G_t$, we need to be careful about how to add the sparse cuts we find to our sequence of $C_i$'s. Indeed, if the cuts we find are not disjoint, it may not be the case that the remainder of one is still sparse after removing the other.   
\end{remark}

We address~\cref{rmk: sparse cut issue}. Let $S_1$, $S_2$ be the cuts we find from our flow problems. If adding just one of the two to our sequence of sparse cuts would trigger our earlier termination condition, we do so and terminate. 

Otherwise, let $\bd(S_1) \leq \bd(S_2)$. Then, if $\bd(S_2 \setminus S_1) \geq \bd(S_2)/2$, add $S_1$ and then $S_2 \setminus S_1$ to our sequence of cuts. The sparsity of $S_2 \setminus S_1$ decreases by at most a factor of $2$, so it remains at most $3 \phi$. Finally, if $\bd(S_2 \setminus S_1) <  \bd(S_2)/2$, then $\bd(S_2 \cap S_1) \geq \bd(S_2)/2$. Then we add $S_2 \cap S_1$ to our sequence of sparse cuts (which has sparsity at most $3 \phi$), and add the symmetric difference $S_1 \triangle S_2$ to $D_t$. Note that
\[
\bd(S_1 \triangle S_2) = \bd(S_1 \setminus S_2) + \bd(S_2 \setminus S_1) \le \bd(S_1) +  \bd(S_2) / 2 \le 3 \bd(S_2) / 2 \le 3\bd(S_2 \cap S_1).
\]    

\begin{claim}  \label{cla: D_t bd}
For all $t \leq T$, we have $\bd(D_t) \leq 35\bd(C_{\leq k_t})$. Moreover, $\bd(C_{\leq k_t}) \geq (\bd(V) - \bd_t(V))/36$.
\end{claim}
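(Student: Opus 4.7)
The plan is to prove both parts simultaneously by establishing the single cumulative inequality $\bd(V) - \bd_t(V) \le 36\,\bd(C_{\le k_t})$. Since $\bd_t(u) = 0$ for $u \in C_{\le k_t} \cup D_t$ and $\bd_t(u) \le \bd(u)$ for $u \in A_t$, the left side decomposes as $\bd(C_{\le k_t}) + \bd(D_t) + \sum_{u \in A_t}(\bd(u) - \bd_t(u))$, with the last summand non-negative. Thus this cumulative bound immediately yields the ``moreover'' statement, and subtracting $\bd(C_{\le k_t})$ from both sides yields $\bd(D_t) \le 35\,\bd(C_{\le k_t})$.

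To establish the cumulative bound, I telescope round-by-round, writing $\bd(V) - \bd_t(V) = \sum_{s=1}^{t}\sum_{u \in A_{s-1}}(\bd_{s-1}(u) - \bd_s(u))$, and then bound each round's loss relative to $N_s^{\mathrm{new}} := C_{\le k_s} \setminus C_{\le k_{s-1}}$. The round-$s$ loss splits into three contributions: (i) vertices newly placed in the sparse cut sequence, contributing $\le \bd(N_s^{\mathrm{new}})$; (ii) vertices added to $D$ via the Case~2 symmetric-difference rule, contributing $\le \bd(S_1 \triangle S_2) \le 3\,\bd(N_s^{\mathrm{new}})$ by the case analysis preceding the claim; and (iii) the flow-based deficit $\bd_{s-1}(u) - \bd_s^{\mathrm{flow}}(u)$ on $A_{s-1} \setminus (S_1 \cup S_2)$. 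For (iii), the key tool is \cref{lem: bdd ht push relabel}: the unrouted source in each of the two flows is $\le 2\,\bd(S_i)$ on $V \setminus S_i$, and by the bisection property $\Delta(V) = \nabla(V)$ the total unsaturated sink is $O(\bd(S_i))$ as well. Summing over the two flows, the total flow deficit is $O(\bd(S_1) + \bd(S_2)) = O(\bd(N_s^{\mathrm{new}}))$, using that $\bd(S_1 \cup S_2) \le 4\,\bd(N_s^{\mathrm{new}})$ in both Case~1 and Case~2 of the sparse-cut addition rule.

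The main obstacle is the extra loss from the threshold rule, which sets $\bd_s(u) = 0$ rather than $\bd_s^{\mathrm{flow}}(u)$ for vertices $u \in D_s^{\mathrm{part}}$ (those with $\bd_s^{\mathrm{flow}}(u) < \bd(u)/2$): this per-round extra loss is $\le \bd(D_s^{\mathrm{part}})/2$ and cannot be bounded locally by $\bd(N_s^{\mathrm{new}})$, since a vertex may cross the threshold with arbitrarily small round-$s$ flow deficit. The resolution is a global rearrangement. Summing the three contributions over $s \le t$ gives an inequality of the form
\[
\bd(V) - \bd_t(V) \;\le\; C\,\bd(C_{\le k_t}) \;+\; \tfrac12\,\bd(D^{\mathrm{part}}_{\le t})
\]
for a constant $C$, and since $\bd(D^{\mathrm{part}}_{\le t}) \le \bd(D_t) \le \bd(V) - \bd_t(V) - \bd(C_{\le k_t})$, substituting this into the right-hand side and moving the $\tfrac12\,\bd(D^{\mathrm{part}}_{\le t})$ term to the left absorbs it into $\bd(V) - \bd_t(V)$, closing the argument with the final constant $36$. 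The key insight is that, although a threshold deletion may be nearly free in its final round, globally any $u \in D^{\mathrm{part}}_{\le t}$ has lost at least $\bd(u)/2$ of its weight cumulatively, so the extra term is already double-counted inside $\bd(V) - \bd_t(V)$ and can be rearranged away.
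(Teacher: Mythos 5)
Your proof is correct and takes essentially the same approach as the paper: per round, all non-threshold losses (sparse-cut deletions, symmetric-difference deletions, and flow deficits via \cref{lem: bdd ht push relabel} together with the $S_1,S_2$ case analysis) are charged to the newly added sparse-cut weight, and the threshold-rule losses are handled by a global argument exploiting that any threshold-deleted vertex has already lost at least half its weight. The paper closes that last step by charging $\bd(u)/2$ per such vertex to the cumulative flow deficit rather than by your algebraic rearrangement --- an interchangeable variant --- though note that $D^{\mathrm{part}}_{\le t}\subseteq D_t$ can fail (threshold-deleted vertices may later be swept into a sparse cut), so you should absorb via the direct bound $\bd(D^{\mathrm{part}}_{\le t})\le \bd(V)-\bd_t(V)$ instead.
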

\begin{proof}
We reduce the active portions of vertices during the matching step in the following ways.
\begin{enumerate}
    \item When computing matchings, we delete some demand to ensure feasibility (\cref{lem: bdd ht push relabel} or using exact max flow). Indeed, upon finding a sparse cut $S$, we need to delete at most $4 \bd(S)$ additional source from the flow instance with source on $V \setminus \left(C_{\leq k_{t-1}} \cup S\right)$ to ensure feasibility.
    \item For $S_1$ and $S_2$ the sparse cuts found in the two matchings, we add at least $\max(\bd(S_1), \bd(S_2))/2$ to $\bd(C_{\leq k_t})$, and delete at most $3\max(\bd(S_1), \bd(S_2))/2$ additional demand. (See the discussion preceding this claim). Combined with the source deleted to ensure feasibility, we delete at most $19\max(\bd(S_1), \bd(S_2))/2$ demand without adding it to  $\bd(C_{\leq k_{t-1}})$ while adding at least $\max(\bd(S_1), \bd(S_2))/2$ demand to $\bd(C_{\leq k_{t-1}})$.
\end{enumerate}
So, in total, the matching step directly contributes at most $19$ times as much to the total deleted demand not in $\bd(C_{\leq k_t})$ as it does to $\bd(C_{\leq k_t})$.

There is then only one way demand can be added to $D_t$: If we ever have $\bd_t(u) < \bd(u)/2$, we set $\bd_t(u) = 0$ and add $u$ to $D_t$. 

In each round $s$,  $\sum_{u \in A} \bd_{s-1}(u) - \bd_s(u)$ is at most the total amount of source removed to ensure feasibility, which is at most $16$ times the contribution to $\bd(C_{\leq k_t})$, by the above. When we have $\bd_t(u) < \bd(u)/2$, then, since $\bd_0(u) = \bd(u)$, we know $u$ must have contributed at least $\bd(u)/2$ to $\sum_{s = 1}^t \sum_{u \in A} \bd_{s-1}(u) - \bd_s(u)$, so the total contribution  to $D_t$ is bounded by $16\bd(C_{\leq k_t})$.
Combining all of the contributions to $\bd(D_t)$ yields the desired result.
\end{proof}

\subsection{Convergence Analysis} \label{sec: cvg analysis}
Next, we define our potential function and lower bound its decrease in each step. 
First, we define the weighted average of
\[
\bmu_t := \sum_{u_\circ \in A_t^\circ} \bF_t^\circ(u_\circ) / \bd_t(A_t).
\]
We can now define our potential function
\[
\psi(t) := \sum_{u_\circ \in A_t^\circ} \bd_t(u) \sum_{v_\circ \in A_t^\circ} \frac{1}{\bd_t(v)}({\bF}^\circ_t(u_\circ, v_\circ)/\bd_t(u) - {\bmu}_t(v_\circ))^2.
\]

The function $\psi(t)$ amounts to taking a weighted sum of the norm squared differences of $\bF^\circ_t(u_\circ)$ and $\bmu_t$, splitting each coordinate corresponding to $u_\circ$ into $\bd_t(u)$ equal coordinates.  We can similarly define $\bmup_t$ and $\psip(t)$ for the columns of $\bF_t^\circ$. (For this, we use the notation $\cbF_t$ to denote $\bbF_t^\top$ and $\cbF_t^\circ$ to denote $[\bbF_t^\circ]^\top$. See~\cref{rmk: other ord flow upd}.) That is,
\[
\bmup_t := \sum_{u_\circ \in A_t^\circ} \cbF_t^\circ (u_\circ) / \bd_t(A_t),
\]
and
\begin{align*}
\psip(t) &:= \sum_{u_\circ \in A_t^\circ} \bd_t(u) \sum_{v_\circ \in A_t^\circ} \frac{1}{\bd_t(v)}({\bF}^\circ_t(v_\circ, u_\circ)/\bd_t(u) - {\bmup}_t(v_\circ))^2 \\
&= \sum_{u_\circ \in A_t^\circ} \bd_t(u) \sum_{v_\circ \in A_t^\circ} \frac{1}{\bd_t(v)}({\cbF}^\circ_t(u_\circ, v_\circ)/\bd_t(u) - {\bmup}_t(v_\circ))^2.    
\end{align*}

\begin{remark} \label{rmk: argmin}
Note that
\[
\bmu_t = \arg \min_{\bx \in \R^{A^\circ}} \sum_{u_\circ \in A_t^\circ} \bd_t(u) \sum_{v_\circ \in A_t^\circ} \frac{1}{\bd_t(v)}(\bF_t^\circ(u_\circ, v_\circ)/\bd_t(u) - \bx(v_\circ))^2.
\]
This can be verified coordinate-wise, setting the derivative equal to $0$. An analogous claim holds for $\bmup_t$ and $\psip$.
\end{remark}

We first show that $\psi(t) + \psip(t)$ being small implies that $\bd_t$ mixes with low congestion in $G$.

\begin{lemma} \label{lem: small pot implies exp}
For any $t \in [T]$, if $\bd_t(A_t) \geq \frac{3}{4}\bd(A)$ and $\psi(t) + \psip(t) \leq 1/(nW)^C$ for large enough constant $C$, then $\bd_t$ mixes in $G$ with congestion $7T/\phi$.
\end{lemma}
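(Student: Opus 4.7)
Given $\bb$ with $|\bb| \le \bd_t$, decompose $\bb = \bb^+ - \bb^-$ into disjointly-supported nonnegative parts with $B := \bb^+(V) = \bb^-(V)$. The plan is to construct a routing via a two-stage application of the implicit multicommodity flow $\bF_t^\circ$, then iteratively cancel the residual. Define
\[
\Bf_A(u,v) := \bF_t^\circ(u_\circ,v_\circ)\,\frac{\bb^+(u)}{\bd_t(u)}, \qquad \Bf_B(u,v) := \bF_t^\circ(u_\circ,v_\circ)\,\frac{\bb^-(v)}{\bd_t(v)},
\]
and $\Bf := \Bf_A + \Bf_B$. Since $\bF_t^\circ(u_\circ, v_\circ) \le \bF_t(u,v)$, the induced matrix $\bF_t$ is routable in $G$ with congestion $t/\phi \le T/\phi$, and $\bb^\pm \le \bd_t$, the flow $\Bf$ routes some demand $\bb'$ with congestion at most $2T/\phi$.

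The heart of the argument is a pointwise bound $|r(u)| \le \alpha\,\bd_t(u)$ on the residual $r := \bb - \bb'$, with some $\alpha < 5/7$. A direct calculation using the normalized row/column sums $S_r(u) := \sum_v \bF_t^\circ(u_\circ,v_\circ)/\bd_t(u)$ and $S_c(v) := \sum_u \bF_t^\circ(u_\circ,v_\circ)/\bd_t(v)$ yields
\[
r(u) = \bb^+(u)(1 - S_r(u)) - \bb^-(u)(1 - S_c(u)) + \sum_{u'} \bF_t^\circ(u'_\circ,u_\circ)\tfrac{\bb^+(u')}{\bd_t(u')} - \sum_v \bF_t^\circ(u_\circ,v_\circ)\tfrac{\bb^-(v)}{\bd_t(v)}.
\]
From the definitions of $\bmu_t$ and $\bmup_t$ one checks that $\bmu_t(v_\circ) = \bd_t(v)\, S_c(v)/\bd_t(A_t)$ and $\bmup_t(u_\circ) = \bd_t(u)\, S_r(u)/\bd_t(A_t)$, so $M := \sum_v \bmu_t(v_\circ) = \sum_u \bmup_t(u_\circ)$ equals the total mass of $\bF_t^\circ$ divided by $\bd_t(A_t)$. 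Using \cref{cla: sum of flow} and \cref{cla: sum of flow deleted}, together with the invariant that $\bbF_t(w, v_\circ) = 0$ for $v \notin A_t$ (derivable from the flow update), the total ``leakage'' $\bd_t(A_t)(1 - M)$ is at most $\bd(A) - \bd_t(A_t)$; combined with the hypothesis $\bd_t(A_t) \ge \tfrac{3}{4}\bd(A)$, this forces $M \ge 2/3$. The potential bounds then give by Cauchy--Schwarz the estimates $|S_r(u) - M| \le \sqrt{\psi(t)\,\bd_t(A_t)/\bd_t(u)}$ and $|S_c(v) - M| \le \sqrt{\psip(t)\,\bd_t(A_t)/\bd_t(v)}$, both of which are $o(1)$ when $\psi(t) + \psip(t) \le 1/(nW)^C$ for $C$ large enough (using that $\bd_t$ is polynomially bounded in $nW$).

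Since $\bb^+$ and $\bb^-$ have disjoint supports, at each $u$ at most one of $\bb^+(u)(1 - S_r(u))$ and $\bb^-(u)(1 - S_c(u))$ is nonzero, so the first two terms of $r(u)$ are bounded by $\bd_t(u)(1 - M + o(1)) \le \bd_t(u)(1/3 + o(1))$. For the cross-term difference, writing $\bF_t^\circ(u_\circ,v_\circ) = \bd_t(u)\bmu_t(v_\circ) + \delta(u,v)$ reduces each summation to a ``main part'' involving $\bmu_t$ plus a $\delta$-correction bounded by $\sqrt{\psi(t)\,\bd_t(u)\,\bd_t(A_t)}$ via Cauchy--Schwarz and the potential definition; the two main parts themselves differ by $\bd_t(u) \cdot o(1)$ using $|S_c - M| = o(1)$ and $B \le \bd_t(A_t)/2$. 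Thus $|r(u)| \le \bd_t(u)(1/3 + o(1))$, and we take $\alpha := 1/3 + o(1) < 5/7$. Finally, iterating the two-stage routing on the rescaled residual $r/\alpha$ produces residuals shrinking geometrically by a factor $\alpha$; summing the flows gives a flow routing $\bb$ exactly with total congestion at most $(2T/\phi) \sum_{k \ge 0} \alpha^k = (2T/\phi)/(1-\alpha) \le 7T/\phi$. The main technical obstacle is the careful bookkeeping of the leakage to $v_\times$ copies, which is what prevents $M$ from being exactly $1$; the hypothesis $\bd_t(A_t) \ge \tfrac{3}{4}\bd(A)$ is precisely what ensures $M$ stays above the critical threshold needed for the iteration to contract.
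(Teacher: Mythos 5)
Your proof is correct, and while the first stage coincides with the paper's (scale the rows of $\bF_t^\circ$ by the positive part of the demand and the columns by the negative part, then control errors via $\psi,\psip$), the way you close the argument is genuinely different. The paper normalizes each row/column by its \emph{actual} sum $\sum_w \bF_t^\circ(u_\circ,w_\circ)$ — which \cref{cla: small pot struct} lower-bounds by $\bd_t(u)/3$, costing a factor $3$ in congestion for each stage ($6T/\phi$ total) — so that every vertex's own demand is met exactly and only a redistribution error of order $1/\poly(n)$ per vertex remains, which is then routed trivially with congestion $1$. You instead normalize by $\bd_t(u)$ (so the two-stage flow costs only $2T/\phi$), accept a residual that is a constant fraction $\alpha = 1-M+o(1) \le 1/3+o(1)$ of $\bd_t$ pointwise, and kill it by geometric iteration; your sharper accounting of the leakage to $A^\times$ (giving $M \ge 2/3$ rather than the paper's $\bF_t^\circ(A_t^\circ,A_t^\circ)/\bd_t(A_t) \ge 1/2$) is exactly what makes the contraction factor small enough, and this is where the hypothesis $\bd_t(A_t)\ge \tfrac34\bd(A)$ enters for you. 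I verified the key steps: the identity $\bmu_t(v_\circ)=\bd_t(v)S_c(v)/\bd_t(A_t)$, the Cauchy--Schwarz bounds $|S_r(u)-M|\le\sqrt{\psi(t)\bd_t(A_t)/\bd_t(u)}$ (and its column analogue), the fact that $S_r,S_c\le 1$ by \cref{cla: sum of flow} (so the terms $\bb^\pm(u)(1-S_{r/c}(u))$ are nonnegative), and the cancellation of the two cross-terms up to $o(\bd_t(u))$; all are sound, and the invariant $\bbF_t(\cdot,v_\circ)=0$ for $v\notin A_t$ that you invoke does follow from the column-rescaling in the flow update (the paper uses it implicitly in \cref{cla: most flow stays} as well). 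Your explicit error bookkeeping is arguably tighter than the paper's approximation argument; the trade-off is that you need the extra (standard) limiting step for the infinite geometric iteration, whereas the paper terminates after one correction round.
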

\begin{proof}
We begin by showing that a constant fraction of the flow must go between vertices in $A_t^\circ$.
\begin{claim} \label{cla: most flow stays}
$\bF_t^\circ(A_t^\circ, A_t^\circ) \geq \bd(A)/2.$
\end{claim}
\begin{proof}[Proof of claim.]
Using \cref{cla: sum of flow deleted}, we know that 
\[
\bbF_t(A_t^\circ, A^\times) = \bd(A) - \bd_t(A) = \bd(A) - \bd_t(A_t)\leq \bd(A)/4.
\]
On the other hand, from Claim~\ref{cla: sum of flow}, we know
\[
\bbF_t(A_t^\circ, \bar{A}) = \bd_t(A_t) \geq 3\bd(A)/4.
\]
Then, 
\[
\bF_t^\circ(A_t^\circ, A_t^\circ) =  \bbF_t(A_t^\circ, A_t^\circ) = \bbF_t(A_t^\circ, \bar{A}) - \bbF_t(A_t^\circ, A^\times) \geq \bd(A)/2,
\]
as needed.
\end{proof}
Next, we show that the potential being small allows us to approximate each $\bbF_t(u_\circ, v_\circ)$, for $u_\circ, v_\circ \in A_t^\circ$. 
\begin{claim} \label{cla: small pot struct}
For all $u_\circ \in A_t^\circ$, we have 
\[
\left| \sum_{v_\circ \in A_t^\circ} \bF_t^\circ(u_\circ, v_\circ) - \bd_t(u) \cdot \frac{\bF_t^\circ(A_t^\circ, A_t^\circ)}{\bd_t(A_t)}\right| \leq \frac{1}{(nW)^{C/2 - 2}}
\]
In particular, we have
\[
\sum_{v_\circ \in A_t^\circ} \bF_t^\circ(u_\circ, v_\circ) \geq \bd_t(u)/3.
\]
We also have 
\[
\left| \sum_{v_\circ \in A_t^\circ} \bF_t^\circ(v_\circ, u_\circ) - \bd_t(u) \cdot \frac{\bF_t^\circ(A_t^\circ, A_t^\circ)}{\bd_t(A_t)}\right| \leq \frac{1}{(nW)^{C/2 - 2}}
\]
and 
\[
\sum_{v_\circ \in A_t^\circ} \bF_t^\circ( v_\circ, u_\circ) \geq \bd_t(u)/3.
\]
\end{claim}
\begin{proof}[Proof of claim.]
We prove the first two claims; the latter two follow by analogous arguments, using that $\psip(t) \leq 1/(nW)^C$. For all $u_\circ$, $v_\circ \in A_t^\circ$, we have
\begin{align*}
     \frac{1}{\bd_t(v)}\left(\frac{\bF_t^\circ(u_\circ, v_\circ)}{\bd_t(u)} - {\bmu}_t(v_\circ)\right)^2 
 =  \frac{1}{\bd_t(v)}\left(\frac{\bbF_t(u_\circ, v_\circ)}{\bd_t(u)} - {\bmu}_t(v_\circ)\right)^2 &\leq \psi(t) \leq 1/(nW)^C,
\end{align*}
since $\psi(t) \leq 1/(nW)^C$ by assumption. Consequently, we have 
\[
\left| \bF_t^\circ(u_\circ, v_\circ) - \bd_t(u){\bmu}_t(v_\circ) \right| \leq \frac{1}{(nW)^{C/2 - 1}}.
\]
Observe that, by definition of $\bmu_t$, we have
\[
\sum_{v_\circ \in A_t^\circ} \bmu_t(v_\circ) =  \frac{\sum_{v_\circ \in A_t^\circ}\sum_{u_\circ \in A_t^\circ} \bF_t^\circ(u_\circ, v_\circ)}{\bd_t(A_t)} = \frac{\bF_t^\circ(A_t^\circ, A_t^\circ)}{\bd_t(A_t)} \geq 1/2,
\]
with the final inequality from Claim~\ref{cla: most flow stays}.
Finally, 
\begin{align*}
\left| \sum_{v_\circ \in A_t^\circ} \bF_t^\circ(u_\circ, v_\circ) - \bd_t(u) \cdot \frac{\bF_t^\circ(A_t^\circ, A_t^\circ)}{\bd_t(A_t)}\right| &= \left|  \sum_{v_\circ \in A_t^\circ} \bF_t^\circ(u_\circ, v_\circ) - \bd_t(u) {\bmu}_t(v_\circ) \right| \\
&\leq \sum_{v_\circ \in A_t^\circ}  \left| \bF_t^\circ(u_\circ, v_\circ) - \bd_t(u) {\bmu}_t(v_\circ) \right| \\
&\leq \frac{1}{(nW)^{C/2 - 2}}.
\end{align*}
Then, the fact that $\sum_{v_\circ \in A_t^\circ} \bF_t^\circ(u_\circ, v_\circ) \geq \bd_t(u)/3$ follows from the fact that $\bd_t(u) \cdot \frac{\bF_t^\circ(A_t^\circ, A_t^\circ)}{\bd_t(A_t)} \geq \bd_t(u)/2.$ 
\end{proof}
We can now apply \cref{cla: most flow stays} and \cref{cla: small pot struct} to prove \cref{lem: small pot implies exp}. First, decompose the $A$-commodity flow routing $\bF_t^\circ$ with congestion $T/\phi$ into single commodity flows $\Bf_{u,v}$ for $u_\circ, v_\circ \in A_t^\circ$, with $\Bf_{u,v}$ sending $\bF_t^\circ(u_\circ, v_\circ)$ flow from $u$ to $v$. 

Next, let $\bb$ be any vertex demand satisfying $|\bb| \leq \bd_t$ with $\sum_{u \in A_t} \bb(u) = 0$. For each pair $u,v \in A_t$, we route a flow. If $\bb(u) \geq 0$, we define
\[
\Bf_{u,v}' := \frac{\Bf_{u,v}}{\sum_{w_\circ \in A_t^\circ} \bF^\circ_t(u_\circ, w_\circ)} \cdot |\bb(u)|.
\]
Otherwise, if $\bb(u) < 0$, we define
\[
\Bf_{u,v}' := \frac{\Bf_{v,u}}{\sum_{w_\circ \in A_t^\circ} \bF^\circ_t(w_\circ, u_\circ)} \cdot |\bb(u)|.
\]
We then route $\Bf_{u,v}'$ simultaneously for all $u,v \in A_t$.  
Now, each $u \in A_t$ with $\bb(u) \geq 0$ sends $|\bb(u)|$ total demand and each $\Bf_{u,v}'$ has absolute value at most
\[
\frac{\bF^\circ_t(u_\circ, v_\circ)}{\sum_{w_\circ \in A_t^\circ} \bF^\circ_t(u_\circ, w_\circ)} \cdot \bd_t(u) \leq 3\bF_t^\circ(u_\circ, v_\circ),
\]
by the first part of~\cref{cla: small pot struct}. Similarly, for each $u \in A_t$ with $\bb(u) < 0$, $u$ receives $|\bb(u)|$ total demand and each $\Bf_{u,v}'$ has absolute value at most
\[
\frac{\bF^\circ_t(v_\circ, u_\circ)}{\sum_{w_\circ \in A_t^\circ} \bF^\circ_t(w_\circ, u_\circ)} \cdot \bd_t(u) \leq 3\bF_t^\circ(v_\circ, u_\circ).
\]
This bound uses the second part of~\cref{cla: small pot struct}.

This implies that all of these flows $\Bf'_{u,v}$ can be routed simultaneously with congestion at most $6T/\phi$. All that remains is to route the leftover demand. Vertex $v \in A_t$ receives 
\begin{align*}
\frac{\bF_t^\circ(u_\circ, v_\circ) \bb(u)}{\sum_{w_\circ \in A_t^\circ} \bF^\circ_t(u_\circ, w_\circ)}.
\end{align*}
demand from $u \in A_t$ with $\bb(u) \geq 0$ and 
\[
\frac{\bF_t^\circ(v_\circ, u_\circ) \bb(u)}{\sum_{w_\circ \in A_t^\circ} \bF^\circ_t(w_\circ, u_\circ)}
\]
demand from $u \in A_t$ with $\bb(u) < 0$. By~\cref{cla: small pot struct}, $\sum_{w_\circ \in A_t^\circ} \bF^\circ_t(w_\circ, u_\circ)$ and $\sum_{w_\circ \in A_t^\circ} \bF^\circ_t(u_\circ, w_\circ)$ are approximately equal. Moreover, since $\bF_t^\circ(u_\circ, u_\circ)$ approximately equals both $\bd_t(u) \bmu_t(u_\circ)$ and $\bd_t(u) \bmu_t'(u_\circ)$, we get that $\bmu_t'(u_\circ)$ and $\bmu_t(u_\circ)$ for all $u_\circ \in A_t$. In particular, we then get that $\bF_t^\circ(u_\circ, v_\circ)$ is approximately equal to $\bF_t^\circ(v_\circ, u_\circ)$. Hence, we can approximate the total demand received by vertex $v \in A_t$ by 
\begin{align*}
\sum_{u \in A_t} \frac{\bF_t^\circ(u_\circ, v_\circ) \bb(u)}{\sum_{w_\circ \in A_t^\circ} \bF^\circ_t(u_\circ, w_\circ)}.
\end{align*}
By Claim~\ref{cla: small pot struct}, we can further approximate the denominator as $\bd_t(u) \cdot \frac{\bF_t^\circ(A_t^\circ, A_t^\circ)}{\bd_t(A_t)}$ and each term $\bF^\circ_t(u_\circ, v_\circ)$ in the numerator as $\bd_t(u) {\bmu}_t(v_\circ)$. 
Hence, we can approximate this sum as 
\[
\frac{{\bmu}_t(v_\circ)  \bd(A_t)}{\bF^\circ_t(A_t^\circ, A_t^\circ)}\sum_{u \in A_t} \bb(u),
\]
which is $0$ since $\sum_{u_\circ \in A_t} \bb(u) = 0$. For $C$ sufficiently large, we can ensure that the true value is within $1/n^2$ of the approximated value of $0$, meaning we can trivially route the remaining demand, incurring additional congestion at most $n^2 \cdot 1/n^2 = 1$. Then, the final congestion is at most $6T/\phi + 1 \leq 7T/\phi$, as required.
\end{proof}

\begin{remark}
This argument also trivially shows that $\bd$ mixes with congestion at most $7$ on the graph formed by the union of the matchings from each matching step. (This graph is formed as follows. In each matching step, after taking a path decomposition of the matching, we add an weighted edge between the endpoints of each path with weight equal to the flow along the path.) This is because instead of routing using flow paths, we can route using the edges in the union of matchings graph corresponding to those flow paths. Consequently, we get no added congestion from overlapping flow paths. 
\end{remark}

We can now state our key lemma, lower bounding the decrease in $\psi(t)$ and $\psip(t)$ in each step.
\begin{lemma} \label{lem: general dec lemma}
For all $t$, we have
\begin{align*}
    {\psi}(t-1) - {\psi}(t) &\geq \frac{1}{8} \sum_{u_\circ \in A_t^\circ} \bd_t(u) \sum_{w_\circ \in A_{t-1}^\circ} \frac{1}{\bd_{t-1}(w)} \left(\frac{\bbF_{t-1}(u_\circ, w_\circ)}{\bd_{t-1}(u)}  -  \sum_{v \in \bar{A}_{t-1}} \frac{\bbM_t(v, u_\circ)}{\bd_t(u)} \frac{\bbF_{t-1}(v_\circ, w_\circ)}{\bd_{t-1}(v)} \right)^2 \\
    & \quad + \frac{1}{2} \sum_{u_\circ \in A_{t-1}^\circ \setminus A_{t}^\circ} \bd_{t-1}(u) \sum_{w_\circ \in A_{t-1}^\circ} \frac{1}{\bd_{t-1}(w)} \left(\frac{\bbF_{t-1}(u_\circ, w_\circ)}{\bd_{t-1}(u)} - {\bmu}_{t-1}(w_\circ)\right)^2.
\end{align*}
We also have 
\begin{align*}
    {\psip}(t-1) - {\psip}(t) &\geq \frac{1}{8} \sum_{u_\circ \in A_t^\circ} \bd_t(u) \sum_{w_\circ \in A_{t-1}^\circ} \frac{1}{\bd_{t-1}(w)} \left(\frac{\cbF_{t-1}(u_\circ, w_\circ)}{\bd_{t-1}(u)}  -  \sum_{v \in \bar{A}_{t-1}} \frac{\bbM_t( u_\circ, v)}{\bd_t(u)} \frac{\cbF_{t-1}(v_\circ, w_\circ)}{\bd_{t-1}(v)} \right)^2 \\
    & \quad + \frac{1}{2} \sum_{u_\circ \in A_{t-1}^\circ \setminus A_{t}^\circ} \bd_{t-1}(u) \sum_{w_\circ \in A_{t-1}^\circ} \frac{1}{\bd_{t-1}(w)} \left(\frac{\cbF_{t-1}(u_\circ, w_\circ)}{\bd_{t-1}(u)} - {\bmup}_{t-1}(w_\circ)\right)^2.
\end{align*}
\end{lemma}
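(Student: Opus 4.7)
The plan is to apply the parallel-axis identity (variance decomposition) to a doubly-normalized version of the flow vectors, and then conclude via the argmin characterization of $\bmu_t$ from~\cref{rmk: argmin}. For $u \in A_{t-1}$ and $w \in A_{t-1}^\circ$, define
\[
\tilde\by_u(w) \;:=\; \frac{\bbF_{t-1}(u_\circ, w_\circ)}{\bd_{t-1}(u)\,\bd_{t-1}(w)},
\]
and analogously $\tilde\by^{(t)}_u(w) := \bbF_t(u_\circ, w_\circ)/(\bd_t(u)\,\bd_t(w))$ for $u \in A_t^\circ$, $w \in A_t^\circ$. The first step is to verify that, under this rescaling, the update rule in~\cref{rmk: alt recur uo} becomes a \emph{genuine convex combination}: for $u \in A_t^\circ$ and $w \in A_t^\circ$,
\[
\tilde\by^{(t)}_u(w) \;=\; \tfrac{1}{2}\,\tilde\by_u(w) \;+\; \sum_{v \in \bar{A}_{t-1}} \tfrac{\bbM_t(v,\,u_\circ)}{2\bd_t(u)}\,\tilde\by_v(w),
\]
whose weights sum to $1$ by~\cref{rmk: matching step prop}. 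The algebraic point is that the factor $\bd_t(w)/\bd_{t-1}(w)$ introduced by the primed redistribution $\bbF'_{t-1}$ on the $w_\circ$ coordinate exactly cancels the change of column weight from $1/\bd_{t-1}(w)$ to $1/\bd_t(w)$. Writing $\|\bx\|^2_{\bd_{t-1}} := \sum_{w \in A_{t-1}^\circ} \bd_{t-1}(w)\,\bx(w)^2$ and $\tilde\bmu_{t-1}(w) := \bmu_{t-1}(w_\circ)/\bd_{t-1}(w)$, one then has $\psi(t-1) = \sum_{u_\circ \in A_{t-1}^\circ} \bd_{t-1}(u)\,\|\tilde\by_u - \tilde\bmu_{t-1}\|^2_{\bd_{t-1}}$.

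Next, apply the parallel-axis identity to this convex combination with center $\bmu = \tilde\bmu_{t-1}$: for each $u \in A_t^\circ$,
\[
\tfrac{1}{2}\|\tilde\by_u - \tilde\bmu_{t-1}\|^2 + \sum_{v} \tfrac{\bbM_t(v,u_\circ)}{2\bd_t(u)} \|\tilde\by_v - \tilde\bmu_{t-1}\|^2 - \|\tilde\by^{(t)}_u - \tilde\bmu_{t-1}\|^2 \;\ge\; \tfrac{1}{8}\,\Bigl\|\tilde\by_u - \sum_v \tfrac{\bbM_t(v,u_\circ)}{\bd_t(u)}\tilde\by_v\Bigr\|^2,
\]
with all norms $\|\cdot\|_{\bd_{t-1}}$; the constant $1/8 = \tfrac12\cdot\tfrac14$ arises because $\tilde\by_u - \tilde\by^{(t)}_u = \tfrac12\bigl(\tilde\by_u - \sum_v (\bbM_t(v,u_\circ)/\bd_t(u))\tilde\by_v\bigr)$. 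Multiplying by $\bd_t(u)$ and summing over $u_\circ \in A_t^\circ$, then undoing the rescaling, shows the right-hand side is exactly $\tfrac{1}{8}\,T_1$, the first target expression in the claim.

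It remains to bound the sum of the three left-hand terms. The first sum is $\le \tfrac{1}{2}\sum_{u_\circ \in A_t^\circ}\bd_{t-1}(u)\|\tilde\by_u - \tilde\bmu_{t-1}\|^2_{\bd_{t-1}}$ using $\bd_t \le \bd_{t-1}$. For the middle sum, swap the order of summation and group the $v_\circ$ and $v_\times$ contributions of each underlying $v \in A_{t-1}$ (which share the vector $\tilde\by_v$); by~\cref{rmk: matching step prop} applied symmetrically in the other direction, the combined weight on each $\tilde\by_v$ is at most $\bd_t(v) + (\bd_{t-1}(v)-\bd_t(v)) = \bd_{t-1}(v)$, so the middle sum is at most $\tfrac{1}{2}\psi(t-1)$. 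Adding, $\text{first} + \text{middle} \le \psi(t-1) - \tfrac12 T_2$, with $T_2 := \sum_{u_\circ \in A_{t-1}^\circ\setminus A_t^\circ}\bd_{t-1}(u)\|\tilde\by_u - \tilde\bmu_{t-1}\|^2_{\bd_{t-1}}$ being exactly the remaining $A_{t-1}^\circ\setminus A_t^\circ$ contribution (and matching the second target expression). For the third (negative) term, apply the argmin property of $\bmu_t$ with the trial center $\bx(w_\circ) := \tilde\bmu_{t-1}(w)\,\bd_t(w)$ for $w \in A_t^\circ$; using $\bd_t(w) \le \bd_{t-1}(w)$ in the column weight and extending the inner sum from $A_t^\circ$ to $A_{t-1}^\circ$ (both operations only enlarge the upper bound) yields $\psi(t) \le \sum_{u_\circ \in A_t^\circ}\bd_t(u)\|\tilde\by^{(t)}_u - \tilde\bmu_{t-1}\|^2_{\bd_{t-1}}$. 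Chaining these, $\psi(t) \le (\text{first} + \text{middle}) - \tfrac{1}{8}T_1 \le \psi(t-1) - \tfrac{1}{2}T_2 - \tfrac{1}{8}T_1$, which rearranges to the claim. The $\psip$ inequality follows by the identical argument applied to columns via~\cref{rmk: other ord flow upd}.

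The main obstacle is pinpointing the right rescaling. Without the doubly-normalized $\tilde\by$, the primed redistribution $\bbF'_{t-1}$ leaves residual factors $\bd_t(w)/\bd_{t-1}(w)$ on the column, degrading any naive variance argument to a strictly weaker lower bound---scaled by $\bd_t(w)^2/\bd_{t-1}(w)^2 \le 1$---which falls short of $T_1$. Once the rescaling is in place the recursion is a true convex combination, the parallel-axis identity applies cleanly, and the remaining bookkeeping (the $\bd_t \le \bd_{t-1}$ slack that produces $T_2$, together with the $v_\circ/v_\times$ split of the matching weights) is routine.
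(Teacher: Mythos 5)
Your proof is correct and follows the same skeleton as the paper's: upper-bound $\psi(t)$ via the argmin characterization of $\bmu_t$ with the trial center $\bmu_{t-1}(w_\circ)\bd_t(w)/\bd_{t-1}(w)$, exploit the midpoint/convex-combination structure of the update from~\cref{rmk: alt recur uo}, apply a variance decomposition (your parallel-axis identity is the paper's explicit square expansion plus Jensen), and reindex the matching weights using $\sum_{u_\circ}\bbM_t(w_\circ,u_\circ)+\sum_{u_\circ}\bbM_t(w_\times,u_\circ)\le\bd_{t-1}(w)$. The one genuine difference is your doubly-normalized variables $\tilde\by_u(w)=\bbF_{t-1}(u_\circ,w_\circ)/(\bd_{t-1}(u)\bd_{t-1}(w))$: because the $\bd_t(w)/\bd_{t-1}(w)$ factor from $\bbF'_{t-1}$ cancels against the change of column weight, the update becomes an exact convex combination and you obtain the coefficient $\tfrac18\cdot\tfrac{1}{\bd_{t-1}(w)}$ directly, whereas the paper carries residual weights $\bd_t(w)/\bd_{t-1}(w)^2$ and must invoke the invariant ``$\bd_t(w)=0$ or $\bd_t(w)\ge\bd_{t-1}(w)/2$'' in its last step. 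Your route is cleaner here and does not need that invariant at all for this lemma.

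Two small points to tighten. First, $\tilde\by^{(t)}_u(w)$ is $0/0$ for $w\in A_{t-1}^\circ\setminus A_t^\circ$; you should define it on those coordinates by the convex-combination formula itself (this is harmless, since those coordinates contribute nothing to $\psi(t)$ in the argmin step but are needed for the parallel-axis inequality over the full norm $\|\cdot\|_{\bd_{t-1}}$). Second, the bound $\sum_{u}\bbM_t(w_\times,u)\le\bd_{t-1}(w)-\bd_t(w)$ for $w_\times$ in the \emph{first} slot is not literally what~\cref{rmk: matching step prop} states (which bounds the second-slot sum $\sum_v\bbM_t(v,u_\times)$); it does follow from the explicit case definitions of $\bbM_t$ in~\cref{subsec: matching player} together with $\sum_v\bM_t(u,v)\le\bd_{t-1}(u)$, so you should cite that rather than the remark.
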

\begin{remark} \label{rmk: dont undo progress}
Note that the decrease in potential is non-negative in each step, regardless of how we choose our cuts in the cut step. This is important to ensure that we do not undo our progress in the potential $\psi(t)$ while making progress on potential $\psip(t)$.
\end{remark}
\begin{proof}
 We prove the first part of the claim. The second part follows analogously, using the recurrence for $\cbF_t$ given in~\cref{rmk: other ord flow upd} instead as opposed to the one in~\cref{rmk: alt recur uo}.
 
 We begin by upper bounding $\psi(t)$. Let $\bmu_{t-1}' \in \R^{A_{t-1}^\circ}$ be a vector to be defined later. By \cref{rmk: argmin}, we have 
\[
\bmu_t = \arg \min_{\bx} \sum_{u_\circ \in A_t^\circ} \bd_t(u) \sum_{v_\circ \in A_t^\circ} \frac{1}{\bd_t(v)}(\bF_t^\circ(u_\circ, v_\circ)/\bd_t(u) - \bx(v_\circ))^2,
\]
so for any choice of $\bmu_{t-1}' \in \R^{A_{t-1}^\circ}$,
\begin{align*}
    \psi(t) &= \sum_{u_\circ \in A_t^\circ} \bd_t(u) \sum_{v_\circ \in A_t^\circ} \frac{1}{\bd_t(v)}(\bF_t^\circ(u_\circ, v_\circ)/\bd_t(u) - \bmu_t(v_\circ))^2 \\
    &\leq \sum_{u_\circ \in A_t^\circ} \bd_t(u) \sum_{v_\circ \in A_t^\circ} \frac{1}{\bd_t(v)}(\bF_t^\circ(u_\circ, v_\circ)/\bd_t(u) - \bmu'_{t-1}(v_\circ))^2 \\
    &= \sum_{u_\circ \in A_t^\circ} \bd_t(u) \sum_{v_\circ \in A_t^\circ} \frac{1}{\bd_t(v)}(\bbF_t(u_\circ, v_\circ)/\bd_t(u) - \bmu'_{t-1}(v_\circ))^2.
\end{align*}
We wish to write our upper bound on $\psi(t)$ in a more convenient form. Observe that, using~\cref{rmk: alt recur uo}, we have, for any $u_\circ, w_\circ \in A_t^\circ$, 
\begin{align*}
&\left(\frac{\bbF_t(u_\circ, w_\circ)}{\bd_t(u)} - \bmu_{t-1}'(w_\circ)\right)^2 \\
&= \left(\frac{1}{2}\left(\frac{\bbF_{t-1}'(u_\circ, w_\circ)}{\bd_{t-1}(u)} - \bmu_{t-1}'(w_\circ)\right) + \frac{1}{2}\left( \sum_{v \in \bar{A}_{t-1}} \frac{\bbM_t(v, u_\circ)}{\bd_t(u)} \frac{\bbF_{t-1}'(v_\circ, w_\circ)}{\bd_{t-1}(v)} - \bmu_{t-1}'(w_\circ)\right)\right)^2\\
&=\frac{1}{4}\left(\frac{\bbF_{t-1}'(u_\circ, w_\circ)}{\bd_{t-1}(u)} - \bmu_{t-1}'(w_\circ)\right)^2  + \frac{1}{4} \left( \sum_{v \in \bar{A}_{t-1}} \frac{\bbM_t(v, u_\circ)}{\bd_t(u)} \frac{\bbF_{t-1}'(v_\circ, w_\circ)}{\bd_{t-1}(v)} - \bmu_{t-1}'(w_\circ)\right)^2 \\
&\quad+ \frac{1}{2} \left(\frac{\bbF_{t-1}'(u_\circ, w_\circ)}{\bd_{t-1}(u)} - \bmu_{t-1}'(w_\circ)\right) \left( \sum_{v \in \bar{A}_{t-1}} \frac{\bbM_t(v, u_\circ)}{\bd_t(u)} \frac{\bbF_{t-1}'(v_\circ, w_\circ)}{\bd_{t-1}(v)} - \bmu_{t-1}'(w_\circ)\right).\\
\end{align*} 
Moreover, note that 
{
\allowdisplaybreaks
\begin{align*}
&\left(\frac{\bbF_{t-1}'(u_\circ, w_\circ)}{\bd_{t-1}(u)}  -  \sum_{v \in \bar{A}_{t-1}} \frac{\bbM_t(v, u_\circ)}{\bd_t(u)} \frac{\bbF_{t-1}'(v_\circ, w_\circ)}{\bd_{t-1}(v)} \right)^2 \\
&= \left(\left(\frac{\bbF_{t-1}'(u_\circ, w_\circ)}{\bd_{t-1}(u)}  - \bmu_{t-1}'(w_\circ) \right) - \left( \sum_{v \in \bar{A}_{t-1}} \frac{\bbM_t(v, u_\circ)}{\bd_t(u)} \frac{\bbF_{t-1}'(v_\circ, w_\circ)}{\bd_{t-1}(v)} - \bmu_{t-1}'(w_\circ) \right)\right)^2   \\
&= \left(\frac{\bbF_{t-1}'(u_\circ, w_\circ)}{\bd_{t-1}(u)} - \bmu_{t-1}'(w_\circ)\right)^2 + \left( \sum_{v \in \bar{A}_{t-1}} \frac{\bbM_t(v, u_\circ)}{\bd_t(u)} \frac{\bbF_{t-1}'(v_\circ, w_\circ)}{\bd_{t-1}(v)} - \bmu_{t-1}'(w_\circ)\right)^2 \\
&\quad- 2\left(\frac{\bbF_{t-1}'(u_\circ, w_\circ)}{\bd_{t-1}(u)} - \bmu_{t-1}'(w_\circ)\right) \left( \sum_{v \in \bar{A}_{t-1}} \frac{\bbM_t(v, u_\circ)}{\bd_t(u)} \frac{\bbF_{t-1}'(v_\circ, w_\circ)}{\bd_{t-1}(v)} - \bmu_{t-1}'(w_\circ)\right).
\end{align*}
}
That is, this expression shares the same terms as $\left(\frac{\bbF_t(u_\circ, w_\circ)}{\bd_t(u)} - \bmu_{t-1}'(w_\circ)\right)^2$ with different coefficients.
Hence, expanding out the definition of $\bbF_{t-1}'$ and setting $\bmu_{t-1}'(w_\circ)= \frac{\bmu_{t-1}(w_\circ) \bd_t(w)}{\bd_{t-1}(w)}$, we continue our upper bound on $\psi(t)$ by 
{
\allowdisplaybreaks
\begin{align*}
&\sum_{u_\circ \in A_t^\circ} \bd_t(u) \sum_{w_\circ \in A_t^\circ} \frac{1}{\bd_t(w)}\left(\frac{\bbF_t(u_\circ, w_\circ)}{\bd_t(u)} - \bmu_{t-1}'(w_\circ)\right)^2   \\
&= - \frac{1}{4} \sum_{u_\circ \in A_t^\circ} \bd_t(u) \sum_{w_\circ \in A_t^\circ} \frac{1}{\bd_t(w)} \left(\frac{\bbF_{t-1}'(u_\circ, w_\circ)}{\bd_{t-1}(u)}  -  \sum_{v \in \bar{A}_{t-1}} \frac{\bbM_t(v, u_\circ)}{\bd_t(u)} \frac{\bbF_{t-1}'(v_\circ, w_\circ)}{\bd_{t-1}(v)} \right)^2 \\
&\quad+\frac{1}{2} \sum_{u_\circ \in A_t^\circ} \bd_t(u) \sum_{w_\circ \in A_t^\circ} \frac{1}{\bd_t(w)} \left(\frac{\bbF_{t-1}'(u_\circ, w_\circ)}{\bd_{t-1}(u)} - \bmu_{t-1}'(w_\circ)\right)^2 \\
&\quad+ \frac{1}{2} \sum_{u_\circ \in A_t^\circ} \bd_t(u) \sum_{w_\circ \in A_t^\circ} \frac{1}{\bd_t(w)} \left( \sum_{v \in \bar{A}_{t-1}} \frac{\bbM_t(v, u_\circ)}{\bd_t(u)} \frac{\bbF_{t-1}'(v_\circ, w_\circ)}{\bd_{t-1}(v)} - \bmu_{t-1}'(w_\circ)\right)^2 \\
&=- \frac{1}{4} \sum_{u_\circ \in A_t^\circ} \bd_t(u) \sum_{w_\circ \in A_t^\circ} \textcolor{red}{\frac{\bd_t(w)}{\bd_{t-1}(w)^2}} \left(\frac{\textcolor{red}{\bbF_{t-1}(u_\circ, w_\circ)}}{\bd_{t-1}(u)}  -  \sum_{v \in \bar{A}_{t-1}} \frac{\bbM_t(v, u_\circ)}{\bd_t(u)} \frac{\textcolor{red}{\bbF_{t-1}(v_\circ, w_\circ)}}{\bd_{t-1}(v)} \right)^2 \\
&\quad+\frac{1}{2} \sum_{u_\circ \in A_t^\circ} \bd_t(u) \sum_{w_\circ \in A_t^\circ} \textcolor{red}{\frac{\bd_t(w)}{\bd_{t-1}(w)^2}} \left(\frac{\textcolor{red}{\bbF_{t-1}(u_\circ, w_\circ)}}{\bd_{t-1}(u)} - \frac{\bmu_{t-1}'(w_\circ)\textcolor{red}{\bd_{t-1}(w)}}{\textcolor{red}{\bd_{t}(w)}}\right)^2 \\
&\quad+ \frac{1}{2} \sum_{u_\circ \in A_t^\circ} \bd_t(u) \sum_{w_\circ \in A_t^\circ} \textcolor{red}{\frac{\bd_t(w)}{\bd_{t-1}(w)^2}} \left( \sum_{v \in \bar{A}_{t-1}} \frac{\bbM_t(v, u_\circ)}{\bd_t(u)} \frac{\textcolor{red}{\bbF_{t-1}(v_\circ, w_\circ)}}{\bd_{t-1}(v)} - \frac{\bmu_{t-1}'(w_\circ)\textcolor{red}{\bd_{t-1}(w)}}{\textcolor{red}{\bd_{t}(w)}}\right)^2. \\
&= -\frac{1}{4} \sum_{u_\circ \in A_t^\circ} \bd_t(u) \sum_{\textcolor{red}{w_\circ \in A_{t-1}^\circ}} \frac{\bd_t(w)}{\bd_{t-1}(w)^2} \left(\frac{\bbF_{t-1}(u_\circ, w_\circ)}{\bd_{t-1}(u)}  -  \sum_{v \in \bar{A}_{t-1}} \frac{\bbM_t(v, u_\circ)}{\bd_t(u)} \frac{\bbF_{t-1}(v_\circ, w_\circ)}{\bd_{t-1}(v)} \right)^2 \\
&\quad+\frac{1}{2} \sum_{u_\circ \in A_t^\circ} \bd_t(u) \sum_{\textcolor{red}{w_\circ \in A_{t-1}^\circ}} \frac{\bd_t(w)}{\bd_{t-1}(w)^2} \left(\frac{\bbF_{t-1}(u_\circ, w_\circ)}{\bd_{t-1}(u)} - \textcolor{red}{{\bmu}_{t-1}(w_\circ)}\right)^2 \\
&\quad+ \frac{1}{2} \sum_{u_\circ \in A_t^\circ} \bd_t(u) \sum_{\textcolor{red}{w_\circ \in A_{t-1}^\circ}} \frac{\bd_t(w)}{\bd_{t-1}(w)^2} \left( \sum_{v \in \bar{A}_{t-1}} \frac{\bbM_t(v, u_\circ)}{\bd_t(u)} \frac{\bbF_{t-1}(v_\circ, w_\circ)}{\bd_{t-1}(v)} - \textcolor{red}{{\bmu}_{t-1}(w_\circ)}\right)^2 \\
&\leq -\frac{1}{4} \sum_{u_\circ \in A_t^\circ} \bd_t(u) \sum_{w_\circ \in A_{t-1}^\circ} \frac{\bd_t(w)}{\bd_{t-1}(w)^2} \left(\frac{\bbF_{t-1}(u_\circ, w_\circ)}{\bd_{t-1}(u)}  -  \sum_{v \in \bar{A}_{t-1}} \frac{\bbM_t(v, u_\circ)}{\bd_t(u)} \frac{\bbF_{t-1}(v_\circ, w_\circ)}{\bd_{t-1}(v)} \right)^2 \\
&\quad+\frac{1}{2} \sum_{u_\circ \in A_t^\circ} \bd_t(u) \sum_{w_\circ \in A_{t-1}^\circ} \textcolor{red}{\frac{1}{\bd_{t-1}(w)}} \left(\frac{\bbF_{t-1}(u_\circ, w_\circ)}{\bd_{t-1}(u)} - {\bmu}_{t-1}(w_\circ)\right)^2 \\
&\quad+ \frac{1}{2} \sum_{u_\circ \in A_t^\circ} \bd_t(u) \sum_{w_\circ \in A_{t-1}^\circ} \frac{\bd_t(w)}{\bd_{t-1}(w)^2} \left( \sum_{v \in \bar{A}_{t-1}} \frac{\bbM_t(v, u_\circ)}{\bd_t(u)} \frac{\bbF_{t-1}(v_\circ, w_\circ)}{\bd_{t-1}(v)} - {\bmu}_{t-1}(w_\circ)\right)^2 
\end{align*}
}
We \textcolor{red}{mark} the changes in consecutive steps for clarity. Lastly, we apply Jensen's inequality to upper bound the third term in the final expression by a more useful quantity. Note that 
{
\allowdisplaybreaks
\begin{align*}
&\sum_{u_\circ \in A_t^\circ} \bd_t(u) \sum_{w_\circ \in A_{t-1}^\circ} \frac{\bd_t(w)}{\bd_{t-1}(w)^2} \left( \sum_{v \in \bar{A}_{t-1}} \frac{\bbM_t(v, u_\circ)}{\bd_t(u)} \frac{\bbF_{t-1}(v_\circ, w_\circ)}{\bd_{t-1}(v)} - \bmu_{t-1}(w_\circ)\right)^2 \\
&\leq  \sum_{w_\circ \in A_{t-1}^\circ} \frac{\bd_t(w)}{\bd_{t-1}(w)^2} \sum_{u_\circ \in A_t^\circ} \sum_{v \in \bar{A}_{t-1}} \frac{\bbM_t(v, u_\circ)}{\bd_t(u)} \bd_t(u) \left( \frac{\bbF_{t-1}(v_\circ, w_\circ)}{\bd_{t-1}(v)} - \bmu_{t-1}(w_\circ)\right)^2 \\
&= \sum_{w_\circ \in A_{t-1}^\circ} \frac{\bd_t(w)}{\bd_{t-1}(w)^2}  \sum_{v \in \bar{A}_{t-1}} \left( \sum_{u_\circ \in A_t^\circ} \bbM_t(v, u_\circ) \right) \left( \frac{\bbF_{t-1}(v_\circ, w_\circ)}{\bd_{t-1}(v)} - \bmu_{t-1}(w_\circ)\right)^2 \\
&\leq \sum_{v_\circ \in A_{t-1}^\circ} \bd_{t-1}(v) \sum_{w_\circ \in A_{t-1}^\circ} \frac{1}{\bd_{t-1}(w)} \left( \frac{\bbF_{t-1}(v_\circ, w_\circ)}{\bd_{t-1}(v)} - \bmu_{t-1}(w_\circ)\right)^2.
\end{align*}
}
We apply Jensen's inequality in the first step, using that, since $\sum_{v \in \bar{A}_{t-1}} \bbM_t(v, u_\circ) = \bd_t(u)$ by \cref{rmk: matching equality}, we can treat the $\bbM_t(v, u_\circ)/ \bd_t(u)$ terms as probabilities and the entire summation as an expectation. The second inequality uses that $\bd_t(w) \leq \bd_{t-1}(w)$ and that $\sum_{v \in \bar{A}_{t-1}} \bbM_t(v, u_\circ) \leq \bd_{t-1}(v)$.

Combining all of our upper bounds for ${\psi}(t)$, we get
\begin{align*}
    {\psi}(t-1) - {\psi}(t) &\geq \frac{1}{4} \sum_{u_\circ \in A_t^\circ} \bd_t(u) \sum_{w_\circ \in A_{t-1}^\circ} \frac{\bd_t(w)}{\bd_{t-1}(w)^2} \left(\frac{\bbF_{t-1}(u_\circ, w_\circ)}{\bd_{t-1}(u)}  -  \sum_{v \in \bar{A}_{t-1}} \frac{\bbM_t(v, u_\circ)}{\bd_t(u)} \frac{\bbF_{t-1}(v_\circ, w_\circ)}{\bd_{t-1}(v)} \right)^2 \\
    &\quad+ \frac{1}{2} \sum_{u_\circ \in A_{t-1}^\circ \setminus A_{t}^\circ} \bd_{t-1}(u) \sum_{w_\circ \in A_{t-1}^\circ} \frac{1}{\bd_{t-1}(w)} \left(\frac{\bbF_{t-1}(u_\circ, w_\circ)}{\bd_{t-1}(u)} - {\bmu}_{t-1}(w_\circ)\right)^2 \\
    &\geq \frac{1}{8} \sum_{u_\circ \in A_t^\circ} \bd_t(u) \sum_{w_\circ \in A_{t-1}^\circ} \frac{1}{\bd_{t-1}(w)} \left(\frac{\bbF_{t-1}(u_\circ, w_\circ)}{\bd_{t-1}(u)}  -  \sum_{v \in \bar{A}_{t-1}} \frac{\bbM_t(v, u_\circ)}{\bd_t(u)} \frac{\bbF_{t-1}(v_\circ, w_\circ)}{\bd_{t-1}(v)} \right)^2 \\
    &\quad+ \frac{1}{2} \sum_{u_\circ \in A_{t-1}^\circ \setminus A_{t}^\circ} \bd_{t-1}(u) \sum_{w_\circ \in A_{t-1}^\circ} \frac{1}{\bd_{t-1}(w)} \left(\frac{\bbF_{t-1}(u_\circ, w_\circ)}{\bd_{t-1}(u)} - {\bmu}_{t-1}(w_\circ)\right)^2,
\end{align*}
with the last inequality using that $\bd_t(w) = 0$ or $\bd_t(w) \geq \bd_{t-1}(w)/2$. This yields the desired result.
\end{proof}
Finally, we show that \cref{lem: general dec lemma} implies that the expectation of $\psi(t)$ is a large factor smaller than the expectation of $\psi(t-1)$ for $t \leq T/2$ and the expectation of $\psip(t)$ is a large factor smaller than the expectation of $\psip(t-1)$ for $t > T/2$. To do so, we will use \cref{lem:apd-finding-S}, applied as described in \cref{rmk: use of apd-finding-S}, and two auxiliary lemmas. 

We begin with a standard lemma regarding the concentration of inner products with random unit vectors. For $u \in A_{t-1}$ and $t \leq T/2$, we define 
\[
p_t(m_u) := \inner{\sum_{v \in \bar{A}_{t-1}} \frac{\bbM_t(v, u_\circ)}{\bd_t(u)} \frac{\tilde{\bF}^\circ_{t-1}(v_\circ)}{\bd_{t-1}(v)}}{\br_t}.
\]
Likewise, define $\bar{\mu}_t := \inner{\tilde{\bmu}_t}{\br_t}$, where  $\tilde{\bmu}_t$ is $\bmu_t$ with $w_\circ$\textsuperscript{th} entry scaled by  $1/\sqrt{\bd_{t-1}(w)}$. For $t > T/2$, we instead define 
\[
p_t(m_u) := \inner{\sum_{v \in \bar{A}_{t-1}} \frac{\bbM_t(u_\circ, v)}{\bd_t(u)} \frac{\tcbF^\circ_{t-1}(v_\circ)}{\bd_{t-1}(v)}}{\br_t}
\]
and $\bar{\mu}_t := \inner{\tbmup_t}{\br_t}$, where  $\tbmup_t$ is $\bmup_t$ with $w_\circ$\textsuperscript{th} entry scaled by  $1/\sqrt{\bd_{t-1}(w)}$

\begin{lemma}[Analogous to Lemmas 3.5, 3.6 of \cite{racke2014computing}] \label{lem:subgaussian concentration}
For each $t \leq T/2$, for all $u\in A_t$, we have 
\begin{align*}
\E[(p_t(u) - p_t(m_u))^2] &= \frac{1}{|A_{t-1}|} \sum_{w_\circ \in A_{t-1}^\circ} \frac{1}{\bd_{t-1}(w)} \left(\frac{\bbF_{t-1}(u_\circ, w_\circ)}{\bd_{t-1}(u)}  -  \sum_{v \in \bar{A}_{t-1}} \frac{\bbM_t(v, u_\circ)}{\bd_t(u)} \frac{\bbF_{t-1}(v_\circ, w_\circ)}{\bd_{t-1}(v)} \right)^2, \\
\E[(p_t(u) - \bar{\mu}_t)^2] &= \frac{1}{|A_{t-1}|}\sum_{w_\circ \in A_{t-1}^\circ} \frac{1}{\bd_{t-1}(w)} \left(\frac{\bbF_{t-1}(u_\circ, w_\circ)}{\bd_{t-1}(u)} - {\bmu}_{t-1}(w_\circ)\right)^2.
\end{align*}
Similarly, for $t > T/2$, for all $u \in A_t$ we have 
\begin{align*}
\E[(p_t(u) - p_t(m_u))^2] &= \frac{1}{|A_{t-1}|} \sum_{w_\circ \in A_{t-1}^\circ} \frac{1}{\bd_{t-1}(w)} \left(\frac{\cbF_{t-1}(u_\circ, w_\circ)}{\bd_{t-1}(u)}  -  \sum_{v \in \bar{A}_{t-1}} \frac{\bbM_t(u_\circ, v)}{\bd_t(u)} \frac{\cbF_{t-1}(v_\circ, w_\circ)}{\bd_{t-1}(v)} \right)^2, \\
\E[(p_t(u) - \bar{\mu}_t)^2] &= \frac{1}{|A_{t-1}|}\sum_{w_\circ \in A_{t-1}^\circ} \frac{1}{\bd_{t-1}(w)} \left(\frac{\cbF_{t-1}(u_\circ, w_\circ)}{\bd_{t-1}(u)} - {\bmup}_{t-1}(w_\circ)\right)^2.
\end{align*}
Moreover, each is at most $C \log n$ times their expectation (for constant $C > 0$) with high probability in $n$.
\end{lemma}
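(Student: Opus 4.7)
The plan is to reduce each equality to the standard fact that, for a fixed vector $\bx\in\R^d$ and a uniformly random unit vector $\br\in\R^d$, one has $\E[\inner{\bx}{\br}^2]=\|\bx\|_2^2/d$, together with the corresponding sub-Gaussian concentration $\Pr[\inner{\bx}{\br}^2>C(\log n)\cdot\|\bx\|_2^2/d]\le n^{-\Omega(C)}$. Here the ambient dimension is $d=|A_{t-1}^\circ|=|A_{t-1}|$, which justifies the $1/|A_{t-1}|$ factor.

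First I would write, for $t\le T/2$ and each $u\in A_t$, the difference $p_t(u)-p_t(m_u)=\inner{\bx_u}{\br_t}$ with
\[
\bx_u\;=\;\frac{\tilde\bF^\circ_{t-1}(u_\circ)}{\bd_{t-1}(u)}\;-\;\sum_{v\in\bar A_{t-1}}\frac{\bbM_t(v,u_\circ)}{\bd_t(u)}\cdot\frac{\tilde\bF^\circ_{t-1}(v_\circ)}{\bd_{t-1}(v)}.
\]
Since $\tilde\bF^\circ_{t-1}$ is just $\bF^\circ_{t-1}$ with the $w_\circ$-th column rescaled by $1/\sqrt{\bd_{t-1}(w)}$, squaring coordinate-wise and summing gives
\[
\|\bx_u\|_2^2\;=\;\sum_{w_\circ\in A_{t-1}^\circ}\frac{1}{\bd_{t-1}(w)}\left(\frac{\bbF_{t-1}(u_\circ,w_\circ)}{\bd_{t-1}(u)}-\sum_{v\in\bar A_{t-1}}\frac{\bbM_t(v,u_\circ)}{\bd_t(u)}\cdot\frac{\bbF_{t-1}(v_\circ,w_\circ)}{\bd_{t-1}(v)}\right)^2,
\]
and applying $\E[\inner{\bx_u}{\br_t}^2]=\|\bx_u\|_2^2/|A_{t-1}|$ yields the first equality. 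The second equality is obtained by the same computation with $\bx_u$ replaced by $\tilde\bF^\circ_{t-1}(u_\circ)/\bd_{t-1}(u)-\tilde\bmu_t$, whose squared norm is precisely $\sum_{w_\circ}\bd_{t-1}(w)^{-1}(\bbF_{t-1}(u_\circ,w_\circ)/\bd_{t-1}(u)-\bmu_{t-1}(w_\circ))^2$ by the same coordinate-wise rescaling identity.

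For $t>T/2$, the argument is verbatim the same after replacing rows of $\bF^\circ_{t-1}$ by rows of $\cbF^\circ_{t-1}$, replacing $\bmu_{t-1}$ by $\bmup_{t-1}$, and replacing the $\bbM_t(v,u_\circ)$ weights by $\bbM_t(u_\circ,v)$ (these are the only objects that changed in the cut player's definition of $p_t$ between the first and second half of the game, as recorded in \Cref{rmk: switching to columns}).

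For concentration, I would invoke the standard sub-Gaussian tail bound for $\inner{\bx}{\br}^2$ when $\br$ is uniform on the unit sphere in $\R^d$: for any constant $C$, $\Pr[\inner{\bx}{\br}^2\ge C(\log n)\|\bx\|_2^2/d]\le n^{-\Omega(C)}$. (This can be derived either directly from the fact that $\sqrt{d}\inner{\bx}{\br}/\|\bx\|_2$ is $O(1)$-sub-Gaussian, or equivalently by realizing $\br$ as $\bg/\|\bg\|_2$ for a standard Gaussian $\bg$ and using the Gaussian tail together with $\|\bg\|_2\ge\sqrt{d/2}$ w.h.p.) A union bound over the at most $n$ vertices $u\in A_t$ and the two relevant vectors per vertex (and, if desired, over all $T=O(\log n\log nW)$ rounds) then upgrades the expectation statements to the claimed high-probability bound for an appropriately chosen absolute constant $C$. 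The only mild subtlety is to notice that none of the random variables on the left depend on $\br_t$ in any way other than through the specific linear form $\inner{\bx}{\br_t}$, so the single unit vector $\br_t$ drives the concentration; once this is observed, the proof is routine.
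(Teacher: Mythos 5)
Your reduction to the standard random-projection identity $\E[\inner{\bx}{\br}^2]=\norm{\bx}_2^2/|A_{t-1}|$ and its sub-Gaussian tail is exactly what the paper intends (it gives no proof of this lemma, deferring to Lemmas 3.5--3.6 of R\"acke--Shah--T\"aubig), and your coordinate-wise computation of $\norm{\bx_u}_2^2$ correctly accounts for the $1/\sqrt{\bd_{t-1}(w)}$ column scaling. For the \emph{second} identity in each pair this is a complete proof: the vectors $\tilde{\bF}^\circ_{t-1}(u_\circ)/\bd_{t-1}(u)$ and the scaled $\bmu_{t-1}$ are determined by the state at the end of round $t-1$, hence are fixed before $\br_t$ is drawn, and a union bound over the $O(n)$ vertices and the $T$ rounds gives the high-probability statement.

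The gap is in your final sentence. For the \emph{first} identity in each pair the vector $\bx_u$ is \emph{not} a fixed vector with respect to $\br_t$: the weights $\bbM_t(v,u_\circ)/\bd_t(u)$ (and the objects $A_t$, $\bd_t$ themselves) are produced by the matching player only after the cut player has formed the bipartition $L_t\sqcup R_t$, which is a deterministic function of $\br_t$ through the sorted values $p_t(\cdot)$. So $p_t(m_u)$ depends on $\br_t$ both through the linear form and through the convex weights, the right-hand side of the first identity is itself a random variable, and neither the expectation identity nor the tail bound follows from the fixed-vector lemma; one also cannot union-bound over the possible matchings, since there are far too many. The standard repair (implicit in KRV/RST/Saranurak--Wang) is to prove the tail bound simultaneously for all $O(n^2)$ \emph{pairs} of fixed flow vectors and then pass to the matching-averaged quantity by convexity. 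Note, however, that convexity bounds $(p_t(u)-p_t(m_u))^2$ by the \emph{average of squared norms} $\sum_{v}\frac{\bbM_t(v,u_\circ)}{\bd_t(u)}\norm{\bz_{uv}}^2$ rather than by the \emph{squared norm of the average} $\norm{\bx_u}^2$ appearing here, so closing the argument also requires checking that the potential-decrease bound of \cref{lem: general dec lemma} holds in the stronger pairwise form (it does, via an additional application of Jensen's inequality inside that proof). As written, your argument asserts away precisely this dependence and therefore does not establish the first identity or its concentration.
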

We need one additional technical lemma. (Here $\eta$ comes from our application of~\cref{lem:apd-finding-S}. See~\cref{rmk: use of apd-finding-S}.)
\begin{lemma}\label{lem: matched on other side of eta}
For all $t \leq T$ and $u \in A_{t-1}$, we have 
\[
(p_t(u) - p_t(m_u))^2 \geq (p_t(u) - \eta)^2.
\]
\end{lemma}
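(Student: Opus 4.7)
The plan is to express $p_t(m_u)$ as an explicit convex combination of the values $p_t(w)$ for vertices $w$ on the opposite side of the bipartition from $u$, and then conclude by an intermediate-value argument.

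First I would use linearity of the inner product to write
\[
p_t(m_u) \;=\; \sum_{v \in \bar{A}_{t-1}} \frac{\bbM_t(v, u_\circ)}{\bd_t(u)}\, p_t(w_v),
\]
where $w_v \in A_{t-1}$ denotes the underlying vertex of $v \in \bar{A}_{t-1}$ (using that $\tilde{\bF}_{t-1}^\circ(v_\circ)$ equals $\tilde{\bF}_{t-1}^\circ((w_v)_\circ)$ by our indexing convention, so the factor pulled through the inner product is exactly $p_t(w_v)$). Grouping the circle and times contributions for each underlying vertex yields
\[
p_t(m_u) \;=\; \sum_{w \in A_{t-1}} \alpha_w\, p_t(w), \qquad \alpha_w := \frac{\bbM_t(w_\circ, u_\circ) + \bbM_t(w_\times, u_\circ)}{\bd_t(u)}.
\]
By~\cref{rmk: matching equality}, the four-way split satisfies $\bbM_t(w_\circ, u_\circ) + \bbM_t(w_\circ, u_\times) + \bbM_t(w_\times, u_\circ) + \bbM_t(w_\times, u_\times) = \bM_t(w,u)$, and $\sum_{v \in \bar{A}_{t-1}} \bbM_t(v, u_\circ) = \bd_t(u)$ for $u \in A_t$. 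Consequently $\alpha_w > 0$ forces $\bM_t(w,u) > 0$, and the coefficients $\{\alpha_w\}_{w \in A_{t-1}}$ form a probability distribution.

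The structural step is to observe that $\bM_t(w,u) > 0$ means the matching player routed positive flow from $u$ to $w$ in one of the two directed matching instances (either the $L_t \to R_t$ instance or the $R_t \to L_t$ instance), so $u$ and $w$ must lie on opposite sides of the bipartition $A_{t-1} = L_t \sqcup R_t$. Thus if $u \in L_t$ (so $p_t(u) \le \eta$), every $w$ with $\alpha_w > 0$ lies in $R_t$ (so $p_t(w) \ge \eta$), and $p_t(m_u)$, as a convex combination of these, satisfies $p_t(m_u) \ge \eta$. Hence $\eta$ lies between $p_t(u)$ and $p_t(m_u)$, giving $|p_t(u) - p_t(m_u)| \ge |p_t(u) - \eta|$; squaring yields the claim. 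The case $u \in R_t$ is symmetric, and if $u$ straddles both sides due to the cut player's fractional tie-breaking then $p_t(u) = \eta$ makes the right-hand side zero and the inequality trivial. For $u \in A_{t-1} \setminus A_t$ we have $\bd_t(u) = 0$, in which case $p_t(m_u)$ is interpreted by the natural convention (or, equivalently, the statement is simply not invoked for such $u$ in the downstream potential decrease bound in~\cref{lem: general dec lemma}, where the summations restrict to $u_\circ \in A_t^\circ$).

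For $t > T/2$ the identical plan works with the column-based variant: write $p_t(m_u) = \sum_w \alpha_w\, p_t(w)$ with $\alpha_w$ now equal to $(\bbM_t(u_\circ, w_\circ) + \bbM_t(u_\circ, w_\times))/\bd_t(u)$, observe that $\alpha_w > 0$ forces $\bM_t(u,w) > 0$ (flow from $w$ to $u$ in one of the two matchings), and conclude as before. The only care-point throughout is the partial-deletion bookkeeping: making sure the four-way split of $\bbM_t$ aggregates cleanly back to $\bM_t$ via~\cref{rmk: matching equality} so the convex-combination structure is visible. Once that is set up, no concentration or matrix-algebraic input is needed — the lemma is a purely structural consequence of how the cut player's bipartition interacts with the matching player's two directed flows.
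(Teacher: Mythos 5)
Your proposal is correct and follows essentially the same route as the paper: both write $p_t(m_u)$ as a convex combination (via linearity of the inner product and $\sum_v \bbM_t(v,u_\circ)=\bd_t(u)$) of $p_t$-values of vertices matched to $u$, which lie on the opposite side of $\eta$, so $\eta$ separates $p_t(u)$ from $p_t(m_u)$ and squaring gives the claim. Your extra bookkeeping (aggregating the $\circ/\times$ split back to $\bM_t$, the fractional tie-breaking case, and the column variant for $t>T/2$) matches what the paper leaves implicit or dispatches as "analogous."
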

\begin{proof}
The proof is analogous for $t > T/2$, so we assume $t \leq T/2$. Suppose that $p_t(u) \geq \eta$. We show that $p_t(m_u) \leq \eta$. By linearity of inner products, 
\begin{align*}
p_t(m_u) &= \sum_{v \in \bar{A}_{t-1}} \frac{\bbM_t(v, u_\circ)}{\bd_t(u)} \inner{\frac{\tilde{\bF}_{t-1}^\circ(v_\circ)}{\bd_{t-1}(v)}}{\br_t}  \\
&=  \sum_{v \in \bar{A}_{t-1}} \frac{\bbM_t(v, u_\circ)}{\bd_t(u)} p_t(v) \\
&\leq \sum_{v \in \bar{A}_{t-1}} \frac{\bbM_t(v, u_\circ)}{\bd_t(u)} \eta \\
&= \frac{\bd_t(u)}{\bd_t(u)} \cdot \eta = \eta.
\end{align*}
The inequality follows from the fact that we only match vertices in $L_t$ to those in $R_t$ and vice versa. So, since $p_t(u) \geq \eta$, for all $v$ such that $\bbM_t(v, u_\circ) > 0$, $p_t(v) \leq \eta$. Finally, by definition of $\bd_t(u)$, we have $\sum_{v \in \bar{A}_{t-1}} \bbM_t(v, u_\circ) = \bd_t(u)$. The proof for $p_t(u) < \eta$ is analogous.
\end{proof}

We are now ready to prove that the potential decreases in each round.

\begin{lemma} \label{lem: exp decrease}
    For all $1 \leq t \leq T/2$, we have 
    \begin{equation*}
     \mathbb{E}[{\psi}(t)]\le (1-\Omega(1/\log{n}))\cdot \mathbb{E}[\psi(t-1)] + O(1/\poly n).
    \end{equation*}
    For $t > T/2$, we have 
    \begin{equation*}
     \mathbb{E}[{\psip}(t)]\le (1-\Omega(1/\log{n}))\cdot \mathbb{E}[\psip(t-1)] + O(1/\poly n).
     \end{equation*}
\end{lemma}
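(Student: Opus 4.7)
The plan combines the four technical lemmas leading up to the statement: the pointwise decrease bound of \Cref{lem: general dec lemma}, the subgaussian concentration of \Cref{lem:subgaussian concentration}, the structural splitting of \Cref{lem:apd-finding-S}, and the matching-direction bound of \Cref{lem: matched on other side of eta}. Since the two halves of the game are perfectly symmetric under the column-based flow update in \Cref{rmk: other ord flow upd}, it suffices to prove the claim for $\psi$ and $t \le T/2$; the $\psip$ claim for $t > T/2$ follows by the identical argument applied to the transposed flow matrix.

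Fix $t \le T/2$ and condition on $\bbF_{t-1}$ and $\bd_{t-1}$, so that $\psi(t-1)$ is determined and all remaining randomness is in $\br_t$. I would first apply \Cref{lem:apd-finding-S} as in \Cref{rmk: use of apd-finding-S} to the multiset $\{p_t(u) : u \in A_{t-1}\}$, with $p_t(u)$ taken with multiplicity $\bd_{t-1}(u)$, together with the separator value $\eta$ produced by the cut player. This yields a sub-multiset $S$, with multiplicities $m_S(u) \le \bd_{t-1}(u)$, supported entirely in $L_t$ or in $R_t$, whose elements satisfy $(p_t(u)-\eta)^2 \ge \tfrac{1}{9}(p_t(u)-\bar\mu_t)^2$ and whose total weighted squared deviation from $\bar\mu_t$ is at least $\tfrac{1}{16}$ of $\sum_{u \in A_{t-1}} \bd_{t-1}(u)(p_t(u)-\bar\mu_t)^2$.

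Next I would convert these $p_t$-level bounds into lower bounds on the two inner sums appearing in \Cref{lem: general dec lemma}. Invoking the high-probability upper bound in \Cref{lem:subgaussian concentration} (both $(p_t(u)-p_t(m_u))^2$ and $(p_t(u)-\bar\mu_t)^2$ are at most $O(\log n)$ times their expectation) and chaining with \Cref{lem: matched on other side of eta}, for every $u$ in the support of $m_S$ the first inner sum of \Cref{lem: general dec lemma} when $u \in A_t^\circ$, or the second inner sum when $u \in A_{t-1}^\circ \setminus A_t^\circ$, is at least $\Omega(|A_{t-1}|(p_t(u)-\bar\mu_t)^2/\log n)$. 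Using $\bd_t(u) \ge \bd_{t-1}(u)/2 \ge m_S(u)/2$ for $u \in A_t^\circ$, which is guaranteed by the algorithm's full-deletion rule together with $\bd_{t-1}(u) \le \bd(u)$, and $\bd_{t-1}(u) \ge m_S(u)$ in the deleted case, the decrease in \Cref{lem: general dec lemma} is at least $\Omega(|A_{t-1}|/\log n) \cdot \sum_u m_S(u)(p_t(u)-\bar\mu_t)^2$ on the high-probability event.

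Finally I would take expectations over $\br_t$. Property 3 of \Cref{lem:apd-finding-S} trades $m_S(u)$ for $\bd_{t-1}(u)$ at a constant-factor cost, and then the exact equality in \Cref{lem:subgaussian concentration} yields $\E_{\br_t}[\sum_u \bd_{t-1}(u)(p_t(u)-\bar\mu_t)^2] = \psi(t-1)/|A_{t-1}|$, giving $\E[\psi(t-1)-\psi(t)] \ge \Omega(\psi(t-1)/\log n)$ on the high-probability event. Since $\psi(t-1) \le \poly(nW)$ and the decrease is always nonnegative by \Cref{rmk: dont undo progress}, the $1/\poly(n)$ failure probability contributes only an additive $O(1/\poly n)$ term, and the claim follows by taking outer expectation over the history. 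The main obstacle I anticipate is careful bookkeeping around the weighted multiset $S$: its support can straddle $A_t^\circ$ and $A_{t-1}^\circ \setminus A_t^\circ$, the two summands in \Cref{lem: general dec lemma} use different per-vertex weights ($\bd_t$ versus $\bd_{t-1}$), and the constants from \Cref{lem: matched on other side of eta} and \Cref{lem:apd-finding-S} must line up cleanly with the concentration bound before the global inequality can be invoked.
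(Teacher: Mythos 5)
Your proposal is correct and follows essentially the same route as the paper's proof: apply \cref{lem:apd-finding-S} to the $\bd_{t-1}$-weighted multiset of projections as in \cref{rmk: use of apd-finding-S}, lower-bound the inner sums of \cref{lem: general dec lemma} via the high-probability direction of \cref{lem:subgaussian concentration} chained with \cref{lem: matched on other side of eta} and the bound $\bd_t(u)\ge\bd_{t-1}(u)/2$, extend from $S$ to $A_{t-1}$ by property 3, and finish by taking expectation over $\br_t$ using the exact identity in \cref{lem:subgaussian concentration}, with \cref{rmk: dont undo progress} absorbing the failure event. The bookkeeping concerns you flag (the split of $S$ across $A_t^\circ$ and $A_{t-1}^\circ\setminus A_t^\circ$, and the $\bd_t$ versus $\bd_{t-1}$ weights) are handled in the paper exactly as you anticipate.
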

\begin{proof}
We prove the first part of the claim. The proof of the second part of claim is analogous, instead using the second parts of~\cref{lem: general dec lemma},~\cref{lem:subgaussian concentration}.

First, we apply~\cref{lem:apd-finding-S} in our setting by considering our ($\bd_{t-1}$-weighted) bipartition $V_{t-1} = L_{t} \sqcup R_t$ from the cut step, replacing each $p_t(u)$ with $\bd_{t-1}(u)$ many copies of $p_t(u)$. The original bipartition then corresponds to an (unweighted) bisection of the  multiset with the same $\eta$. Additionally, by linearity of inner products, $\bar{\mu}_t$ remains the average of the elements of the set.

Then, applying~\cref{lem:apd-finding-S} directly, we get that there exists $S \subseteq L_t$ or $S \subseteq R_t$ such that $(p_t(u) - \eta)^2 \geq \frac{1}{9} (p_t(u) - \bar{\mu}_t)^2$ for all $u \in S$, and 
 \[
 \sum_{u \in S} \bd_{t-1}(u) (p_t(u) - \bar{\mu}_t)^2 \geq \frac{1}{32} \sum_{u \in A_{t-1}} \bd_{t-1}(u)(p_t(u) - \bar{\mu}_t)^2.
    \]
Now, applying~\cref{lem: general dec lemma}, ~\cref{lem:subgaussian concentration}, and~\cref{lem: matched on other side of eta} we get, with high probability, 
{
\allowdisplaybreaks
\begin{align*}
     {\psi}(t-1) - {\psi}(t) &\geq \frac{1}{8} \sum_{u_\circ \in A_t^\circ} \bd_t(u) \sum_{w_\circ \in A_{t-1}^\circ} \frac{1}{\bd_{t-1}(w)} \left(\frac{\bbF_{t-1}(u_\circ, w_\circ)}{\bd_{t-1}(u)}  -  \sum_{v \in \bar{A}_{t-1}} \frac{\bbM_t(v, u_\circ)}{\bd_t(u)} \frac{\bbF_{t-1}(v_\circ, w_\circ)}{\bd_{t-1}(v)} \right)^2 \\
    &\quad+ \frac{1}{2} \sum_{u_\circ \in A_{t-1}^\circ \setminus A_{t}^\circ} \bd_{t-1}(u) \sum_{w_\circ \in A_{t-1}^\circ} \frac{1}{\bd_{t-1}(w)} \left(\frac{\bbF_{t-1}(u_\circ, w_\circ)}{\bd_{t-1}(u)} - {\bmu}_{t-1}(w_\circ)\right)^2 \\
    &\geq \frac{|A_{t-1}|}{8C \log n} \sum_{u_\circ \in A_t^\circ} \bd_t(u) (p_t(u) - p_t(m_u))^2 + \frac{|A_{t-1}|}{2C \log n} \sum_{u_\circ \in A_{t-1}^\circ \setminus A_{t}^\circ} \bd_{t-1}(u) (p_t(u) - \bar{\mu}_t)^2 \\
    &\geq \frac{|A_{t-1}|}{16C \log n} \sum_{u_\circ \in A_t^\circ} \bd_{t-1}(u) (p_t(u) - \eta)^2 + \frac{|A_{t-1}|}{2C \log n} \sum_{u_\circ \in A_{t-1}^\circ \setminus A_{t}^\circ} \bd_{t-1}(u) (p_t(u) - \bar{\mu}_t)^2 \\
    &\geq \frac{|A_{t-1}|}{16C \log n} \sum_{u \in A_t \cap S} \bd_{t-1}(u) (p_t(u) - \eta)^2 + \frac{|A_{t-1}|}{2C \log n} \sum_{u \in (A_{t-1} \setminus A_{t}) \cap S} \bd_{t-1}(u) (p_t(u) - \bar{\mu}_t)^2 \\
    &\geq \frac{|A_{t-1}|}{144C \log n} \sum_{u \in A_t \cap S} \bd_{t-1}(u) (p_t(u) - \bar{\mu}_t)^2 + \frac{|A_{t-1}|}{2C \log n} \sum_{u \in (A_{t-1} \setminus A_{t}) \cap S} \bd_{t-1}(u) (p(u) - \bar{\mu}_t)^2 \\
     &\geq \frac{|A_{t-1}|}{144C \log n} \sum_{u \in S} \bd_{t-1}(u) (p_t(u) - \bar{\mu}_t)^2 \\  
     &\geq \frac{|A_{t-1}|}{2304C \log n} \sum_{u_\circ \in A_{t-1}^\circ} \bd_{t-1}(u) (p_t(u) - \bar{\mu}_t)^2. 
\end{align*}
}
Next, note that we have, by \cref{lem:subgaussian concentration},
\begin{align*}
&\frac{|A_{t-1}|}{2304C \log n} \sum_{u_\circ \in A_{t-1}^\circ} \bd_{t-1}(u) \E\left[(p_t(u) - \bar{\mu}_t)^2\right] \\
&= \frac{1}{2304C \log n} \sum_{u_\circ \in A_{t-1}^\circ} \bd_{t-1}(u) \sum_{w_\circ \in A_{t-1}^\circ}  \left(\frac{\bbF_{t-1}(u_\circ, w_\circ)}{\bd_{t-1}(u)} - {\bmu}_{t-1}(w_\circ)\right)^2 \\
&= \frac{1}{2304C \log n}\psi(t-1).
\end{align*}
Hence, conditioning on the high probability events above, 
    \begin{equation*}
        \E[\psi(t - 1)-\psi(t)] \geq \Omega(\psi(t - 1)/\log n).
    \end{equation*}
    Thus, we can conclude
     \begin{equation*}
        \E[\psi(t- 1)-\psi(t)] \geq \Omega(\psi(t - 1)/\log n - 1/\poly(n)).
    \end{equation*}
\end{proof}
\begin{remark}
    \cref{lem: exp decrease}, combined with~\cref{rmk: dont undo progress}, immediately implies that $\psi(T) + \psip(T) \leq 1/(nW)^C$ for $ T = O(\log n \log nW)$ with high probability. The high probability guarantee follows from Markov's inequality.
\end{remark}

\subsection{Post-processing} \label{subsec: grafting}
At termination of the algorithm, each cut $C_j$ in the sequence of cuts $C_1, \ldots, C_k$ computed is disjoint and satisfies $\Phi_{G[V \setminus C_{< j}], \bd}(C_j) < 3 \phi$, by our matching step. We also guarantee in the matching step that $\bd(C_j) \leq 2\bd(V)/3$.

If we terminated the algorithm via the early termination condition, we know that, at termination, we had $\bd_t(A_t) < 0.99\bd(A)$. Hence, by \cref{cla: D_t bd}, we then have 
\[
\bd(C_{\leq k_t}) > (\bd(V) - \bd_t(V))/36 =(\bd(A) - \bd_t(A_t))/36 \geq \bd(V)/10^4,
\]
as required for the early termination case.

In the case of termination after $T$ total rounds, we know that $\bd_T(A_T) \geq 0.99\bd(A)$, since the early termination condition was not reached. Hence, \cref{lem: small pot implies exp} applies at termination, and $\bd_T$ mixes in $G$ with congestion at most $7T/\phi$. However, the set $D_T$ is not part of the sequence of sparse cuts computed thus far, so we need to do some additional post-processing. As such, we perform one final ``matching step.'' 

We will solve two flow problems. The first has source $\Delta(u) = \bd(u)$ for $u \in D_T$ and sink $\nabla(u) = \bd(u)$ for $u \in A_T$. We try to solve this flow problem on $G_T$, the same graph considered in the matching step. In case of general $\bd$, we can solve this flow problem with one call to a max flow oracle. Let $\Bf$ be the resultant flow and $S \subseteq V(G_T)$ be the cut found. Then, we know that 
\[
\Bf(S, \overline{S}) = \delta_{G_T}(S, \overline{S}) = \frac{1}{\phi} \delta_{G[V \setminus C_{\leq k_T}]}(S, \overline{S}),
\]
since all edges from $S$ to $\overline{S}$ must be saturated, since it is a min-cut. On the other hand, 
\[
\Bf(S, \overline{S}) = \Delta(S) - \nabla(S) \leq \min(\bd(S), \bd(\overline{S})).
\]
Here, $\Delta(S) \leq \bd(S)$ is clear and $\Delta(S) = \Delta(V) - \Delta(\overline{S})$, $\nabla(S) = \nabla(V) - \nabla(\overline{S})$ and $\nabla(V) \geq \Delta(V)$ together imply $\Delta(S) - \nabla(S) \leq \bd(\overline{S})$. Together, this implies that $S$ is a sparse cut. Moreover, we trivially must have $\Delta(S) - \nabla(S) > 0$. But then, since $\Delta(V) = \bd(D_T) \leq 0.02 \bd(A)$, we have $\bd(S) < 0.04 \bd(A)$. Hence, we can add $S$ to our sequence of sparse cuts. By the max-flow min-cut theorem, then the flow problem restricted to the graph without $S$ is feasible.

Let $D_T'$, $A_T'$, $G_T'$ be the vertex sets updated according to the cut found in the first flow problem. (As usual, if we now would satisfy the early termination condition, we again terminate and output our sequence of sparse cuts.) For the second flow problem, we set $\Delta(u) = \bd(u)$ for $u \in D_T'$ and sink $\nabla(u) = \bd(u)$ for $u \in A_T'$. We try to solve this problem on $G_T'$ with all edges reversed. By an analogous analysis as above, we will find a sparse cut $S'$ with $\bd(S) < 0.04 \bd(A)$ such that the flow problem restricted to $G_T'$ with $S'$ removed is feasible. We add this sparse cut to our sequence of sparse cuts.

In the case of $\bd \geq \deg_G/\kappa$, we can avoid appealing to a max flow oracle to solve the above flow problems by using dynamic push relabel (as in Lemma B.6 of \cite{saranurak2019expander}, for example). Since this application of dynamic push-relabel is well-studied in the literature (and is encompassed by push-pull relabel in this paper), we give a high-level discussion. Consider the first flow instance. Applying~\cref{lem: bdd ht push relabel}, we will find a sparse (level) cut $S = L_{\geq k}$ such that $\bd(S) < 0.04 \bd(A)$ and at most $\phi \bd(S)/10$ total flow is sent from $\bd(\bar{S})$ to $\bd(S)$. Then, we inject at most $\bd(S)/10$ additional source at the boundary in the remaining graph and repeat applying bounded height push relabel. Since we inject source at most $|E|$ total many times throughout, using link-cut trees, the total running time is bounded by $O(mh \log n)$. We repeat this process on $D_T'$, $A_T'$, $G_T'$ to solve the other flow instance. 

Denote the remaining vertices in $A_T$ by $A_{T+1}$ and the remaining vertices in $D_T$ by $D_{T+1}$ after this process. Finally, it remains to show that we can add $D_{T+1}$ back to $A_{T+1}$.

\begin{lemma}
If we did not end in the early termination case, then $\bd |_{A_{T+1} \cup D_{T+1}}$ mixes in $G$ with congestion $44T/\phi$.
\end{lemma}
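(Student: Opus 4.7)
The plan is to prove mixing on $A_{T+1} \cup D_{T+1}$ in two phases: first use the two post-processing flows to route the portion of the demand supported on $D_{T+1}$ over to $A_{T+1}$, then invoke the mixing of $\bd_T$ already established in Lemma~\ref{lem: small pot implies exp} to route the residual demand inside $A_{T+1}$.

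First, I will invoke Lemma~\ref{lem: small pot implies exp}: because we avoided early termination, we have $\bd_T(A_T) \geq 0.99 \bd(A)$, and with high probability the potential bound $\psi(T)+\psip(T) \leq 1/(nW)^C$ holds by the concentration in Lemma~\ref{lem: exp decrease}. Thus $\bd_T$ mixes in $G$ with congestion $7T/\phi$ (more precisely $6T/\phi + 1$, as shown inside that lemma's proof). Since the matching step deletes every vertex whose active portion drops below half of $\bd(v)$, we have $\bd_T(v) \geq \bd(v)/2$ on $A_T$, so scaling the routing by at most $2$ shows that $\bd|_{A_T}$ mixes in $G$ with congestion $12T/\phi + 2$. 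Because $A_{T+1} \subseteq A_T$, the same mixing bound applies to $\bd|_{A_{T+1}}$.

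Next, I will use the two post-processing flow instances, which by construction succeed: they produce $\Bf_1$ in $G$ and $\Bf_2$ in the reversed graph, each routing source $\bd(u)$ at every $u \in D_{T+1}$ into a sink $\leq \bd|_{A_{T+1}}$ with congestion at most $1/\phi$. Given any demand $\bb$ on $A_{T+1} \cup D_{T+1}$ with $|\bb|\leq \bd$ and $\bb(A_{T+1}\cup D_{T+1})=0$, the grafting phase scales $\Bf_1$ at $u$ by $\bb(u)/\bd(u)\in[0,1]$ when $\bb(u)>0$ and scales $\Bf_2$ at $u$ by $|\bb(u)|/\bd(u)$ when $\bb(u)<0$. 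Because each scaling factor is at most $1$ and the two flows live in separate instances, the combined congestion of this phase is at most $2/\phi$. The phase produces a residual demand $\bb'$ supported on $A_{T+1}$ with $\bb'(A_{T+1})=0$, and the sink constraints $\sum_{u\in D_{T+1}}\Bf_i(u,v)\leq \bd(v)$ for $i=1,2$ imply that the induced inflow and outflow at any $v\in A_{T+1}$ are each bounded by $\bd(v)$, hence $|\bb'(v)| \leq |\bb(v)|+\bd(v)\leq 2\bd(v)$.

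Finally, I apply the mixing guarantee from the first paragraph to $\bb'$. Since $|\bb'|\leq 2\bd$ on $A_{T+1}$ and $\bd|_{A_{T+1}}$ mixes with congestion $12T/\phi+2$, the residual demand is routable in $G$ with congestion at most $2\cdot(12T/\phi+2) = 24T/\phi + 4/\phi$. Summing with Phase~1, the overall congestion is at most
\[
\frac{2}{\phi} + \frac{24T}{\phi} + \frac{4}{\phi} \;=\; \frac{24T}{\phi} + \frac{6}{\phi} \;\leq\; \frac{26T}{\phi},
\]
where the last inequality uses $T = \Omega(\log n \log nW) \geq 3$. The main obstacle is establishing the bound $|\bb'|\leq 2\bd$ cleanly: this needs careful bookkeeping of how scaled $\Bf_1$ and $\Bf_2$ deposit and extract flow at each $v\in A_{T+1}$, leveraging that each post-processing flow has total sink bounded by $\bd|_{A_{T+1}}$. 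A secondary subtlety is extracting the precise constant $26$ rather than the loose $30$ obtained by naively combining: it requires using the sharper value $6T/\phi+1$ hidden inside the proof of Lemma~\ref{lem: small pot implies exp} and absorbing the additive $6/\phi$ term via $T\geq 3$.
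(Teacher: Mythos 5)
Your proof is correct and follows essentially the same three-phase route as the paper: establish mixing of $\bd|_{A_T}$ from \cref{lem: small pot implies exp} (losing a factor $2$ via $\bd \leq 2\bd_T$ on $A_T$), use the two grafting flows to move the $D_{T+1}$ demand into the near-expander with congestion $2/\phi$, and route the residual. The only differences are in constant bookkeeping — you bound the residual by $2\bd$ by exploiting that the two flows deposit demand of opposite signs, whereas the paper uses the looser $3\bd$ but starts from a smaller base congestion — and both land within the claimed $26T/\phi$.
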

\begin{proof}
First, by~\cref{lem: small pot implies exp}, we get that $\bd_T|_{A_{T}}$ mixes in $G$ with congestion $7T/\phi$, since $\bd_T|_{A_{T}} \leq \bd_T$. Then, since $\bd|_{A_{T}} \leq 2 \bd_T|_{A_{1}}$, we get that $\bd|_{A_{T}}$ mixes with congestion at most $14T/\phi$ in $G$. 

Finally, let $\bb$ be a demand with $\bb \leq |\bd_{A_{T+1} \cup D_{T+1}}|$. From the first flow instance, for vertices $u \in D_{T+1}$ with $\bb(u) > 0$, we can send their demand into $A_T$ with congestion at most $1/\phi$ (the ``sent'' demand corresponds to negative demand placed on vertices in $A_T$). From the second flow instance, for vertices $u \in D_{T+1}$ with $\bb(u) < 0$, we can send their demand via reverse edges into $A_T$ with congestion at most $1/\phi$ (the ``sent'' demand corresponds to positive demand placed on vertices in $A_T$). Combining these three demands yields a demand $\bb'$ on $A_{T}$ with $|\bb'| \leq 3\bd$. Hence, since $\bd|_{A_{T}}$ mixes with congestion at most $14T/\phi$ in $G$, we get that $\bd |_{A_{T+1} \cup D_{T+1}}$ mixes in $G$ with congestion at most $44T/\phi$, as required.
\end{proof}

\section{Weak Expander Decomposition} \label{sec: weak exp}

In this section, we apply~\cref{thm: cut-matching} to obtain a fast near-expander decomposition algorithm. 

\begin{theorem} \label{thm: near ex decomp}
Given a (directed, weighted) graph $G = (V, E, \bc)$ of order $n$ and size $m$ with edge capacities $\bc$ bounded by $W$, a parameter $\phi \in (0,1)$, and a vertex weighting $\bd \in \N_{\geq 0}^V$ bounded by $O(\poly(n,W))$, there is a randomized algorithm that with high probability finds a partition $V = V_1 \sqcup V_2 \cdots \sqcup V_\ell$ and a $E_D \subset E$ such that 
\begin{enumerate}
    \item For each $i \in [\ell]$, $\bd |_{V_i}$ mixes in $G$ with congestion $O(\frac{\log (n) \log (nW)}{\phi})$.
    \item The edge subgraph $D = (V, E_D)$ is acyclic, and $\sum_{i = 1}^\ell \delta_{G \setminus D}(V_i) = O(\phi \bd(V) \log nW)$.
\end{enumerate}
For $\kappa \geq 1$ and $\bd \geq \deg_G / \kappa$, the algorithm runs in $O(\frac{\kappa m \log^2 (n) \log^3 (nW)}{\phi})$ time. For general $\bd$, the algorithm runs in $O(F(n,m) \log (n) \log^2 (nW) +  \log^2 (n) \log^2 (nW) m)$ time, where $F(n, m)$ is the runtime of solving a max-flow instance of order $n$ and size $m$.
\end{theorem}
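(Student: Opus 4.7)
The plan is a standard recursive decomposition driven by \Cref{thm: cut-matching}. Starting from $X = V$, we invoke the cut-matching algorithm on $G[X]$ with the restricted weighting $\bd|_X$ and parameter $\phi$. This returns a disjoint sequence of cuts $C_1, \dots, C_k$ in $G[X]$, each satisfying $\Phi_{G[X \setminus C_{<j}], \bd}(C_j) \le 3\phi$ and $\bd(C_j) \le 2\bd(X)/3$, together with a remainder $R := X \setminus C_{\leq k}$. In the \emph{finds large near-expander} branch we output $R$ as a component of the decomposition, since $\bd|_R$ is guaranteed to mix in $G[X] \subseteq G$ with congestion $O(\log(n)\log(nW)/\phi)$, and we recurse on each $C_j$. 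In the \emph{early termination} branch we instead recurse on $R$ as well; here $\bd(C_{\leq k}) > \bd(X)/10^4$, so every recursive piece has weight at most $(1 - 10^{-4})\bd(X)$. We stop when a piece becomes a singleton. Because $\bd(V)$ is polynomially bounded, the recursion depth is $O(\log(nW))$ in both branches.

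For each cut $C_j$ produced at any call, the sparse direction has capacity at most $3\phi \bd(C_j)$; we charge those edges to the inter-component total and put the edges in the opposite (dense) direction into $D$. At a single recursive call, the inter-component edges added total at most $\sum_j 3\phi \bd(C_j) \leq 3\phi \bd(X)$. Summing over one level of the recursion tree gives $O(\phi \bd(V))$, and summing over the $O(\log(nW))$ levels gives total inter-component capacity $O(\phi \bd(V) \log(nW))$ as claimed.

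The only nontrivial step is verifying that $D$ is acyclic, and this is the part I expect to require the most care, since the sparse direction can flip arbitrarily between cuts. The key observation is that within one call on $X$, the dense-direction edges at cut $j$ connect $C_j$ to $X \setminus C_{\leq j} = R \cup \bigcup_{i > j} C_i$, so their orientation depends only on which side of cut $j$ is the sparse one. Labelling $\sigma_j \in \{\mathrm{in}, \mathrm{out}\}$ according to this choice, I would order the pieces as: first those $C_j$ with $\sigma_j = \mathrm{in}$ in increasing $j$, then $R$, then those $C_j$ with $\sigma_j = \mathrm{out}$ in decreasing $j$. A short case analysis verifies that every dense edge at every cut $j$ points from an earlier piece to a later piece in this local order, regardless of the pattern of $\sigma_i$. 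Concatenating this local order with the orderings produced recursively inside each piece gives a global linear ordering on the output components under which every edge of $D$ is forward, so $D$ is acyclic.

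Finally, the running time follows from the per-call guarantee of \Cref{thm: cut-matching}. When $\bd \geq \deg_G/\kappa$, a call on a piece with $m'$ edges costs $O(\kappa m' T\log(n)\log(nW)/\phi) = O(\kappa m' \log^2(n)\log^2(nW)/\phi)$. Pieces at one recursion level are vertex-disjoint subgraphs of $G$, so their sizes sum to at most $m$, giving cost $O(\kappa m \log^2(n)\log^2(nW)/\phi)$ per level. Multiplying by the $O(\log(nW))$ depth yields the claimed $O(\kappa m \log^2(n)\log^3(nW)/\phi)$ bound. The general-weight running time $O(F(n,m)\log(n)\log^2(nW))$ follows identically using the max-flow-based per-call cost $O(T\cdot F(n,m))$ from \Cref{thm: cut-matching} and the superlinearity of $F$.
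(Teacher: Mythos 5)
Your proof is correct and follows essentially the same recursive construction as the paper: run the non-stop cut-matching game on each piece, place the dense direction of each sparse cut into $D$, output the remainder in the near-expander case and recurse on it only under early termination, and bound the depth, inter-component capacity, and runtime exactly as you do. Your explicit interleaved ordering ($\mathrm{in}$-cuts in increasing $j$, then $R$, then $\mathrm{out}$-cuts in decreasing $j$) is a more careful rendering of the step the paper merely asserts --- that each level of recursion adds an acyclic set of edges to $E_D$ --- and it checks out.
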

\begin{proof}
Apply~\cref{thm: cut-matching}. Let $C_1, \ldots C_k \subseteq V$ be the resultant sequence of cuts such that, for each $j \in [k]$, $\Phi_{G[V \setminus C_{<j}], \bd}(C_j) \leq 3 \phi$ and $\bd(C_j) \leq 2 \bd(V)/3$, and $C_i \cap C_j = \emptyset$ for $i \neq j \in [k]$. Now, for each $j \in [k]$, let $\overline{C}_j := V \setminus C_{\leq j}$. Then, since $\Phi_{G[V \setminus C_{<j}], \bd}(C_j) \leq 3 \phi$, we have 
\begin{equation} \label{eqtn: cont to cut edges}
\min(\bc(E(C_j, \overline{C}_j)), \bc(E(\overline{C}_j, C_j))) \leq 3\phi \bd(C_j).    
\end{equation}
Then, for the maximum of the terms in the minimum on the left-hand side, add the corresponding set of edges to $E_D$. 

We then recursively apply~\cref{thm: cut-matching} to each $C_j$, using $\bd|_{C_j}$ as the vertex weighting. If we reached the early termination case in the previously layer of recursion (and only in this case), we also apply~\cref{thm: cut-matching} to $V \setminus C_{\leq k}$. We continue recursing until either reaching subgraphs with $\bd(C_j) = 0$ or decomposing $G$ into subgraphs which reached the near-expander case. These subsets are exactly the $V_i$ in our final decomposition.

Observe that, by~\cref{thm: cut-matching}, at each level we have that the total $\bd$-weight of the largest component decreases by a factor of at least $1 - 1/10^4$ (in the near expander case, it actually decreases by a factor of $2/3$). Hence, since $\bd$ is bounded by $\poly(n,W)$, the algorithm has recursion depth $O(\log nW)$. By the running time of~\cref{thm: cut-matching}, this implies the desired running time of the weak-expander decomposition algorithm.

It remains to conclude correctness of the algorithm. The first property is trivial from when we choose to recurse and the guarantees of~\cref{thm: cut-matching} (notice that any demand on a singleton vertex trivially mixes in $G$). 

For the second property, notice that in the first level of recursion we add an acyclic graph worth of edges to $E_D$. Then, the subsequent levels of recursion add only edges between vertices within components found in the first run of~\cref{thm: cut-matching}. So, there can never be cycles between vertices in different components from the first level of recursion. But, this is then true of components found in the subsequent level of recursion, and, inductively, we get that the fact that each level of recursion adds an acyclic graph of edges to $E_D$ implies that $D = (V, E_D)$ is acyclic.

By \cref{eqtn: cont to cut edges}, the first level of recursion will contribute at most $3 \phi \bd(V)$ total to $\sum_{i = 1}^\ell \delta_{G \setminus D}(V_i)$, and will handle the contributions of all edges between the components in the first level of recursion. All remaining contributing edges are then within the components from the first level of recursion. Then, since the components on each recursive level partition $V$, each level of recursion contributes at most $3 \phi \bd(V)$, and, inductively, we handle the contribution of all edges to $\sum_{i = 1}^\ell \delta_{G \setminus D}(V_i)$. As such, we get
\[
\sum_{i = 1}^\ell \delta_{G \setminus D}(V_i) \leq O(\phi \bd(V) \log nW),
\]
as needed.
\end{proof}

\section{Strong Expander Decomposition} \label{sec: strong exp}
In this section, we prove our main result:
\begin{theorem} \label{thm:strong-expander-decomp}
Given a (directed, weighted) graph $G = (V, E, \bc)$ with edge capacities $\bc$ bounded by $W$ and a parameter $\phi \in (0,1)$, there is a randomized algorithm that with high probability finds a partition $V = V_1 \sqcup V_2 \cdots \sqcup V_\ell$ and $E_D \subset E$ such that 
\begin{enumerate}
    \item For each $i \in [\ell]$, $G[V_i]$ is a $\phi$-expander.
    \item The edge subgraph $D = (V, E_D)$ is acyclic, and $\sum_{i = 1}^\ell \delta_{G \setminus D}(V_i) = O(\phi \deg_G(V) \log^3(n)\log^5(nW))$.
\end{enumerate}
The algorithm runs in $O(\frac{m}{\phi}+m\log^3(n)\log^4(nW))$ time. 
\end{theorem}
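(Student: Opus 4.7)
The plan is to follow the Saranurak--Wang template adapted to the directed, capacitated setting. Apply the weak-expander decomposition of~\cref{thm: near ex decomp} to $G$ with the regularized vertex weighting $\bd(v)=\deg_G(v)+\tdeg_G(v)\deg_G(V)/(2m)$, at expansion parameter $\phi'=\Theta(\phi\log^2(n)\log^2(nW))$, to obtain a partition $V=U_1\sqcup\cdots\sqcup U_\ell$ together with an acyclic set $E_D^{(0)}$ of inter-cluster edges, such that for each $i$ the weighting $\bd|_{U_i}$ mixes in $G$ with congestion $O(\log(n)\log(nW)/\phi')$. Since $\bd\ge\deg_G$, each $U_i$ is in particular a $\phi'$-near $\bd$-expander (hence a $\phi'$-near expander with respect to $\deg_G$ as well, up to a factor of two because $\bd\le 2\deg_G$ in total).

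Next, for each $U_i$, I would run a \emph{trimming} subroutine that converts the near-expander into an actual expander at the cost of deleting a small subset $B_i\subseteq U_i$ to be handled later. Following the discussion preceding~\cref{eq:regularized-demand}, trimming is cast as certifying feasibility of two dual flow instances on $G[U_i]$ (and its reverse): place source $\Delta\le 100\cdot(\text{number of embedding paths through each boundary endpoint})$ on the witness endpoints crossing $\partial U_i$, sink $\nabla(v)=\bd(v)$ on each $v\in U_i$, and capacities $O(1/\phi)$ on every edge. If both flow instances are feasible, this certifies by the witness argument that $G[U_i]$ is an $\Omega(\phi)$-expander with respect to $\bd|_{U_i}$, and a fortiori a $\phi$-expander. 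Otherwise, the residual level-cut supplied by Push-Pull-Relabel (\cref{sec:ppl}) yields a sparse cut in $G[U_i]$; remove this cut, update the sources and sinks, and re-run (warm-started) Push-Pull-Relabel. Iterating produces the expander core $U_i\setminus B_i$, with $B_i$ a union of the discovered sparse cuts whose $\bd$-weight is at most a small constant fraction of $\bd(U_i)$.

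To control the runtime we exploit the regularization: the bound~\eqref{eq:regularized-demand} becomes $\sum_v \Bn_T(v)\cdot\tdeg_G(v)/\bd(v)\le \tilde O(m)$ because $\bd(v)\ge\deg_G(V)\tdeg_G(v)/(2m)$ and $\Bn_T(V)\le\bd(V)=O(m)$. To avoid the flow-decomposition barrier when computing the source vector across iterations, I would not maintain an explicit flow-path decomposition of the witness embedding; instead, store only the \emph{transcript} of link-cut-tree operations used when producing the embedding during cut-matching, and, whenever new boundary edges appear, tag those edges and replay the transcript to recover exactly the embedding paths that cross $\partial U_i$ in $\tilde O(m)$ time. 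Together with an amortization that increments sources and sinks in geometrically decreasing batches (so that the total source strictly decreases by a constant factor between rebuilds), the total number of replays is $O(\log(nW))$.

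Finally, recurse on $\bigcup_i B_i$ together with the sparse cuts found inside each $U_i$, adding the inter-cluster edges produced at each recursive level to $E_D$. The outer acyclicity is preserved because the weak decomposition's $D^{(0)}$ is acyclic and each subsequent level only adds edges strictly inside a single $U_i$, so no cycle can form across clusters; acyclicity inside each cluster is maintained inductively by the same argument applied to its weak decomposition. Since every recursive call reduces the total $\bd$-weight by a constant factor and the trimming of a near expander loses at most a constant fraction of $\bd(U_i)$ to $B_i$, the recursion depth is $O(\log(nW))$. Each level contributes $O(\phi\deg_G(V)\log^2(n)\log^4(nW))$ boundary capacity (from both the $O(\phi\bd(V)\log(nW))$ bound of~\cref{thm: near ex decomp} scaled up by the rescaling of $\phi'$ versus $\phi$, and the additional sparse cuts from trimming), and the total running time is dominated by $O(m/\phi)$ from the single heavy trimming pass plus $O(m\log^3(n)\log^4(nW))$ spent in the $O(\log(nW))$ levels of weak decomposition and transcript replays. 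The main obstacle is the interaction of the two directed flow instances during trimming: certifying feasibility of one instance may violate the other after a level-cut removal, so the Push-Pull-Relabel analysis from~\cref{sec:ppl}, in conjunction with the sink-incrementing trick, is essential to show that the iterated trimming terminates within the claimed budget.
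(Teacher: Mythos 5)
Your proposal assembles the right ingredients (cut-matching witnesses, the two dual flow instances, push-pull-relabel with the regularized weighting $\bd$, transcript replay, batched source/sink increments), but the top-level organization has a concrete gap. You invoke \cref{thm: near ex decomp} as a black box to produce the near-expanders $U_i$ and only then trim. The trimming analysis, however, is not robust to this reordering: the source bound (\cref{cl:source-bound}) requires that the complement of the near-expander inside the component $V'$ on which cut-matching was run satisfies $\bd(V'\setminus U_i)\le \bd(V')/(c_0\log(n)\log(nW))$, because the case-(1) source is bounded by $\deg_W(V'\setminus U_i)=O(\log(n)\log(nW))\cdot\bd(V'\setminus U_i)$ and this must be at most $\bd(U_i)/1000$. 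The weak decomposition as stated uses the early-termination threshold $\bd(V)/10^4$, under which $\deg_W(V'\setminus U_i)$ can be $\Theta(\log(n)\log(nW))\cdot\bd(V')/10^4$, i.e., far larger than $\bd(U_i)$; the total injected source then swamps the sink, the level cuts $S_t$ are no longer guaranteed to satisfy $\bd(S_t)\le\bd(A_t)/100$, and the expander certification collapses. This is exactly why the paper does \emph{not} reuse \cref{thm: near ex decomp}: it restates cut-matching as \cref{lem: cut-matching} with the strengthened threshold $\tau=c_0\log(n)\log(nW)$ (cf.\ \cref{rem:early-termination}) and interleaves cut-matching and trimming at every level of the recursion (\cref{alg:strong-ed}), so that each trimming call operates on a near-expander whose complement in the current component is a $1/(c_0\log(n)\log(nW))$-fraction and is the nested sequence of sparse cuts needed for the edge-charging argument in \cref{cl:source-bound}.

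Two smaller inaccuracies: the recursive components do not shrink by a constant factor but only by a $1-\Omega(1/(\log(n)\log(nW)))$ factor, so the recursion depth is $O(\log(n)\log^2(nW))$ rather than $O(\log(nW))$ (this is where several of the final polylog factors come from); and trimming only certifies $\Omega(\phi/(\log^2(n)\log^3(nW)))$-expansion because of the sink and capacity rescaling forced by the batched \textsc{IncreaseSource} implementation, so the reparametrization must absorb that loss, not merely the cut-matching congestion. With the threshold fixed to $c_0\log(n)\log(nW)$ and the interleaved recursion, your argument matches the paper's.
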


For technical reasons outlined in the introduction, we will be constructing a $\mathbf{d}$-expander decomposition with respect to the regularized vertex weighting
\begin{align*}
\mathbf{d}(v)=\deg_G(v)+\frac{\widetilde{\deg}_G(v)}{2m}\cdot \deg_G(V).
\end{align*}
Note that a standard expander decomposition is a $\deg_G$-expander decomposition, so the expanders in the expander decomposition we construct are strictly stronger. Furthermore, since $\bd(V)=2\deg_G(V)$, the volume of the cut edges is still preserved within a factor of two, so the $\bd$-expander decomposition is a valid $\deg_G$-expander decomposition.

Our algorithm for strong expander decompositions follows the general framework introduced in~\cite{saranurak2019expander}. First, we apply our new nonstop directed cut-matching game to obtain either a relatively balanced sequence of sparse cuts or a large $(\phi,\bd)$-near expander. Below, we restate our specific use case of~\Cref{thm: cut-matching} (see also \cref{rem:early-termination}) as a lemma. 
\begin{lemma}[Cut-Matching] \label{lem: cut-matching}
Let $c_0$ be a constant to be chosen later. Given a directed graph $G = (V, E, \bc)$ with edge capacities $\bc(e)$ bounded by $W$ and a parameter $\phi \in (0,1)$, there exists a constant $c_1>0$ and a randomized, Monte Carlo algorithm that outputs:
\begin{itemize}
    \item A sequence of cuts $C_1, \ldots, C_k \subseteq V$ such that, for each $j \in [k]$, $\Phi_{G[V \setminus C_{< j}], \bd}(C_j) \leq c_1\phi\log(n)\log(nW)$, $\bd(C_{j}) \leq 2\bd(V)/3$, and $C_i \cap C_j = \emptyset$ for $i \neq j \in [k]$.
\end{itemize}
 We also have either:
    \begin{enumerate}
        \item \textbf{Early termination:} $\bd(C_{\leq k}) > \bd(V)/(c_0\log(n)\log(nW))$.
        \item \textbf{Large near-expander:}  $\bd|_{V \setminus C_{\leq k}}$ mixes in $G$ with congestion $1/\phi$ with high probability.
    \end{enumerate}
The algorithm runs in $O(m \log(n)\log(nW)/ \phi+m\log^2(n)\log^2(nW))$ time. 
\end{lemma}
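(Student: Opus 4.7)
The plan is to obtain this lemma as a direct reparametrization of~\Cref{thm: cut-matching}, absorbing the $T = O(\log n \log nW)$ factor in the congestion bound into a $\log(n)\log(nW)$-factor loss in the sparsity. More precisely, I would apply~\Cref{thm: cut-matching} to $G$ with vertex weighting $\bd$ (the regularized degree weighting introduced at the top of~\Cref{sec: strong exp}), expansion parameter $\phi'' := c_1 \phi \log(n)\log(nW)/3$ for a suitable constant $c_1 > 0$, and early-termination threshold $\tau := c_0 \log(n)\log(nW)$, which is permitted by~\Cref{rem:early-termination} provided $c_0 \geq 10^4$.

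Under this substitution, the output guarantees of~\Cref{thm: cut-matching} translate as follows. The sparsity bound becomes $\Phi_{G[V \setminus C_{<j}], \bd}(C_j) \leq 3\phi'' = c_1 \phi \log(n)\log(nW)$, matching the target, while the balance $\bd(C_j) \leq 2\bd(V)/3$ and pairwise disjointness transfer unchanged. In the early-termination case we get $\bd(C_{\leq k}) > \bd(V)/\tau = \bd(V)/(c_0\log(n)\log(nW))$, and in the large-near-expander case the mixing congestion is $26T/\phi''$, which via $T \leq C_T \log(n)\log(nW)$ simplifies to $78 C_T/(c_1\phi) \leq 1/\phi$ upon choosing $c_1 \geq 78 C_T$.

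For the runtime, the regularized weighting satisfies $\bd \geq \deg_G$, so I would invoke the $\kappa = 1$ form of~\Cref{thm: cut-matching}, yielding a matching-player cost of $O(m T \log(n)\log(nW)/\phi'') = O(m \log(n)\log(nW)/\phi)$. The cut-player cost, by~\Cref{lem: cut player run time}, contributes $O(mT^2 + mT \log m) = O(m \log^2(n)\log(nW))$ on top (independent of $\phi$), matching the second term of the stated runtime.

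I do not anticipate any real obstacle in carrying out this plan: the proof amounts to routine constant-chasing on the guarantees and runtime of~\Cref{thm: cut-matching}. The only slightly subtle point is ensuring that the final congestion is at most $1/\phi$ on the nose (rather than $O(1/\phi)$), but this is handled by picking $c_1$ large relative to the hidden constants in $T$ and in the factor $26$ of~\Cref{rem:early-termination}.
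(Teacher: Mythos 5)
Your proposal is correct and is exactly the paper's (implicit) argument: the paper presents this lemma without proof as a ``restatement of our specific use case of \Cref{thm: cut-matching} (see also \Cref{rem:early-termination}),'' i.e.\ precisely the reparametrization $\phi \mapsto \Theta(\phi\log(n)\log(nW))$, $\tau = c_0\log(n)\log(nW)$, and $\kappa=1$ for the regularized $\bd$ that you carry out. The only nit is in the $\phi$-free runtime term, where $mT^2 = O(m\log^2(n)\log^2(nW))$ does not quite simplify to the stated $O(m\log^2(n)\log(nW))$ — but that imprecision is inherited from the paper's own statement rather than introduced by your argument.
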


If we find a relatively balanced sequence of sparse cuts (the early termination case), we can recurse into each cut in the sequence, which makes sufficient progress. Otherwise, we find a large $\phi$-near expander $A\subseteq V$; in this case, we give a trimming algorithm to find a large subset $A'\subseteq A$ which we certify to be a ${\Omega}(\phi/\log^2n)$-expander, so that we can simply recurse on the remainder of the graph.

\begin{restatable}[Trimming]{lemma}{trimming}\label{lem:trimming} 
    Let $G=(V,E,\Bc)$ be a graph with integral edge capacities $\Bc(e)$ bounded by $W$. Let $A\subseteq V$ be a subset which is $(\phi,\bd)$-nearly expanding in $G$ such that $V\setminus A$ is the disjoint union of a sequence of sparse cuts $C_1,\ldots,C_k$, where $\Phi_{G[V\setminus C_{<j}],\bd}(C_j)\le O(\phi \log (n) \log(nW))$ for each $j\in[k]$. There exists an algorithm which finds $A'\subseteq A$ along with a sequence of cuts $S_1\sqcup S_2\sqcup\cdots=A\setminus A'$ satisfying $\Phi_{G[A\setminus S_{<j}],\bd}(S_j)\le O(\phi)$
    such that either
    \begin{enumerate}
        \item \textbf{Early termination:} $\bd(\bigcup_{j=0}^{t}S_j)\ge \bd(V)/(c_0\log(n)\log(nW))$
        \item \textbf{Certifies expansion:} $A'$ is certified as a $(\Omega(\phi/\log^2(n)\log^3(nW)),\bd)$-expander and $\bd(A')\ge \bd(A)/2$.
    \end{enumerate}
    The algorithm runs in time $O(m\log(nW)/\phi + m \log^2 (n) \log^2 (nW))$.
\end{restatable}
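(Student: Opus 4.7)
I follow the trimming framework of~\cite{saranurak2019expander} adapted to directed, capacitated graphs as laid out in the Technical Overview. The key primitive is a pair of flow instances on $G[A]$ whose joint feasibility certifies that the trimmed set $A'$ is an $\Omega(\phi)$-expander with respect to $\bd$. Following~\cite{DBLP:conf/soda/HuaKGW23}, one instance certifies out-expansion and a symmetric instance on the edge-reversed graph certifies in-expansion. Both are built from the witness $(W,\Pi)$ produced implicitly by the matchings of~\Cref{lem: cut-matching}: for each embedding path that crosses $\partial A_t$, a constant amount of source is placed at each of its two endpoints; each $v\in A_t$ carries a sink of mass $\bd(v)$; and every edge has capacity $O(1/\phi)$. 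A max-flow / min-cut style argument exploiting the near-expansion of $A$ then shows that joint feasibility at $A_t$ implies $A_t$ is a $(\Omega(\phi),\bd)$-expander.

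\textbf{Trimming loop with warm-started push-pull-relabel.} Starting from $A_0=A$, I alternately run the bounded-height push-pull-relabel primitive of~\Cref{sec:ppl} on the two flow instances. When an instance is infeasible, the algorithm returns a level cut $S_t$ with $\Phi_{G[A_t],\bd}(S_t)=O(\phi)$; I append $S_t$ to the output and set $A_{t+1}=A_t\setminus S_t$. The essential reason for using \emph{pull}-relabel is that removing $S_t$ in one instance can leave flow paths in the other instance hanging off deleted sources, inducing negative excess on $A_{t+1}$; pull-relabel propagates this negative excess in exact analogy to standard push-relabel. Warm-starting from the saved flow and labels of the previous round allows the entire iterated sequence to share the cost of a single push-pull-relabel computation. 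To achieve near-linear runtime on a weighted graph, the regularized weighting $\bd(v)=\deg_G(v)+\tdeg_G(v)\deg_G(V)/(2m)$ forces the sinks in~\eqref{eq:regularized-demand} to be bounded below by a uniform fraction of weighted volume, telescoping the runtime bound to $\tO(m/\phi)$.

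\textbf{Beating the flow-decomposition barrier.} The most delicate step, specific to the capacitated setting, is identifying the embedding paths of $\Pi$ crossing $\partial A_t$ on each round in order to update sources. Explicitly maintaining a path decomposition of the witness flow would require $\Omega(mn)$ space by the flow-decomposition barrier. Instead, following~\Cref{sec: fast-path-decomp}, I record the transcript of link-cut tree operations used during cut-matching and replay them on demand in $\tO(m)$ time per call, using an auxiliary edge tag to flag boundary crossings. To keep the number of such replays polylogarithmic, source updates are batched so that the total source mass shrinks by a constant factor between consecutive batches; between batches the sinks are incremented (following~\cite{chen2025parallel}) so that residual excess cannot obstruct this shrinkage. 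This interplay among push-pull-relabel, on-demand path decomposition, and the regularized weighting is where I expect the main technical difficulty to concentrate.

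\textbf{Verifying the output guarantees.} Each removed cut $S_t$ contributes at most $O(\phi)\cdot\bd(S_t)$ to the capacity leaving $\partial A_t$, while the original outer cuts $C_j$ from~\Cref{lem: cut-matching} contribute $O(\phi\log n\log nW)\cdot\bd(C_j)$. As long as $\bd(\bigcup_t S_t)<\bd(V)/(c_0\log n\log nW)$, the total source injected into the flow instance thus remains $O(\phi\log n\log nW\cdot\bd(A))$. A potential argument then shows that the loop terminates in $O(\log^2 n\log^3 nW)$ rounds with $\bd(A')\ge\bd(A)/2$, and the $O(1/\phi)$ edge capacities translate feasibility into a $(\Omega(\phi/\log^2 n\log^3 nW),\bd)$-expander certificate. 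If the cut-mass bound is violated first, the accumulated cuts trigger the early termination branch. Combining the near-linear cost of push-pull-relabel with the polylogarithmically many on-demand path decompositions gives the claimed $O(m\log(nW)/\phi+m\log^2(n)\log^2(nW))$ runtime.
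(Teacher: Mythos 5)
Your proposal follows essentially the same route as the paper: the two witness-based flow instances certifying out- and in-expansion, the push-pull-relabel dynamic flow with negative excess, the regularized weighting for the runtime, the link-cut-tree transcript replay to evade the flow-decomposition barrier, and the batched source updates with sink increments \`a la~\cite{chen2025parallel}. The only imprecision is that with edge capacities $O(\bc(e)/\phi)$ the level cuts are only $O(\phi\log n\log nW)$-sparse; the paper obtains the claimed $O(\phi)$-sparse cuts by additionally scaling the flow-instance capacities (and sinks) up by $\Theta(\log n\log nW)$, which is also what degrades the final expansion guarantee to $\Omega(\phi/\log^2 n\log^3 nW)$.
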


In order to find the certified expander $A'$ in $A$, we use a pair of flow instances on $A$: one to certify out-expansion and one to certify in-expansion. These flow instances, described in \Cref{sec:flow}, have the property that if they are feasible, then we have certified that the near-expanding subset $A$ is actually an expander (i.e., $G[A]$ is an expander). Since any subset of a near-expander is still a near-expander, the goal is to find a large subset $A'\subseteq A$ where both of the flow instances are feasible. We do this iteratively. Specifically, we run a dynamic flow algorithm (\Cref{sec:ppl}) on the two flow problems on $A_t$, starting with $A_0=A$. If either of the flow instances are infeasible, we show that we can find a sufficiently sparse cut $S_t$. We then let $A_{t+1}=A_t\setminus S_t$ and repeat the process. Note that although we may have $\Omega(n)$ different flow instances, since the cuts $S_t$ may be small, we can still bound the runtime since we are applying a dynamic flow algorithm. Eventually, we find $A_t$ where both flow instances are feasible, so we have certified expansion for $A'=A_t$, and we are done. We show that $A'$ is large and the total number of edges cut in $S_t$ is small, so they can be used as part of the expander decomposition. The full algorithm is described in \Cref{sec:trim-algorithm}.

\begin{algorithm}[t]
	\caption{Directed Expander Decomposition}
	\label{alg:strong-ed}
	\fbox{
		\parbox{0.97\columnwidth}{
		  \textsc{Decomp}$(G, A, W, \Pi, \phi)$ \\
            \tab \textbf{Call} non-stop cut-matching on $(G,\phi)$ via \Cref{lem: cut-matching}\\
            \tab \textbf{If} we have early termination via sparse cuts $C_1,\ldots,C_k$\\
            \tab \tab \textbf{Return} \textsc{Decomp}$(G[V\setminus C_{\le k}],\phi)\cup\bigcup_{j=1}^{k}\textsc{Decomp}(G[C_j],\phi)$\\
            \tab \textbf{Else} (i.e., we have a near expander $A$ along with sparse cuts $C_1,\ldots,C_k$)\\
            \tab \tab \textbf{Call} trimming algorithm on $(A,C_1,\ldots,C_k,\phi)$ via \Cref{lem:trimming}\\
            \tab \tab \textbf{If} we have early termination, with $A'$ and sparse cuts $S_1,\ldots,S_t$\\
            \tab \tab \tab \textbf{Return} $\textsc{Decomp}(G[A'],\phi)\cup\bigcup_{j=1}^{k}\textsc{Decomp}(G[C_j],\phi)\cup\bigcup_{j=0}^{t}\textsc{Decomp}(G[S_j],\phi)$\\
            \tab \tab \textbf{Else} (i.e., we have certified an $\Omega(\phi)$-expander $A'$, along with sparse cuts $S_0,\ldots,S_t$)\\
            \tab \tab \tab \textbf{Return} $\{A'\}\cup\bigcup_{j=1}^{k}\textsc{Decomp}(G[C_j],\phi)\cup\bigcup_{j=0}^{t}\textsc{Decomp}(G[S_j],\phi)$
	}}
\end{algorithm}

Assuming the two lemmas above, we can prove~\cref{thm:strong-expander-decomp}. The pseudocode is in \Cref{alg:strong-ed}.
\begin{proof}
    Since our algorithm only stops the recursive process when the subgraph is certified to have conductance $\Phi_{G[V_i],\bd}\ge\Omega(\phi/\log(n)\log(nW))$, the leaves of the recursion $V_1\sqcup\ldots\sqcup V_s=V$ form a valid  $(\Omega(\phi/\log(n)^2\log^3(nW)),\bd)$-expander decomposition. As noted previously, the vertex weighting $\bd$ dominates that of $\deg_G$, so $V_1,\ldots,V_s$ is also a valid standard $\Omega(\phi/\log(n)\log(nW))$-expander decomposition. Now, we bound the runtime of the algorithm and the number of intercluster edges. 

    First, we bound the runtime. After running the cut-matching step on a component, if we get the first case, we obtain a sequence of sparse cuts $C_1,\ldots,C_k$ such that the $\bd$-weight of the largest component decreases by a $1-\Omega(1/\log(n)\log(nW))$ factor. If we get to the second case, we apply a trimming algorithm and either obtain a large expander $A'$ along with a sequence of sparse cuts $S_0,\ldots,S_t$ or just a sequence of sparse cuts $S_0,\ldots,S_t$ such that the $\bd$-weight of largest component decreases by a $1-\Omega(\log(n)\log(nW))$ factor. In summary, the weight of the largest component decreases by a $1-\Omega(1/\log(n)\log(nW))$ factor across each level of the recursion, so the recursion goes on for $O(\log(n)\log^2(nW))$ levels.
    Since the components of each level of the recursion are disjoint, the total runtime of each level of the recursion is $O(m\log(n)\log(nW)/\phi)$, so the total runtime is $O(m\log(n)^2\log^3(nW)/\phi)$. For the runtime term without the $\phi$ factors, the total runtime is $O(m\log^3(n)\log^5(nW))$ time.   
    
    Next, we bound the intercluster edges. In the first case, we have a sequence of $(O(\phi\log(n)\log(nW)),\bd)$-sparse cuts $C_1,\ldots,C_k$. For each cut, charge the edges on the sparse side of the cut to the vertex weight in the smaller side of the cut. Since each vertex is on the smaller side of the cut at most $O(\log{nW})$ times, the volume of intercluster edges from the second case is bounded by $O(\phi\bd(V)\log(n)\log^2(nW))=O(\phi \deg_G(V)\log(n)\log^2(nW))$. In the second case, we always have a sequence of $(O(\phi),\bd)$-sparse cuts $S_1,\ldots,S_t$. Since the depth of the recursion is $O(\log(n)\log^2(nW))$ and the components being recursed on at a single level are disjoint, the volume of intercluster edges from the second case is upper bounded by $O(\phi\bd(V)\log(n)\log^2(nW))=O(\phi \deg_G(V)\log(n)\log^2(nW))$ as well.

    In summary, we can compute a $\Omega(\phi/\log^2(n)\log^3(nW))$-expander decomposition where the number of intercluster edges is $O(\phi \deg_G(V)\log^2(n)\log^3(nW))$ and the runtime is $O(m\log^2(n)\log^3(nW)/\phi)$. Parameterizing $\phi=\Theta(\psi/\log^2(n)\log^3(nW))$, this means we can compute a $\psi$-expander decomposition where the number of intercluster edges is $O(\psi\deg_G(V)\log^3(n)\log^5(nW))$ and the runtime is $O(m/\psi)$. For the runtime term without the $\phi$ factors, this is not affected by the reparameterization, so the total runtime is still $O(m\log^3(n)\log^5(nW))$ time. This matches all the parameters in the theorem statement. 
\end{proof}

\begin{remark}
    In the unweighted case, there is an improved trimming algorithm, given in \Cref{lem:trimming-unweighted}, since path decomposition can be maintained explicitly. 
    By using dynamic unit-flow~\cite{saranurak2019expander} and the analysis of Push-Pull-Relabel from~\cite{SP24}, we can save an additional $\log(n)$ factor in both the runtime of non-stop cut-matching game and the trimming algorithm. Combining all of these, we can match the runtime and intercluster edges of \cite{saranurak2019expander} in the unweighted case. The analysis is the same as the above, with the exponents in the $\log(n)$ and $\log(nW)$ factors suitably changed.
\end{remark}

\subsection{Two Flow Instances: Certifying Directed Expansion}
\label{sec:flow}

We first define our flow instance which certifies out-expansion of a directed near-expander. To certify in-expansion, we simply define the same flow instance on the same graph but with all edges reversed. The two flow instances combined together certifies directed expansion.

Let $A\subseteq V$ denote a $(\phi, \bd)$-near expander in $G$. We assume we also have a witness $(W,\Pi)$ as part of the input. In general, a witness is defined as follows:

\begin{definition}[Expansion Witness]
    A $(\psi, \kappa, \bd)$-\textit{witness of near-expansion on $A$} is a pair $(W,\Pi)$ where
    \begin{enumerate}
        \item $W$ is a graph such that $A$ is a $(\psi,\bd)$-near expander in $W$.
        \item $\Pi:E(W)\to \mathcal{P}(G)$ embeds edges $(u,v)\in E(W)$ into capacitated paths $\Pi(u,v)$ between $u$ and $v$ in $G$ with capacity $\bc_W(u,v)$, such that the congestion of the embedding is at most $\kappa$. Here $\mathcal{P}(G)$ is the set of simple paths in $G$.
    \end{enumerate}
When $A = V(W)$, then $(W, \Pi)$ is a witness of \textit{expansion} on $A$. 
\end{definition}
The choices of parameters $\psi, \kappa, \bd$ will be clear from context throughout, so we suppress specifying these parameters for readability. When the path embedding $\Pi$ is clear from context, we abuse notation to also refer to $W$ itself as the witness. 

\begin{claim} \label{cl: witness implies expansion}
Let $G = (V,E, \bc)$ be a graph. Let $A \subseteq V$. Then, if there exists $(W, \Pi)$ a $(\psi, \kappa, \bd)$-witness of near expansion on $A$, $A$ is a $(\psi/\kappa, \bd)$-near expander in $G$. (If $A = V$, then $G$ is a $(\psi/\kappa, \bd)$-expander.)
\end{claim}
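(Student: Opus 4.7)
The plan is to directly unpack the definitions and use the path embedding $\Pi$ to transfer an edge-cut lower bound from $W$ to $G$. Fix an arbitrary $S \subseteq A$ with $\bd(S) \le \bd(A \setminus S)$ (the other case is symmetric by relabeling), so that $\min(\bd(S), \bd(A\setminus S)) = \bd(S)$. Since $A$ is a $(\psi,\bd)$-near expander in $W$, both $\delta_W(S, V(W)\setminus S)$ and $\delta_W(V(W)\setminus S, S)$ are at least $\psi \bd(S)$.

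Next, I would argue that each direction of the cut in $W$ can be charged to the same direction of the cut in $G$ via the embedding. Consider any edge $(u,v) \in E(W)$ with $u \in S$ and $v \in V(W) \setminus S$. The path $\Pi(u,v)$ is a directed path in $G$ from $u$ to $v$ carrying $\bc_W(u,v)$ units of capacity, and since $u \in S \subseteq V$ and $v \notin S$, at least one edge of $\Pi(u,v)$ must go from $S$ to $V \setminus S$. Summing the capacities $\bc_W(u,v)$ over all such edges yields a flow of value $\delta_W(S, V(W)\setminus S)$ that crosses from $S$ to $V \setminus S$ in $G$. Because the embedding has congestion at most $\kappa$, this flow can be supported by edges crossing $S \to V \setminus S$ with total capacity at most $\kappa \cdot \delta_G(S, V \setminus S)$. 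Hence $\delta_G(S, V \setminus S) \ge \delta_W(S, V(W)\setminus S)/\kappa \ge (\psi/\kappa)\bd(S)$, and the identical argument on reversed edges gives $\delta_G(V \setminus S, S) \ge (\psi/\kappa)\bd(S)$.

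Combining these two bounds yields
\[
\frac{\min(\delta_G(S, V \setminus S), \delta_G(V \setminus S, S))}{\min(\bd(S), \bd(A \setminus S))} \ge \frac{\psi}{\kappa},
\]
which is exactly the required near-expansion condition for $A$ in $G$. For the parenthetical claim with $A = V$, the same argument specializes (with $A \setminus S = V \setminus S$) to the statement that $G$ is a $(\psi/\kappa, \bd)$-expander.

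There is no real obstacle here beyond being careful that the two notions of ``crossing $S$'' (in $W$ vs.\ in $G$) are correctly linked through the embedding, and that the congestion bound is applied to the correct directional cut rather than to the symmetric cut. Since the embedding preserves the direction of each edge as a directed path, and congestion is measured per edge capacity in $G$, this bookkeeping is routine.
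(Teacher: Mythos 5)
Your proof is correct and follows essentially the same route as the paper's cut-based argument: each crossing edge of $W$ embeds to a path that must cross the corresponding directed cut in $G$, and the congestion bound transfers the cut lower bound with a loss of $\kappa$. (The paper also sketches an equivalent flow-based argument, but your cut-based version is one of the two proofs it gives, just written out in more detail; note the WLOG on $\bd(S)\le\bd(A\setminus S)$ is not even needed, since the chain of inequalities works verbatim with $\min(\bd(S),\bd(A\setminus S))$ in the denominator.)
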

\begin{proof}
We can prove this claim by using either the flow or cut definition of expansion. Although the definitions are equivalent, it is instructive to understand why both proofs hold.

For the flow-based proof, let $\bb$ be a demand on $A$ with $|\bb| \leq \bd$. Then, since $W$ is a $(\psi, \bd)$-expander, there is a routing of this demand (in $W$) with congestion at most $1/\psi$. Then, to route the same demand on $A$ in $G$, just embed this routing directly using $\Pi$, for total congestion at most $1/\psi \cdot \kappa$.

For the cut-based argument, fix $S \subseteq A$. Since $A$ is a $(\psi, \bd)$-near expander in $W$, we know that 
\[
\min(\delta_W(S, V \setminus S), \delta_W(V \setminus S, S)) \geq \psi \min(\bd(S), \bd(A \setminus S)).
\]
Since $\Pi$ embeds with congestion at most $\kappa$, we then immediately get 
\[
\frac{\min(\delta_G(S, V \setminus S), \delta_G(V \setminus S, S))}{\min(\bd(S), \bd(A \setminus S))} \geq \psi/\kappa.\qedhere
\]
\end{proof}

\begin{claim}\label{cl:witness-from-CMG}
The non-stop cut-matching game can be trivially modified to also output a $(1/4, 1/\phi, \bd)$-witness of near-expansion in the case of finding a large near-expander (case 2 of~\cref{lem: cut-matching})
\end{claim}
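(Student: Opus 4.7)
The plan is to construct the witness $(W, \Pi)$ directly from objects already computed during the cut-matching game of~\cref{lem: cut-matching}, so that no extra runtime is incurred. The graph $W$ will be the union of the (weighted, directed) matchings $\bM_1, \ldots, \bM_T$ produced by the matching player in each round, together with the two matchings produced by the grafting post-processing step (\cref{subsec: grafting}). Concretely, each matching step already computes a path decomposition of its two flows via link-cut trees in $O(m\log nW)$ time; for each path $P$ from $u$ to $v$ of capacity $c_P$ in such a decomposition, I add a directed edge $(u,v)$ to $W$ of weight $c_P$, and set $\Pi(u,v) = P$. Both $W$ and $\Pi$ are thus byproducts of the existing computation.

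The first thing to verify is that $V \setminus C_{\le k}$ is a $(1/4, \bd)$-near expander in $W$. This is precisely the content of the remark immediately following~\cref{lem: small pot implies exp}: the proof of~\cref{lem: small pot implies exp} routes any demand $\bb$ with $|\bb|\le\bd|_{V\setminus C_{\le k}}$ through flow paths of the matchings with some constant congestion, and the same routing, performed using the corresponding edges of $W$ directly instead of the underlying $G$-paths, yields congestion at most $7$ in $W$ (and this constant can be improved to $4$ at the cost of tightening the $1/3$ factor in~\cref{cla: small pot struct}, or absorbed into the constant $c_1$ of~\cref{lem: cut-matching}). The post-processing grafting step contributes two additional matchings to $W$ that handle the reintegration of $D_T$ into the mixing set, so the mixing bound extends to all of $V\setminus C_{\le k}$, not merely $A_T$.

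The second thing to verify is that $\Pi$ embeds $W$ into $G$ with congestion $1/\phi$. In each round $t$, the two flows underlying $\bM_t$ are computed on the graph $G_t$, which is $G[V\setminus C_{\le k_{t-1}}]$ with every edge capacity scaled up by $1/\phi$; hence each round contributes paths in $G$ of total congestion $2/\phi$. Summed over the $T = O(\log n\log nW)$ rounds plus the two post-processing flows, the union of all $\Pi$-images has congestion $O(T/\phi)$ in the original parameterization of~\cref{thm: cut-matching}. After the rescaling $\phi \mapsto \phi \cdot c_1\log n\log nW$ that is already baked into~\cref{lem: cut-matching} (to make the cut-sparsity $O(\phi\log n\log nW)$ and the mixing congestion $1/\phi$), this $T/\phi$ bound becomes exactly $1/\phi$, as required.

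The main obstacle, such as it is, lies in the bookkeeping for the grafting step: the matchings found in the main loop only certify mixing for the active portions of vertices in $A_T$, not for the deleted-but-not-cut vertices in $D_T$, so one must check that the two extra flows in~\cref{subsec: grafting} contribute both (a) edges to $W$ that extend the constant-congestion routing to all of $V\setminus C_{\le k}$, and (b) flow paths in $\Pi$ whose additional congestion is absorbed into the $O(T/\phi)$ bound. Both facts follow from the analysis of grafting already in~\cref{subsec: grafting} (which shows the post-processing flows add only $O(1/\phi)$ extra congestion in $G$ and route $D_T$ into $A_T$ in both directions), so the claim follows by simply exposing $W$ and $\Pi$ as outputs of the modified algorithm.
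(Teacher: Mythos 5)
Your proposal is correct and follows essentially the same route as the paper: $W$ and $\Pi$ are read off from the link-cut-tree path decompositions of the matching flows, near-expansion of $W$ comes from the mixing argument of \cref{lem: small pot implies exp} applied directly to the union-of-matchings graph (the remark following that lemma), and the $1/\phi$ embedding congestion comes from the $1/\phi$-scaled capacities per round combined with the reparameterization built into \cref{lem: cut-matching}. Your explicit inclusion of the two grafting flows in $W$ is a detail the paper's own proof leaves implicit, but it is consistent with (and arguably a more careful account of) the intended construction.
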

\begin{proof}
The witness $(W, \Pi)$ is constructed throughout the algorithm implicitly. Indeed, in each matching step, we find a flow path decomposition of the computed flow. For each path $P$ found between source $u$ and sink $v$, we add the edge $(u,v)$ to $W$ with $\bc_W(u,v) = \bc(P)$ and $\Pi(u,v) = P$. 

By choice of parameters of the cut-matching game, $\Pi$  embeds with congestion at most $1/\phi$. All that remains is to show that $A \subset V(W)$ is a $(1/4, \bd)$-near expander in $W$. But, the proof of this exactly follows the proof of~\cref{cla: small pot struct}, except now the flow paths are embedded directly into $W$ instead of into $G$, so we incur a congestion of at most $4$. 
\end{proof}

We assume that we have that $W$ is a graph defined on vertex set $V$ such that $A$ is a $0.1$-near expanding and $\Pi:E(W)\to \mathcal{P}(G)$ is an embedding of $W$ into $G$ with congestion $10/\phi$. Define a flow problem on $G[A]$ as follows. For each embedding path $\Pi(e)$ for some $(u,v)=e\in W$ which crosses the boundary $E(A,V\setminus A)$ of $A$, we add a source of $100 \cdot \bc_W(e)$ units on the endpoints $u$ and $v$ if they lie in $A$. We then add a sink of $\bd(v)$ units at each vertex $v\in A$ and set the capacity of each edge $e$ to be $200\cdot \bc(e)/\phi$. We now prove that if the flow problem is feasible, $A$ is a $(\Omega(\phi), \bd)$-out expander.

\begin{claim}
    Suppose $A\subseteq V$ is a $(\phi, \bd)$-near expander with witness $(W,\Pi)$. If the flow problem defined above has a feasible routing, then $G[A]$ is a $(\phi/10^7, \bd)$-out expander. \label{cl:flow-instance}
\end{claim}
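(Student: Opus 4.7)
The plan is to fix any cut $S \subseteq A$ with $\bd(S) \le \bd(A \setminus S)$ and lower-bound $\delta_{G[A]}(S, A \setminus S)$ by combining two ingredients: the $0.1$-near-expansion of $A$ in the witness $W$, and flow conservation applied to a feasible routing of the flow instance.

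First, I would invoke the near-expansion of $A$ in $W$ to obtain $\delta_W(S, V \setminus S) \ge 0.1\, \bd(S)$, and partition these outgoing $W$-edges (by $\bc_W$-weight) according to whether their embedding paths stay in $A$ or not: let $Y$ be the total weight of edges whose embedding is contained in $A$, and $Z$ be the total weight of edges whose embedding visits $V \setminus A$. Each path counted by $Y$ is a directed $u$-to-$v$ path inside $G[A]$ with $u \in S$ and $v \in A \setminus S$, so it uses at least one edge of $E_{G[A]}(S, A \setminus S)$; the congestion bound $10/\phi$ then yields $Y \le (10/\phi)\, \delta_{G[A]}(S, A \setminus S)$. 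Each path counted by $Z$ crosses the boundary of $A$, so by the definition of the flow instance it contributes $100\, \bc_W(e)$ to the source at its endpoint in $S$ (which lies in $A$), giving $\mathrm{source}(S) \ge 100\, Z$.

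Next, I would invoke feasibility via flow conservation at the cut $S$: since the sink in $S$ has capacity $\bd(S)$ and inflow is nonnegative, the outflow across $E_{G[A]}(S, A \setminus S)$ is at least $\mathrm{source}(S) - \bd(S) \ge 100\, Z - \bd(S)$; combined with the edge capacity $200\, \bc_G/\phi$, this yields $\delta_{G[A]}(S, A \setminus S) \ge (\phi/200)(100\, Z - \bd(S))$. A simple case split on whether $Z \le 0.05\, \bd(S)$ closes the argument: if yes, then $Y \ge 0.05\, \bd(S)$ and the congestion inequality gives $\delta_{G[A]}(S, A \setminus S) \ge (\phi/200)\, \bd(S)$; if no, then $\mathrm{source}(S) \ge 5\, \bd(S)$ and the flow inequality gives $\delta_{G[A]}(S, A \setminus S) \ge (\phi/50)\, \bd(S)$. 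Either outcome is comfortably within the required $(\phi/10^7)\, \bd(S)$ bound.

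The main point of care, in my view, is the directional bookkeeping: the partition $Y, Z$ is taken over $W$-edges \emph{leaving} $S$, and flow conservation gives a lower bound on the \emph{outflow} from $S$. This is exactly what makes this instance certify out-expansion of $G[A]$, and it is why the paper handles in-expansion separately by running the same construction on the reversed graph. Everything else is a routine combination of the near-expander cut bound with the flow-conservation inequality, so I do not anticipate substantial technical difficulty beyond cleanly setting up these two complementary bounds.
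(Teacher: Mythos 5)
Your proof is correct and follows essentially the same approach as the paper: both arguments split the weight of $E_W(S,V\setminus S)$ into embedding paths that survive inside $G[A]$ (bounded via the congestion $10/\phi$) and paths that cross $\partial A$ (which inject at least $100Z$ source into $S$ that must exit across the cut after at most $\bd(S)$ is absorbed), and close with the same two-case split. The only difference is packaging — the paper phrases this as repairing the witness into a new out-expander $W'$ embedded in $G[A]$ and comparing against $\vol_{W'}$, whereas you bound $\delta_{G[A]}(S,A\setminus S)$ directly against $\bd(S)$, which is a slightly cleaner route to the same constants.
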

\begin{proof}
    Let $\Bf$ denote the feasible routing for the flow problem. Using $\Bf$ and the witness $(W,\Pi)$ of near expansion, we explicitly construct a new witness $(W',\Pi')$ certifying that $G[A]$ is an $(\Omega(\phi), \bd)$-out expander. 

    Let $W'$ start out as $W$, and remove each edge $e=(u,v)$ such that $\Pi(e)$ crosses the boundary $E(A,V \setminus A)$. Let $\Pi'$ start out as the restriction of $\Pi$ on $W'$. Now, observe that $W'$ embeds into $G[A]$ instead of into $G$. Unfortunately, $W'$ may not be an expander, so we don't immediately certify expansion of $G[A]$. To repair the witness $W'$, consider a path decomposition of $\Bf$. For each flow path $P$ in $\Bf$ between endpoints $u$ and $v$, add an edge $(u,v)$ to $W'$ and set the capacity of $(u,v)$ to be the capacity of $P$. The embedding $\Pi'(u,v)$ is then naturally defined as $P$. For the embedding $\Pi'(u,v)$ of the new edges $(u,v)\in W'$, at most $200\cdot \bc(e)/\phi$ flow is routed through each edge: since $W'$ and its embedding is constructed using $\Bf$, the new edges contribute at most $200/\phi$ toward the overall congestion. Since $W$ embeds into $G$ via $\Pi$ with congestion at most $10/\phi$, we have that $\Pi'$ has congestion at most $210/\phi$. 

    Next, we show that $W'$ is an 0.0001-out expander. Consider any cut $(S,A\setminus S)$ in $W'$ where $\vol_{W'}(S)\le \vol_{W'}(A\setminus S)$; we will show it is not 0.0001-out sparse. To reason about the size of the cut $(S,A\setminus S)$, consider the edges $E_W(S,V\setminus S)$. We know that $\delta_W(S,V\setminus S)\ge 0.1\cdot \vol_{W}(S)$ since $A$ is 0.1-near expanding in $W$. If less than half of the edges in $E_W(S,V\setminus S)$ were removed when constructing $W'$, then we claim that $(S,A\setminus S)$ is not a sparse cut in $W'$. 
    Indeed, this immediately implies that $\delta_{W'}(S, A \setminus S) =\delta_{W'}(S,V\setminus S)\geq \delta_W(S, V \setminus S)/2$. Furthermore, $\vol_{W'}(S)$ is not too much larger than $\vol_W(S)$: the difference is bounded by the total source and sink. The total source is at most $100 \cdot \vol_W(S)$ and the sink is at most $\deg_G(S)\le\vol_W(S)$, so $\vol_{W'}(S)\le 102\cdot\vol_W(S)$. Combining the two bounds, we have $$\delta_{W'}(S,A\setminus S)\ge \delta_{W}(S,V\setminus S)/2\ge 0.0001\cdot 102\cdot\vol_W(S)\ge0.0001\cdot\vol_{W'}(S),$$
    where the middle inequality follows since $W$ is a $0.1$-expander.
     
    Next, suppose that more than half of the edges in $E_W(S,V\setminus S)$ were removed when constructing $W'$. For each edge in $E_W(S,V\setminus S)$ which was removed, we add 100 units of source to the endpoint in $S$ and possibly the endpoint in $V\setminus S$, if it lies in $A$. Since more than half of the edges were removed, we know that the amount of source in $S$ is at least $50\cdot \delta_W(S,V\setminus S)\ge 5\cdot\vol_W(S)$; the inequality follows since $A$ is $0.1$-near expanding in $W$. The total amount of flow which can be absorbed by $S$ can be upper bounded by the sink in $S$, which is at most $\bd_{G}(S)\le \vol_W(S)$. Thus, in the feasible flow $\Bf$, at least $4\cdot\vol_W(S)$ units of flow must be sent across the cut $(S,A\setminus S)$, so $\delta_{W'}(S,A\setminus S)\ge 4\cdot \vol_W(S)$. Combining with the fact that $\vol_{W'}(S)\le 102\cdot\vol_W(S)$, which we have already proven above, we have that
    $$\delta_{W'}(S,A\setminus S)\ge 4\cdot\vol_W(S)\ge 0.01\cdot \vol_{W'}(S).$$ Thus, $W'$ is a $0.0001$-out expander on $A$ which embeds into $G[A]$ with congestion $210/\phi$ via $\Pi'$, certifying that $G[A]$ is a $(\phi/10^7, \bd)$-out expander.
\end{proof}

\subsection{Dynamic Flows: the Push-Pull-Relabel Framework}\label{sec:ppl}

For our trimming procedure, we will need a modified version of the dynamic flow algorithm from~\cite{SP24}, which maintains a flow under source increases and vertex removals. We give a new analysis for our version of the Push-Pull-Relabel algorithm, since the analysis in~\cite{SP24} is limited to unweighted graphs. Specifically, their analysis generalizes that of Dynamic Unit-Flow~\cite{saranurak2019expander,henzinger2020local} such that the runtime scales with the total source of the flow instance. For weighted graphs, this will scale with the total capacity of the edges, which may be very large. Our new analysis enables our trimming algorithm to have running time independent of the edge capacities.

Now, we present the dynamic flow algorithm. The algorithm follows the Push-Relabel framework, where a flow-level pair $(\Bf,\Bell)$ is maintained at all times, with the following properties:
\begin{enumerate}
    \item for all edges $e=(u,v)$ satisfying $\Bell(u)>\Bell(v)+1$, we have $\Bf(e)=\Bc(e)$.
    \item for all nodes $u$ satisfying $\Bell(v)\ge 1$, we have that the sink at $u$ is saturated, i.e., $\abs_{\Bf}(u) = \nabla(u)$.
\end{enumerate} 
Our algorithm needs to support the following operations:
\begin{enumerate}
    \item $\textsc{IncreaseSource}(v, \delta)$: where $\delta\in\mathbb{N}^{V}_{\ge0}$, we set $\Delta(v) \leftarrow \Delta(v) +\delta$; and
    \item $\textsc{RemoveVertices}(S)$: where $S\subseteq \tilde{V}$, we set $\tilde{V}\leftarrow \tilde{V} \setminus S$ (initially $\tilde{V}=V$),
\end{enumerate}
The difficulty arises in the $\textsc{RemoveVertices}(S)$ operation. It is possible that there are flow paths in $\Bf$ going from $S$ into $\tilde{V} \setminus S$, for which the source is removed. When $\textsc{RemoveVertices}(S)$ is called, this may cause some vertex $v$ to have more flow sent out than it has received, since some source is removed. In this case, we say that vertex $v$ has \textit{negative excess}. Formally, $\ex^-_{\Bf}(v)= \max(0, -\Bf(v) - \Delta(v))$, where $-\Bf(v)$ is the net flow out of vertex $v$ (flow out minus flow in). With the introduction of negative excess, we need to generalize the definition of a valid flow state.
\begin{definition}
    The pair $(\Bf,\Bell)$ is said to be a \textit{valid state} if for every vertex $v\in V$, we have (1) $0< \ex^-_{\Bf}(v)$ implies $\Bell(v)=0$ and (2) $0< \ex^+_{\Bf}(v)$ implies $\Bell(v)=h$.
\end{definition}

Motivated by this and the Push-Relabel framework, \cite{SP24} introduced the subroutine \textsc{PullRelabel} to deal with negative excess. Negative excess at a vertex $v$ means that there is a flow path starting at $v$, but $v$ doesn't have extra flow to send out. Informally, we can think that negative excess at $v$ means that $v$ is in debt. In order to resolve this negative excess, we need to find a starting point $w$ for this flow path which has extra flow to send out. In other words, we want to find a vertex $w$ which can pay for $v$'s debt. We find this new starting vertex by ``pulling'' the negative excess around. Suppose there is some node $u$ which can send flow to $v$ (i.e., $\Bc_{\Bf}(u,v)>0$ and $\Bell(u)=\Bell(v)+1$). Even if $u$ doesn't have positive excess, we allow $u$ to send flow to $v$. This will result in some of the negative excess at $v$ being ``pulled'' to $u$, hence the name \textsc{PullRelabel}. The goal is that through these pull operations, we can eventually find a vertex $w$ which gets charged for $v$'s debt. Intuitively, using pull operations, we move the negative excess to vertices with a higher level, which tend to have positive excess. When finally we pull the negative excess to a node with positive excess, the negative excess is resolved. \Cref{alg:valid-state} formalizes this idea in the \textsc{ValidState} data structure.

\begin{algorithm}[h]
	\caption{\textsc{ValidState}$(G = (V,E, \bc) ,\Delta_0,\nabla,h)$}
	\label{alg:valid-state}
	\fbox{
		\parbox{0.97\columnwidth}{
            \textsc{Init}()\\
            \tab $\tilde{V}\leftarrow V$, $\Bc\leftarrow \Bc$, $\Delta\leftarrow\Delta_0$, $\nabla\leftarrow\nabla$, $(\Bf,\Bell)\leftarrow(\mathbf{0},\mathbf{0})$\\
            \tab \textsc{PushRelabel}()\\
            
            \textsc{IncreaseSource}$(v, \delta)$\\
            \tab $\Delta(v) \leftarrow\Delta(v)+\delta$\\
            \tab \textsc{PushRelabel}()\\
            
            \textsc{RemoveVertices}$(S)$\\
            \tab $\tilde{V}\leftarrow\tilde{V} \setminus S$\\
            \tab \textsc{PullRelabel}()\\
            \tab \textsc{PushRelabel}()\\
            
		  \textsc{PushRelabel}()\\
		  \tab \textbf{while} $\exists u$ where $\Bell(u)<h$ and $\ex^+_{\Bf}(u)>0$:\\
            \tab \tab \textbf{if} $\exists (u,v)$ such that $\Bc_f(u,v)>0$ and $\Bell(u)=\Bell(v)+1$\\
            \tab \tab \tab Send $\min(\ex^+_{\Bf}(u),\Bc_f(u,v))$ units of flow from $u$ to $v$ \hfill \textcolor{gray}{// Pushes positive excess from $u$ to $v$.}\\ 
            \tab \tab \textbf{else}:\\
            \tab \tab \tab $\Bell(u)\leftarrow\Bell(u)+1$ \hfill \textcolor{gray}{// Relabel $u$}\\
            
            \textsc{PullRelabel}()\\
            		  \tab \textbf{while} $\exists v$ where $\Bell(v)>0$ and $\ex^-_{\Bf}(v)>0$:\\
            \tab \tab \textbf{if} $\exists (u,v)$ such that $\Bc_f(u,v)>0$ and $\Bell(u)=\Bell(v)+1$\\
            \tab \tab \tab Send $\min(\ex^-_{\Bf}(v),\Bc_f(u,v))$ units of flow from $u$ to $v$ \hfill \textcolor{gray}{// Pulls negative excess from $v$ to $u$.}\\
            \tab \tab \textbf{else}:\\
            \tab \tab \tab $\Bell(v)\leftarrow\Bell(v)-1$ \hfill \textcolor{gray}{// Relabel $v$}
	}}
\end{algorithm}

For our analysis, we introduce two vectors $\Bp_t$ and $\Bn_t$ indexed by $v\in V$ which respectively represent the total amount of positive and negative units of flow located at $v$. These functions characterize the positive and negative excesses: we have $\ex^+_{\Bf}(v)=\max(\bp_t(v)-\bn_t(v),0)$ and $\ex^-_{\Bf}(v)=\max(\bn_t(v)-\bp_t(v)-\nabla(v),0)$ for all $v\in V$ and all timesteps $t$. At initialization, we have $\bp_0(v)=\Delta_0(v)$ and $\bn_0(v)=\nabla(v)$ for each $v\in V$ and evolves based on the operation. If the $t$\textsuperscript{th} action is:
\begin{enumerate}
    \item a push of $\psi$ units from $u$ to $v$, then $\bp_{t+1}(u)=\bp_t(u)-\psi$, $\bp_{t+1}(v)=\bp_t(v)+\psi$;
    \item a pull of $\psi$ units from $v$ to $u$, then $\bn_{t+1}(u)=\bn_t(u)+\psi$, $\bn_{t+1}(v)=\bn_t(v)-\psi$;
    \item $\textsc{IncreaseSource}(v, \delta)$, then $\bp_{t+1}(v)=\bp_{t}(v)+\delta(v)$; and
    \item $\textsc{RemoveVertices}(S)$, then
    \begin{align*}
        \forall v\in \tilde{V}_{t+1}&:\Bp_{t+1}(v)=\Bp_{t}(v)+\Bf(v,S),\Bn_{t+1}=\Bn_t(v)+\Bf(S,v)\\
        \forall v\not\in \tilde{V}_{t+1}&:\Bp_{t+1}(v)=\Bp_{t}(v),\Bn_{t+1}=\Bn_t(v)
    \end{align*}
\end{enumerate}
Let $\Bp_T(v)$, $\Bn_T(v)$ be the states of the vectors at termination. We can now state our algorithm's guarantees:

\begin{lemma}
    Let $(G=(V,E, \bc),\Delta_0,\nabla, h)$ be a flow problem with $\nabla(v)\ge\deg_G(v)$ for $v\in V$ that undergoes updates $\textsc{IncreaseSource}(v, \delta)$ and $\textsc{RemoveVertices}(S)$. Let $k$ be the total number of $\textsc{IncreaseSource}$ calls. There exists a (deterministic) data structure $\textsc{ValidState}$ that explicitly and dynamically maintains a valid state $(\Bf,\Bell)$ for the current flow instance $(G[\tilde{V}],\Delta|_{\tilde{V}},\nabla|_{\tilde{V}}, h)$. The algorithm runs in total time $O((k+n)\log n +h\log{n}\cdot \sum_{v\in V}\frac{\Bn_T(v)}{\nabla(v)}\cdot \widetilde{\deg}_G(v))$.\label{lem:push-pull-relabel}
\end{lemma}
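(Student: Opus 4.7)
The plan is to establish correctness by verifying the push/pull invariants are maintained after each operation, then bound the running time via an amortized potential-function argument adapted to the weighted, dynamic setting.

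\textbf{Correctness.} After \textsc{IncreaseSource}$(v,\delta)$, only positive excess at $v$ is newly introduced, so a subsequent \textsc{PushRelabel} suffices to restore both invariants. After \textsc{RemoveVertices}$(S)$, paths in the current preflow $\Bf$ that originated in $S$ leave behind negative excess on vertices in $\tilde V\setminus S$; \textsc{PullRelabel} resolves this by propagating the debt upward to higher-level neighbors, and the subsequent \textsc{PushRelabel} mops up any positive excess created in the process. Both procedures terminate because vertex heights are integer-valued in $[0,h]$, and \textsc{PushRelabel}/\textsc{PullRelabel} only change heights monotonically. I would separately verify that a valid state for $(G[\tilde V],\Delta|_{\tilde V},\nabla|_{\tilde V},h)$ is achieved by checking that after termination every vertex with $\ex^+_{\Bf}(v)>0$ is at level $h$ and every vertex with $\ex^-_{\Bf}(v)>0$ is at level $0$.

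\textbf{Running time accounting.} I would implement each push, pull, and relabel via link-cut trees at amortized cost $O(\log n)$, following Goldberg--Tarjan. The $O((k+n)\log n)$ overhead absorbs initialization, the $k$ \textsc{IncreaseSource} calls, and $O(n\log n)$ for the at-most-$n$ vertex removals. Relabels contribute $O(nh\log n)$ total, since each vertex is relabeled $O(h)$ times in each direction; this fits under the main bound after noting that $\widetilde{\deg}_G(v)\ge 1$ absorbs the $n$ factor for any $v$ with $\Bn_T(v)/\nabla(v)\ge 1$, while isolated vertices contribute only $O(n\log n)$ to the overhead term. The bulk of the work is in the non-saturating pushes and pulls.

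\textbf{Bounding pushes and pulls.} The key is to define a potential
\[
\Phi = \sum_v \frac{\widetilde{\deg}_G(v)}{\nabla(v)}\Bigl(\Bell(v)\cdot \ex^+_{\Bf}(v) + (h-\Bell(v))\cdot \ex^-_{\Bf}(v)\Bigr),
\]
and show each non-saturating push of $\psi$ units from $u$ to $v$ decreases $\Phi$ by at least $\psi\cdot\widetilde{\deg}_G(u)/\nabla(u)$, with the symmetric statement for pulls. Each relabel increases $\Phi$ by at most $O(\widetilde{\deg}_G(v)\cdot\ex^{\pm}_{\Bf}(v)/\nabla(v))$, amortizing to $O(h\sum_v \widetilde{\deg}_G(v))$ across all relabels. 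Each unit of $\Bn$ injected at $v$—whether from an incoming pull or from a \textsc{RemoveVertices} call that leaves $v$ at the tail end of a now-orphaned flow path—increases $\Phi$ by at most $h\cdot\widetilde{\deg}_G(v)/\nabla(v)$, since at injection time $\Bell(v)$ is at most $h$. Summing over all injections and observing that the total $\Bn$ ever residing at $v$ is bounded by $\Bn_T(v)$ plus the $\Bn$ that eventually left $v$ (which is in turn charged to other vertices' $\Bn_T$ via telescoping across pulls), I obtain the target bound $h\sum_v (\Bn_T(v)/\nabla(v))\widetilde{\deg}_G(v)$ on non-saturating operations. Saturating pushes/pulls involving $v$ are limited to $O(h\,\widetilde{\deg}_G(v))$ and absorb into the main sum using $\nabla(v)\ge\deg_G(v)\ge 1$.

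\textbf{Main obstacle.} The main difficulty is handling the \textsc{RemoveVertices} step, which injects a burst of negative excess spread across many vertices at once; unlike the standard Push-Relabel analysis, sources and sinks are not introduced one unit at a time. The accounting works precisely because $\Bn$ only moves \emph{upward} under \textsc{PullRelabel} (from $v$ to a neighbor $u$ with $\Bell(u)=\Bell(v)+1$), so a $\Bn$-unit accrued at $v$ eventually either cancels against positive excess above or stays at $v$ and contributes to $\Bn_T(v)$. Making this formal requires carefully bounding the injected negative excess at \textsc{RemoveVertices} time by the path decomposition of $\Bf$, and verifying the telescoping identity across pulls. A secondary technical point is ensuring that the link-cut tree bookkeeping remains compatible with the dynamic vertex set $\tilde V$; I anticipate this requires maintaining the link-cut structure on $G$ (not $G[\tilde V]$) and treating removed vertices as inert sinks that disallow further pushes through them.
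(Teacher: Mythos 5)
There is a genuine gap, and it sits exactly where the difficulty of this lemma lies. Your accounting assumes, in two places, that the label dynamics behave as in standard Push--Relabel: you assert that ``each vertex is relabeled $O(h)$ times in each direction,'' and that saturating pushes/pulls at $v$ number $O(h\,\widetilde{\deg}_G(v))$. Neither holds here, because in the push--pull setting labels are \emph{not} monotone over the lifetime of the algorithm: a vertex can alternate between being relabeled up (while it has positive excess) and relabeled down (while it has negative excess) many times, so the per-vertex relabel count is $h$ times the number of such direction changes, not $O(h)$. The entire point of the paper's analysis is to bound this ``flip number'' $\Lambda(v)$ by $O(\Bn_T(v)/\nabla(v))$ — the argument being that each up-to-down flip forces at least $\nabla(v)$ fresh negative units to arrive at $v$, made rigorous via the monotonicity of $\min(\Bp_t(v),\Bn_t(v))$ and of $\min(\Bp_t(v)+\nabla(v),\Bn_t(v))$. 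This is where the factor $\sum_v \frac{\Bn_T(v)}{\nabla(v)}\widetilde{\deg}_G(v)$ in the runtime actually comes from; without it you have no bound on the number of relabels at all, and hence none on saturating pushes either (which are charged to relabels of their endpoints). Your proposal never supplies a substitute for this step.

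Separately, your proposed potential
\[
\Phi = \sum_v \frac{\widetilde{\deg}_G(v)}{\nabla(v)}\Bigl(\Bell(v)\, \ex^+_{\Bf}(v) + (h-\Bell(v))\, \ex^-_{\Bf}(v)\Bigr)
\]
does not have the claimed monotonicity. A non-saturating push of $\psi$ units from $u$ (level $\Bell(v)+1$) to $v$ changes $\Phi$ by at least $-\psi\frac{\widetilde{\deg}_G(u)}{\nabla(u)}(\Bell(v)+1)$ from $u$'s term but by up to $+\psi\frac{\widetilde{\deg}_G(v)}{\nabla(v)}\Bell(v)$ from $v$'s term; when $\widetilde{\deg}_G(v)/\nabla(v) > \widetilde{\deg}_G(u)/\nabla(u)$ and $\Bell(v)$ is large, $\Phi$ \emph{increases}. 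The per-vertex weights destroy the telescoping that makes the unweighted Goldberg--Tarjan potential work. The paper instead uses the unweighted potential $\sum_{v:\ex^+>0}\Bell(v) + \sum_{v:\ex^->0}(h-\Bell(v))$ (which counts active vertices, not excess mass) to bound non-saturating operations, and then removes the resulting extra factor of $h$ by observing that with link-cut trees a non-saturating push can only occur at the end of a chain following a \textsc{Link} or \textsc{IncreaseSource}, so their number is bounded by the number of cuts plus $O(n+k)$, with cuts in turn charged to saturating pushes, relabels, and removals. You would need to adopt both the flip-number lemma and this link/cut accounting (or a genuine replacement for each) for the stated bound to follow.
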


\begin{remark}
    For the dynamic push-relabel setting where there is no negative excess, we simply have $\bn_T(v)=\nabla(v)$ since $\bn_T(v)=\bn_0(v)$ is never updated. This gives a runtime of $k\log{n}+mh\log{n}$, where $m$ is the number of edges. This is what is used in previous work for capacitated expander decompositions (see the discussion in Section 4.1 of~\cite{saranurak2019expander}), and is needed later in cut-matching game algorithm. Our analysis of the push-pull-relabel framework strictly generalizes that of dynamic push-relabel, unlike the previous analysis which only applies to unweighted graphs \cite{SP24}.
\end{remark}

The remainder of the section gives the proof of the lemma. We discuss some (standard) implementation details to formalize the runtime of the algorithm. For each vertex $v\in V$, we maintain a linked-list $L[v]$ containing all non-saturated edges $(v,u)$ satisfying $\Bell(v)=\Bell(u)+1$. With this linked-list, each push operation can be implemented in time $O(1)$. We can maintain $L[v]$ for each vertex $v$ by spending $O(\widetilde{\deg}_G(v))$ time for each relabel and $O(1)$ time for each push. We also maintain an analogous linked-list $L'[v]$ for pull operations.

The main difficulty of analyzing push-pull-relabel is that the labels can go up and down, so it is non-trivial to bound the number of relabels. For each vertex $v$, let $\Lambda(v)$ denote the \textit{flip number} of $v$, defined as the number of times the labels of $v$ flipped from increasing to decreasing, or vice versa. Formally, let $0=s_1<r_1<s_2<\ldots$ (either ending with $s_\lambda=T$ or $r_\lambda=T$) be a minimal partition of time such that $\Bell_t(v)\le \Bell_{t+1}(v)$ for $s_{i}\le t\le r_{i}$ and $\Bell_t(v)\ge \Bell_{t+1}(v)$ for $r_{i}\le t\le s_{i+1}$. Then $\Lambda(v)=2\lambda-2$ if the partition ends with $s_\lambda$ and $\Lambda(v)=2\lambda-1$ if the partition ends with $r_\lambda$. The notation $\Lambda(v)$ is chosen to be reminiscent of the graph of $\Bell_t(v)$ for $t \in \{0,1, \ldots, T\}$. See~\cref{fig: flip-diag}. Note that although the $s_i$'s, $r_i$'s and $\lambda$ technically depend on $v$, the associated $v$ will always be clear from context, so we suppress the connection for notational simplicity. We now bound the flip number of $v$.

\begin{figure}[h]
    \centering
    \input{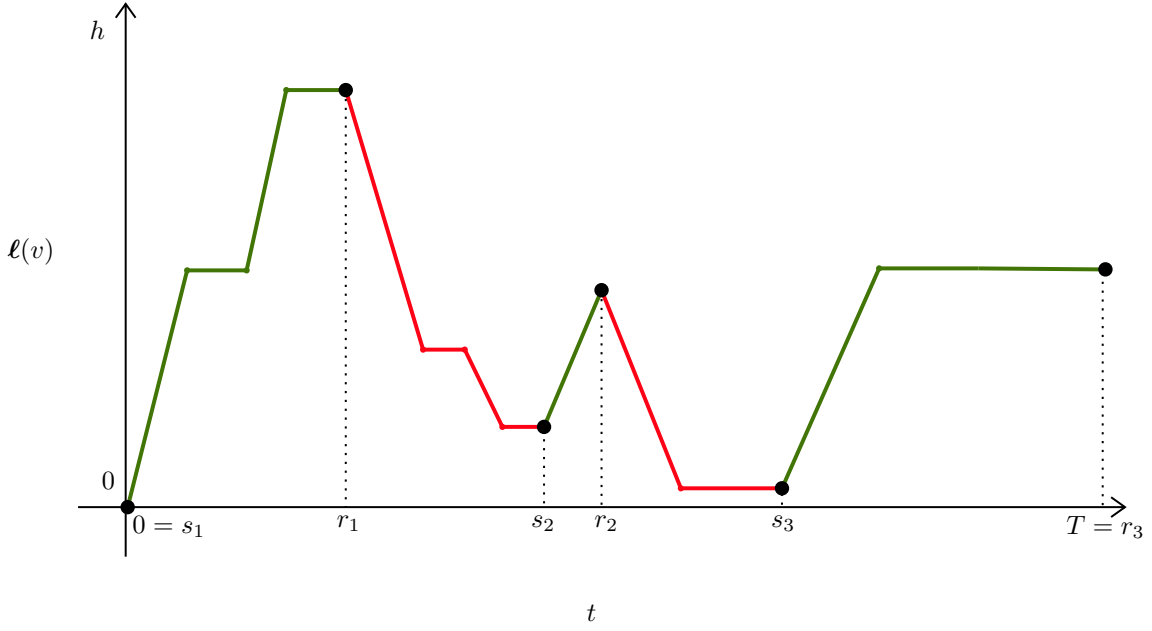}
    \caption{A graph of the evolution of $\Bell(v)$ over time. The green regions are time intervals $s_i \leq t \leq r_i$ where the $\Bell(v)$ is non-decreasing. The red regions are $r_i \leq t \leq s_{i+1}$ where $\Bell(v)$ is non-increasing. The partition shown is minimal.}
    \label{fig: flip-diag}
\end{figure}

Informally, when a vertex $v$'s label changes from increasing to decreasing, this means that it changed from having positive excess to having negative excess. This implies that at least $\nabla(v)$ units of additional negative excess was added to $v$, so $\Bn_t(v)$ increases by at least $\nabla(v)$. Thus, the flip number should be upper bounded $\Bn_T(v)/\nabla(v)$. However, the claim does not immediately follow because the negative excess may later leave $v$, so $\Bn_t(v)$ may decrease. We now give the formal proof.

\begin{claim}\label{cl:flip-number-bound}
    For all $v \in V$, $\Lambda(v)\le O(\Bn_T(v)/\nabla(v))$. 
\end{claim}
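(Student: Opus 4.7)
My plan is to combine an invariant on $\Bn(v)$ with a telescoping/amortization argument over the sequence of peaks and troughs of $\Bell(v)$.

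First, I will prove the invariant $\Bn_t(v)\ge\nabla(v)$ for all $t$ by induction on operations. The base case $\Bn_0(v)=\nabla(v)$ holds by initialization. RemoveVertices and pulls into $v$ can only increase $\Bn(v)$. A pull of $\psi$ units out of $v$ (to some higher-level $u$) is constrained by $\psi\le\ex^-_{\Bf}(v)=\Bn(v)-\Bp(v)-\nabla(v)$, so the post-pull value satisfies $\Bn(v)\ge\Bp(v)+\nabla(v)\ge\nabla(v)$. In particular $\Bn_T(v)\ge\nabla(v)$, so the claimed bound is never vacuous.

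Second, I will translate the flip count into a dynamical statement. Just after each peak $r_i$, the label of $v$ is about to be decreased in PullRelabel, which requires $\ex^-_{\Bf}(v)>0$, i.e.\ $\Bn_{r_i^+}(v)>\Bp_{r_i^+}(v)+\nabla(v)$. Just after each trough $s_i$, the label of $v$ is about to be increased in PushRelabel, which requires $\ex^+_{\Bf}(v)>0$, i.e.\ $\Bp_{s_i^+}(v)>\Bn_{s_i^+}(v)$. Consequently $\Bn(v)-\Bp(v)$ must increase by more than $\nabla(v)$ between each trough $s_i$ and the subsequent peak $r_i$, and symmetrically decrease by more than $\nabla(v)$ between each peak $r_i$ and the next trough $s_{i+1}$.

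Third, I will charge these oscillations against $\Bn_T(v)$ using the monovariant $F_t(v):=\Bn_t(v)+D^{[0,t]}_{\Bn}(v)$, where $D^{[0,t]}_{\Bn}(v)$ is the cumulative decrease of $\Bn(v)$ up through time $t$. By construction $F$ is non-decreasing, $F_0(v)=\nabla(v)$, and $F_T(v)=\Bn_T(v)+D_{\Bn}(v)$; moreover $F$ grows by exactly the inflow of $\Bn$ into $v$ (via pulls in and RemoveVertices contributions $\Bf(S,v)$). Noting that $\Bp(v)$ can only decrease via pushes out, which require $\Bp(v)>\Bn(v)$ and hence can only drive $\Bn(v)-\Bp(v)$ back up to (but not above) $0$, the jump of $\Bn(v)-\Bp(v)$ across the ``no-activity'' window $[0,\nabla(v)]$ on each trough-to-peak segment must be fueled by genuine inflow of $\Bn$ to $v$. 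This will give that $F$ grows by at least $\nabla(v)$ between consecutive peaks, so if there are $K$ peaks then $K\cdot\nabla(v)\le F_T(v)-F_0(v)=\Bn_T(v)-\nabla(v)+D_{\Bn}(v)$.

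Finally, I will bound $D_{\Bn}(v)\le O(\Bn_T(v))$. This is the main obstacle: I need to rule out a scenario where $\Bn$ repeatedly arrives at $v$ and is then pulled out, producing arbitrarily many flips while keeping $\Bn_T(v)$ bounded. To handle this, I plan to exploit the fact that pulls send $\Bn$ strictly upward in the level hierarchy together with the invariant $\Bn_t(v)\ge\nabla(v)$: once $\Bn$ leaves $v$ via a pull, it resides at a strictly higher level and can only be sent back to $v$ via a subsequent relabel cycle, which itself contributes to $\Lambda$. Using this structural constraint to pair decrements with subsequent re-entries, I expect to show $D_{\Bn}(v)\le O(\Bn_T(v))$. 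Combined with $\Lambda(v)\le 2K+O(1)$, this gives $\Lambda(v)\le O(\Bn_T(v)/\nabla(v))$, as claimed.
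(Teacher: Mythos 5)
Your steps 1--3 are consistent with the paper's starting point: like the paper, you observe that a relabel-up at time $t$ forces $\Bn_t(v)\le\Bp_t(v)$ and a relabel-down forces $\Bn_t(v)\ge\Bp_t(v)+\nabla(v)$, so each flip cycle requires an oscillation of $\Bn(v)-\Bp(v)$ of width $\nabla(v)$. But the argument does not close, and the gap is exactly where you flag it.

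The problem is that your accounting charges each flip against \emph{gross} inflow of negative units into $v$ (your $F_T-F_0$ equals the total inflow, and you then need $D_{\Bn}(v)=O(\Bn_T(v))$, equivalently total inflow $O(\Bn_T(v))$). That bound is false in general: negative excess can repeatedly arrive at $v$ and be pulled out again (each pull out only stops at $\Bp(v)+\nabla(v)$, not at the amount that arrived), so the cumulative decrease $D_{\Bn}(v)$ and the gross inflow can both be arbitrarily large compared to $\Bn_T(v)$. The sketch you give for rescuing this (pairing departures with later re-entries via relabels at higher levels) bounds $D_{\Bn}(v)$ in terms of relabels of \emph{other} vertices and is circular with respect to the quantity being bounded. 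What is actually needed --- and what the paper proves --- is that the $\nabla(v)$ gained per flip cycle is \emph{permanently retained} in $\Bn(v)$ itself, so that no control on the cumulative decrease is required. The paper gets this from two monovariants: $\min(\Bp_t(v),\Bn_t(v))$ is non-decreasing (only excess units are ever moved away), and $\min(\Bp_t(v)+\nabla(v),\Bn_t(v))$ is non-decreasing (negative units leave $v$ only while $\Bn>\Bp+\nabla$). Chaining these between a relabel-up at $t_1$, the following relabel-down at $t_2$, and the next relabel-up at $t_3$ gives $\Bn_{t_1}(v)\le\Bp_{t_2}(v)$ and $\Bp_{t_2}(v)+\nabla(v)\le\Bn_{t_3}(v)$, hence a ratchet $\Bn_{t_3}(v)\ge\Bn_{t_1}(v)+\nabla(v)$; iterating and using monotonicity of $\min(\Bp,\Bn)$ once more yields $\Bn_T(v)\ge(\Lambda(v)/2-1)\nabla(v)$. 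To repair your proof you should replace the monovariant $F_t$ by these two minima; as written, step 4 is not a technical loose end but a step that cannot be carried out.
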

\begin{proof}
    Since the partition is minimal, we are guaranteed that $\Bell_{s_i}(v)<\Bell_{r_i}(v)$ and $\Bell_{r_i}(v)> \Bell_{s_{i+1}}(v)$ for each $i\in[\Lambda(v)]$. 
    The former implies that $v$ was relabeled up at some iteration $t_1\in[s_i,r_i]$ and $t_3\in[s_{i+1},r_{i+1}]$, implying that $\Bn_{t_1}(v)\le \Bp_{t_1}(v)$ and $\Bn_{t_3}(v)\le \Bp_{t_3}(v)$ for such a $t_1,t_3$. The latter similarly implies that $v$ was relabeled down at some iteration $t_2\in[r_i,s_{i+1}]$, implying that $\Bp_{t_2}(v)+\nabla(v)\le \Bn_{t_2}(v)$ for such a $t_2$.
    Since we are only moving (via pushes/pulls) \emph{excess} positive and negative units, the function $\min(\Bp_{t}(v),\Bn_{t}(v))$ is non-decreasing. In particular, we have
    \begin{align*}
         \min(\Bp_{t_1}(v),\Bn_{t_1}(v))\le \min(\Bp_{t_2}(v),\Bn_{t_2}(v)).
    \end{align*}
    We know $\Bn_{t_1}(v)\le \Bp_{t_1}(v)$, and $\Bp_{t_2}(v)+\nabla(v)\le \Bn_{t_2}(v)$, so the left hand side is $\Bn_{t_1}(v)$ and right hand side is $\Bp_{t_2}(v)$, implying that $\Bn_{t_1}(v)\le \Bp_{t_2}(v)$.
    
    Since we only move away negative units when $\Bn_t(v)>\Bp_t(v)+\nabla(v)$, the function $\min(\Bp_t(v)+\nabla(v),\Bn_t(v))$ is also non-decreasing. In particular, we have
    \begin{align*}
        \min(\Bp_{t_2}(v)+\nabla(v),\Bn_{t_2}(v))\le \min(\Bp_{t_3}(v)+\nabla(v),\Bn_{t_3}(v)).
    \end{align*}
    We know that $\Bp_{t_2}(v)+\nabla(v)\le \Bn_{t_2}(v)$ and $\Bn_{t_3}(v)\le \Bp_{t_3}(v)$, so the left hand side is $\Bp_{t_2}(v)+\nabla(v)$ and right hand side is $\Bn_{t_3}(v)$, implying that $\Bp_{t_2}(v)+\nabla(v)\le \Bn_{t_3}(v)$. 

    Combining the two inequalities we concluded gives that 
    \begin{align*}
        \Bn_{t_3}(v)\ge \Bp_{t_2}(v)+\nabla(v)\ge \Bn_{t_1}(v)+\nabla(v).
    \end{align*}
    Inducting on the index of the partition, we know that $\Bn_{t^*}(v)\ge (i-1)\cdot \nabla(v)$ for some $t^*\in[s_i,r_i]$. Since for this $t^*$, we know $\Bp_{t^*}(v)\ge \Bn_{t^*}(v)$, we have that $\Bp_{t^*}(v)\ge (i-1)\cdot\nabla(v)$ as well. Finally, since $\min(\Bp_t(v),\Bn_t(v))$ is a non-decreasing function, we have that $\min(\Bp_T(v),\Bn_T(v))\ge (\lambda-1)\cdot\nabla(v)\ge (\Lambda(v)/2-1)\cdot\nabla(v)$. Hence, we can conclude $\Lambda(v)\le \min(\Bp_T(v),\Bn_T(v))/O(\nabla(v))\le O(\Bn_T(v)/\nabla(v))$.
\end{proof}

We use the bound on the flip number to bound the runtime of the pushes, pulls, and relabels.

\begin{claim}\label{cl:relabel-bound}
    The total number of relabels is upper bounded by $O\left(h\cdot \sum_{v\in V}\frac{n_T(v)}{\nabla(v)} \right)$. Hence, the total runtime of all relabels is upper bounded by $O\left(h\cdot \sum_{v\in V}\frac{n_T(v)}{\nabla(v)}\cdot \widetilde{\deg}_G(v)\right)$.
\end{claim}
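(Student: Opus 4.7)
The plan is to combine the flip-number bound of~\cref{cl:flip-number-bound} with the fact that within any monotone phase of a vertex's label, at most $h$ relabels can occur. First, I would fix a vertex $v$ and examine the canonical partition $0 = s_1 < r_1 < s_2 < r_2 < \cdots$ already introduced, on which $\Bell_t(v)$ is non-decreasing on each $[s_i, r_i]$ and non-increasing on each $[r_i, s_{i+1}]$. By definition of $\Lambda(v)$, there are at most $\Lambda(v)+1$ such monotone segments. Since every relabel changes $\Bell(v)$ by exactly $\pm 1$ (relabels in \textsc{PushRelabel} increment the label, and relabels in \textsc{PullRelabel} decrement it), and $\Bell(v) \in [0,h]$ at all times, each monotone segment can contain at most $h$ relabels of $v$.

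Combining these observations, the total number of relabels of $v$ over the entire run of the algorithm is at most $h \cdot (\Lambda(v)+1)$. Plugging in $\Lambda(v) \le O(\Bn_T(v)/\nabla(v))$ from~\cref{cl:flip-number-bound} and summing over all vertices yields
\[
\sum_{v \in V} (\text{\# relabels at } v) \;\le\; O\!\left(h \cdot \sum_{v \in V} \frac{\Bn_T(v)}{\nabla(v)}\right),
\]
where the additive $O(hn)$ coming from the ``$+1$'' per vertex is absorbed into the $(k+n)\log n$ overhead term in the overall lemma.

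For the runtime conclusion, I would then recall the data structure used to implement the algorithm: for each vertex $v$, the adjacency list $L[v]$ (and its pull-analogue $L'[v]$) of admissible out-edges must be rebuilt every time $v$ is relabeled, and this rebuild costs $O(\widetilde{\deg}_G(v))$ time. Multiplying the per-vertex relabel bound by $\widetilde{\deg}_G(v)$ and summing gives the claimed total runtime $O(h \cdot \sum_{v \in V} \tfrac{\Bn_T(v)}{\nabla(v)} \cdot \widetilde{\deg}_G(v))$. I do not expect any substantive obstacle here: the only subtle point is the already-established flip-number bound, which has been handled in~\cref{cl:flip-number-bound}; everything else is a direct counting argument and a standard amortized-cost accounting for \textsc{PushRelabel}-style algorithms.
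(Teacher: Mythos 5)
Your proposal is correct and follows essentially the same argument as the paper: each of the at most $\Lambda(v)+1$ monotone segments of $\Bell(v)$ contains at most $h$ relabels, the flip number is controlled by \cref{cl:flip-number-bound}, and each relabel of $v$ is charged $O(\widetilde{\deg}_G(v))$ for rebuilding $L[v]$ and $L'[v]$. The only tiny imprecision is where you absorb the additive $O(hn)$ from the ``$+1$'' per vertex — it is not covered by the $(k+n)\log n$ term, but it is absorbed directly into the main term since $\Bn_T(v)\ge\nabla(v)$ for every $v$ (pulls never drive $\Bn_t(v)$ below $\nabla(v)$), so $\sum_v \Bn_T(v)/\nabla(v)\ge n$.
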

\begin{proof}
    Each relabel of vertex $v$ is charged a runtime of $\widetilde{\deg}(v)$, so it suffices to prove that the number of times each vertex $v$ is relabeled is upper bounded by $h\cdot n_T(v)/\nabla(v)$. Since between any two flips in the direction of relabeling, there can be at most $h$ relabels, the total number of relabels for vertex $v$ is upper bounded by $h\cdot\beta(v)\le O(h\cdot n_T(v)/\nabla(v))$, as desired.
\end{proof}

To bound the number of pushes and pulls, we consider whether a push/pull saturates some edge. We say that a push/pull saturates an edge if after sending flow along edge $(u,v)$, we have that $\Bf(u,v)=\bc(u,v)$. We call such a push/pull a \textit{saturating} push/pull, and those that do not an \textit{unsaturating} push/pull. We will bound the number of saturating and unsaturating pushes/pulls separately.

\begin{claim} \label{cl:sat-bound}
    The total number of saturating pushes and pulls, and the total runtime charged to saturating pushes and pulls, is at most $O\left(h\cdot \sum_{v\in V}\frac{n_T(v)}{\nabla(v)}\cdot \widetilde{\deg}_G(v)\right)$.
\end{claim}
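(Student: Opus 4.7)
The plan is to bound, for each edge $(u,v)\in E$, the number of saturating pushes and pulls ever performed on it in terms of the flip numbers $\Lambda(u)$ and $\Lambda(v)$, and then to invoke Claim~\ref{cl:flip-number-bound} to express this in terms of $\Bn_T$. First I would fix an edge $(u,v)$ and consider the sequence of saturating operations that send flow in the direction $u\to v$ (whether arising as a push or a pull). Each such operation requires $\Bell(u)=\Bell(v)+1$ and immediately drives the residual capacity $\Bc_f(u,v)$ to zero. Before another saturating operation can be performed on $(u,v)$ in the same direction, $\Bc_f(u,v)$ has to be restored, which can only happen if flow is subsequently sent along the residual edge $(v,u)$. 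Inspecting the push and pull rules of \textsc{ValidState}, any such reverse movement of flow requires $\Bell(v)=\Bell(u)+1$ at the moment it occurs. Therefore, between consecutive saturating operations on $(u,v)$ in the direction $u\to v$, the signed quantity $\Bell(u)-\Bell(v)$ must travel from $+1$ down to some value $\le -1$ and back up to $+1$, contributing at least $4$ to the total variation of $\Bell(u)-\Bell(v)$ over time. A symmetric statement holds for operations in the direction $v\to u$.

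Next I would control this total variation. By the triangle inequality, $\mathrm{TV}\bigl(\Bell(u)-\Bell(v)\bigr)\le \mathrm{TV}(\Bell(u))+\mathrm{TV}(\Bell(v))$. For any vertex $w$, the definition of $\Lambda(w)$ partitions the timeline into at most $\Lambda(w)+1$ monotone intervals of $\Bell(w)$; since $\Bell(w)\in[0,h]$ throughout, each such interval contributes at most $h$ to the total variation, giving $\mathrm{TV}(\Bell(w))\le h(\Lambda(w)+1)$. Combining with the previous paragraph, the number of saturating operations on the edge $(u,v)$ (in either direction) is at most $O(h(\Lambda(u)+\Lambda(v)+1))$.

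Summing over edges gives
\[
\sum_{(u,v)\in E} O\bigl(h(\Lambda(u)+\Lambda(v)+1)\bigr) \;=\; O\!\left(h\sum_{v\in V}\Lambda(v)\,\widetilde{\deg}_G(v) + h m\right).
\]
To collapse this into the stated form, I would observe that $\Bn_t(v)\ge \nabla(v)$ for every $t$: this holds at $t=0$ by initialization, $\textsc{RemoveVertices}$ only increases $\Bn$, and a pull on $v$ decreases $\Bn_t(v)$ by at most $\mathrm{ex}^-_{\Bf}(v)=\Bn_t(v)-\Bp_t(v)-\nabla(v)$, which leaves $\Bn_{t+1}(v)\ge \Bp_t(v)+\nabla(v)\ge \nabla(v)$. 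Consequently $\Bn_T(v)/\nabla(v)\ge 1$, so $hm\le h\sum_v(\Bn_T(v)/\nabla(v))\,\widetilde{\deg}_G(v)$, and by Claim~\ref{cl:flip-number-bound} the first term is $O(h\sum_v(\Bn_T(v)/\nabla(v))\,\widetilde{\deg}_G(v))$ as well. For the runtime accounting I would note that using the linked lists $L[v]$ and $L'[v]$ each saturating push or pull is implemented in $O(1)$ time, so the runtime charged to saturating operations is proportional to their count.

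The main obstacle is the structural argument in the first paragraph: showing that \emph{any} mechanism for restoring $\Bc_f(u,v)$ between two consecutive saturating $u\to v$ operations necessarily forces the sign of $\Bell(u)-\Bell(v)$ to flip. Because both pushes and pulls can send flow and both can restore residual capacity, one has to check uniformly that the push/pull precondition $\Bell(\text{sender})=\Bell(\text{receiver})+1$ applied to the residual edge $(v,u)$ yields $\Bell(v)\ge \Bell(u)+1$, so that the desired variation lower bound is robust across operation types. Once this is in place, the remaining steps are purely bookkeeping.
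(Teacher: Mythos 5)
Your proof is correct and follows essentially the same route as the paper: between consecutive saturating operations on an edge the level difference of its endpoints must flip sign, so saturating pushes/pulls are charged to level changes (relabels) of the endpoints, which are bounded by $h$ times the flip number and hence by Claim~\ref{cl:flip-number-bound}. Your total-variation phrasing is just a repackaging of the paper's relabel count, and your explicit absorption of the additive $O(hm)$ term via the invariant $\Bn_t(v)\ge\nabla(v)$ is a detail the paper leaves implicit but is handled correctly here.
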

\begin{proof}
    Suppose we have a saturating push or a saturating pull, sending flow along an edge $(u,v)$. At this time, we have $\Bell(u)=\Bell(v)+1$. In order for another saturating push along $(u,v)$ to occur, flow must be sent back along $(v,u)$ (in the residual graph). For flow to be sent along $(v,u)$, we would need $\Bell(v)=\Bell(u)+1$. This implies that between two saturating pushes/pulls along edge $(u,v)$, there must be at least two relabels on $u$ and $v$ (in total). Thus, we can charge the number of saturating pushes/pulls along an edge $(u,v)$ to the total number of relabels on its endpoints (up to constant factors). Since each vertex $v$ is charged $O(\widetilde{\deg}_G(v))$ time by each of its incident edges and the number of times $v$ is relabeled is upper bounded by $h\cdot n_T(v)/\nabla(v)$ by \Cref{cl:flip-number-bound}, the desired bound follows.
\end{proof}

Without appealing to link-cut trees~\cite{sleator1981data, goldberg1988pushrel}, we lose an extra factor of $h$ in our bound on the number of unsaturating pushes/pulls. For completeness, we complete the proof without link-cut trees and then sketch how they can be applied in a standard way to optimize the running time.
\begin{claim}
    The total number of unsaturating pushes and pulls, and the total runtime charged to unsaturating pushes and pulls, is $O\left(kh+nh+h^2\cdot \sum_{v\in V}\frac{n_T(v)}{\nabla(v)}\cdot \widetilde{\deg}_G(v)\right)$.    
\end{claim}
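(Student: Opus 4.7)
The plan is to extend the classical push-relabel potential argument to simultaneously track positive- and negative-excess vertices. I would define the combined potential
\[
\Phi \;:=\; \sum_{v \in \tilde V,\, \ex^+_{\Bf}(v) > 0} \Bell(v) \;+\; \sum_{v \in \tilde V,\, \ex^-_{\Bf}(v) > 0} \big(h - \Bell(v)\big).
\]
The key observation is that every unsaturating push from $u$ to $v$ drains $u$'s positive excess (losing $\Bell(u)$ from the first sum) and creates at most one new active vertex at $v$ at height $\Bell(u) - 1$, so $\Phi$ strictly decreases by at least one. A symmetric statement for the $(h - \Bell)$ term shows that each unsaturating pull also decreases $\Phi$ by at least one. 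Consequently, the total number of unsaturating operations is bounded by the total positive increase of $\Phi$ over the execution.

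I would then enumerate the remaining contributions to $\Phi$: (i) the initial potential is at most $nh$ since each vertex contributes at most $h$; (ii) each of the $k$ \textsc{IncreaseSource} calls can newly activate one vertex at height at most $h$, contributing $kh$ total; (iii) every relabel (up or down) adds exactly $1$, and \cref{cl:relabel-bound} gives $O(h \sum_v \Bn_T(v)/\nabla(v))$ total relabels, which is dominated by $O(h^2 \sum_v \Bn_T(v)/\nabla(v) \tdeg_G(v))$; and (iv) every saturating push/pull can create at most one new active or negatively-active vertex, contributing at most $h$ each, and \cref{cl:sat-bound} bounds the total count by $O(h \sum_v \Bn_T(v)/\nabla(v) \tdeg_G(v))$, hence $O(h^2 \sum_v \Bn_T(v)/\nabla(v) \tdeg_G(v))$ in total.

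The main obstacle is bounding the contributions from \textsc{RemoveVertices}$(S)$ calls, which can simultaneously activate or negatively-activate many vertices in $\tilde V \setminus S$. My plan is to charge each newly negatively-active vertex $v$ to an increase of at least $\nabla(v)$ in $\Bn(v)$: by the definition $\ex^-_\Bf(v) = \max(\Bn(v) - \Bp(v) - \nabla(v), 0)$, a vertex that becomes newly negatively active during a removal must see $\Bn$ jump by at least $\nabla(v)$ in that step, so the total count of such events across all removals is at most $\sum_v \Bn_T(v)/\nabla(v)$ and contributes $O(h \sum_v \Bn_T(v)/\nabla(v))$ to $\Phi$. For new positive-excess creations from removals, a symmetric charging against the accumulation of $\Bp$ works, since any unit of $\Bf(v, S)$ captured as new positive excess at $v$ was produced by a prior push into $S$ along an edge incident to $v$, which already consumed a unit of its own $\Phi$-budget in the earlier step. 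Summing all contributions yields the claimed $O(kh + nh + h^2 \sum_v \Bn_T(v)/\nabla(v) \tdeg_G(v))$ bound, and since each unsaturating push or pull runs in $O(1)$ time using the linked lists $L[v]$ and $L'[v]$, the runtime charged to them matches their count.
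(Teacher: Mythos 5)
Your overall route is the same as the paper's: the combined potential $\Phi=\sum_{v:\ex^+_{\Bf}(v)>0}\Bell(v)+\sum_{v:\ex^-_{\Bf}(v)>0}(h-\Bell(v))$, a unit decrease per unsaturating push/pull, and increases accounted to the $O(nh)$ initialization, $kh$ from \textsc{IncreaseSource}, $+1$ per relabel, and $+h$ per saturating push/pull, with the counts imported from the earlier claims. All of that matches the paper and is fine.

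The genuine gap is in your handling of \textsc{RemoveVertices}. Your charge for a newly negatively-active vertex is wrong: for $v$ to become negatively active you only need $\Bn(v)-\Bp(v)$ to \emph{cross} the threshold $\nabla(v)$, not for $\Bn(v)$ to increase by $\nabla(v)$. A vertex can sit exactly at $\Bn(v)=\Bp(v)+\nabla(v)$ (sink fully saturated, zero excess), in which case a removal adding any $\Bf(S,v)>0$ to $\Bn(v)$ activates it; the same vertex can be deactivated and reactivated repeatedly with only tiny increments to $\Bn(v)$ each time, so the number of such events is not bounded by $\Bn_T(v)/\nabla(v)$. (This is exactly why the flip-number argument in \cref{cl:flip-number-bound} has to sandwich a full \emph{relabel-up/relabel-down} cycle between the two monotone quantities $\min(\Bp,\Bn)$ and $\min(\Bp+\nabla,\Bn)$ — mere activation gives you nothing.) Your symmetric claim for positive activations is likewise unsubstantiated. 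For comparison, the paper simply asserts that removals never increase $\Phi$, which also glosses over these newly activated neighbors. A patch that does work: a vertex newly activated by $\textsc{RemoveVertices}(S)$ must have $\Bf(v,S)>0$ or $\Bf(S,v)>0$ and hence be adjacent to $S$; since the removed sets are disjoint, the total number of such activations over the whole execution is at most $\sum_t \tdeg_G(S_t)\le 2m$, each costing at most $h$, and $2mh\le h\sum_{v}\frac{\Bn_T(v)}{\nabla(v)}\tdeg_G(v)$ because $\Bn_T(v)\ge\Bn_0(v)=\nabla(v)$, so the contribution is absorbed into the claimed bound.
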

\begin{proof}
    Consider the following potential function $\Phi=\Phi_1+\Phi_2$, where
    \begin{align*}
        \Phi_1=\sum_{v:\text{ex}^{+}(v)>0}\Bell(v)\qquad\Phi_2=\sum_{v:\text{ex}^{-}(v)>0}(h-\Bell(v)).
    \end{align*}
    For each unsaturating push along edge $(u,v)$, note that $\Phi_2$ cannot increase since no labels are changed and negative excess on vertices cannot increase via a push. Since the push along $(u,v)$ is unsaturating, we know $\text{ex}^+(u)=0$ after the push, so $\Phi_1$ decreases by $\Bell(u)$. But the push may cause $\text{ex}^+(v)$ to change to 0 to positive, so $\Phi_1$ may increase by at most $\Bell(v)$. In total, $\Phi_1$ decreases by at least $\Bell(u)-\Bell(v)=1$, so $\Phi$ also decreases by at least $1$ every unsaturating push. Similarly, every unsaturating pull along edge $(u,v)$ doesn't change $\Phi_1$ and decreases $\Phi_2$ by at least 1.

    For each relabel in \textsc{PushRelabel}, $\Phi_1$ increases by 1 and $\Phi_2$ remains unchanged. For each relabel in \textsc{PullRelabel}, $\Phi_2$ increases by 1 and $\Phi_1$ is constant. By our bound on the total number of relabels in \Cref{cl:relabel-bound}, we know that relabels increase $\Phi$ by at most $O(h\cdot\sum_{v\in V}\frac{n_T(v)}{\nabla(v)})$ throughout the algorithm. For each saturating push along edge $(u,v)$, again $\Phi_2$ does not increase. But since the push along $(u,v)$ is saturating, we do not necessarily push away all positive excess from $v$, so we may still have $\text{ex}^+(u)>0$. We can still increase the excess at $v$ from $0$ to $\text{ex}^+(v)>0$, so we may increase $\Phi_1$ by $\Bell(v)\le h$. Similarly, a saturating pull along edge $(u,v)$ doesn't increase $\Phi_1$ and may increase by at most $h-\Bell(v)\le h$. Since there are at most $O(h\cdot\sum_{v\in V}\frac{n_T(v)}{\nabla(v)}\cdot \widetilde{\deg}_G(v))$ saturating pushes and pulls total by~\cref{cl:sat-bound}, this increases $\Phi$ by at most $O(h^2\cdot\sum_{v\in V}\frac{n_T(v)}{\nabla(v)}\cdot \widetilde{\deg}_G(v))$. Finally, each increment of the source at a vertex $v$ via \textsc{IncreaseSource} increases $\Phi_1$ by at most $\Bell(v)\le h$. This increases $\Phi$ by at most $kh$.

    At the beginning of the algorithm, $\Phi$ is upper bounded by $O(nh)$. Throughout the algorithm, $\Phi$ can only increase by at most $O(kh+h^2\cdot\sum_{v\in V}\frac{n_T(v)}{\nabla(v)}\cdot \widetilde{\deg}_G(v))$ via relabels, saturating pushes/pulls, and source increases. Furthermore, removing vertices doesn't change the labels of the vertices and can only remove some of the nodes with positive excess, so it never increases the potential. Since the potential is non-negative and each unsaturating push/pull decreases the potential by at least $1$, there can be at most $O(kh+nh+h^2\cdot\sum_{v\in V}\frac{n_T(v)}{\nabla(v)}\cdot \widetilde{\deg}_G(v))$ unsaturating pushes/pulls throughout the algorithm.
\end{proof}

\paragraph{Optimizing the running time using link-cut trees.} 
To optimize the running-time of push-pull relabel further, we appeal to the link-cut tree data structure.

\begin{lemma}[Link-cut trees~\cite{sleator1981data, goldberg1988pushrel}] \label{lem: link-cut}
There exists a data structure which maintains a collection of rooted trees $\mathcal{T}$ on $n$ vertices with integral edge weights $\bw$. Let $\pi(u)$ denote the path from $u$ to its root for each vertex $u$. The data structure supports each of the following operations, all in $O(\log n)$ update time: 
\begin{itemize}
    \item $\link(e)$: for $e = (u,v)$, $u$ a root, and $v$ not in the same tree as $u$, add the edge $e$ and make $v$ the parent of $u$ in the tree of $v$. 
    \item $\cut(e)$: remove the edge $e$ from $\mathcal{T}$.
    \item $\fmin(u)$: returns $(e, \bw(e))$ for edge $e = \arg \min_{e \in \pi(u)} \bw(e)$.  If $u$ is a root, returns nothing. In the case of ties, returns the edge closer to the root. 
    \item $\add(u, \psi)$: For each $e \in \pi(u)$, set $\bw(e) \leftarrow \bw(e) + \psi$.
    \item $\froot(u)$: return the root of the tree that $u$ belongs to.
\end{itemize}
\end{lemma}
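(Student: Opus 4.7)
This lemma is essentially the classical link-cut tree result of Sleator and Tarjan~\cite{sleator1981data}, extended with the standard augmentations used by Goldberg-Tarjan~\cite{goldberg1988pushrel} to support path-aggregate queries and lazy path updates. The plan is therefore to recall the construction, explain how each of the five operations is implemented, and verify that they all run in amortized $O(\log n)$ time.

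The plan is to use the following well-known construction. Decompose each rooted tree in $\mathcal{T}$ into a set of vertex-disjoint \emph{preferred paths}; each preferred path is stored in a balanced BST (e.g., a splay tree) keyed by depth along the path, so that the root of the tree is the deepest element. Individual preferred path trees are linked together via \emph{path-parent} pointers from the root of one splay tree to a vertex on another preferred path. The central primitive is $\textsc{Access}(v)$, which restructures the decomposition so that $v$ lies on the same preferred path as the tree root, and then splays $v$ to the top; a standard amortized analysis based on the splay tree access lemma shows $\textsc{Access}$ runs in amortized $O(\log n)$ time, and this cost dominates every other operation.

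Given $\textsc{Access}$, the five operations are implemented in the standard manner. $\froot(u)$ accesses $u$ and returns the deepest (leftmost or rightmost, depending on orientation) element of the resulting splay tree. $\link(e)$ for $e = (u,v)$ accesses both $u$ and $v$ and attaches $u$'s splay tree as a path-parent child of $v$. $\cut(e)$ accesses $v$, the lower endpoint of $e$, and splits the resulting splay tree to detach the subtree rooted at $v$. For $\fmin(u)$ and $\add(u, \psi)$, augment each splay-tree node with (i) the weight of its parent edge in the original tree, (ii) the minimum parent-edge weight over its splay subtree (together with a pointer to the attaining node), and (iii) a lazy \emph{add} value that will be pushed down on the next splay. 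These augmentations are maintained in $O(1)$ time per rotation in the usual way, so splaying still costs amortized $O(\log n)$. After $\textsc{Access}(u)$, the path $\pi(u)$ from $u$ to its root is represented by a single splay tree whose root is $u$, so $\fmin(u)$ returns the stored subtree minimum at $u$, and $\add(u,\psi)$ adds $\psi$ to the lazy value at $u$; tie-breaking in $\fmin$ in favor of the edge closer to the root is handled by preferring the rightmost (or deepest, depending on convention) minimizer when updating the subtree-min aggregate.

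The main technical content (and the only real obstacle) is the amortized analysis of the splay-based $\textsc{Access}$, together with verifying that the $\min$-and-$\add$ augmentation is compatible with splay rotations under lazy propagation. Both are classical: the first is the access lemma of Sleator-Tarjan, and the second is the standard path-aggregation trick used in many subsequent Push-Relabel implementations. Since the paper does not need any new twist beyond what is in~\cite{sleator1981data,goldberg1988pushrel}, the proof will conclude by invoking those references for the detailed analysis and noting that the five operations listed are each $O(1)$ calls to $\textsc{Access}$ plus constant-time local manipulations.
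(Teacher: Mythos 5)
Your proposal is correct and matches the paper's treatment: the paper offers no proof of this lemma beyond citing Sleator--Tarjan and Goldberg--Tarjan, and your sketch is precisely the standard preferred-path/splay-tree construction with $\min$-aggregate and lazy-add augmentation that those references contain. Nothing further is needed.
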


The algorithm maintains two link-cut tree data structures, $\mathcal{T}_+$ and $\mathcal{T}_-$, one for push and one for pull operations, with shared edge weights $\bw$. Equivalently, we have one data structure for handling the movement of positive excess and another for handling the movement of negative excess. Each is on vertex set $V$. 

We say that a vertex $v \in V$ is \textit{active} in $\mathcal{T}_+$ if $v$ has not been deleted (in a \textsc{RemoveVertices} step), $\ex_{\Bf}^+(v) > 0$ and $\Bell(v) < h$. Similarly, a vertex $v \in V$ is active in $\mathcal{T}_-$ if $v$ has not been deleted, $\ex_{\Bf}^-(v) > 0$ and $\Bell(v) > 0$. Throughout the algorithm, we maintain the following invariants for $\mathcal{T}_+$ and $\mathcal{T}_-$: 
\begin{enumerate}
    \item Active vertices in $\mathcal{T}_+$ are roots of trees in $\mathcal{T}_+$ and active vertices in $\mathcal{T}_-$ are roots of trees in $\mathcal{T}_-$. (Note: we do not require the converse, i.e., there may be roots that are themselves inactive.)
    \item Each node in $\mathcal{T}_+$ has at most one parent. The same holds for $\mathcal{T}_-$. 
    \item Let $e = (u,v)$ be an edge in $\mathcal{T}_+$ from child to parent. Then, $\Bell(u) = \Bell(v) + 1$. For $e' = (u',v')$ in $\mathcal{T}_-$ from child to parent, we have $\Bell(u') = \Bell(v') - 1$.
    \item $\bw(e) = \bw_{\Bf}(e)$, the residual capacity of the $e$ under the preflow $\Bf$. 
\end{enumerate}

We now sketch how to appropriately modify pushes, pulls, relabels, source increases, and vertex removals to accommodate $\mathcal{T}_+$ and $\mathcal{T}_-$. We modify the $\textsc{PushRelabel}$ as follows. Upon finding active vertex $u$ and admissible edge $(u,v)$, first we call $\link(u,v)$. Then, instead of just sending positive excess along the edge $(u,v)$, we call $\fmin(u)$, returning some pair, $(f, \bw_{\Bf}(f))$, where $f$ is the minimum capacity edge along the path from $u$ to the root of the tree containing $v$. We can then send positive excess from $u$ all the way to the root of the tree containing it by calling $\add(u, -\min(\ex^+_{\Bf}(u), \bw_{\Bf}(f)))$. Then, for edge $e'$ with $\bw_{\Bf}(e') = 0$, we call $\cut(e')$ in both $\mathcal{T}_+$ and $\mathcal{T}_-$. We then repeat this process until $u$ is again a root or $\ex^+_{\Bf} = 0$. Upon running out of admissible edges to push along, $u$ is relabeled. In addition to setting $\Bell(u) \leftarrow \Bell(u) + 1$, to maintain our link-cut tree invariants, we also call $\cut$ on every edge incident to $u$ in $G$ for both $\mathcal{T}_+$ and $\mathcal{T}_-$ (in particular, to preserve the third invariant).

The pull step in $\textsc{PullRelabel}$ is analogous, except in $\mathcal{T}_-$, and involving repeatedly sending negative excess to the root instead of positive excess. In particular, for active vertex $v$, upon finding an admissible edge $(u, v)$, we call $\link(u,v)$ and then repeatedly send negative excess from $v$ to its root, via $u$. This amounts to flow being sent from the root to $v$. Relabels also operate analogously, also cutting every edge incident to the relabeled vertex. 

To handle $\textsc{IncreaseSource}(v, \delta)$ operations, we also cut the at most one edge from $v$ to a parent in $\mathcal{T}_+$ (using our second invariant), since $v$ may now be active in $\mathcal{T}_+$.

To handle $\textsc{RemoveVertices}$ operations, we also cut each incident edge to the removed vertices in both $\mathcal{T}_-$ and $\mathcal{T}_+$.

Finally, we describe how to bound the total running time. The bounds on the number of saturating pushes/pulls and relabels are identical to before. However, we can prove a stronger bound on the number of unsaturating pushes/pulls; this is where our improvement in the running time comes from. Indeed, notice that unsaturating pushes/pulls only arise at the end of sequence of pushes/pulls following a $\link$ or $\textsc{IncreaseSource}$ operation. Hence, the number of unsaturating pushes/pulls is bounded by the total number of links and total number of times a vertex has its source increased, $k$. But, since we maintain a forest in each of $\mathcal{T}_+$ and $\mathcal{T}_-$, the number of total links is at most the number of total cuts plus $2n-2$. But, we can easily bound the number of total cuts: cuts come from saturating pushes, relabels, and calls to $\textsc{RemoveVertices}$. Combining these terms, we can conclude that the number of unsaturating pushes is at most $O(k + n + h\cdot \sum_{v\in V}\frac{n_T(v)}{\nabla(v)}\cdot \widetilde{\deg}_G(v) + m)$. Finally, accounting for the running time of calls to link-cut tree operations, the total running time is bounded by
\[
O((k + n) \log n + h \log n \cdot \sum_{v\in V}\frac{n_T(v)}{\nabla(v)}\cdot \widetilde{\deg}_G(v)), 
\]
as required.

\subsection{Trimming: Finding an Expander in a Near-Expander}\label{sec:trim-algorithm}

In this section, we prove our trimming algorithm.
\trimming*
On a high level, the algorithm maintains two flow instances via the push-pull-relabel framework, one on $G$ and one on $\cev{G}$ ($G$ with all edges reversed), to certify out- and in-expansion, respectively. At each iteration and for each of the flow instances, the dynamic flow algorithms will either output a feasible flow or a sparse cut. If both flow instances have a feasible flow, then we have successfully certified expansion for the subgraph. If one of the flow instances finds a sparse cut $S$, then we remove $S$ from the current subgraph $A$ in \emph{both} flow instances. This will also induce some new sources. We then use push-pull-relabel to dynamically find a flow in the new subgraph $A \setminus S$. We will show that the total volume of vertices cut off is bounded, implying that the expander which we end up finding is indeed large.

\begin{algorithm}[h]
	\caption{Trimming: finding a large expander}
	\label{alg:trimming}
	\fbox{
		\parbox{0.97\columnwidth}{
		  \textsc{Trim}$(G, A, W, \Pi, \phi)$ \\
            \tab Set $h=40000\log{nW}/\phi$ and $A_0=A$\\ 
		  \tab Initialize the data structure \textsc{ValidState} for the two flow instances\\
            \tab \textbf{While} one of the two flow instances is not feasible:\\
            \tab \tab Find a sparse cut $S_t\subseteq A_t$ via \Cref{cl:sparse-level-cut}\\
            \tab \tab Update the two flow states via \Cref{alg:updating-flow-instance}\\
            \tab \tab \textbf{If} $\bd(\bigcup_{j=0}^{t}S_t)\ge \bd(V)/(c_0\log(n)\log(nW))$\hfill \textcolor{gray}{// Early Termination}\\
            \tab \tab \tab Terminate the algorithm and return $(S_j)_{j=0}^{t}$ as the sequence of sparse cuts\\
            \tab Return $A_t$ as the certified expander, and $(S_j)_{j=0}^{t-1}$ as the sequence of sparse cuts
	}}
\end{algorithm}

Now, we can define our flow instance as in \Cref{sec:flow}, with parameter $\phi$. Let $(W,\Pi)$ be the witness obtained in the non-stop cut-matching game (\Cref{cl:witness-from-CMG}). For each $(u,v)=e\in E(W)$ such that its embedding path $\Pi(e)$ crosses the boundary $E(A,V\setminus A)$ of $A$, we add a source of $100 \cdot \bc_W(e)$ units on the endpoints $u$ and $v$ that are in $A$. We then add a sink of $\bd(w)$ units at each vertex $w \in A$ and set the capacity of each edge $e \in G[A]$ to be $200\cdot \bc_G(e)/\phi$. During the trimming process for the near expander $A_0=A$, let the current set be $A_t$; we will always update our flow instance on $A_t$ so that it is of the form from \Cref{sec:flow}.

For our two flow instances, we show that Push-Pull-Relabel either finds a feasible flow or we can find a sparse level cut $S=\{v:\Bell(v)\ge k\}$ for some $k$. This is a Push-Pull-Relabel analogue of~\Cref{lem: bdd ht push relabel} (or, e.g., Lemma B.6 of~\cite{saranurak2019expander}).

\begin{claim}\label{cl:sparse-level-cut}
     Let $(\Bf,\Bell)$ be the valid state maintained by Push-Pull-Relabel. We can either certify that the flow is feasible or find a level cut $S_t=\{v:\Bell(v)\ge k\}$ for some $k\ge1$ such that 
     \[
     \min(\delta_{G}(S_t, A_t \setminus S_t), \delta_{G}(A_t \setminus S_t, S_t)) = O(\phi \log (n) \log (nW) \bd(S_t)).
     \]
\end{claim}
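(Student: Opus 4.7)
My plan is to adapt the level-cut analysis of \Cref{lem: bdd ht push relabel} to the push-pull-relabel setting. First, if at termination the valid state $(\Bf, \Bell)$ satisfies $\ex^+_{\Bf}(V) = \ex^-_{\Bf}(V) = 0$, then by the valid-state invariant every vertex with $\Bell \ge 1$ has its sink saturated and no source remains unrouted, so $\Bf$ is a feasible routing and we are done. Otherwise, some vertex has positive excess (and hence $\Bell = h$) or negative excess (and hence $\Bell = 0$). I will give the argument assuming positive excess remains; the only-negative-excess case is symmetric, obtained by running the identical argument on the reverse flow instance (the same construction used in \Cref{sec:flow} to certify in-expansion), producing an $\{v:\Bell(v)\ge k\}$ cut by passing to the reversed level function $h-\Bell$.

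To locate the level cut I consider the flow network $H_\phi := G[A_t]$ with capacities scaled by $200/\phi$, and the subgraph $H_\text{con}$ consisting of edges between consecutive levels. The geometric-growth argument from \Cref{lem: bdd ht push relabel}, using the telescoping inequality $\vol_{H_\text{con}}(L_{\ge k-1}) \ge \vol_{H_\text{con}}(L_{\ge k}) + \delta_{H_\text{con}}(L_{\ge k}, L_{<k})$ together with the bound $\vol_{H_\phi}(V)\le O(\poly(nW)/\phi)$, implies that at most $O(\log(nW)/\phi)$ levels $k$ can satisfy $\delta_{H_\text{con}}(L_{\ge k}, L_{<k}) \ge \Omega(\phi)\cdot \vol_{H_\text{con}}(L_{\ge k})$. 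Since $h = 40000\log(nW)/\phi$ is strictly larger, there exists some $k\in[1,h]$ where $\delta_{H_\text{con}}(L_{\ge k}, L_{<k}) = O(\phi)\cdot \vol_{H_\phi}(L_{\ge k}) = O(1)\cdot \bd(L_{\ge k})$; I output this $L_{\ge k}$ as $S_t$. By the push-relabel admissibility invariant, every edge of $H_\phi$ whose endpoints differ in level by at least $2$ is saturated downhill, so $\delta_{H_\phi}(L_{\ge k}, L_{<k}) \le \Bf(L_{\ge k}, L_{<k}) + \delta_{H_\text{con}}(L_{\ge k}, L_{<k})$. Flow conservation, together with the fact that every $v\in L_{\ge k}$ has saturated sink (since $\Bell(v)\ge 1$) and that all remaining negative excess lives at level $0\notin L_{\ge k}$, then gives $\Bf(L_{\ge k}, L_{<k}) \le \Delta(L_{\ge k}) - \nabla(L_{\ge k}) = \Delta(L_{\ge k}) - \bd(L_{\ge k})$.

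The final ingredient, and the step where the $\log(n)\log(nW)$ factor enters, is bounding the total source $\Delta(L_{\ge k})$. Each unit of source is injected at endpoints of witness edges $e\in E(W)$ whose embedding path crosses $\partial A_t$, with magnitude at most $100\,\bc_W(e)$, so $\Delta(v)\le 100\deg_W(v)$. Since the witness $W$ produced by the non-stop cut-matching game (\Cref{cl:witness-from-CMG}) comes from $T = O(\log(n)\log(nW))$ rounds of matchings that each contribute at most $\bd(v)$ to the weighted degree of $v$, we have $\deg_W(v) = O(\log(n)\log(nW))\cdot \bd(v)$, and hence $\Delta(L_{\ge k}) = O(\log(n)\log(nW))\cdot \bd(L_{\ge k})$. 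Combining with the previous inequalities and rescaling by $\phi/200$ to convert from $H_\phi$ back to $G$ yields $\delta_G(L_{\ge k}, L_{<k}) = O(\phi\log(n)\log(nW))\cdot \bd(L_{\ge k})$, which upper bounds the desired minimum. The main technical obstacle is the careful book-keeping across the \textsc{IncreaseSource} and \textsc{RemoveVertices} operations performed during trimming: I need to verify that the above source bound holds for $\Delta$ accumulated throughout the run, not just at a single static instant, which follows because every source increment corresponds to a witness edge and the witness $W$ is fixed upfront.
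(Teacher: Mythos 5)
Your main branch (positive excess remains at level $h$) follows the paper's proof essentially verbatim: the ball-growing argument on $H_{\mathrm{con}}$ to find a level $k$ with small consecutive-level cut, the push-relabel invariant that level-skipping edges are saturated downhill, the bound $\Bf(L_{\ge k},L_{<k})\le \Delta(L_{\ge k})$ using saturated sinks on $L_{\ge k}$ and the fact that negative excess lives only at level $0$, and finally $\Delta(L_{\ge k})\le 100\deg_W(L_{\ge k})=O(\log n\log nW)\,\bd(L_{\ge k})$. That part is correct.

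The genuine gap is your case split. You treat ``only negative excess remains'' as an infeasibility case and propose to find a cut by a ``symmetric'' argument on the reversed instance, using the reversed level function $h-\Bell$. This is wrong on two counts. First, it produces a \emph{low}-level cut $\{v:\Bell(v)\le k'\}$, which is not of the required form $S_t=\{v:\Bell(v)\ge k\}$ with $k\ge 1$; the downstream trimming analysis (e.g., the bound $\bd(S_t)\le \bd(A_t)/100$ in the proof of \Cref{lem:trimming}) relies on $S_t$ consisting of vertices that have absorbed their entire sink, which holds precisely because $\Bell\ge 1$ on $S_t$. The claimed symmetry also does not exist: on the low side there is no analogue of the bound $\Delta\le 100\deg_W$ controlling how much flow can cross, since what accumulates there is negative excess from removed sources, a quantity controlled by an entirely different argument. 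Second, and more fundamentally, residual negative excess does not witness infeasibility at all. Negative excess records flow paths whose sources were deleted by \textsc{RemoveVertices}; once no positive excess remains (i.e., $L_h=\emptyset$), one certifies feasibility by taking a path decomposition of $\Bf$ treating negative excess as source and simply cancelling those paths, which only decreases flow on edges and saturates all genuine sources. So the correct dichotomy is ``$L_h=\emptyset$ $\Rightarrow$ feasible (after cancelling negative-excess paths)'' versus ``$L_h\ne\emptyset$ $\Rightarrow$ run your level-cut argument''; your proposal would fail to certify feasibility exactly in the situation your second branch is meant to handle.
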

\begin{proof}
We mimic the proof of~\cref{lem: bdd ht push relabel}. Let $H$ denote the current flow instance graph, and let $H_{con}$ be the edge subgraph of the flow instance, only retaining edges between consecutive levels. For $k \in \N$, let $L_k = \{v \,:\, \Bell(v) = k\}$.  If $L_h=\emptyset$, then no vertex has positive excess. Note that there can still be vertices with negative excess, so in this case $\Bf$ is not necessarily a feasible flow.  But, we can easily obtain a feasible flow from $\Bf$. Consider a path decomposition of the flow, treating negative excess as source. Then, simply remove the flow paths corresponding to negative excess sources. 

Now, for any $i$, an edge $(u,v)$ crossing the cut $(L_{\geq i}, L_{<i})$, one of the following cases must hold: (1) $\Bell(u)=i$ and $\Bell(v)=i-1$ or (2) $\Bell(u)-\Bell(v)>1$. Note that by the invariant of Push-Pull-Relabel, all edges of the second type must be saturated in the down-hill direction (i.e., $\Bf(u,v)=200\bc_G(u,v)/\phi$). 
 By the same ball-growing argument as in~\cref{lem: bdd ht push relabel}, we can find $k$ such that
\[
\delta_{H_{con}}(L_{\geq k}) \leq \frac{1}{10} \min( \bd(L_{< k}), \bd(L_{\geq k})).
\]
In particular, then by our invariant of Push-Pull-Relabel that edges skipping levels in the downward direction are saturated, we have that
\begin{align*}
\Bf(L_{\geq k}, L_{< k}) &\geq \delta_{H}(L_{\geq k}, L_{< k}) - \frac{1}{10} \min( \bd(L_{< k}), \bd(L_{\geq k}))  \\
&\geq \frac{200}{\phi} \min(\delta_{G}(L_{\geq k}, L_{< k}), \delta_{G}( L_{< k}, L_{\geq k})) - \frac{1}{10} \min( \bd(L_{< k}), \bd(L_{\geq k})).
\end{align*}
Note that here we use
\[
\delta_{H}(L_{\geq k}, L_{< k}) \geq \frac{200}{\phi} \min(\delta_{G}(L_{\geq k}, L_{< k}), \delta_{G}( L_{< k}, L_{\geq k}))
\]
as opposed to 
\[
\delta_{H}(L_{\geq k}, L_{< k}) \geq \frac{200}{\phi} \delta_{G}(L_{\geq k}, L_{< k}).
\]
since the flow instance we are considering could in fact be in $G$ with its edges reversed.

But, on the other hand,
\[
\Bf(L_{\geq k}, L_{< k}) \leq \Delta(L_{\geq k}) \leq \vol_W(L_{\geq k}) = O(\log (nW) \log (n) \bd(L_{\geq k})),
\]
where $W$ is the witness of near-expansion and the bound on its degrees follows from~\cref{cl:witness-from-CMG}. (Note that negative excess cannot contribute toward $\Bf(L_{\geq k}, L_{< k})$ since all nodes with negative excess are at level $0$.) Combining with the lower bound on $\Bf(L_{\geq k}, L_{< k})$ yields the proof of the first part of the claim.
\end{proof}

In between any two iterations of trimming, we update $(\Bf,\Bell)$ and the source based on the cut $S_t$ (for both flow instances). To update $(\Bf,\Bell)$, we simply take the restriction of the functions on $A_{t+1}=A_t\setminus S_t$; this is done by calling $\textsc{RemoveVertices}(S_t)$. To update the source, for each edge $e\in E(S_t,A_{t+1})$ on the cut $S_t$, we look at all edges $e'\in E(W)$ for which the embedding path $\Pi(e')$ uses $e$, and add a source at the endpoints of $e'$ which are in $A_{t+1}$ of 100 times the capacity of embedding path.\footnote{Implementing this step efficiently is actually a major technical detail in capacitated graphs and is the focus of~\cref{sec: fast-path-decomp}, but we ignore this issue for our current discussion.} We continue this process via Push-Pull-Relabel until we find a feasible flow.

\begin{algorithm}[h]
	\caption{Operations between times $t$ and $t+1$}
	\label{alg:updating-flow-instance}
	\fbox{
		\parbox{0.97\columnwidth}{
		  \textsc{UpdateState}()\\
            \tab Maintain a path decomposition $\mathcal{P}_{\Pi}$ of the embedding paths $\Pi$\\
		  \tab Call $\textsc{RemoveVertices}(S_t)$ on both flow states; $A_{t+1}\leftarrow A_t \setminus S_t$\\
            \tab \textbf{For} each edge $e\in E_G(S_t,A_{t+1})\cup E_G(A_{t+1},S_t)$\\
            \tab \tab \textbf{For} each path $P\in\mathcal{P}_{\Pi}$ which uses $e$:\\
            \tab \tab \tab Let $\delta$ be the amount of flow sent along $P$ and let $u,v$ denote the endpoints of $P$\\
            \tab \tab \tab Call \textsc{IncreaseSource}$(u, 100\delta)$ and \textsc{IncreaseSource}$(v, 100\delta)$ for both flow instances
	}}
\end{algorithm}

Finally, we need to guarantee that none of the $S_t$ we trim is too large. This guarantee will, in turn, imply that the expander we find is relatively large. To achieve this, we need to ensure that our current source  is always relatively small. We do this via another early termination condition: if $\bd(\bigcup_{j=0}^tS_t)$ ever exceeds $\bd(V)/(c_0\log(n)\log(nW))$, we terminate the trimming step and return the sequence of sparse cuts $(S_j)_{j=0}^t$. As in the analysis of cut-matching, this is sufficient for limiting the recursion depth of the algorithm; all recursive components are a $1 - \Omega(1/\log(n)\log(nW))$ factor smaller.

We bound the total amount of source at any iteration of the algorithm.
\begin{claim}\label{cl:source-bound}
    At any iteration $t$, if we haven't reached the early termination condition, we have that the total source every added is upper bounded by $\bd(A_t)/100$.   
\end{claim}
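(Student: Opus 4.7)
The plan is to decompose the total source ever added into its initial contribution at time $t=0$ and the incremental contributions at each trimming step $s \leq t$, then bound each piece using the congestion of the witness embedding together with the sparsity of the removed cuts.

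For the initial source, by the flow instance construction in \cref{sec:flow}, source of $100\,\bc_W(e)$ is added at the $A$-endpoints of each witness edge $e=(u,v)\in E(W)$ whose embedding path $\Pi(e)$ crosses $\partial A$. Since $\Pi$ has congestion $1/\phi$ (\cref{cl:witness-from-CMG}), the total initial source is at most $O(\delta_G(\partial A)/\phi)$. To bound $\delta_G(\partial A)$, I would use the sparsity of the cuts $C_1,\ldots,C_k$ from the cut-matching output (\cref{lem: cut-matching}): each $C_j$ has conductance $O(\phi\log n\log nW)$ in $G[V\setminus C_{<j}]$, so summing the relevant direction over $j$ yields $\delta_G(\partial A)=O(\phi\log n\log nW\cdot\bd(V\setminus A))$. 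Since we have reached the trimming step without early termination at the cut-matching stage, $\bd(V\setminus A)\le\bd(V)/(c_0\log n\log nW)$, so the initial source is $O(\bd(V)/c_0)$.

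For the incremental source at trimming step $s$, removing $S_s$ from $A_s$ adds source for witness paths using the new boundary edges $\partial_s S_s=E_G(S_s,A_{s+1})\cup E_G(A_{s+1},S_s)$. By the same congestion argument the increment is $O(\delta_G(\partial_s S_s)/\phi)$, and invoking \cref{cl:sparse-level-cut} for the relevant direction of each flow instance bounds this by $O(\log n\log nW\cdot\bd(S_s))$. Summing over $s\le t$ and using the hypothesis that the trimming early termination condition has not triggered (so that $\sum_{j\le t}\bd(S_j)\le\bd(V)/(c_0\log n\log nW)$), the total incremental source is $O(\bd(V)/c_0)$. Combining with the initial source and using that $\bd(A_t)\ge\bd(V)/2$ (since $\bd(V\setminus A_t)\le 2\bd(V)/(c_0\log n\log nW)\le \bd(V)/2$), the total source is $\le O(\bd(A_t)/c_0)\le\bd(A_t)/100$ for $c_0$ chosen sufficiently large to absorb the hidden constants.

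The main obstacle is a directional subtlety: both the cut-matching sparsity of the $C_j$'s and the level-cut sparsity of the $S_s$'s in \cref{cl:sparse-level-cut} only bound the \emph{minimum} of the two directions of cut edges, whereas \cref{alg:updating-flow-instance} adds source symmetrically in both edge directions and to both flow instances. The way around this is to analyze the two flow instances separately: the out-expansion instance on $G$ needs only the outgoing boundary direction---which is precisely the ``downhill'' direction for push-pull-relabel on $G$ and hence the one bounded by the sparsity argument---and symmetrically the in-expansion instance on $\cev G$ needs only the incoming boundary direction of $G$. With this pairing, the needed direction is always the one controlled by the sparsity bound.
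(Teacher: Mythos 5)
There is a genuine gap, and it sits exactly at the point you flag as ``the main obstacle.'' Your proposed resolution --- pairing the out-expansion instance with the outgoing boundary direction and the in-expansion instance with the incoming direction --- does not work. The source added in \cref{alg:updating-flow-instance} is determined by which witness paths $\Pi(e)$ use an edge of $E_G(S_t,A_{t+1})\cup E_G(A_{t+1},S_t)$, i.e.\ a crossing in \emph{either} direction triggers a source increment, and the increment is applied to \emph{both} flow instances (this is forced by the correctness proof of \cref{cl:flow-instance}: a witness edge whose embedding path is broken by the removal must be compensated regardless of which direction the path exits). Moreover, the sparsity guarantees of \cref{lem: cut-matching} and \cref{cl:sparse-level-cut} only control $\min(\delta_G(S,\cdot),\delta_G(\cdot,S))$, and \emph{which} direction achieves the minimum is not determined by which flow instance produced the cut --- it can differ arbitrarily across the cuts $C_1,\ldots,C_k,S_0,\ldots,S_{t-1}$. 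So neither $\delta_G(A_t,V\setminus A_t)$ nor $\delta_G(V\setminus A_t,A_t)$ is bounded by $O(\phi\log n\log nW\cdot\bd(V\setminus A_t))$, and your estimates for both the initial and the incremental source fail. (A secondary problem: even in the sparse direction, bounding source by ``congestion times boundary capacity'' gives $O(\delta_G(\partial A)/\phi)$; if the non-sparse direction is only bounded trivially by $\vol(C_j)$, the resulting term carries an uncancelled $1/\phi$.)

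The paper closes this gap with a decomposition by \emph{path type} rather than by time. Paths with an endpoint outside $A_t$ are charged to $\deg_W(V\setminus A_t)=O(\log n\log nW\cdot\bd(V\setminus A_t))$, avoiding the $1/\phi$ entirely. For paths with both endpoints in $A_t$, the key combinatorial observation is that such a path must enter and then leave $V\setminus A_t$, and hence must cross the sparse cut $(V\setminus C_{\le j},C_j)$ of \emph{smallest index} $j$ that it touches in \emph{both} directions; one can therefore charge the path to an edge in each direction of that cut, and since equal amounts are charged to both directions, the total is controlled by the minimum --- the sparse direction --- of each cut. Without this charging argument (or an equivalent), the claim does not follow from the congestion and sparsity bounds you invoke. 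I'd suggest reworking the second half of your proof around that observation.
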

\begin{proof}
    There are two cases for incrementing source: (1) the source is induced by a flow path where one endpoint does not lie in $A_t$, or (2) the source is induced by a flow path where both endpoints lie in $A_t$ (but the flow path crosses out of $A_t$). Let $\Delta^{(1)}_t(A_t)$ and $\Delta^{(2)}_t(A_t)$ denote the amounts of source induced by (1) and (2), respectively. We will show that $\Delta^{(1)}_t(A_t)\le \bd(V)/1000$ and $\Delta^{(2)}_t(A_t)\le \bd(V)/1000$. Since $\bd(A_t)\ge \bd(V)/2$ by the early termination conditions in \Cref{lem: cut-matching} and \Cref{alg:trimming}, we would then have the desired result.

    First, we bound the amount of source added from case (1). Recall that for each $v \in V$, we have $\deg_W(v) \leq O(\log(n)\log(nW))\cdot\bd(v)$. This is because there are $T=O(\log(n)\log(nW))$ rounds of cut-matching and each contributes at most $\bd(v)$ to the degree of $v$ in $W$. This implies that 
    \begin{align*}
        \deg_W(V\setminus A_t)&=O(\log(n)\log(nW))\cdot\bd(V\setminus A_t)\\
        &=O(\log(n)\log(nW))\cdot\left[\bd(\bigcup_{j=1}^{k}C_j)+\bd(\bigcup_{j=0}^{t-1}S_j)\right].
    \end{align*}
    By the early termination conditions in \Cref{lem: cut-matching} and \Cref{alg:trimming}, we know that $\bd(\bigcup_{j=1}^{k}C_j)+\bd(\bigcup_{j=0}^{t-1}S_j)\le \bd(V)/(c_0\log(n)\log(nW))$. If we choose $c_0$ to be a large enough constant, we then have that $\deg_W(V\setminus A_t)\le \bd(V)/1000$. Then, each source in case (1) is induced by an embedding path with an endpoint in $V\setminus A_t$. This implies that $\Delta^{(1)}_t(A_t)\le \deg_W(V\setminus A_t)\le \bd(V)/1000$, as desired.

    Next, we bound the amount of source added due to case (2). Since both endpoints of the embedding paths $P$ inducing this source lie in $A_t$ but the embedding path still crosses out of $A_t$, $P$ crosses in and out of the boundary at some $e\in E(A_t,V\setminus A_t)$ and $e'\in E(V\setminus A_t,A_t)$. Although the cut $(A_t,V \setminus A_t)$ is not necessarily a sparse cut, we know that $V \setminus A_t$ is partitioned into a sequence of sparse cuts $C_1\sqcup\cdots\sqcup C_k$ and $S_0\sqcup\cdots S_{t-1}$, such that for some constant $c_3$, we have
    \[
    \Phi_{G[V\setminus C_{<j}],\bd}(C_j)\le c_1\phi\log(n)\log(nW)
    \]
    and 
    \[
    \Phi_{G[A\setminus S_{<j}],\bd}(S_j)\le c_3\phi\log(n)\log(nW).
    \]
    For simpler notation, let us define $C_{k+j+1}=S_j$ and $k'=k+t$ so that $C_1,\ldots,C_k,C_{k+1},\ldots,C_{k'}$ is our sequence of sparse cuts. We want to charge $e$ and $e'$ to two edges $f$ and $f'$ in one of the sparse cuts such that $f\in E(V\setminus C_{\le j},C_j)$ and $f'\in E(C_j,V\setminus C_{\le j})$ for some $j\in[k']$. If we are able to do this, then we can bound the total amount of source added by the total capacity (measured in the flow instance) of the sparse direction of all the sparse cuts $(C_j,V\setminus C_{\le j})$, which is small.

    To charge $e$ and $e'$, let $P[e,e']$ denote the subpath of $P$ between $e$ and $e'$. If $P[e,e']$ does not cross through any other sparse cut in the sequence of sparse cuts, then $e$ and $e'$ must both lie on the boundary of $A_t$ at $(C_{k'},A_t)=(C_{k'},V\setminus C_{\le k'})$ since $P[e,e']$ must enter and leave the set $C_{k'}$. In this case, we simply charge $e,e'$ to themselves (as edges on the sparse cut $(C_{k'},V\setminus C_{\le k'})$). Otherwise, suppose $P[e,e']$ crosses through at least one sparse cut $(V\setminus C_{\le j},C_j)$ for $j<k'$. We claim that for some $j$, $P[e,e']$ crosses the sparse cut $(V\setminus C_{\le j},C_j)$ in both directions (once going in and once going out). In particular, this is the case for the sparse cut $(V\setminus C_{\le j},C_j)$ corresponding to the smallest $j$ which $P[e, e']$ crosses at least once. This follows since after entering $C_j$, we must also leave $C_j$ because the path $P[e,e']$ ends in $A_t$. We can leave $C_j$ by either crossing through the other direction $E(C_j,V\setminus C_{\le j})$ or crossing through the boundary of $V\setminus C_{<j}$ into an earlier sparse cut. The latter cannot happen since $(V\setminus C_{\le j},C_j)$ is the smallest $j$ which $P[e,e']$ crosses at least once, so the former must happen and we have that $P[e,e']$ crosses the cut $(V\setminus C_{\le j},C_j)$ in both directions. For such a cut $C_j$ where we cross through the sparse cut in both directions, let $f\in E(V\setminus C_{\le j},C_j)$ and $f'\in E(C_j,V\setminus C_{\le j})$ denote two edges in $P[e,e']$ going in and out of the cut, respectively. We charge $e,e'$ to $f,f'$. See~\cref{fig: source-charge}.

    \begin{figure}[h]
        \centering
\input{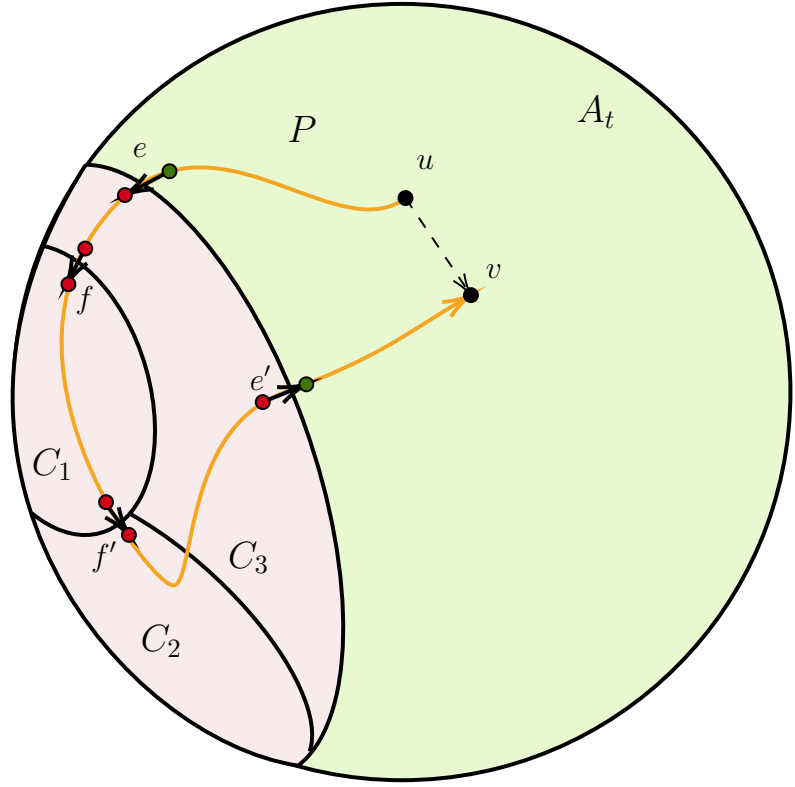}
        \caption{Depiction of the edge-charging argument for a Path $P$ leaving $A_t$ via $e$ and returning via $e'$. $P$ crosses in and out of $C_1$ via $f$ and $f'$, and $1$ is the smallest index of a cut the subpath $P[e,e']$ passes through. Although $P$ also crosses in and out of $C_2$, the edge into $C_2$ starts in $C_1$. Hence we cannot charge to edges crossing in and out of $C_2$ since $C_2$ is only necessarily sparse in $G[V \setminus C_1]$.}
        \label{fig: source-charge}
    \end{figure}

    Note that in this charging scheme, each edge $e$ in the sparse cuts is charged at most $10\bc(e)/\phi$. This follows since we only charge to $e$ when some embedding path crosses through it, and the congestion of the embedding $\Pi$ is $10/\phi$. Furthermore, in the charging scheme, we charge the same amount to the in-edges and out-edges of each sparse cut. This implies that the upper bound on the source is a function of the sparse directions of each sparse cut. Formally, we have that 
    \begin{align*}
        \Delta_t^{(2)}(A_t)&\le (20/\phi)\cdot\sum_{j=1}^{k'}\min(\delta_G(C_j,V\setminus C_{<j}),\delta_G(V\setminus C_{<j},C_j))\\
        &\leq (20/\phi)\cdot\sum_{j=1}^{k'}\Phi_{G[V\setminus C_{<j}],\bd}(C_j)\cdot \bd(C_j)\\
        &\le 20(c_1+c_3)\log(n)\log(nW)\cdot \sum_{j=1}^{k'}\bd(C_j)\\
        &\le 40(c_1 + c_3)\cdot\bd(V)/c_0.
    \end{align*}
    The first inequality follows by the charging argument, the first equality follows by the definition of conductance, the second inequality follows by the bound on $\Phi_{G[V\setminus C_{<j}],\bd}(C_j)$ in \Cref{lem: cut-matching} and \Cref{cl:sparse-level-cut}, and the final inequality follows by the early termination condition in \Cref{lem: cut-matching} and \Cref{alg:trimming}. Finally, we choose $c_0$ to be large enough so that both $\Delta_t^{(1)}(A_t)$ and $\Delta_t^{(2)}(A_t)$ are at most $\bd(V)/1000$.
\end{proof}

Finally, we bound the total runtime of the algorithm.

\begin{claim}\label{cl:trim-runtime}
    The runtime of \Cref{alg:trimming} is $O(m \log^2 (n) \log (nW) + m\log(n)\log(nW)/\phi)$.
\end{claim}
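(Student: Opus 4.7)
The plan is to decompose the runtime of \Cref{alg:trimming} into three sources of work and bound each using the structural results already established in the section.

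\textbf{Push-pull-relabel cost.} The first contribution is the time spent inside the two \textsc{ValidState} instances, which I bound by invoking \Cref{lem:push-pull-relabel} with height $h=40000\log(nW)/\phi$. The lemma yields $O((k+n)\log n + h\log n \cdot \sum_{v}\tfrac{\Bn_T(v)}{\nabla(v)}\widetilde{\deg}_G(v))$ per instance, where $k$ is the total number of \textsc{IncreaseSource} calls. The whole point of adopting the regularized vertex weighting $\bd(v)\ge \tfrac{\widetilde{\deg}_G(v)}{2m}\deg_G(V)$ is to make this weighted sum small: since $\nabla(v)=\bd(v)$, we have $\tfrac{\widetilde{\deg}_G(v)}{\nabla(v)}\le \tfrac{2m}{\deg_G(V)}$, and combining with the trivial bound $\Bn_T(V)\le \bd(V)=O(\deg_G(V))$ pins the sum at $O(m)$. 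This yields a contribution of $O((k+n)\log n + m\log(n)\log(nW)/\phi)$.

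\textbf{Locating sparse level cuts.} For the second contribution, the ball-growing argument underlying \Cref{cl:sparse-level-cut} can be implemented by scanning level buckets. I plan to amortize the per-iteration scan against push-relabel activity: between consecutive scans, the labels only move via pushes and relabels already charged above, so the scanning cost is absorbed into the push-pull-relabel runtime.

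\textbf{Update bookkeeping.} The main obstacle is the third contribution, coming from \Cref{alg:updating-flow-instance}: for every newly exposed boundary edge $e$, one must enumerate all embedding paths $P\in \mathcal{P}_\Pi$ that pass through $e$ and call \textsc{IncreaseSource} at both endpoints. In the capacitated setting one cannot afford to maintain $\mathcal{P}_\Pi$ explicitly because of the flow-decomposition barrier. The plan is to defer to the path-decomposition primitive of \Cref{sec: fast-path-decomp}, which, after a once-per-epoch $\widetilde{O}(m)$-time preparation that replays a recorded transcript of cut-matching's link-cut tree operations, lets one stream all paths crossing the boundary with $O(\log n)$ amortized overhead per path. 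Since the witness from \Cref{cl:witness-from-CMG} has $\deg_W(v)=O(\log(n)\log(nW))\cdot \bd(v)$, the total number of (edge, path) incidences summed over all iterations is $O(m\log(n)\log(nW))$. Each such incidence contributes $O(\log n)$ link-cut work and a single \textsc{IncreaseSource} call, so $k=O(m\log(n)\log(nW))$ and the aggregate bookkeeping cost is $O(m\log^2(n)\log(nW))$.

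Adding the three contributions gives $O(m\log^2(n)\log(nW) + m\log(n)\log(nW)/\phi)$, matching the claim. The hard part is the third component: the flow-decomposition barrier rules out the straightforward explicit-decomposition implementation that would have sufficed in the unweighted case, and circumventing it is precisely what the fast path-decomposition technique of \Cref{sec: fast-path-decomp} is designed to do.
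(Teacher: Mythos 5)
Your decomposition and the two bounds that actually carry the claim---the push-pull-relabel cost $O(k\log n + mh\log n)$ via \Cref{lem:push-pull-relabel}, with the regularized sink forcing $\sum_{v}\Bn_T(v)\widetilde{\deg}_G(v)/\nabla(v) = O(m)$, and $k = O(m\log(n)\log(nW))$ from the edge count of the witness $W$---are exactly the paper's proof. Two caveats. First, $\Bn_T(V)\le\bd(V)$ is not a ``trivial bound'': $\Bn_0(V)=\nabla(V)=\bd(V)$, but $\|\Bn_t\|_1$ increases on every \textsc{RemoveVertices} call, and the paper bounds the total increase by the total source ever injected, which is $O(\bd(V))$ only because of \Cref{cl:source-bound}; you need to invoke that claim here, not wave it off. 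Second, your third component over-reaches relative to what this claim covers: the cost of deciding which embedding paths cross the cuts is explicitly excluded from \Cref{cl:trim-runtime} (see the footnote attached to \Cref{alg:updating-flow-instance}) and is handled in \Cref{sec: fast-path-decomp}, where the transcript replay is a global $O(m\log^2(n)\log(nW))$ cost per invocation---not an $O(\log n)$-per-incidence amortized cost as you describe---which is precisely why source increases must be batched into $O(\log(nW))$ rounds there; that batching contributes the extra $\log(nW)$ factor appearing in \Cref{lem:trimming} but not in this claim. Within the intended scope of the claim, your argument is correct and matches the paper's.
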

\begin{proof}
    In our flow instance, the sink satisfies
    \begin{align*}
        \nabla(v)=\deg_G(v)+\frac{\widetilde{\deg}_G(v)}{2m}\cdot\deg_G(V)\ge \frac{\widetilde{\deg}_G(v)}{2m}\cdot\deg_G(V).
    \end{align*}
    Substituting this bound in and summing over $v\in V$, we have that
    \begin{align*}
        \sum_{v\in V}\frac{\Bn_T(v)}{\nabla(v)}\cdot\widetilde{\deg}_G(v)\le 2m\sum_{v\in V}\frac{\Bn_T(v)}{\deg_G(V)}=2m\cdot\frac{\|\Bn_T\|_1}{\deg_G(V)}.
    \end{align*}
    To bound $\|\Bn_T\|_1$ (the sum of the entries of $\Bn_T$), first observe that 
    \begin{align*}
        \|\Bn_0\|_1=\sum_{v\in V}\nabla(v)=\sum_{v\in V}\frac{\widetilde{\deg}_G(v)}{2m}\cdot\deg_G(V)+\sum_{v\in V}\deg_G(v)=2\deg_G(V),
    \end{align*}
    Throughout the algorithm, $\|\Bn_t\|_1$ only increases after calls to $\textsc{RemoveVertices}(S)$. But throughout the entire process, the total amount of negative excess added is upper bounded by the total amount of source ever added to the flow instances, which we bounded in \Cref{cl:source-bound} by $O(\bd(V))=O(\deg_G(V))$. Applying the runtime bound from~\cref{lem:push-pull-relabel} gives a total runtime of $O(k\log n+mh\log n)$, where $k$ is the total number of $\textsc{IncreaseSource}$ calls. Then, each IncreaseSource call corresponds to adding source at the endpoints of some edge in $W$. But, $W$ has at most $O(m \log (n) \log (nW))$ total edges, since the path decompositions computed in each matching step have at most $m$ edges. This then yields the desired runtime bound.
\end{proof}

We now have all of the necessary ingredients to prove our main result on trimming.

\begin{proof}[Proof of \Cref{lem:trimming}.]
     At any iteration $t$ prior to early termination, the total source ever added to the flow instance is upper bounded by $\bd(V)/100$ by \Cref{cl:source-bound}. Hence, we have that $\bd(S_t)\le\bd(V)/100$, since the level cut in \Cref{cl:sparse-level-cut} contains vertices which have absorbed their entire sink. Since the total source is bounded by $\bd(A_t)/100$, we have that $\bd(S_t)\le \bd(A_t)/100$ as well. Note that this does not trivially follow from the early termination condition since naively, the final cut $S_t$ found which triggered the early termination may have large $\bd(S_t)$; we showed that this is not the case. In the case of finding an expander $A'$, we know that $\bd(A')\ge \bd(V)/2$ by the early termination condition. By \Cref{cl:flow-instance}, we have that $G[A']$ is a $(\phi/10^7,\bd)$-expander since both flow instances are feasible at termination. Finally, \Cref{cl:trim-runtime} bounds the runtime.
\end{proof}

\subsection{Efficient Implementation: Flow Path Decompositions}
\label{sec: fast-path-decomp}

There is a challenge related to source updates that we have thus far ignored. Whenever we find a level cut $S$ in push-pull relabel, for each edge $e = (u,v) \in E(W)$ whose path embedding $\Pi(e)$ crosses through $S$, we want to add $100 \cdot \bc_W(e)$ units of source to $u$ and $v$ by calling $\textsc{IncreaseSource}$ (at least for those of $u,v$ in the remaining graph). However, to make these  $\textsc{IncreaseSource}$ calls, we need to know which $e \in E(W)$ have $\Pi(e)$ crossing into $S$. In the unweighted case, we can maintain our path decompositions from the matching steps of cut-matching explicitly since the paths are edge disjoint, giving the following lemma: 

\begin{lemma}[Trimming for Unweighted Graphs]\label{lem:trimming-unweighted} 
    Let $G=(V,E)$ be an unweighted graph (with binary edge weights). Let $A\subseteq V$ be a subset which is $(\phi,\bd)$-nearly expanding in $G$ such that $V\setminus A$ is the disjoint union of a sequence of sparse cuts $C_1,\ldots,C_k$, where $\Phi_{G[V\setminus C_{<j}],\bd}(C_j)\le c_1\phi \log^2(n)$ for each $j\in[k]$. There exists an algorithm which finds a subset $A'\subseteq A$ along with a sequence of cuts $S_0\sqcup \cdots\sqcup S_t=A\setminus A'$ satisfying $\Phi_{G[A\setminus S_{<j}],\bd}(S_j)\le c_3\phi\log(n)\log(nW)$ such that either
    \begin{enumerate}
        \item \textbf{Early termination:} $\bd(\bigcup_{j=0}^{t}S_j)\ge \bd(V)/(c_0\log^2(n))$
        \item \textbf{Certifies expansion:} $A'$ is certified to be a $(\Omega(\phi),\bd)$-expander and $\bd(A')\ge \bd(A)/2$.
    \end{enumerate}
    The algorithm runs in time $O(m\log^2(n)/\phi)$.
\end{lemma}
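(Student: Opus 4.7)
The plan is to follow the same framework as the proof of \Cref{lem:trimming}, but exploit unweightedness in two ways: first, to maintain explicit path decompositions of the witness embedding, and second, to use the faster Unit-Flow subroutine~\cite{henzinger2020local} in the cut-matching game and push-pull relabel. As a preliminary step, I would re-run the non-stop cut-matching game of \Cref{thm: cut-matching} on $A$ with $W = 1$, so that the witness $(W,\Pi)$ satisfies $\deg_W(v) = O(\log^2 n) \cdot \bd(v)$ (by \Cref{cl:witness-from-CMG} with $T = O(\log^2 n)$) and the embedding has congestion $O(1/\phi)$. Because the graph is unweighted, the Unit-Flow subroutine (\Cref{rmk: unit flow}) saves an $O(\log n)$ factor relative to bounded-height push-relabel, and the total size of the witness is $O(m\log n)$ rather than $O(m\log n \log nW)$.

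The trimming loop then proceeds exactly as in \Cref{alg:trimming}: initialize the two push-pull-relabel flow instances on $G[A]$ and on the reverse graph $\cev{G}[A]$ with sources induced by embedding paths of $W$ crossing $\partial A$, sinks $\nabla(v) = \bd(v)$, and edge capacity $O(1/\phi)$. At each iteration, either both flow instances are feasible and we output $A' = A_t$ (a $\Omega(\phi)$-expander via \Cref{cl:flow-instance}), or the sparse-level-cut argument of \Cref{cl:sparse-level-cut} returns $S_t$ with $\Phi_{G[A \setminus S_{<t}],\bd}(S_t) = O(\phi \log^2 n)$ (the extra $\log^2 n$ factor comes from $\deg_W(v)/\bd(v)$). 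We remove $S_t$ via \Cref{alg:updating-flow-instance}, add the induced source at the new boundary, and continue. The early termination bound follows \Cref{cl:source-bound}, with all logarithmic factors becoming $\log n$ since $\log(nW) = \log n$.

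The key implementation change, and what avoids the flow-decomposition barrier addressed in \Cref{sec: fast-path-decomp}, is that I would maintain the path decomposition $\mathcal{P}_\Pi$ of the witness embedding explicitly, together with a mapping from each edge $e \in E(G)$ to the list of paths in $\mathcal{P}_\Pi$ passing through $e$. In an unweighted graph each edge carries at most $O(\log n)$ units of flow across all matching rounds (one per round), so each edge appears in at most $O(\log n)$ paths and the total size of these lists is $O(m \log n)$. Consequently, when $\textsc{RemoveVertices}(S_t)$ is invoked, the induced $\textsc{IncreaseSource}$ calls can be served in time proportional to the number of crossing paths, which is amortized against edges leaving the component. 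Since the total capacity in the flow instances is bounded by $O(\bd(V))$, a charging argument analogous to \Cref{cl:source-bound} shows that the total work spent incrementing sources over the whole algorithm is $O(m \log n)$.

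The main obstacle I anticipate is the runtime accounting, since one has to reconcile three sources of cost: the initial cut-matching call (which is $O(m \log^2 n / \phi)$ using Unit-Flow and \Cref{thm: cut-matching}); the cumulative push-pull-relabel work over all trimming iterations (bounded via \Cref{lem:push-pull-relabel} and the $\|\Bn_T\|_1 = O(\bd(V))$ estimate of \Cref{cl:trim-runtime}, giving $O(mh \log n) = O(m \log^2 n / \phi)$); and the bookkeeping for explicit path decomposition updates ($O(m \log n)$). Summing these, and then reparameterizing $\phi$ to fold the $\log^2 n$ loss in the level-cut conductance back into a genuine $\Omega(\phi)$-expander guarantee as in the proof of \Cref{thm:strong-expander-decomp}, yields the claimed $O(m \log^2 n / \phi)$ running time and $\Omega(\phi)$-expansion. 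The sparse-cut sequence and early-termination guarantees then transfer verbatim from \Cref{lem:trimming}, with $\log(nW)$ specialized to $\log n$.
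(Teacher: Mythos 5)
Your overall plan is the paper's: the lemma is proved by running the trimming loop of \Cref{lem:trimming} verbatim while maintaining the witness path decomposition explicitly (so the batched-\textsc{IncreaseSource} machinery of the weighted case is unnecessary), and the cut sparsity $O(\phi\log(n)\log(nW))$ and the $\Omega(\phi)$-expansion of $A'$ come directly from \Cref{cl:sparse-level-cut} and \Cref{cl:flow-instance} without any rescaling of sinks or capacities. Two steps in your write-up, however, are wrong as stated.

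First, the justification for explicit maintenance is incorrect: in each matching round the flow instance gives every edge capacity $\bc(e)/\phi = 1/\phi$ and every vertex $u$ injects $\bd_{t-1}(u)=\Theta(\deg_G(u))$ units, so an edge can carry up to $1/\phi$ units \emph{per round}, not one. Hence ``each edge appears in at most $O(\log n)$ paths'' and ``total size $O(m\log n)$'' are false. The correct accounting is that each path in an integral decomposition carries at least one unit, so the total path length per round is at most $\sum_e \Bf(e) \le m/\phi$, giving total storage and update cost $O(mT/\phi)=O(m\log^2(n)/\phi)$ over the $T=O(\log^2 n)$ rounds. This still fits the claimed runtime (each edge of $G$ enters a boundary $E(S_t,A_{t+1})\cup E(A_{t+1},S_t)$ for at most one $t$, so the source-increment work is charged once to each edge--path incidence), but your stated bound needs this repair. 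Second, the preliminary re-run of the cut-matching game is both unnecessary and over budget: \Cref{cl:witness-from-CMG} already extracts the witness from the call that produced $A$ at no extra cost, whereas a fresh run costs $O(mTh)=O(m\log^3(n)/\phi)$ even with Unit-Flow, exceeding the lemma's $O(m\log^2(n)/\phi)$; moreover, running it on $G[A]$ rather than on $G$ with weighting $\bd|_A$ could itself discover sparse cuts (near-expansion of $A$ in $G$ does not make $G[A]$ an expander) and fail to return a witness covering all of $A$. The final ``reparameterize $\phi$'' step is also a misreading of the statement --- the $\log^2 n$ loss stays in the sparsity of the removed cuts $S_j$, exactly as the lemma permits, and no reparameterization happens inside the lemma --- but this does not affect correctness.
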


However, in the weighted case, our representation must be implicit, since the flow decomposition size lower bound of $\Omega(mn)$ applies. In this section, we show how to extend the analysis from~\Cref{sec:trim-algorithm} to capacitated graphs. In particular, we show how to deduce the following.
\trimming*

\paragraph{Determining when embbeded flow paths cross a cut.} As outlined above, we need to be careful about how we determine whether embedded flow paths pass through $S$. Nonetheless, it turns out that, for a fixed $S$, we can relatively easily determine the $e \in E(W)$ such that $\Pi(e)$ passes through $S$ in $O(m \log^2 (n) \log (nW))$ time. To do this, recall where these path embeddings originated from: we solved some flow problem in a matching step of the cut-matching game and computed a path decomposition of the flow using link-cut trees. 

Suppose that, along with computing the path decomposition, we record a transcript of the link-cut tree algorithm. The total algorithm takes $O(m \log n)$ time, so the transcript is itself succinct. Now, suppose that we add a tag to each edge to mark whether one of the endpoints of the edge is in $S$. One way to view this is as a second weight function on the edges with the weight being $0$ for edges with an endpoint in $S$ and $1$ otherwise. (Say that $\fmin'$ is $\fmin$ for this secondary weight function.) Then, it is easy to determine whether a path from $u \in V$ to its root involves an edge with an endpoint in $S$ using a single $\fmin'(u)$ call (on the secondary weight function). Now, to determine the set of $e \in E(W)$ such that $\Pi(e)$ passes through $S$, all we need to do is determine which paths in the path decomposition pass through $S$. But this is simple: at the cost of only a constant factor loss in the overall-running time, insert a $\froot(u)$ and $\fmin'(u)$ calls prior each to each $\add(u, \psi)$ call. The call to $\froot(u)$ means we know exactly the endpoints of the path in the path decomposition (and hence the corresponding edge in $E(W)$) and the call to $\fmin'(u)$ determines exactly whether the embedding of this edge passes through $S$.

Then, the overall running time is $O(m \log^2 (n) \log (nW))$ since we need to repeat this for each path decomposition algorithm run throughout the cut-matching algorithm. (There are $T = O(\log (n) \log (nW))$ rounds of the cut-matching game and so $O(\log (n) \log (nW))$ path decomposition algorithm calls.) 

\paragraph{Limiting the number of costly $\textsc{IncreaseSource}$ calls.}
Now that we have a way to efficiently determine whether embedded flow paths cross a cut, we need to ensure that we do not have to call this algorithm too many total times; indeed, the running time is $\tilde{O}(m)$ so we can only afford a polylogarithmic number of total calls. 

A simple strategy to attempt to achieve this is to only perform $\textsc{IncreaseSource}$ updates in large batches, and at that time running the aforementioned algorithm for determining whether embedded paths cross the computed cut(s) and increasing sources accordingly. Indeed, consider the following algorithm: instead of calling $\textsc{IncreaseSource}$ after each discovered cut, run push-pull relabel until both flows are feasible (or the early termination case is reached). At that point, handle all of the $\textsc{IncreaseSource}$ updates. Then, repeat this process until both flows are feasible and there are no additional $\textsc{IncreaseSource}$ updates to make. The same analysis as before handles the running-time of the algorithm other than the time to compute $\textsc{IncreaseSource}$ updates. 

To bound the number of batches of $\textsc{IncreaseSource}$ updates, we want to show that the total amount of unabsorbed source in the flow instances is decreasing by a constant factor between each batch of updates. But, we cannot quite show this with our current choice of parameters in the algorithm. E.g.,~\cref{cl:sparse-level-cut} is slightly too weak, showing that, upon finding a level cut, we remove almost $\bd(S_t)$ source, but we might add as much as $\Omega(\log (n) \log (nW) \bd(S_t))$ source back in. 

Instead, we combine two new strategies. First, we want to strengthen the guarantees of~\cref{cl:sparse-level-cut}: we want to ensure that we add less source than we remove. To do this, we scale up the sink of each vertex in the flow instance and the capacities of the edges in the flow instance. We scale up the capacities from $\bd(u)$ to $10C \log (n) \log (nW) \bd(u)$ for each $u \in A$ and from $200 \cdot \bc(e)/ \phi$ to $200 C \cdot  \log (n) \log (nW) \bc(e)/ \phi$ for each $e \in G[A]$, where $C \log (n) \log (nW) \bd(u)$ is an upper bound on $\deg_W(u)$. Then, by following the same analysis as in~\cref{cl:sparse-level-cut}, each cut $S_t$ removes $10 C \log (n) \log (nW) \bd(S_t)$ source while ultimately incurring an addition of at most $2C \log (n) \log (nW) \bd(S_t)$ source addition in the subsequent batched $\textsc{IncreaseSource}$ step. That is, at the end of the round of push-pull relabel (prior to the next batch of $\textsc{IncreaseSource}$ updates), the non-absorbed source decreased by a factor of $5$. 

We are not quite done: we need to be slightly careful here. If after our batched $\textsc{IncreaseSource}$ step we just continue applying push-pull relabel with the same levels and sinks as before, we might find cuts where most of the absorbed source actually comes from previous rounds of $\textsc{IncreaseSource}$ updates. In that case, we are actually adding non-absorbed source without removing any. But~\cite{chen2025parallel} introduce a beautifully simple approach to address this very issue and ensure that the algorithm terminates in few rounds: gradually increase the sink for each vertex across iterations. 

To make this rigorous, we cannot just increase the source for each vertex, apply our batched source increase, and then continue iterating push-pull relabel. Indeed, increasing the sink of vertices clearly disrupts the invariants of push-pull relabel (e.g., there are then vertices at level greater than $0$ without their source fully saturated). Instead, we rely on the fact that these sink increases ensure that the algorithm terminates in $O(\log nW)$ total rounds. As such, we can afford to start from ``scratch'' on each round. To that end, we reset the levels while retaining the same flow, incrementing the sources, and incrementing the sinks by an additional $10C \log (n) \log (nW) \bd(u)$. 

\paragraph{Putting it all together.}
We summarize how all the modifications we described above alter the conclusions of our analysis. Determining when embedded flow paths cross a cut takes $O(m \log^2 (n) \log^2 (nW))$ time in total, since there are $O(\log(nW))$ rounds of $\textsc{IncreaseSource}$. Besides that, we run Push-Pull-Relabel $O(\log nW)$ times, with edge-capacities $O(\log (n) \log (nW) \bc(e)/\phi)$ for each $e \in G[A]$ and sinks scaled up by $O(r\cdot\log(nW) \log n)$ in the $r$\textsuperscript{th} call to Push-Pull-Relabel. Via the same analysis as before, this contributes a total runtime of $O(m \log(nW)/\phi)$. Note that since we scale up the sinks by at least $\Theta(\log(n)\log(nW))$ on each round, $\nabla(v)$ is scaled by the same factor, decreasing the runtime. $\|\bn_T\|_1$ remains bounded by the total source, which is still bounded by $\bd(V)$. The cuts $S_1,\ldots,S_t$ are now $O(\phi)$-sparse because the edge capacities in the flow instance are scaled up by a factor of $\Omega(\log(n)\log(nW))$ and the bound on the total source (and thus the total flow through the cut) is the same as before. Finally, since we scale up the sinks in the flow instance by $\Theta(\log(n)\log^2(nW))$ in the final round of $\textsc{IncreaseSource}$ and the edge capacities are scaled up by $\Omega(\log(n)\log(nW))$ throughout, we only certify that $A'\subseteq A$ is an $(\Omega(\phi/\log^2(n)\log^3(nW)),\bd)$-expander.

\section*{Acknowledgments}
This material is based upon work supported by the National Science Foundation Graduate
Research Fellowship Program under Grant No.\ DGE2140739. Any opinions, findings,
and conclusions or recommendations expressed in this material are those of the author(s)
and do not necessarily reflect the views of the National Science Foundation.

\bibliographystyle{alpha}
\bibliography{references}

\appendix

\end{document}